\crefname{hypothesis}{Hypothesis}{Hypotheses}
\title{Two variable logic with ultimately periodic counting \thanks{
\funding{The first author acknowledges generous financial support of EPSRC under grant no. EP/M005852/1.
The third author acknowledges generous financial support of Taiwan Ministry of Science and Technology under grant no. 107-2221-E-002-026-MY2  and NTU under grant no. 108L891908.}}}
\author{Michael Benedikt \thanks{University of Oxford (\email{michael.benedikt@cs.ox.ac.uk}) }
 \and Egor V. Kostylev \thanks{University of Oslo (\email{egor.kostylev@cs.ox.ac.uk})  }
\and Tony Tan \thanks{National Taiwan University (\email{tonytan@csie.ntu.edu.tw} )}
}
\title{Two variable logic with ultimately periodic counting\thanks{This paper extends an abstract appearing in ICALP 2020 \cite{usicalp}.}} 
\author{Michael Benedikt\thanks{University of Oxford (\email{michael.benedikt@cs.ox.ac.uk})} 
\and Egor V. Kostylev\thanks{University of Oslo (\email{egork@ifi.uio.no})}
\and Tony Tan\thanks{National Taiwan University (\email{tonytan@csie.ntu.edu.tw})}
}
\begin{document}

\maketitle

\bibliographystyle{siamplain}

\begin{keywords}
Presburger Arithmetic, Two-variable logic
\end{keywords}

\begin{AMS}
  68Q25, 68R10, 68U05
\end{AMS}

\begin{abstract}
We consider the extension of $\FOtwo$ with quantifiers 
that state that the number of elements where  a formula
holds should belong to a given ultimately periodic set.
We show that both satisfiability and finite satisfiability of
the logic are decidable. We also show that the spectrum of any sentence,
i.e., the set of the sizes of its finite models, is definable in Presburger
arithmetic. In the process we present several refinements to  the ``biregular graph method''.
In this method, decidability issues concerning two-variable logics are reduced
 to questions about Presburger definability  of
integer vectors associated with partitioned
graphs,  where nodes in a partition satisfy certain constraints on their in- and out-degrees.
\end{abstract}


\section{Introduction} 
\label{sec:intro}

In the search for expressive logics with decidable satisfiability problem, two-variable
logic, denoted here as $\fotwo$,  is one yardstick. This logic is expressive enough to subsume basic modal
logic and many  description logics, while satisfiability and finite satisfiability for this logic
 coincide,  and both are decidable \cite{scott, mortimer, gkv}.
However, $\fotwo$  lacks the ability to count. Two-variable logic
with counting, $\ctwo$, is a decidable extension of $\fotwo$ that adds \emph{counting
quantifiers}.
In $\ctwo$ one can write, for example, formulas $\exists^5 x \ P(x)$ and $\forall x \exists^{\geq 5} y\ E(x,y)$
which, respectively, express that there are exactly $5$ elements in unary relation $P$, 
and that every element in a graph has at least $5$ adjacent edges. 
Satisfiability and finite satisfiability
do not coincide for $\ctwo$, but both are decidable \cite{decidectwo,pachoskicounttwovar}.
In  \cite{pachoskicounttwovar}  the problems were
shown to be $\nexptime$-complete under a unary encoding of numbers, and this
was extended to binary encoding in \cite{ctwobinary}.
However, the numerical capabilities of $\ctwo$
are quite limited. For example, one can not express that the number of outgoing edges of each
element in the graph is even.

 A natural extension is to combine $\fotwo$
with \emph{Presburger arithmetic} where one is allowed to define collections 
of tuples of integers from addition and equality using Boolean operators  and
quantifiers. The collections of $k$-tuples that one can define in this way
are the \emph{semi-linear sets}, and the collections of integers (when $k=1$) definable
are the {\em ultimately periodic sets}.
It is natural to consider the addition of \emph{Presburger quantification} to fragments
of two-variable logic; this is in the spirit of works such as \cite{bartosztonyfsstcs,bartosztcs}.
For every definable set $\phi(x,y)$ and every ultimately periodic set $S$, one has a formula
$\exists^S y  ~ \phi(x,y)$ 
that holds at $x$ when the number of $y$ such that
$\phi(x,y)$ is in $S$.  We let $\fotwopres$ denote the logic that adds this construct
to $\fotwo$.

On the one hand,
the corresponding quantification over general $k$-tuples (allowing
semi-linear rather than only ultimately periodic sets) easily leads to undecidability \cite{undecidabletwovar, BaaderBR20}.
On the other hand, adding this quantification to modal logic has been shown
to preserve decidability \cite{baader,demrilugiez}.  
Related \emph{one-variable fragments} in which we have only a 
unary relational vocabulary and the main quantification is $\exists^S x  ~ \phi(x)$ are known to be decidable (see, e.g. \cite{bartosztcs}), and their decidability is the basis
for a number of software tools focusing on integration of relational languages
with Presburger arithmetic \cite{bapa}. The decidability of full $\fotwopres$
is, to the best of our knowledge, open.
There are a number of other extensions of $\ctwo$ that have been shown decidable; for example
it has been shown that one can allow a distinguished equivalence relation \cite{ctwocountequiv} or
a forest-structured relation \cite{ctwoandforeststocl, ctwoandforests}.
$\fotwopres$ is easily seen to be orthogonal to these other extensions.
For example, equivalence relations and forest-structure are not expressible in $\fotwopres$,
whereas modulo counting is not expressible in the logics of \cite{ctwocountequiv,ctwoandforeststocl,ctwoandforests}.

In this paper we show that both satisfiability and finite satisfiability of $\fotwopres$
are decidable. Our result uses a method based on analyzing \emph{biregular graph  constraints}, introduced for analyzing $\ctwo$ in \cite{KT15}.
In this analysis we search for the existence of graphs equipped with a partition of vertices  based on constraints
on the out- and in-degree. Such a partitioned graph can be characterized  by  the cardinalities
of each partition component, and the key step in showing these decidability results is to prove
that the set of tuples of integers representing valid sizes of partition components
is definable by a formula in Presburger arithmetic. From this ``biregular graph constraint Presburger definability''
result, one can reduce satisfiability in the logic to satisfiability of a Presburger  formula, and from there
infer decidability using known results on Presburger arithmetic.
We will also use the method to get information on the \emph{spectrum} of a $\fotwopres$ sentence: 
the set of sizes of models of the sentence.
We use the method to conclude that this set is definable in Presburger arithmetic, a result that had been demonstrated for $\ctwo$ in
\cite{KT15}.

\myparagraph{Organization}
Section \ref{sec:prelims} provides background on two variable logic and Presburger
arithmetic. Section \ref{sec:main} introduces our major results on the logic, and gives a reduction
of these logic-based problems to  results concerning the analysis of constrained biregular graphs. 
 Section \ref{sec:1type} gives some of the details behind
the core lemmas concerning Presburger definability of solutions to biregular graph
problems that underlie the proof, and provides a full proof in the  case where there is only a single kind of 
edge in the graph. We refer to this as  the ``$1$-color case''.
Section \ref{sec:proofsimple} generalizes to give a   proof in the case of an unbounded number of edge colors, but with an extra restriction
on the matrices that specify the graph constraints.  The restriction is that they are ``simple matrices''.
Section \ref{sec:simple-complete} extends the analysis in Section \ref{sec:proofsimple} to the complete graph cases --
but still with the restriction on simple matrices.
Section \ref{sec:proofnonsimple} shows how to reduce the general case to the simple case.
Section \ref{sec:complexity} provides complexity upper bounds for all problems considered in this paper.
Section \ref{sec:spectrum} gives 
an application of the graph analysis result to the spectrum problem. After a discussion of related work in Section \ref{sec:related}, the paper closes with conclusions and future
directions in Section \ref{sec:conc}.
Some  proofs  that are not required in order to follow the main line of argument in the paper are deferred to the appendix. 
In addition, to make the main line of argument clearer, we consider only the finite graph case in the body of the paper,
which already implies decidability of the finite satisfiability of $\fotwopres$.
The general case is deferred to the appendix.

\myparagraph{Acknowledgements}
We thank Bartosz Bednarczyk for many useful comments and suggestions on preliminary
drafts of this work. 
This paper extends an abstract appearing in ICALP 2020 \cite{usicalp}, and we thank
 the conference reviewers for their feedback. Most of all we are deeply grateful to the reviewers of SICOMP for their detailed
comments on the submission.

This research was funded in whole or in part by EPSRC grant EP/T022124/1.
For the purpose of Open Access, the author has applied a CC BY 
public copyright license to any Author Accepted Manuscript (AAM)
version arising from this submission.

The third author acknowledges generous financial support of Taiwan National Science and Technology Council under grant no.~109-2221-E-002-143-MY3
and NTU under grant no.~109L891808.


\section{Preliminaries}
\label{sec:prelims}

Let $\bbN=\{0,1,2,\ldots\}$
and let $\bbN_{\infty}=\bbN \cup \{\infty\}$.

\myparagraph{Linear  and ultimately periodic sets}
A set of the form $\{a+ip \mid i \in\bbN\}$, for some $a,p\in\bbN$, is a \emph{linear set}.
We will denote such a set by $\prdp{a}$,
where $a$ and $p$ are called the {\em offset} and {\em period} of the set, respectively.
Note that, by definition, $\prd{a}{0}=\{a\}$,
which is a linear set.
For convenience, we define $\emptyset$ and $\{\infty\}$ (which may be written as $\prdp{\infty}$) to also be linear sets. 
An {\em ultimately periodic set} ({\em u.p.s.}) $S$ is a finite union of linear sets.

In this paper we represent a u.p.s.~$S=\{c_1\} \cup \cdots \cup \{c_m\}\cup \prd{a_1}{p_1}\cup\cdots\cup\prd{a_n}{p_n}$,
where $p_1,\ldots,p_n\neq 0$, as a ``finite set'' $\{c_1,\ldots,c_m,\prd{a_1}{p_1},\ldots,\prd{a_n}{p_n}\}$. 
In such a representation the offsets in $S$ are $c_1,\ldots,c_m,a_1,\ldots,a_n$
and the (non-zero) periods are $p_1,\ldots,p_n$.
For an integer $a$, we write $a\in S$, if $a$ is in $S$ in the standard sense.
Abusing notation, we write $\prdp{a} \in S$, if $a+ip \in S$ for every $i\in \bbN$.
We also note that the set of u.p.s.'s is closed under complement, union and intersection~\cite{ginsburg-spanier}.

\myparagraph{Two-variable logic with ultimately periodic counting quantifiers}
An \emph{atomic formula} is one of the following:
\begin{compactitem}
\item  an atom $R(\vec u)$, where $R$ is a predicate, and
 $\vec u$ is a tuple of variables/constants of appropriate size, 
\item  an equality $u=u'$, with $u$ and $u'$
 variables/constants, 
\item one of the formulas $\myT$ and $\myF$ denoting the True and False values.
\end{compactitem}
The logic $\fotwopres$ is the class of formulas using only variables $x$ and $y$,
built up from atomic formulas and equalities using
the usual Boolean connectives and also
\emph{ultimately periodic counting quantification}, which is of the form
$\exists^S x \ \phi$  where $S$ is a u.p.s.~and $\phi$ is an $\fotwopres$ formula.
One special case is  where $S$ is a singleton $\{a\}$ with $a \in \bbN_{\infty}$,
which we write $\exists^a x  \ \phi$;  in case of $a \in \bbN$, these are \emph{counting quantifiers}.
The semantics of  $\fotwopres$ is defined as usual except that, 
for every $a\in \bbN$, $\exists^a x \ \phi$ holds when there are {\em exactly} $a$ number of $x$'s
such that $\phi$ holds, $\exists^\infty x \ \phi$ holds when
there are infinitely many $x$'s such that $\phi$ holds, and $\exists^S x \ \phi$  holds when there is some $a \in S$ such that
$\exists^a x \ \phi$ holds.

Formulas in $\fotwopres$ still use only two variables.
So just as in $\fotwo$ they can be normalized. If they use atomic predicates with arity $3$ or above,
they can be rewritten into an equisatisfiable formula that uses only unary and binary predicates.
See~\cite[Sect.~3]{gkv} for the details of such a rewriting.
In addition,  each constant  $c$ can be represented with a fresh unary predicate $U_c$
that contains exactly one element.
For constants $c_1,c_2$, an atomic predicate $x=c_1$ can then be rewritten as $U_{c_1}(x)$,
and predicate $c_1=c_2$ can be rewritten as $\forall x \ U_{c_1}(x)\leftrightarrow U_{c_2}(x)$.
Thus, in this paper we may assume that $\fotwopres$ formulas
use only unary and binary predicates, and do not use constants.

Note that when $S$ is $\prd{0}{1} \cup \{\infty\} = \bbN_{\infty}$, $\exists^S x \ \phi$ is equivalent
to $\myT$.
When $S$ is $\prd{0}{1} = \bbN$, $\exists^S x \ \phi$ semantically means that there are finitely many $x$ such that $\phi$ holds.
We also observe that, for every formula $\phi$, $\exists^{\emptyset} x \ \phi$ is equivalent to $\myF$, $\exists^{0}x\ \phi$ is equivalent to $\forall x\ \neg \phi$, and
$\neg \exists^S x \ \phi$ is equivalent to $\exists^{\bbN_{\infty}-S} x \ \phi$. We remark that $\bbN_{\infty}-S$ is a u.p.s., since the set of u.p.s.'s is closed under complement.

For example, we can state in $\fotwopres$ that a graph is undirected and
every node in a graph has even degree
(i.e., the graph is Eulerian in the sense that every connected component has Eulerian cycle):
$$
\forall x \forall y\ E(x,y) \leftrightarrow E(y,x) 
\quad \wedge \quad
\forall x \exists ^{S}y\ E(x,y)\wedge x\neq y,\qquad\text{where}\ S= 2^{+2}.
$$
Clearly $\fotwopres$ extends
$\ctwo$, the fragment of the logic where only  counting
quantifiers are used, and $\fotwo$,  the fragment where only the classical quantifier
$\exists  x$  is allowed (which is equivalent to $\exists^S x$ for $S = \{1^{+1},\infty\}$).

\myparagraph{Presburger arithmetic}
An \emph{existential Presburger formula}
is a first-order logic formula of the form 
$\exists x_1 \cdots \exists x_k ~ \phi$,
where $\phi$ is a quantifier-free formula
over the signature including constants $0,1$, a binary function symbol $+$, and a binary
relation $\leq$.
Such a formula is a \emph{sentence} if it has no free variables.
The notion of a  sentence holding
 in a structure interpreting the function, relation, and constants is defined
in the usual way. 
The structure ${\cal N} = (\bbN, +, \leq, 0, 1)$, 
is defined by interpreting $+, \leq, 0, 1$ in
the standard way. We will focus not on this structure, but on  ${\cal N}_\infty = (\bbN_{\infty}, +, \leq, 0, 1)$
which is the same as $\cN$, except that 
there is  an element $\infty$, with
$a + \infty=\infty$  and $a \leq \infty$ for each  
$a \in \bbN_\infty$.  Note that in ${\cal N}_\infty$ there is a unique element $n$ such that $n+1=n$, namely
$\infty$. 
 We will thus abuse notation in the sequel by writing $t=\infty$, where $t$ is a term,  
as syntactic sugar for $t=t+1$.
Since ${\cal N}$ is quantifier-free definable in ${\cal N}_\infty$, satisfaction of a formula in finite integers can still be expressed
when working over ${\cal N}_\infty$.

It is known that the problem 
of checking whether an existential
Presburger sentence holds in $\cal N$ is decidable and is $\np$-complete \cite{christos}.
Further, the analogous problem for $\cal N_\infty$
can easily be reduced to that for $\cal N$. Indeed, we can first guess which variables are mapped to $\infty$
and then which atoms should be true.  Then we can check whether each guessed atomic truth value is consistent
with other guesses,
in the sense that no two contradicting atoms are guessed to be both true or false at the same time.
We can
determine additional variables which must be infinite based on this choice. Finally we can
restrict to atoms that do not involve variables guessed to be infinite, and check that the conjunction is
satisfiable for $\cal N$. This gives us the following theorem:

\begin{theorem} \label{thm:pres} The problems of checking whether an existential
Presburger sentence holds in $\cal N_\infty$ in $\np$.
\end{theorem}


\section{From Analysis of Constrained Regular Graph Problems to Decidability of $\fotwopres$}
\label{sec:main}

In this section we prove decidability of $\fotwopres$ satisfiability via reduction to some results on analyzing
certain graph problems. These graph analysis results will  then be proven later in the paper by means of connections to existential Presburger sentences.
Our decision procedure is based on the key notion of biregular graphs.
Note that whenever we talk about graphs or digraphs (i.e., directed graphs), by default we allow both finite 
and infinite sets of vertices and edges.

\subsection{Biregular graphs and constrained biregular graph problems} \label{subsec:graphs}

We fix an integer $p\geq 0$.
Let $\bbNop$ denote the set $\bbNo \cup \{\prdp{a} \mid a\in \bbNo\}$.
For integers $t,m\geq 1$, let $\bbNop^{t\times m}$ denote the set of matrices
with $t$ rows and $m$ columns where each entry is an element of $\bbNop$.
For an integer $k\geq 1$, let $[k]$ denote the set $\{1,2,\ldots,k\}$.

A {\em $t$-color bipartite (undirected) graph} is $G=(U,V,E_1,\ldots,E_t)$,
where $U$ and $V$ are sets of vertices, and 
$E_1,\ldots,E_t$ are pairwise disjoint sets of edges between $U$ and $V$ -- that is, pairs $(u,v) \in U \times V$. 
Edges in $E_i$ are called $E_i$-{\em edges}, and we often refer to an index from $1$ to $t$ -- the type of an edge --
as a \emph{color}.
For a vertex $u \in U\cup V$, the $E_i$-\emph{degree} of $u$ is the number of $E_i$-edges adjacent to $u$.
The degree of $u$ is the sum of the $E_i$-degrees for $i=1, \ldots, t$: we use this primarily for brevity when the
there is only a single edge relation. In the context of multiple relations,  we sometimes refer to this as the \emph{total degree} to emphasize that
all relations are considered.
We say that $G$ is {\em complete} if $U\times V = \bigcup_{i=1}^{t} E_i$.

For two matrices $A\in \bbN_{\infty,+p}^{t\times m}$ and $B\in \bbN_{\infty,+p}^{t\times n}$,
graph $G$ is an {\em $A|B$-biregular} graph,
if there exist a partition\footnote{As usual, we write $U=U_{1}\uplus \cdots\uplus U_{m}$
to denote the partition of $U$ into the sets $U_1,\ldots,U_m$,
i.e., when $U=U_1\cup\cdots\cup U_m$ for pairwise disjoint sets $U_1,\ldots,U_m$.}
$U=U_{1}\uplus \cdots\uplus U_{m}$
and a partition $V=V_{1}\uplus \cdots\uplus V_n$ such that for every $i\in [t]$,
for every $k\in [m]$ and for every $\ell\in [n]$,
the $E_i$-degree of every vertex in $U_k$ is $A_{i,k}$ (i.e., the element of $A$ in the $i$-th row and $k$-th column) and the $E_i$-degree
of every vertex in $V_\ell$ is $B_{i,\ell}$; note here that, by abuse of notation,
when we say that a non-negative integer $z$ is a linear set $\prdp{a}$,
we mean that $z\in \prdp{a}$. 
For each such partition, we say that \emph{$G$ has size $\vM|\vN$}, where
$\vM=(|U_{1}|,\ldots,|U_{m}|)$ and $\vN=(|V_{1}|,\ldots,|V_{n}|)$.
The partitions $U=U_1\uplus \cdots\uplus U_m$ and $V=V_1\uplus \cdots \uplus V_n$
are called a \emph{witness partition} for $A|B$-biregularity.
We should remark that some $U_i$ and $V_i$ are allowed to be empty.
The matrices $A$ and $B$ are called {\em ($t$-color) degree matrices}
and the vectors $\vM$ and $\vN$ are called {\em size vectors}.
For convenience, we treat the empty graph (i.e., the graph with no vertex)
as a complete $A|B$-biregular graph for any degree matrices $A$ and $B$.

The above definitions can be easily adapted for the case of directed  graphs that are not necessarily bipartite.
A $t$-color {\em directed graph} (or \emph{digraph}) is a tuple $G=(V,E_1,\ldots,E_t)$,
where $E_1,\ldots,E_t$ are pairwise disjoint sets of directed edges on a  set $V$ of vertices 
such that ({\em i}) there are no self-loops -- that is, $(v,v) \notin E_i$
for every $v \in V$ and every $E_i$, and ({\em ii}) if $(u,v) \in E_i$ then $(v,u) \notin E_j$ for every $E_j$.
As before, edges in $E_i$ are called $E_i$-edges.
The $E_i$-indegree and -outdegree of a vertex $u$
is defined as the number of incoming and outgoing $E_i$-edges incident to $u$.
We say that $G$ is {\em complete}, if for every $u,v \in V$ and $u\neq v$,
either $(u,v)$ or $(v,u)$ is an $E_i$-edge, for some $E_i$.
We consider the empty digraph and the digraph with only one vertex without any edge
as  complete digraphs.

We say that $G$ is an {\em $A|B$-regular} digraph, 
for $A,B\in \bbNop^{t\times m}$,
if there exists a partition $V=V_{1}\uplus \cdots\uplus V_n$
such that, for every $i\in [t]$ and for every $k\in [m]$,
the $E_i$-outdegree and -indegree of every vertex in $V_k$ is $A_{i,k}$ and $B_{i,k}$, respectively.
We say that $G$ has size $(|V_{1}|,\ldots,|V_{m}|)$, and
call $V=V_1\uplus \cdots\uplus V_m$ a witness partition for $A|B$-regularity of $G$.
When the entries in $A$ and $B$ are all $0$ or $\prdp{0}$, 
we regard the graph with only one vertex to be a complete $A|B$-regular digraph.

In this work we will be interested in computational problems concerning
the possible sizes
 of an $A|B$-biregular graph or -regular digraph, and the possible sizes of a complete $A|B$-biregular
graph or -regular digraph. 
Biregular one-color graphs are arguably quite natural, independently of any connection with satisfiability of a logic.
Completeness, as well as disjointness of edges for different colors, is more motivated specifically by our application to logic. Intuitively, the different edge colors in a biregular graph
 represent  the possible relationships between two elements in a structure. One color might represent a binary relationship, 
and another might represent
its negation.  Since every two elements have \emph{some} relationship, we want every pairs to be colored by exactly one edge color.
This will be formalized in Subsection \ref{subsec:procedure} below.

We briefly consider the (finite) \emph{membership problem}:
given size vectors $\vM, \vN$ along with  matrices  $A$ and $B$, all without $\infty$,
decide if there is $A|B$-biregular graph $G$ with size $\vM|\vN$. 
The problem is clearly in $\np$ if the entries in $\vM$ and $\vN$ are in unary, 
since we can guess $G$ and check that it is $A|B$-biregular with size $\vM|\vN$.

The \emph{degree sequence} for a ($1$-color) bipartite graph $(U,V,E)$ 
with $k$ vertices in $U$ and $k'$ vertices in $V$, 
is the pair of sequences $d_1, \ldots, d_k$ and $d'_1, \ldots, d'_{k'}$
where $d_1, \ldots, d_k$ enumerates the degrees of elements in $U$ in non-decreasing order 
and $d'_1, \ldots, d'_{k'}$ enumerates the degrees of elements of $V$ in non-decreasing order.
Note that given $1$-color degree matrices $A,B$ with only entries from $\bbN$ and size vectors $\vM, \vN$, coded in unary,
we can generate the degree sequences of any $A|B$-biregular graph with sizes $\vM|\vN$ in linear time:
if entry with fixed degree $d$ is to have size $m$, the degree sequences will contain a contiguous subsequence
consisting of $m$ $d$'s.

It follows from the Gale-Reyser Theorem (the Main Theorem in \cite{galer}) that one can determine in polynomial time whether
a pair of sequences is the degree sequences of a bipartite graph. 
From this we derive:

\begin{proposition} 
In the case of $1$-color degree matrices with only entries from $\bbN$,
coded in unary, the membership problem is in $\ptime$.
\end{proposition}

While we will not provide a detailed analysis of the complexity of the membership problem,
we will show that, when fixing $A$ and $B$, we  can succinctly describe -- and hence efficiently 
compute -- the size vectors of partitioned graphs for which membership holds.
This will be a consequence of
the following theorem, which
will also be the main technical tool for our decidability result.

\begin{theorem}
\label{thm:main-lemma-bireg}
For every degree matrices $A\in \bbN_{\infty,+p}^{t\times m}$ and $B\in \bbN_{\infty,+p}^{t\times n}$,
there is an (effectively computable) existential 
Presburger formula $\biregc_{A|B}(\vx,\vy)$ such that, 
for every pair of size vectors
 $\vM \in \bbNo^m$  and $\vN \in \bbNo^n$,
there is a complete $A|B$-biregular graph with size $\vM|\vN$
if and only if $\biregc_{A|B}(\vM,\vN)$ holds in $\cN_{\infty}$.
\end{theorem}

We have an analogous theorem for digraphs:

\begin{theorem}
\label{thm:main-lemma-direg}
For every degree matrices $A\in \bbN_{\infty,+p}^{t\times m}$ and $B\in \bbN_{\infty,+p}^{t\times m}$,
there is an (effectively computable) existential 
Presburger formula $\diregc_{A|B}(\vx)$ such that 
for every size vector $\vM\in \bbNo^m$,
there is a complete $A|B$-regular digraph with size $\vM$
if and only if $\diregc_{A|B}(\vM)$ holds in $\cN_{\infty}$.
\end{theorem}

The proofs of these two theorems are given later in Sections \ref{sec:1type}--\ref{sec:proofnonsimple},
beginning with an overview of the ideas via an extremely special case (the ``$1$-color case'') in Section~\ref{sec:1type}.
An immediate  consequence of these results is the decidability of graph analysis problems:

\begin{corollary}
\label{cor:graph-decision}
We can decide, given
matrices $A\in \bbN_{\infty,+p}^{t\times m}$ and $B\in \bbN_{\infty,+p}^{t\times n}$,
whether there exists a complete $A|B$-biregular graph.
The analogous result holds for digraphs.
Moreover, the decision procedure runs in non-deterministic exponential time
in the size of $A$ and $B$ where the coefficients are written in binary.
\end{corollary}

\begin{proof}
By Theorems~\ref{thm:main-lemma-bireg} and~\ref{thm:main-lemma-direg}, we can reduce the graph existence problems to  checking whether the existential closures of $\biregc_{A|B}(\vx,\vy)$ and $\diregc_{A|B}(\vx)$ hold in $\cN_{\infty}$. In turn, these problems are decidable by Theorem~\ref{thm:pres}.
Moreover, the upper bound for both cases holds by Lemma~\ref{lem:algo-for-bireg}, 
which we prove in Section~\ref{sec:complexity}.
\end{proof}

\begin{remark}
\label{rem:formula-general-finite-readjustment}
Theorems \ref{thm:main-lemma-bireg} and~\ref{thm:main-lemma-direg}, as well as Corollary \ref{cor:graph-decision}, 
can be easily readjusted in the case where we are interested only in finite sizes, i.e., when $\vM \in \bbN^m$ and $\vN\in \bbN^n$,
by replacing every atom $x=\infty$ in the formulas with the False value $\myF$
and requiring them to hold in $\cN$, instead of $\cN_{\infty}$.
Alternatively, we can also state inside the formulas that 
none of the variables in $\vx$ and $\vy$ are equal to $\infty$. 
\end{remark}

The rest of this section will be devoted to proving the decidability result concerning our logic,
making use of these theorems.

\subsection{Reducing satisfiability in the logic to biregular graph problems} \label{subsec:procedure}

We are now ready to present the decidability result for two variable logic with
ultimately periodic quantifiers: 
\begin{theorem}
\label{theo:main}
For every $\fotwopres$ sentence $\phi$, 
(i) there is an (effectively computable) existential Presburger sentence $\PREB^\infty_\phi$ such that
 $\phi$ has a model iff $\PREB^\infty_\phi$ holds in $\cN_{\infty}$
and
(ii) there is an (effectively computable) existential Presburger sentence $\PREB_\phi$ such that
 $\phi$ has a finite model iff $\PREB_\phi$ holds in $\cN$.
\end{theorem}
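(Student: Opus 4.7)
The plan is to reduce $\fotwopres$ satisfiability to the regular digraph existence problem of Lemma~\ref{lem:main-lemma-direg}, whose existential Presburger formula $\diregc_{A|B}$ will serve as the main building block of $\PREB_\phi$. First I would normalize $\phi$ into a Scott-style form by induction on the nesting of counting quantifiers: each maximal inner subformula $\exists^{S_i} y\ \chi_i(x,y)$ is abbreviated by a fresh unary predicate $P_i(x)$, and an equivalence axiom $\forall x\ [P_i(x) \leftrightarrow \exists^{S_i} y\ \chi_i(x,y)]$ is conjoined; after iterating, one obtains
\[
\phi' \;=\; \forall x\forall y\ \psi_0(x,y) \;\wedge\; \bigwedge_{i=1}^{r} \forall x\ \bigl[\,P_i(x) \leftrightarrow \exists^{S_i} y\ \chi_i(x,y)\,\bigr],
\]
over an extended signature in which $\psi_0$ and each $\chi_i$ are quantifier-free, equisatisfiable with $\phi$ under reduct to the original symbols.

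Next I would enumerate the 1-types $\pi_1,\ldots,\pi_m$ and the 2-types over the extended signature; the universal conjunct $\forall x\forall y\ \psi_0$ simply deletes forbidden 2-types. For any element of 1-type $\pi_k$, the axiom governing $P_i$ dictates that the number of $y$'s whose 2-type with $x$ lies in the (quantifier-free definable) set satisfying $\chi_i$ belongs to $S_i$ when $P_i \in \pi_k$ and to $\bbN_\infty \setminus S_i$ otherwise---both u.p.s. Viewing 2-types as edge colors on a complete directed graph whose vertex partition $V_1,\ldots,V_m$ is indexed by 1-types, these constraints become in/out-degree specifications of exactly the form prescribed by Lemma~\ref{lem:main-lemma-direg}. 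For each choice of (i) which 1-types are realized and (ii) for each (color, realized class $V_k$), a specific $\bbNop$-entry drawn from the u.p.s.\ the axioms force (either a fixed $a$ or an arithmetic progression $\prdp{a}$), one assembles matrices $A,B \in \bbN_{\infty,+p}^{t\times m}$ and invokes $\diregc_{A|B}$. Define $\PREB_\phi$ as the existential Presburger disjunction, over all such consistent choices, of $\diregc_{A|B}(\vx)$ conjoined with $x_k \geq 1$ for every realized class and $x_k = 0$ otherwise; evaluation in $\cN_\infty$ yields part~(i) of the theorem, and evaluation in $\cN$ gives part~(ii).

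The main obstacle is the consistency layer between 1-types and degree profiles. A 1-type $\pi_k$ commits to truth values for each $P_i$, and the $\bbNop$-entry guessed for the color class of $\chi_i$ at $V_k$ must produce a degree that respects those truth values under $S_i$---a cross-check that must hold uniformly across all counting axioms, including those whose $\chi_i$ refers to other $P_j$'s introduced earlier in the normalization. Showing that the number of such guesses is finite and that the resulting Presburger disjunction faithfully captures both arbitrary and finite models (the latter requiring that any witness digraph extracted from a $\diregc_{A|B}$-satisfying tuple be realizable with all $V_k$ finite) is the delicate step: it is here that the ``additional inductive arguments'' alluded to in the introduction arise, and one must also verify that the $\infty$-conventions for sums and for $a \leq \infty$ align with u.p.s.\ membership so that the reduction is sound over $\cN_\infty$.
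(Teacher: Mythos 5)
Your overall strategy (Scott-style normalization, enumeration of 1-types and 2-types, reduction to the regular-graph lemmas, evaluation over $\cN_\infty$ versus $\cN$) matches the paper's, but the reduction as you set it up has two concrete gaps. First, you partition the domain only by 1-type and then guess, per disjunct, a single $\bbNop$-entry for each (color, class $V_k$) pair, so that within one disjunct every element of $V_k$ is forced to realize its counting constraints via the same per-color degree profile. This makes the reduction incomplete: a constraint $\forall x\, \exists^{S_i} y\,\beta_i(x,y)\wedge x\neq y$ bounds only the \emph{sum} over all 2-types and all target 1-types, and in a genuine model two elements of the same 1-type may decompose that sum differently (e.g.\ one element meeting $S_i$ entirely through $E_1$-neighbours of type $\pi_1$, another through $E_2$-neighbours of type $\pi_2$). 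No single disjunct of your formula covers such a model. The paper resolves this by introducing behavior functions $g:\cE\times\Pi\to\bbNop$, defining ``good'' functions as those whose sums land in each $S_i$, and refining the partition to the classes $A_{\pi,g}$, with one Presburger variable $X_{\pi,g}$ per pair; finiteness of the set of good functions is what replaces your ``finitely many guesses'' concern.

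Second, you encode the whole structure as one complete digraph with partition indexed by 1-types and invoke only Lemma~\ref{lem:main-lemma-direg}. The degree specifications in that lemma constrain only the \emph{total} $E_i$-in/out-degree of a vertex in $V_k$, not how those edges distribute over target classes, and a 2-type (as a set of binary atomic formulas) does not determine the 1-types of its endpoints. Consequently the universal conjunct $\forall x\forall y\,\alpha(x,y)$, which may forbid a 2-type $E$ between $\pi_1$ and $\pi_2$ while permitting it between $\pi_1$ and $\pi_1$, cannot be enforced: a witness digraph satisfying your degree constraints may route $E$-edges into the forbidden class. Your remark that $\alpha$ ``simply deletes forbidden 2-types'' is not enough, since forbidden-ness depends on the \emph{pair} of 1-types. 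The paper's fix is to decompose the structure into pairwise subproblems --- a complete $M_\pi|\rev{M}_\pi$-regular digraph inside each 1-type class and a complete $L_{\pi_j}|\rev{L}_{\pi_i}$-biregular bipartite graph between each pair of distinct classes --- so that in every subproblem the endpoint 1-types are fixed and incompatibility is handled by the separate constraint $\const_1$ forcing $X_{\pi,g}=0$. Both Lemma~\ref{lem:main-lemma-bireg} and Lemma~\ref{lem:main-lemma-direg} are needed; your proposal cannot do without the bipartite one.
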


From the decision procedure for existential Presburger formulas (Theorem \ref{thm:pres})
mentioned in Section \ref{sec:prelims}, we will immediately obtain the following corollary:

\begin{corollary} \label{cor:decide} Both  satisfiability and finite satisfiability
 for $\fotwopres$ are decidable.
\end{corollary}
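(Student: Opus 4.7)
The plan is to reduce $\fotwopres$ satisfiability of a sentence $\phi$ to the existence of a complete regular digraph with prescribed degree pattern, and then invoke Lemma~\ref{lem:main-lemma-direg} to extract $\PREB_\phi$. First I would rewrite $\phi$ in a Scott-like normal form
\[
\forall x\,\forall y\;\chi(x,y)\;\wedge\;\bigwedge_{i=1}^{k}\forall x\,\exists^{S_i} y\;\psi_i(x,y),
\]
with $\chi$ and each $\psi_i$ quantifier-free over an expanded signature where fresh unary predicates abbreviate subformulas that contain nested counting. The rewriting is (finite-)satisfiability-preserving, but has to propagate each ultimately periodic subquantifier carefully by induction on quantifier depth, using the fact that the complement of an ultimately periodic set is again ultimately periodic, so negations of counting subformulas stay within the logic.

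Now any model $\mathfrak{M}\models\phi$ is, up to isomorphism, a complete digraph on its vertex set partitioned by $1$-type, with each ordered pair of distinct vertices labelled by its $2$-type; there are finitely many $2$-types admissible under $\chi$, which I would label $E_1,\ldots,E_t$. The conjunct $\forall x\,\forall y\,\chi$ just forbids certain colour/partition combinations, while each conjunct $\forall x\,\exists^{S_i} y\,\psi_i$ says that for every vertex $v$ the sum (over colours $E_j$ whose $2$-type satisfies $\psi_i$) of the $E_j$-outdegrees of $v$ must lie in $S_i$. Letting $p$ be the least common multiple of the periods occurring in the $S_i$'s, I would refine the $1$-type partition by attaching to each class a tuple that records, for every conjunct $i$, which linear component $\prdp{a}$ or singleton of $S_i$ its vertices realise. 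Each refined class is then homogeneous, yielding matrices $A,B\in\bbN_{\infty,+p}^{t\times m}$ that encode the required in- and out-degree patterns.

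For each such refinement (equivalently, each choice of $(A,B)$), Lemma~\ref{lem:main-lemma-direg} supplies a Presburger formula $\diregc_{A|B}(\vx)$ characterising the size vectors $\vM$ that admit a complete $A|B$-regular digraph. I would then set
\[
\PREB_\phi \;=\; \bigvee_{(A,B)}\;\exists\vx\;\bigl(\diregc_{A|B}(\vx)\,\wedge\,\eta_{A,B}(\vx)\bigr),
\]
where $\eta_{A,B}$ enforces any residual cardinality constraints coming from top-level counting in $\phi$ (by summing the $x_k$'s of refined classes sharing the same underlying $1$-type), and for the finite-satisfiability variant we further conjoin $x_i\neq\infty$ as noted after the lemmas. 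Soundness is immediate by reading off the refined partition sizes of any model of $\phi$; completeness follows by expanding a digraph witnessing some $\diregc_{A|B}(\vM)$ into a structure where each coloured edge is interpreted by the $2$-type recorded by its colour. The bipartite Lemma~\ref{lem:main-lemma-bireg} can be used analogously when splitting vertices by ``source'' and ``target'' role is notationally convenient.

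The main obstacle is the normalisation. For pure $\ctwo$ it is classical, but here abbreviating a nested $\exists^S y\,\theta(x,y)$ by a fresh $P(x)$ is sound only once the bi-implication $\forall x\,(P(x)\leftrightarrow\exists^S y\,\theta)$ has been decomposed into normal-form conjuncts such as $\forall x\,(P(x)\to\exists^S y\,\theta)$ and $\forall x\,(\neg P(x)\to\exists^{\bbN_{\infty}\setminus S} y\,\theta)$, and this rewriting must be iterated down to atomic depth while simultaneously tracking exact counts and periodic residues. This is precisely the ``additional inductive argument'' the authors flag when discussing the interaction of ordinary and modulo counting; once it is in place, the remainder of the proof is essentially bookkeeping around Lemma~\ref{lem:main-lemma-direg}.
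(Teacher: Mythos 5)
Your overall strategy is the paper's: normalise to a Scott-like form, characterise the admissible partition cardinalities by existential Presburger formulas via the regular-graph lemmas, and then appeal to Theorem~\ref{thm:pres}. (In the paper the corollary itself is immediate from Theorem~\ref{theo:main} plus Theorem~\ref{thm:pres}; what you are really sketching is a proof of Theorem~\ref{theo:main}.) The normal-form step and the treatment of finite versus infinite models ($\cN$ versus $\cN_\infty$, or conjoining $x_i\neq\infty$) match the paper.

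There is, however, a concrete gap in the reduction itself: you cannot encode the whole structure as a \emph{single} complete $A|B$-regular digraph and invoke only Lemma~\ref{lem:main-lemma-direg}. Regularity of a digraph constrains, for each colour $E_i$ and each partition class, only the \emph{total} $E_i$-in- and out-degree of its vertices; it says nothing about how those edges are distributed among the target classes. But the conjunct $\forall x\forall y\,\chi(x,y)$ forbids specific combinations of (source $1$-type, $2$-type, target $1$-type): if a vertex of $1$-type $\pi$ must have some $E$-edges (say, to vertices of $1$-type $\pi''$) but is forbidden from having $E$-edges to vertices of $1$-type $\pi'$, no assignment of total $E$-degrees can express this, and the ``completeness'' direction of your argument (expanding a digraph witnessing $\diregc_{A|B}(\vM)$ into a model) breaks: the witnessing digraph may route $E$-edges into forbidden classes. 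This is exactly why the paper's behaviour functions $g(E,\pi')$ are indexed by the \emph{target} $1$-type, and why the construction decomposes the structure into a complete regular digraph \emph{within} each $1$-type class (Lemma~\ref{lem:main-lemma-direg}) together with a complete biregular bipartite graph \emph{between} each pair of distinct $1$-type classes (Lemma~\ref{lem:main-lemma-bireg}); inside each such subproblem the total degree and the per-target-class degree coincide, so regularity suffices. Your remark that the bipartite lemma is merely a notational convenience misses this essential role. One could instead repair the single-digraph route by enriching colours to triples (source $1$-type, $2$-type, target $1$-type) and forcing zero degree in mismatched colours, but that is an additional construction you would have to supply and verify; as written, the step fails.
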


We  prove Theorem \ref{theo:main} using
Theorems ~\ref{thm:main-lemma-bireg} and~\ref{thm:main-lemma-direg}.
We start by observing that satisfiability for an $\fotwopres$ sentence ---  as well as spectrum analysis, to be defined formally in Section \ref{sec:spectrum} ---
can be converted effectively into the same question for a sentence
in a variant of Scott normal form:
\begin{align}
\label{eq:snf}
\phi &\ := \
\forall x \forall y\  \alpha(x,y)
\ \wedge\ 
\bigwedge_{i=1}^{k} \forall x \exists^{S_i} y\ \beta_i(x,y) \wedge x \neq y,
\end{align}
where $\alpha(x,y)$ is a quantifier-free formula, each $\beta_i(x,y)$ is an atomic formula and 
each $S_i$ is a u.p.s.
More precisely, every $\fotwopres$ sentence
can be converted effectively into a sentence in form (\ref{eq:snf}) such
that they are equisatisfiable and have the same spectrum.
The proof, which is fairly standard, can be found in the appendix.
By taking the least common multiple, we may assume that all the non-zero periods in all $S_i$ are the same.
For example, if $S_1=\{\prd{0}{2}\}$ and $S_2=\{\prd{0}{3}\}$,
they can be rewritten as $S_1 = \{\prd{0}{6},\prd{2}{6},\prd{4}{6}\}$
and $S_2=\{\prd{0}{6},\prd{3}{6}\}$.
Here it is worth mentioning that when we write $\alpha(x,y)$ and $\beta(x,y)$,
we implicitly assume that both $x$ and $y$ occur.
For the rest of this section, we fix an $\fotwopres$ sentence $\phi$
in form \eqref{eq:snf}, with all $S_i$ as described above. The signature of structures we consider will be the signature
of $\phi$.

We recall some standard terminology.
A {\em $1$-type} is a maximally consistent set of atomic and negated atomic formulas 
using only variable $x$,
including atomic formulas such as $r(x,x)$ or $\neg r(x,x)$.
Each $1$-type can be identified with the quantifier-free formula
formed as the conjunction of its constituent formulas.
Thus, we say that an element $u$ in a structure $\cA$ has $1$-type $\pi$,
if $\pi$ holds on the element $u$.
For a structure $\cA$ with domain $A$, we let $A_{\pi}$ denote the set of elements in $\cA$ with $1$-type $\pi$.
Clearly  $A$ is partitioned into the sets $A_{\pi}$ with  $\pi$ ranging over $1$-types.
Similarly, a {\em $2$-type} is a maximally consistent set of binary atoms and negations of atoms containing $x\neq y$,
where each atom or its negation uses two variables $x$ and $y$.\footnote{Under standard definitions, such as the ones in~\cite{gkv,ctwobinary},
a $2$-type may contain unary atoms or negations of unary atoms involving variable $x$ or $y$.
In this paper we use a different definition and require that each atom and the negation of an atom 
in a $2$-type explicitly mentions both $x$ and $y$.}
The notion of a pair of elements $(u,v)$ in a structure $\cA$ having $2$-type $E$ is defined as for $1$-types.
We let $\Pi = \{\pi_1, \pi_2, \ldots, \pi_n\}$ and $\cE=\{E_1,\ldots,E_t\}$
denote the sets of all $1$-types and $2$-types (over the same signature as $\phi$), respectively.

We can now explain the connection between satisfiability in the logic and graph analysis.
This will involve associating to a model  $\cA$ for a formula $\phi$ a collection of graphs and digraphs, along
with  partitions that witnesses biregularity of the graphs and digraphs.
The following crucial definition explains the first aspect,
how to go from a structure $\cA$ to a collection of graphs and digraphs.

\begin{definition} \label{def:graph-abstraction}
Let $\cA$ be a structure.
A \emph{graph representation} of $\cA$
is a complete $t$-color digraph $G_{\cA}=(V,E_1,\ldots,E_t)$ 
where the vertices in $G_{\cA}$ are the elements in the domain of $\cA$
and for each pair of elements $(u,v)$ where $u\neq v$,
we put an arbitrary orientation between them: either from $u$ to $v$ or from $v$ to $u$.
\begin{itemize}
\item
If the orientation is from $u$ to $v$, then we set $(u,v)$ as an $E_i$-edge where $E_i$ is the $2$-type of $(u,v)$.
\item 
If the orientation is from $v$ to $u$, then we set $(v,u)$ as an $E_i$-edge where $E_i$ is the $2$-type of $(v,u)$.
\end{itemize}
For a graph representation $G_{\cA}$ of a structure $\cA$,
we will consider two kinds of subgraphs of $G_{\cA}$.
The first is the induced subgraph of $G_{\cA}$ by the set $A_{\pi}$ for a $1$-type  $\pi$,
denoted by $G_{\cA,\pi}$.
The second is the bipartite restriction of $G_{\cA}$ on the vertices in $A_{\pi}$ and $A_{\pi'}$ for different $1$-types $\pi,\pi'$,
denoted by $G_{\cA,\pi,\pi'}$.
That is, $G_{\cA,\pi,\pi'}$ is the complete bipartite graph where 
$A_{\pi}$ is the set of vertices on the left hand side,
$A_{\pi'}$ is the set of vertices on the right hand side and
the edges are between the vertices in $A_{\pi}$
and the vertices in $A_{\pi'}$.
Note that in $G_{\cA,\pi,\pi'}$ the edges are oriented.
Some edges are oriented from the vertices in $A_{\pi}$ to the vertices in $A_{\pi'}$,
and some from the vertices in $A_{\pi'}$ to the vertices in $A_{\pi}$.
It is complete since for every pair $(u,v)\in A_{\pi}\times A_{\pi'}$,
either $(u,v)$ or $(v,u)$ is an $E_i$-edge, for some $E_i$.
\end{definition}

See Figure~\ref{fig:behavior} for an illustration of a graph representation
of a structure $\cA$ with domain $\{u_1,u_2,u_3,v_1,v_2,w\}$.
The $1$-types are $\pi_1,\pi_2,\pi_3$, and $2$-types are $E_1,E_2,E_3,E_4$.
In the graph representation the edge between $u_1$ and $v_1$ is oriented from $u_1$ to $v_1$
and the $2$-type of $(u_1,v_1)$ is $E_1$.

\begin{figure}
\begin{center}

\begin{tikzpicture}

\definecolor{emerald}{rgb}{0.31, 0.78, 0.47}

\node (b0) at (-2,3) {\color{red}\footnotesize$E_2$};
\node (b1) at (-1.25,2.5) {\color{red}\footnotesize$E_2$};

\node[circle,fill=blue,inner sep=0pt,minimum size=3pt,label=left:{\footnotesize$u_1$}] (u1) at (0,2) {};
\node[circle,fill=blue,inner sep=0pt,minimum size=3pt,label=below:{\small $u_2$}] (u2) at (0,0) {};
\node[circle,fill=blue,inner sep=0pt,minimum size=3pt,label=below:{\small $u_3$}] (u3) at (0,-2) {};
\draw[gray!50] (0,0) ellipse (1.1cm and 2.5cm);
\node at (0,-2.8) {\small $A_{\pi_1}$};

\node[circle,fill=blue,inner sep=0pt,minimum size=3pt,label=right:{\small $v_1$}] (v1) at (7,1.5) {};
\node[circle,fill=blue,inner sep=0pt,minimum size=3pt,label=below:{\small $v_2$}] (v2) at (7,-1.5) {};
\draw[gray!50] (7,0) ellipse (.9cm and 2cm);
\node at (7,-2.3) {\small $A_{\pi_2}$};

\node[circle,fill=blue,inner sep=0pt,minimum size=3pt,label=above:{\small $w$}] (w) at (3.5,5) {};
\draw[gray!50] (3.5,5) ellipse (.35cm and .7cm);
\node at (3.5,6) {\small $A_{\pi_3}$};


\draw [-{Stealth[scale=1.1]},shorten >=6pt,blue,line width =.2mm] (u1) to [bend left=50] (u2);
\node[fill=white] at (.55,1) {\color{blue}\footnotesize$E_3$};

\draw [-{Stealth[scale=1.1]},shorten >=6pt,blue,line width=.2mm] (u1) to [bend right=40] (u3);
\node[fill=white] at (-.7,-.7) {\color{blue}\footnotesize$E_3$};

\draw [-{Stealth[scale=1.1]},shorten >=6pt,blue,line width =.2mm] (u2) to [bend left=50] (u3);
\node[fill=white] at (.55,-1) {\color{blue}\footnotesize$E_3$};

\draw [red,line width =.2mm] (w) to [bend right=30]  (b0);
\draw [-{Stealth[scale=1.1]},shorten >=6pt,red,line width =.2mm] (b0) to [bend right=45]  (u3);

\draw [red,line width=.2mm] (w) to [bend right=30] (b1);
\draw [-{Stealth[scale=1.1]},shorten >=6pt,red, line width =.2mm] (b1) to [bend right=30]   (u2);

\draw [-{Stealth[scale=1.1]},shorten >=6pt,emerald,line width=.2mm] (u1) -- node[color=emerald,align=center,fill=white] {\footnotesize$E_4$} (w);

\draw [-{Stealth[scale=1.1]},shorten >=6pt,blue,line width =.2mm] (w) to [bend left=85]   (v2);
\node[fill=white] at (7.2,3) {\color{blue}\footnotesize$E_3$};

\draw [-{Stealth[scale=1.1]},shorten >=6pt,red,line width =.2mm] (v1) -- node[color=red,align=center,fill=white] {\footnotesize$E_2$}  (w);

\draw [-{Stealth[scale=1.1]},shorten >=6pt,black,line width =.2mm] (v1) -- node[color=black,align=center,fill=white] {\footnotesize$E_1$}  (v2);


\draw [-{Stealth[scale=1.1]},shorten >=6pt,line width =.2mm] (u1) -- node[color=black,align=center,fill=white] {\footnotesize$E_1$} (v1);
\draw [-{Stealth[scale=1.1]},shorten >=6pt,red,line width =.2mm] (u1) -- node[xshift=-1.7cm,yshift=.85cm,color=red,align=center,fill=white] {\small$E_2$} (v2);

\draw [-{Stealth[scale=1.1]},shorten >=6pt,line width =.2mm] (u2) -- node[xshift=-1.7cm,yshift=.4cm,color=black,align=center,fill=white] {\small$E_1$} (v2);
\draw [-{Stealth[scale=1.1]},shorten >=6pt,emerald,line width =.2mm] (v1) -- node[xshift=.8cm,yshift=.17cm,color=emerald,align=center,fill=white] {\small$E_4$} (u2);

\draw [-{Stealth[scale=1.1]},shorten >=6pt,emerald,line width =.2mm] (v2) -- node[color=emerald,align=center,fill=white] {\small$E_4$} (u3);
\draw [-{Stealth[scale=1.1]},shorten >=6pt,emerald,line width =.2mm] (v1) -- node[xshift=1.7cm,yshift=.8cm,color=emerald,align=center,fill=white] {\small$E_4$} (u3);

\end{tikzpicture}

\end{center}
\label{fig:behavior}
\caption{Illustration of a graph representation of a structure with $1$-types $\pi_1,\pi_2,\pi_3$.
The $2$-types are $E_1,E_2,E_3,E_4$ represented by edges with color black, red, blue and green, respectively.
The vertices $u_1,u_2,u_3$ are in $A_{\pi_1}$, $v_1,v_2$ are in $A_{\pi_2}$ and $w$ is in $A_{\pi_3}$.}
\end{figure}

\begin{remark}
\label{rem:graph-rep}
It is worth noting that for a structure $\cA$,
the graph representation of $\cA$ is not unique since 
it depends on the orientation put between the vertices.
On the other hand, a graph representation uniquely defines a structure
since the information about the vertices and the edges in a graph representation,
i.e., the $1$- and $2$-types,
uniquely determines the relations in the structure.
\end{remark}

The biregular graph problem which our reduction produces will involve counting the possible sizes of 
certain partitions in the vector of graphs $G_{\cA, \pi, \pi'}$ and $G_{\cA, \pi}$,
for every graph representation $G_{\cA}$ of every structure $\cA\models \phi$.
We now explain the partitions we are looking for.

Let $g:\{\tout,\tin\}\times\TwoTypes \times \OneTypes \rightarrow \bbNop$ be a function.
We will use  $g$ to describe the ``behavior'' of elements in a graph representation $G_{\cA}$ in the 
following sense.
We say that an element $u\in A$  {\em behaves according to $g$} in a graph representation $G_{\cA}$, if, 
for every $\pi\in \OneTypes$ and for every $E\in \TwoTypes$:
\begin{itemize}
\item
the number of outgoing $E_i$-edges in the graph $G_{\cA}$ from $u$ to vertices $v \in A_{\pi}$ is $g(\tout,E,\pi)$,
\item
the number of incoming $E_i$-edges in the graph $G_{\cA}$ to $u$ from vertices $v \in A_{\pi}$ is $g(\tin,E,\pi)$.
\end{itemize}
For example, in the graph representation in Figure~\ref{fig:behavior}
the element $w$ behaves according to the following function $g_1$:
\begin{itemize}
\item
$g_1(\tout, E_2,\pi_1) = 2$, $g_1(\tout, E_3,\pi_2)  = 1$,
$g_1(\tin,E_2,\pi_2) =1$, $g_1(\tin,E_4,\pi_1) = 1$.
\item 
$g_1$ maps all the other tuples in $\{\tout,\tin\}\times\TwoTypes \times \OneTypes$ to $0$.
\end{itemize}
As another example, the element $u_1$ behaves according to the following function $g_2$:
\begin{itemize}
\item
$g_2(\tout,E_1,\pi_2) = 1$, $g_2(\tout, E_2,\pi_2)  = 1$,
$g_2(\tout,E_3,\pi_1) =2$. \\ And $g_2(\tout,E_4,\pi_3) = 1$.
\item 
The rest are mapped to $0$.
\end{itemize}

We will call a function $g:\{\tout,\tin\}\times\TwoTypes \times \OneTypes \rightarrow \bbNop$
a {\em behavior}.
The restriction of $g$ on $1$-type $\pi$ is the function  $g_{\pi}:\{\tout,\tin\}\times\TwoTypes\to \bbNop$,
where $g_{\pi}(\kappa,E)= g(\kappa,E,\pi)$ for every $\kappa\in\{\tout,\tin\}$ and $E\in \TwoTypes$.
We call the function $g_\pi$ the \emph{behavior} (function) towards $1$-type $\pi$.

We are, of course, only interested in $1$-types and behaviors that are ``allowed''
by the sentence $\phi$ we are considering.  
To formalize this,
we will use the following terminology,
where $\alpha(x,y)$, $\beta_i(x,y)$ and $S_i$ are from the fixed $\phi$.
\begin{compactitem}
\item
A $1$-type $\pi\in \Pi$
is {\em compatible} (with~$\phi$) if \footnote{As usual, we use $\models$ 
in both $\cA\models \phi$ (for ``$\cA$ satisfies $\phi$'') 
and $\phi_1\models \phi_2$ (for ``$\phi_1$ implies $\phi_2$'').}
$$
\pi(x)  \ \models\ \alpha(x,x).
$$
Otherwise, we say that $\pi$ is {\em incompatible}.
Intuitively, $\pi$ is incompatible means that
whenever $\cA\models \phi$, 
there is no element with $1$-type $\pi$.

\item 
For a $1$-type $\pi\in \Pi$, for a behavior function $g:\{\tout,\tin\}\times\TwoTypes \times\OneTypes \to \bbNop$,
we say that $(\pi,g)$ is {\em compatible} (with~$\phi$)
if, for every $E\in \TwoTypes$ and for every $\pi'\in \Pi$:

If $g(\tout,E,\pi')\neq 0$, then
$$
\pi(x)\ \wedge\ E(x,y)\ \wedge\ \pi'(y)\ \models\ \alpha(x,y)
\quad\text{and}\quad
\pi(y)\ \wedge\ E(y,x)\ \wedge\ \pi'(x)\ \models\ \alpha(x,y)
$$
and if $g(\tin,E,\pi')\neq 0$, then
$$
\pi(x)\ \wedge\ E(y,x)\ \wedge\ \pi'(y)\ \models\ \alpha(x,y)
\quad\text{and}\quad
\pi(y)\ \wedge\ E(x,y)\ \wedge\ \pi'(x)\ \models\ \alpha(x,y).
$$
Otherwise, we say that $(\pi,g)$ is {\em incompatible}.
Intuitively, $(\pi,g)$ is incompatible means that
whenever $\cA\models \phi$, 
there is no element in $A_{\pi}$ that behaves according to $g$ in any graph representation $G_{\cA}$ of $\cA$.

\item
A function $g$ is a {\em good} behavior (w.r.t.~$\phi$) if 
for every $i \in [k]$:\footnote{Here the operation $+$ on $\bbNop$
is defined to be the commutative extension of the standard addition on $\bbN$ such that
$a+ \infty = \prdp{a}+\infty = \infty$ and $\prdp{a}+b= \prdp{a}+\prdp{b} = \prdp{(a+b)}$}
\begin{align}
\label{eq:sum-good-function}
\sum_{E \ni \beta_i(x,y)} 
\ 
\sum_{\pi \in \OneTypes}\ g(\tout,E,\pi) \ + \
\sum_{E \ni \beta_i(y,x)} 
\ 
\sum_{\pi \in \OneTypes}\ g(\tin,E,\pi) \ \
 \in\ \ S_i.
\end{align}
Intuitively, for a vertex $u$ in a graph representation $G_{\cA}$ that behaves according to $g$,
the sum $\sum_{E \ni \beta_i(x,y)} \sum_{\pi \in \OneTypes}\ g(\gout,E,\pi)$ is the number of outgoing edges
that contains the relation $\beta_i(x,y)$
and the sum $\sum_{E \ni \beta_i(y,x)} \sum_{\pi \in \OneTypes}\ g(\tin,E,\pi)$ is the number of incoming edges
that contains the relation $\beta_i(y,x)$.
Their total sum is the the number of elements $v$ such that $\cA,x/u,y/v\models \beta_i(x,y)$.
Hence, when $\cA\models \phi$, it must be inside the set $S_i$.
\end{compactitem}

The notion of compatibility will be used to capture
the universal part $\forall x \forall y \alpha(x,y)$ of our formula.
The notion of good function will be used to capture
the universally and presburger quantified part: $\bigwedge_{i=1}^k \forall x \exists y ^{S_i} \beta_i(x,y)\wedge x\neq y$.

We observe that, for every structure $\cA\models \phi$, for every graph representation $G_{\cA}$ of $\cA$,
each vertex in $G_{\cA}$ behaves according to a function $g$
where the range is a subset of $\{0,\ldots,q,\prdp{0},\ldots,\prdp{q},\infty\}$ for
$q$ the maximal non-$\infty$ offset in all $S_i$ (when seen as finite sets of linear sets). 
Indeed, suppose $\cA\models \phi$ and let $G_{\cA}$ be its graph representation.
Let $u$ be an element that behaves according to $g$.
Suppose $g(\tout,E,\pi)=a$ or $\prdp{a}$ for some $a> q$, $E\in \TwoTypes$ and $\pi\in \Pi$.
We will show that $u$ also behaves according to a function $g'$ 
where $g'$ is the same function as $g$ except that $g'(\tout,E,\pi)$ is now $\prdp{(a-sp)}$
where $s$ is the minimum integer such that $a-sp \leq q$.
We consider the case where $g(\tout,E,\pi)=a$.
Suppose $\beta_i(x,y)\in E$ where $i \in [k]$.
Let $b$ denote the number of elements $v$ such that $\cA,x/u,y/v \models \beta_i(x,y)\wedge x\neq y$.
Since $u$ behaves according to $g$, we have $b\geq a$ and hence $b> q$.
Moreover, $b\in S_i$ since  $\cA\models \phi$.
Because $b> q$ there must be $\prdp{c}\in S_i$ such that $b\in \prdp{c}$.
This means that $u$ also behaves according to $g'$ where
$g'$ is the same as $g$ except that $g'(\tout,E,\pi)=\prdp{(a-sp)}$ where $s$ is the minimum integer such that $a-sp \leq q$.
The cases where $g(\tout,E,\pi)=\prdp{a}$ or $g(\tin,E,\pi)=a$ or $g(\tin,E,\pi)=\prdp{a}$ with $a> q$
can be treated in a similar manner.

So we may concentrate on only the finite set $\cG = \{g_1, g_2, \ldots,g_m\}$ of good 
behaviors whose co-domain is $\{0,\ldots,q,\prdp{0},\ldots,\prdp{q},\infty\}$,
where $q$ the maximal non-$\infty$ offset in all $S_i$.
Below we will partition elements based on their behaviors, always using good behaviors, thus the partitions will
be finite.

For $\cA\models \phi$, for a graph representation $G_\cA$ of $\cA$,
we can partition $A = A_{\pi_1,g_1}\uplus \cdots \uplus A_{\pi_n,g_m}$ 
according to the $1$-types and good behavior functions:
for every element $u\in A$, we pick a behavior function $g_j$ such that $u$ behaves according to $g$ (in $G_{\cA}$),
and declare that $u \in A_{\pi_i,g_j}$ where $\pi_i$ is the $1$-type of $u$.\footnote{In general, for an element $u\in A$,
there may be several behaviors according to which $u$ behaves; we partition the domain by picking one such behavior.}
We can then consider the vector of subgraphs $G_{\cA,\pi}$ and $G_{\cA, \pi, \pi'}$ of $G_{\cA}$. 
We call this the \emph{Type-Behavior Partitioned Graph Vector} associated to the graph $G_{\cA}$.
Intuitively, to decide whether $\phi$ is satisfiable,
we construct a Presburger formula that captures the sizes of the subgraphs $G_{\cA,\pi}$ and $G_{\cA, \pi, \pi'}$
of every possible graph representation $G_{\cA}$ of every model $\cA\models \phi$.

At this point we can expand on the intuition for the reduction of satisfiability to biregular graph problems.
We will construct a sentence $\PREB_\phi$
that ``counts'' the possible cardinalities of partitioned graphs corresponding to
a Type-Behavior Partitioned Graph Vector associated to a graph representation $G_{\cA}$ for a model $\cA$ of $\phi$.
The sentence $\PREB_\phi$ will be of the form:
\begin{align}
\label{eq:preb}
\PREB_\phi &  :=  \exists \vX \ \const_1(\vX)
 \wedge  \const_2 (\vX) \wedge (\bigvee_{i\in [n],\ j\in [m]}  X_{\pi_i, g_j} \neq 0),
\end{align}
where $\vX$ is a vector of variables $(X_{\pi_1, g_1}, X_{\pi_1, g_2}, \ldots , X_{\pi_n, g_m})$.
Intuitively, each  $X_{\pi_i,g_j}$ represents $|A_{\pi_i,g_j}|$ in some graph representation $G$.
The final conjunct ensures that the domain is non-empty.
By the formula $\const_1(\vX)$, we capture the consistency of the non-negative integers $\vX$ with the first conjunct $\forall x\forall y \ \alpha(x,y)$ of $\phi$.
By the formula $\const_2(\vX)$,
we capture the consistency of the non-negative integers $\vX$ with the second conjuncts
$\bigwedge_{i=1}^{k} \forall x \exists^{S_i} y\ \beta_i(x,y) \wedge x \neq y$. In $\const_2$ we will consider the Type-Behavior Partitioned Graph Vector
as the common solution of a set biregular graph and digraph problems, and make use of the Presburger definability of biregular graph problems.

Towards defining the formulas $\const_1$ and $\const_2$,
we define matrices that will constrain the partitions.
\begin{align*}
M_\pi  :=
\begin{pmatrix}
g_1(\tout,E_1,\pi) &  \cdots & g_m(\tout,E_1,\pi)
\\
\vdots &  \ddots & \vdots
\\
g_1(\tout,E_{t},\pi) &  \cdots & g_m(\tout,E_{t},\pi)
\end{pmatrix}
\end{align*}
and
\begin{align*}
\rev{M}_\pi  :=
\begin{pmatrix}
g_1(\tin,E_1,\pi) & \cdots & g_m(\tin,E_1,\pi)
\\
\vdots & \ddots & \vdots
\\
g_1(\tin,E_t,\pi) &  \cdots &  g_m(\tin,E_t,\pi)
\end{pmatrix}.
\end{align*}
That is, $M_\pi$ contains the information of the outgoing edges toward $1$-type $\pi$
and $\rev{M}_\pi$ contains the information of the incoming edges from $1$-type $\pi$.

Now, we explain how to capture the behavior between elements with distinct $1$-types.
Define matrices $L_\pi,\rev{L}_\pi\in \bbNop^{2t\times m}$:
\begin{align} \label{eq:Ls}
L_\pi  :=
\begin{pmatrix}
M_\pi
\\
\rev{M}_\pi
\end{pmatrix}
\quad \text{and} \quad
\rev{L}_\pi  :=
\begin{pmatrix}
\rev{M}_\pi
\\
M_\pi
\end{pmatrix};
\end{align}
that is, in $L_\pi$ the first $t$ rows come from $M_\pi$ with the next $t$ rows from $\rev{M}_\pi$.
On the other hand, in $\rev{L}_\pi$ the first $t$ rows come from $\rev{M}_{\pi}$, followed by the $t$ rows from $M_\pi$.

The intended meaning of the matrices is as follows.
For every structure $\cA$, for every graph representation $G_{\cA}$ of $\cA$,
$\cA\models \phi$ if and only if the following two sentences hold.
\begin{itemize}
\item For every $1$-type $\pi$,
the subgraph $G_{\cA,\pi}$ is a complete $M_\pi|\rev{M}_\pi$-regular digraph
\item For distinct $1$-types $\pi,\pi'$,
the subgraph $G_{\cA,\pi,\pi'}$ is a complete  \\
$L_{\pi'}|\rev{L}_{\pi}$-biregular graph.

Here the first $t$ rows in $L_{\pi'}|\rev{L}_{\pi}$ capture the edges in $G_{\cA,\pi,\pi'}$ that are oriented from left to right,
whereas the last $t$ rows capture the edges in $G_{\cA,\pi,\pi'}$ that are oriented from right to left.
\end{itemize}

We are now ready to define the formulas, beginning with   $\const_1(\vX)$.
Letting $H$ be the set of all incompatible pairs $(\pi,g)$, the formula $\const_1(\vX)$ can be defined as follows:
\begin{align}
\label{eq:consistent-1}
\const_1(\vX) & : = \ 
\bigwedge_{\pi\ \text{is incompatible, } g \in \cG} X_{\pi, g} = 0
\quad\wedge\quad
\bigwedge_{(\pi,g) \in H} X_{\pi, g} = 0.		
\end{align}

We turn to formula $\const_2(\vX)$.
Recall that we enumerated all the $1$-types as $\pi_1,\ldots,\pi_n$.
We now define $\const_2$, where 
below each $\vX_{\pi_i}$ is the vector \\
$(X_{\pi_i, g_1}, X_{\pi_i, g_2}, \ldots , X_{\pi_i, g_m})$ and each $\vX_{\pi_j}$ is defined in the same way:
\begin{multline}
\qquad \const_2(\vX) : = \\ 
\bigwedge_{1 \leq i < j \leq n}
\biregc_{L_{\pi_j}|\rev{L}_{\pi_i}}(\vX_{\pi_i},\vX_{\pi_j})
\ \wedge \
\bigwedge_{1 \leq i \leq n} 
\diregc_{M_{\pi_i}|\rev{M}_{\pi_i}}(\vX_{\pi_i}).
\end{multline}

Observe that formula $\const_1(\vX)$ is Presburger definable by inspection, while
$\const_2(\vX)$ is Presburger definable using Theorem \ref{thm:main-lemma-bireg} and Theorem \ref{thm:main-lemma-direg}. 
Thus, the sentence $\PREB_{\phi}$ is an existential Presburger sentence, and the following lemma shows that 
$\PREB_{\phi}$ is indeed the sentence required by Theorem~\ref{theo:main}.

\begin{lemma}
\label{lem:correct}
For each structure $\cA\models \phi$,
for every graph representation $G_{\cA}$ of $\cA$,
there is a partition $A= A_{\pi_1,g_1}\uplus\cdots\uplus A_{\pi_n,g_m}$
such that:
\begin{itemize}
\item
For every $i\in [n]$, for every $j\in[m]$, $A_{\pi_i,g_j}$ contains the elements with $1$-type $\pi_1$
and behaves according to $g_j$ in the graph representation $G_{\cA}$.
\item 
$\const_1(\vN)\ \wedge\ \const_2(\vN) \ \wedge \ \bigvee_{i\in [n],\ j \in [m]} |A_{\pi_i,g_j}|\neq 0$
holds, where $\vN=(|A_{\pi_1,g_1}|,\ldots,|A_{\pi_n,g_m}|)$.
\end{itemize}

Conversely, for every non-zero vector $\vN$ such that $\const_1(\vN)\wedge\const_2(\vN)$ holds,
there is a structure $\cA\models\phi$, a graph representation $G_{\cA}$ and 
a partition $A = A_{\pi_1,g_1}\uplus \cdots\uplus A_{\pi_n,g_m}$
such that:
\begin{itemize}
\item
$\vN=(|A_{\pi_1,g_1}|,\ldots,|A_{\pi_n,g_m}|)$.
\item 
For every $i\in [n]$, for every $j\in[m]$, $A_{\pi_i,g_j}$ contains the elements with $1$-type $\pi_1$
and behaves according to $g_j$ in the graph representation $G_{\cA}$.
\end{itemize} 

\end{lemma}

\begin{proof}
We  prove the first statement in the lemma, the direction from a model of the formula to a solution.
Let $\cA \models \phi$.
We fix a graph representation $G_{\cA}$, and the corresponding Type-Behavior Partitioned Graph Vector.
We will show that 
when each $X_{\pi_i,g_j}$ is assigned the value $|A_{\pi_i,g_j}|$ we have that
$$
\const_1(\vX) \ \wedge \ \const_2 (\vX)\ \wedge \
\bigvee_{i\in [n],\ j \in [m]} X_{\pi_i,g_j}\neq 0.
$$
Since $\cA$ contains at least one element,
at least one of the $A_{\pi,g}$'s is not empty.
Hence the last conjunct $\bigvee_{i\in [n],\ j \in [m]} |A_{\pi_i,g_j}|\neq 0$ holds.

Since $\cA\models \forall x\forall y\ \alpha(x,y)$,
$A_{\pi_i}=\emptyset$ whenever $\pi_i$ is incompatible
and $A_{\pi_i,g_j}=\emptyset$ whenever $(\pi_i,g_j)$ is incompatible.
Thus, $\const_1(\vX)$ holds for the assignment.

We show that $\const_2(\vX)$ holds.
Consider an arbitrary $\pi_i \in \Pi$.
As explained above, the subgraph 
$G_{\cA,\pi_i}$ is a complete $M_{\pi_i}|\rev{M}_{\pi_i}$-biregular digraph 
with size  $(|A_{\pi_i,g_1}|,\ldots,|A_{\pi_i,g_m}|)$.
By Theorem~\ref{thm:main-lemma-direg}, $\diregc_{M_{\pi_i}|\rev{M}_{\pi_i}}(\vX_{\pi_i})$ holds for the assignment.

For $\pi_i,\pi_j \in\Pi$ with $i< j$,
the subgraph $G_{\cA,\pi_i,\pi_j}$ can be viewed as a complete 
$L_{\pi_j}|\rev{L}_{\pi_i}$-biregular graph.
By Theorem~\ref{thm:main-lemma-bireg},
$\biregc_{L_{\pi_j}|\rev{L}_{\pi_i}}(\vX_{\pi_i},\vX_{\pi_j})$ holds.
Therefore, $\const_2(\vX)$ holds for the assignment, which completes this direction.

We turn to the second statement, going from a solution to a model.
Let $\vN=(N_{\pi_1,g_1},\ldots,N_{\pi_n,g_m})$ be a non-zero vector such that $\const_1(\vN)\wedge\const_2(\vN)$ holds.
For each $\pi_i\in \Pi$, let $\vN_{\pi_i} = (N_{\pi_i,g_1},\ldots,N_{\pi_i,g_m})$.

For each $(\pi_i,g_j)\in \Pi\times \cG$, we have a set $V_{\pi_i,g_j}$ with cardinality $N_{\pi_i,g_j}$.
We let $V_{\pi_i} = \bigcup_{g_j\in \cG} V_{\pi_i,g_j}$ for each $\pi_i \in \Pi$.
We construct a structure $\cA\models\phi:$ along with a particular graph representation.
\begin{itemize}
\item 
The domain is $A = \bigcup_{\pi_i\in \Pi,\ g_j\in \cG} V_{\pi_i,g_j}$.

Note that since $\vN$ is a non-zero vector, at least one $V_{\pi_i,g_j}$ is not empty,
and therefore, $A$ is not empty. 
\item 
For each $\pi_i \in \Pi$ and
for each element $u \in V_{\pi_i}$,
the predicates that hold on $u$ are defined such that
the $1$-type of $u$ is $\pi_i$.
\item 
For each $\pi_i \in \Pi$, we define the edges on each pair $(u,v)\in V_{\pi_i}\times V_{\pi_i}$, 
where $u\neq v$ as follows. Note that the binary atoms of the model for these pairs will follow.

Since $\diregc_{M_{\pi_i}|\rev{M}_{\pi_i}}(\vN_{\pi_i})$ holds,
by Theorem \ref{thm:main-lemma-direg}, there is a complete $M_{\pi_i}|\rev{M}_{\pi_i}$-regular digraph 
$G_{\pi_i} = (V_{\pi_i},E_1,\ldots,E_t)$ with size $\vN_{\pi_i}$.
Note that we can take $V_{\pi_i}$ as the domain of the graph and $V_{\pi_i}=V_{\pi_i,g_1}\uplus \cdots\uplus V_{\pi_i,g_m}$
as the witness partition since $(|V_{\pi_i,g_1}|,\ldots,|V_{\pi_i,g_m}|)=\vN_{\pi_i}$ by construction.
We then use the edges in $E_1,\ldots,E_t$ in $G_{\pi_i}$ to define the orientation and the $2$-types of each pair 
$(u,v)\in V_{\pi_i}\times V_{\pi_i}$ where $u\neq v$.

\item 
For every $\pi_i,\pi_j$ with $i< j$,
we now define the edges, and hence the binary atoms of the model, on each pair $(u,v)\in V_{\pi_i}\times V_{\pi_j}$.

Since $\biregc_{L_{\pi_j}|\rev{L}_{\pi_i}}(\vN_{\pi_i},\vN_{\pi_j})$ holds,
applying Theorem~\ref{thm:main-lemma-bireg},
there is a complete $L_{\pi_j}|\rev{L}_{\pi_i}$-biregular graph $G_{\pi_i,\pi_j} = (V_{\pi_i},V_{\pi_j},E_1,\ldots,E_t,\rev{E_1},\ldots,\rev{E_t})$
with size $\vN_{\pi_i}|\vN_{\pi_j}$.
Again, note that we can take $V_{\pi_i}$ and $V_{\pi_j}$ as
the set of vertices on the left hand side and the right hand side of the graph $G_{\pi_i,\pi_j}$ respectively,
and that $V_{\pi_i}=V_{\pi_i,g_1}\uplus \cdots\uplus V_{\pi_i,g_m}$ and  
$V_{\pi_j}=V_{\pi_j,g_1}\uplus \cdots\uplus V_{\pi_j,g_m}$ as the witness partition since 
the sizes $(|V_{\pi_i,g_1}|,\ldots,|V_{\pi_i,g_m}|)$ and $(|V_{\pi_j,g_1}|,\ldots,|V_{\pi_j,g_m}|)$
match the vectors $\vN_{\pi_i}$ and $\vN_{\pi_j}$  by construction.
We then use the edges in $E_1,\ldots,E_t$ in $G$ to define the orientation from left to right
and the edges in $\rev{E}_1,\ldots,\rev{E}_t$ in $G$ to define the orientation from right to left.

\end{itemize}
The above process produces a model $\cA$ along with a complete $t$-color graph $G$ that is a valid representation of the model.  It is easy to see that
 for every $1$-type $\pi_i$, every behavior function $g_j$,
every vertex $u\in V_{\pi_i,g_j}$ has $1$-type $\pi_i$ behaves according to the function $g_j$.

To show that $\cA\models \phi$,
we first show that $\cA\models \forall x \forall y\ \alpha(x,y)$. 
Let $u,v \in A$.
There are two cases.
\begin{itemize}
\item
When $u=v$ and $u\in V_{\pi}$.
This means $V_{\pi}\neq \emptyset$. Hence $|V_\pi|=\sum_{g\in \cG} N_{\pi,g} \neq 0$.
Therefore, $\pi$ is compatible, which by definition means $\pi(x) \models \alpha(x,x)$.
By the construction of $\cA$, we have $\cA,x/u,y/u \models \alpha(x,y)$.
\item 
When $u\neq v$ and $u\in V_{\pi}$ and $v\in V_{\pi'}$.

Suppose $(u,v)$ is an $E_i$-edge in the graph $G$, i.e., the orientation is from left to right.
This means there is $g\in \cG$ such that $g(\tout,E_i,\pi')\neq 0$ and $u \in V_{\pi,g}$,
which implies that $V_{\pi,g}\neq \emptyset$, i.e., $N_{\pi,g}\neq 0$.
Since $\const_1(\vN)$ holds, which states that $N_{\pi,g}=0$ whenever $(\pi,g)$ is incompatible,
the pair $(\pi,g)$ is compatible -- that is, 
$$
\pi(x) \wedge E(x,y) \wedge \pi'(y)\ \models\ \alpha(x,y)
\quad\text{and}\quad
\pi(y) \wedge E(y,x) \wedge \pi'(x)\ \models\ \alpha(x,y).
$$
Since $\cA$ is a structure with  representation $G$, we have
$\cA,x/u,y/v \models \alpha(x,y)$ and $\cA,x/v,y/u \models \alpha(x,y)$.
The case when the orientation is from $v$ to $u$ can be treated in similar manner.
\end{itemize}

Next, we show that $\cA\models \bigwedge_{i=1}^{k} \forall x \exists^{S_i} y\ \beta_i(x,y) \wedge x \neq y$.
To this end, let $u\in A$.
Let $\pi\in \Pi$ and $g\in \cG$ such that $u \in V_{\pi,g}$,
i.e., $u$ behaves according to $g$ in the graph $G$.
Since $g\in \cG$, the function $g$ is a good function.
By the construction of the graph $G$,
for every $i\in [k]$, the number of elements $y\neq u$ such that $\beta_i(x,y)$ belongs to the $2$-type of $(u,y)$ is the sum
\begin{align}
\label{eq:sum-in-Si}
\sum_{E \ni \beta_i(x,y)} 
\ 
\sum_{\pi' \in \OneTypes}\ g(\tout,E,\pi') \ + \
\sum_{E \ni \beta_i(y,x)} 
\ 
\sum_{\pi' \in \OneTypes}\ g(\tin,E,\pi')
\end{align}
By the definition of a good function,
for every $i\in [k]$, the sum \eqref{eq:sum-in-Si} is an element in $S_i$.
Therefore, $\cA,x/u \models \exists^{S_i} y \ \beta_i(x,y) \wedge x\neq y$ for every $i\in [k]$.
Since the choice of $u$ is arbitrary, $\cA\models \forall x \exists^{S_i} y \ \beta_i(x,y) \wedge x\neq y$.
 \end{proof}

Thus, we have shown that,
for every $\fotwopres$ sentence $\phi$ in normal form \eqref{eq:snf},
we can effectively construct an existential Presburger sentence $\PREB_\phi^{\infty}$ such that
$\phi$ has a model iff $\PREB_\phi^{\infty}$ holds in $\cN_{\infty}$.
By Remark~\ref{rem:formula-general-finite-readjustment}, the formula $\PREB_\phi^{\infty}$
can be easily rewritten to another formula $\PREB_\phi$ such that 
$\phi$ has a finite model iff $\PREB_\phi$ holds in $\cN$.
The sentences $\PREB_{\phi}^{\infty}$ and $\PREB_{\phi}$
are as required by Theorem~\ref{theo:main}.

\begin{remark}
Note that in a Type-Behavior Partitioned Graph Vector, information about $2$-types is coded in both the edge relation and in the partition, since the partition
is defined via  behavior functions. Thus there are additional dependencies on sizes for a Type-Behavior Partitioned Graph Vector of a model of $\phi$, 
beyond what will be captured in outdegree constraints. This will not be a problem for us, because these dependencies could  be captured by additional Presburger constraints. We highlight that to solve satisfiability for our logic,
it was not sufficient to know whether a biregular graph problem is solvable: we needed to get a Presburger formula for the possible cardinalities, which we combine with these additional constraints.
\end{remark}


\section{Proof ideas using a special case for the graph analysis results (Theorems~\ref{thm:main-lemma-bireg} and~\ref{thm:main-lemma-direg})}
\label{sec:1type}

We now discuss the proofs of the main (bi)regular graph theorems. 
These theorems deal with matrices that may contain infinite entries, as well
as matrices that can contain 
periodic entries. 
Thus elements of the witness partitions can be forced to be infinite or finite.
\emph{In the body of the paper we restrict to graphs that are finite, and thus in particular ignore the possibility of an infinite entry}.
This suffices to show the claimed bounds on the finite satisfiability problem for our logic.
In the appendix we explain the extensions needed to deal with the infinite case, and thus the general satisfiability problem.

We start in this section by giving proofs only for the \emph{$1$-color case}, without the completeness requirement. While this case does not directly correspond to any formula used in the proof of Theorem~\ref{theo:main} (since matrices~\eqref{eq:Ls} have 2 rows even when there are no binary predicates), this case gives the flavor of the arguments, and will
also be used as the base cases in inductive constructions for the case with arbitrary colors. 
This will be bootstrapped to the multi-color case  in later sections. Note that the $1$-color case \emph{with} the completeness requirement is not very interesting, and also not useful for the general case: completeness states that every node on 
the left must be connected, via the unique edge relation, to every node on the right -- regardless of the matrix. We
can easily write down equations that capture this.

This section is organized as follows.
In Subsection~\ref{subsec:1type-incomplete}
we will focus on the version of Theorem~\ref{thm:main-lemma-bireg} for $1$-color biregular graphs.
 In Subsection~\ref{subsec:1type-digraphs} we
 present a brief explanation of how to modify the proof for regular digraphs (i.e., the case of Theorem~\ref{thm:main-lemma-direg}).
In this section and also in the next, we will be concerned with effectiveness but not complexity.
The complexity of our procedures will be analyzed in Section \ref{sec:complexity}.

\subsection{The case of incomplete $1$-color biregular graphs}
\label{subsec:1type-incomplete}

We will begin by proving a result for $1$-color biregular graphs without the completeness requirement:

\begin{lemma}
\label{lem:1type-incomplete}
For every pair of degree matrices $A\in \bbN_{+p}^{1\times m}$ and $B\in \bbN_{+p}^{1\times n}$,
there exists an (effectively computable) existential 
Presburger formula $\bireg_{A|B}(\vx,\vy)$ such that 
for every size vectors $\vM \in \bbN^m$ and $\vN \in \bbN^n$ 
there is an $A|B$-biregular graph with size $\vM|\vN$
if and only if $\bireg_{A|B}(\vM,\vN)$ holds in $\cN$.
\end{lemma}

Our strategy to prove Lemma~\ref{lem:1type-incomplete} is to divide it into two main cases.
The first case deals with the graphs with ``big enough'' sizes
and the second case with the graphs with ``not big enough'' sizes.
We organize the rest of Section~\ref{subsec:1type-incomplete} as follows.
In Section~\ref{subsubsec:1type-notation} we introduce some notation
and the formal definition of ``big enough'' sizes.
Then, in Section~\ref{subsubsec:1type-incomplete-big-enough}, we present the formula
that captures $A|B$-biregular graphs with ``big enough'' size.
The ``not big enough'' sizes will be handled in Section~\ref{subsubsec:1type-incomplete-not-big-enough}.

\subsubsection{Notation and terminology}
\label{subsubsec:1type-notation}

We will use the following notation.
The term ``vectors'' always refers to row vectors (of finite length).
We use $\va,\vb,\vM,\vN,\ldots$ (possibly indexed) to denote such row vectors.
For a vector $\va$, we denote by $a_j$ the $\myth{j}$ entry in $\va$.
We write $(\va,\vb)$ to denote the row vector obtained by concatenating $\va$ with $\vb$. 
We use $\cdot$ to denote the standard dot product between two vectors.
To avoid being repetitive, when vectors operations such as dot products/additions/subtractions
are performed, it is implicit that the vector lengths are the same.

We now fix notation for degree matrices.
Recall that, in our case, degree matrices are matrices with entries from $\bbNp$, where 
$p$ is a positive integer which is a common non-zero period in all the set $S_i$'s in \eqref{eq:snf}.
Obviously, $1$-row matrices can be viewed as row vectors.
Entries of the form $\prdp{a}$ in a degree matrix are called {\em periodic} entries. 
Otherwise, they are called {\em fixed} entries. 

We write $\offset(\prdp{a})$, for a periodic entry $\prdp{a}$, to denote the offset value $a$.
Note that this is consistent with the definition
of offset of the corresponding linear set from  Section \ref{sec:prelims}.
We define $\offset(a)$ for an integer $a$ to be $a$ itself.
The offset of a vector $\va$, denoted by $\offset(\va)$, is the row vector obtained by replacing
every entry $a_{j}$ with $\offset(a_{j})$.
Of course, if $\va$ does not contain any periodic entry, then
$\offset(\va)$ is $\va$ itself.

In the $1$-color case, matrices $A$ and $B$ for $A|B$-biregular graphs are in fact row vectors. So, we will often write these matrices as $\va$ and $\vb$, respectively.
To differentiate between vectors that are supposed to represent the degrees of vertices in a graph
and vectors that are supposed to represent the sizes of a graph,
we call the former {\em degree vectors} and the latter {\em size vectors}.
We usually write $\va,\vb,\ldots$ to denote degree vectors
and $\vM,\vN,\ldots$ to denote size vectors.
Note that degree vectors have entries from $\bbNp$,
whereas size vectors have entries from $\bbN$.

For degree vectors $\va$ and $\vb$ containing only fixed entries,
we write $\delta(\va,\vb)$ to denote $\max(\va,\vb)$, i.e.,
the maximal element in $\va$ and $\vb$.
When at least one of $\va$ and $\vb$ contain periodic entries,
we define $\delta(\va,\vb)$ as the maximal entry in $(\offset(\va),\offset(\vb),p)$.
For example, if $\va=(3,1)$ and $\vb=(2^{+5},4)$,
then $\delta(\va,\vb)$ is the maximal entry in $(3,1,2,4,5)$, which is $5$.

Let $\va$ be a degree vector.
We let $\nz(\va)$  denote the set of indices $j$ where $a_j$ is not $0$.
We let $\per(\va)$  denote the set of indices $j$ where $a_j$ is a periodic entry.

For a size vector $\vM$ of length $m$,
let $\normt{\vM}$ denote the sum of all the entries in $\vM$, i.e., $\sum_{j=1}^m M_j$,
that is,  the $1$-norm of the column vector $\vM^{\tT}$,
where $\vM^{\tT}$  denotes the transpose of $\vM$.
For a subset $X \subseteq [m]$, we write $\normt{\vM}_X = \sum_{j\in X} M_j$ (which includes the case $\normt{\vM}_{\emptyset}=0)$.
In this section we will only use 
$\normt{\vM}_X$, where $X$ is $\nz(\va)$ or $\per(\va)$,
for some degree vector $\va$.

The intuition is that if $G$ is $\va|\vb$-biregular graph with size $\vM|\vN$,
then the norm $\normt{\vM}_{\nz(\va)}$ denotes the number of vertices on the left of the graph with non-zero degree bound
and 
$\normt{\vM}_{\per(\va)}$ denotes the number of vertices where the corresponding entry of $\va$ is periodic.
The meaning of $\normt{\vN}_{\nz(\vb)}$ and $\normt{\vN}_{\per(\vb)}$ is analogous
with respect to the vertices on the right.

We now introduce the notion of ``big enough'' sizes,
the intuitive meaning of which will become apparent later on.

\begin{definition}
\label{def:1type-big-enough} 
Let $\va$ and $\vb$ be degree vectors and 
let $\vM$ and $\vN$ be size vectors with the same length as $\va$ and $\vb$, respectively.
We say that {\em $\vM|\vN$ is big enough for $\va|\vb$},
if each of the following holds:\footnote{$(\delta(\va,\vb))^2$ is abbreviated $\delta(\va,\vb)^2$.}
\begin{enumerate}[(a)]
\item 
$\max(\normt{\vM}_{\nz(\va)},\normt{\vN}_{\nz(\vb)})\geq 2\delta(\va,\vb)^2+1$;
\item 
$\normt{\vM}_{\per(\va)}=0$ or $\normt{\vM}_{\per(\va)}\geq \delta(\va,\vb)^2+1$;
\item 
$\normt{\vN}_{\per(\vb)}=0$ or $\normt{\vN}_{\per(\vb)}\geq \delta(\va,\vb)^2+1$.
\end{enumerate}
\end{definition}

In the following, to avoid clutter, when we say that $\vM|\vN$ is big enough for $\va|\vb$,
it is implicit that $\vM$ has the same length as $\va$
and $\vN$ has the same length as $\vb$.
As usual, when presenting a Presburger formula,
we will write $\vx,\vy,\vz,\ldots$ (possibly indexed) to denote vectors of variables,
where $x_j$ denotes the $\myth{j}$ entry in $\vx$.
We will also use the notation $\normt{\vx}$ to denote the sum of all the variables in $\vx$,
and similarly use $\normt{\vx}_X$ to denote the sum $\sum_{j\in X} x_j$.

\subsubsection{The formula for the case of big enough sizes}
\label{subsubsec:1type-incomplete-big-enough}

Note that for the conditions (b) and (c) required in the definition of ``big enough,''
there are two possible subcases:
either the norm is $0$ or at least as big as some threshold.
There are altogether 4 possible scenarios and
our formula for big enough sizes will be a disjunction of $4$ formulas,  one for each
scenario.
By symmetry, it suffices to consider the following three of these scenarios for the sizes $\vM|\vN$
of $\va|\vb$-biregular graphs:
\begin{enumerate}[(S1)]
\item $\normt{\vM}_{\per(\va)}=\normt{\vN}_{\per(\vb)}=0$ (i.e., there are only vertices with fixed degree);
\item $\normt{\vM}_{\per(\va)}\neq 0$ and $\normt{\vN}_{\per(\vb)}=0$ (i.e., there are vertices with periodic degrees on exactly one side);
\item $\normt{\vM}_{\per(\va)}\neq 0$ and $\normt{\vN}_{\per(\vb)}\neq 0$ (i.e., there are vertices with periodic degrees on both sides).
\end{enumerate}
The rest of this section is devoted to the formulas for each of the cases above.

\paragraph{The formula and argument for scenario (S1): partition on one side, merge, and swap}
Consider the formula $\psi^{1}_{\va|\vb}(\vx,\vy)$ defined as follows:
\begin{align}
\label{eq:1type-s1}
& \offset(\va)\cdot\vx = \offset(\vb)\cdot\vy
\quad
\wedge \quad
\normt{\vx}_{\per(\va)}=\normt{\vy}_{\per(\vb)}=0.
\end{align}
Note that the last conjunct simply states that the condition of (S1) holds.
The first conjunct is something we will see often, an \emph{edge counting equality}, saying that
 the number of outgoing edges from the left must equal the number of incoming edges on the right.
\begin{lemma}
\label{lem:1type-s1}
For every pair of degree vectors $\va,\vb$ and
for every   $\vM | \vN$ big enough for $\va|\vb$,
there is an $\va|\vb$-biregular graph with size $\vM|\vN$ where (S1) holds
if and only if $\psi^{1}_{\va|\vb}(\vM,\vN)$ holds in $\cN$.
\end{lemma}
\begin{proof}
Let $\va,\vb$  be degree vectors and
$\vM|\vN$ be size vectors big enough for $\va|\vb$.

For the ``only if'' direction,
note that if we have an $\va|\vb$-biregular graph $G$ with size $\vM|\vN$ where (S1) holds,
the total number of edges (by counting the edges adjacent to the vertices on the left) must be $\offset(\va)\cdot\vM$.
Similarly by considering the vertices on the right, the total number of edges must be  $\offset(\vb)\cdot\vN$.
Thus the condition $\offset(\va)\cdot\vM = \offset(\vb)\cdot \vN$ is always a necessary one,
regardless of whether $\vM|\vN$ is big enough.
Since the second conjunct of $\psi^1_{\va|\vb}(\vM,\vN)$ just says that (S1) holds the whole $\psi^1_{\va|\vb}(\vM,\vN)$ also holds.

We now prove the ``if'' direction.
Suppose $\psi^1_{\va|\vb}(\vM,\vN)$ holds.
Since $\normt{\vM}_{\per(\va)}=\normt{\vN}_{\per(\vb)}=0$, 
we may ignore all the periodic entries in $\va$ and $\vb$
and assume that $\va$ and $\vb$ contain only fixed entries, i.e., 
$\va=\offset(\va)$ and $\vb=\offset(\vb)$.

Our proof is similar to the one of~\cite[Lemma 7.2]{KT15}
which shows how to construct an $\va|\vb$-biregular graph with size $\vM|\vN$
for big enough $\vM|\vN$.
For completeness, we repeat the construction here, 
which we will also see later (e.g., in the proof of Lemma~\ref{lem:1type-s2}).

Suppose $\va\cdot\vM=\vb\cdot\vN=K$.
To construct an $\va|\vb$-biregular graph $G$ with size $\vM|\vN$, 
we  ``{\em partition on one side, merge on the other side, and swap.}''
Intuitively, this means that we first construct an $\va|1$-biregular graph $G=(U,V,E)$ with size $\vM|K$, i.e.,
the vertices on the left side are ``partitioned'' correctly to have degrees $\va$
and those on the right side all have degree $1$.
Then, we ``merge'' vertices on the right side so that they have the correct degrees $\vb$.
Since this  merging may produce parallel edges between two vertices,
we perform ``edge swapping'' to get rid of them without changing the degree of each vertex.

The details of the construction are as follows.
Since $\va\cdot\vM=K$, it is straightforward to construct 
an $\va|1$-biregular graph $G=(U,V,E)$ with size $\vM|K$.
Let $\vN=(N_1,\ldots,N_n)$ and $\vb=(b_1,\ldots,b_n)$.
To obtain an $\va|\vb$-biregular graph, we partition $V=V_1\uplus \cdots \uplus V_n$,
where $|V_j|=N_jb_j$ for each $j\in [n]$.
This is possible since $K=\vb\cdot\vN$.
Then, for each $j\in[n]$, we merge every $b_j$ vertices in $V_j$ into $1$ vertex,
thus, making its degree $b_j$.
Such merging yields an ``almost'' $\va|\vb$-biregular graph, 
except that it is possible there are parallel edges between two vertices.
Here big enough comes into play, 
where the condition (a) in Definition~\ref{def:1type-big-enough} is applied,
i.e., 
$\max(\normt{\vM}_{\nz(\va)},\normt{\vN}_{\nz(\vb)})\geq 2\delta(\va,\vb)^2+1$.
We will get rid of the parallel edges one by one.

Suppose in between vertices $u$ and $v$ there are several parallel edges.
There are only at most $\delta(\va,\vb)^2$ edges incident to the neighbors of vertex $u$ (including parallel edges).
The same holds for neighbors of $v$.
Note that there are at least 
$\max(\normt{\vM}_{\nz(\va)},\normt{\vN}_{\nz(\vb)}) \geq 2\delta(\va,\vb)^2+1$ edges in $G$.
So there is an edge $(w,w')$ such that both $w,w'$ are not adjacent to either $u$ or $v$.
To get rid of one parallel edge $(u,v)$ between $u$ and $v$, we replace it and $(w,w')$ by $(u,w')$ and $(w,v)$ (see Fig.~\ref{fig:1type-edge-swap} for an illustration).
We perform such edge swapping until there are no parallel edges.
Furthermore, such edge swapping does not change the degree of the vertices.
\end{proof}

\begin{figure}
\resizebox{1.0 \textwidth}{!}{%
\begin{picture}(350,100)(0,0)

\put(60,75){\circle*{3}}\put(54,78){$u$}
\put(62,75){\line(1,0){57}}
\put(120,75){\circle*{3}}\put(124,78){$v$}

\qbezier(60,75)(90,90)(120,75)
\qbezier(60,75)(90,60)(120,75)

\put(60,15){\circle*{3}}\put(54,18){$w$}
\put(62,15){\line(1,0){57}}
\put(120,15){\circle*{3}}\put(124,18){$w'$}

\put(170,50){$\Rightarrow$}

\put(220,75){\circle*{3}}\put(214,78){$u$}
\put(222,73){\line(1,-1){56}}
\put(280,75){\circle*{3}}\put(284,78){$v$}
\put(222,75){\line(1,0){57}}

\qbezier(220,75)(250,90)(280,75)

\put(220,15){\circle*{3}}\put(214,18){$w$}
\put(222,17){\line(1,1){56}}
\put(280,15){\circle*{3}}\put(284,18){$w'$}
\end{picture}  \label{fig:1type-edge-swap}
}
\caption{Edge swapping used in the proof of Lemma \ref{lem:1type-s1}.
After  swapping there is one less parallel edge between $u$ and $v$, and 
the degrees of all vertices stay the same.}
\end{figure}

\paragraph{The formula and argument for scenario (S2): creating a ``phantom  partition'' for the period, then merging}
Recall that (S2) states that 
``there are vertices with periodic degrees on exactly one side.''
By symmetry, we may assume that the vertices with periodic degrees are on the left.
Let the formula $\psi^{2}_{\va|\vb}(\vx,\vy)$ be defined as
\begin{align}
\label{eq:1type-s2}
& \exists z\ \big( \offset(\va)\cdot\vx  +  pz  = \offset(\vb)\cdot \vy\big)
\ \wedge \ \normt{\vx}_{\per(\va)}\neq 0 \ \wedge \
\normt{\vy}_{\per(\vb)}=0.
\end{align}
equality, with $pz$ representing the total number of edges added by the periodic components over all elements 
on the left hand side.

\begin{lemma}
\label{lem:1type-s2}
For every pair of degree vectors $\va$, $\vb$ and
for every  $\vM| \vN$ big enough for $\va|\vb$,
there is an $\va|\vb$-biregular graph with size $\vM|\vN$ where (S2) holds
if and only if $\psi^{2}_{\va|\vb}(\vM,\vN)$ holds in $\cN$.
\end{lemma}

\begin{proof}
Let $\va,\vb$  be degree vectors and
$\vM|\vN$ be size vectors big enough for $\va|\vb$.
We first prove the ``if'' direction.
Note that if $G=(U,V,E)$ is an $\va|\vb$-biregular graph with size $\vM|\vN$ where (S2) holds, 
then the number of edges $|E|$ should equal the sum of the degrees of the vertices in $U$,
which is $\offset(\va)\cdot\vM+ zp$, for some integer $z\geq 0$.
Since this quantity must equal the sum of the degrees of the vertices in $V$,
which is $\offset(\vb)\cdot \vN$, we conclude that the first conjunct of $\psi^{2}_{\va|\vb}(\vM,\vN)$ holds.
Since (S2) holds by assumption, the second conjuncts also hold.

We now prove the ``only if'' direction.
Assume that $\psi^{2}_{\va|\vb}(\vM,\vN)$ holds.
By \eqref{eq:1type-s2}, we have
$\normt{\vM}_{\per(\va)}\neq 0$
and $\normt{\vN}_{\per(\vb)}=0$.
Clearly we might as well  assume 
that $\vb$ contains only fixed entries, i.e., $\offset(\vb)=\vb$.

To construct an $\va|\vb$-biregular graph with size $\vM|\vN$,
we  ``{\em create a phantom partition for the period, then merge}.''
Abusing notation, we denote the value assigned to variable $z$ by $z$ itself.
Suppose $\offset(\va)\cdot \vM + pz = \vb\cdot \vN$.
Since $\vM|\vN$ is big enough for $\va|\vb$,
it follows immediately that $(\vM,z)|\vN$ is big enough for $(\offset(\va),p)|\vb$.
Applying Lemma~\ref{lem:1type-s1},
there is an $(\offset(\va),p)|\vb$-biregular graph with size $(\vM,z)|\vN$. 
That is, we have a graph that satisfies our requirements,
but there is an additional partition class $Z$ on the left of size $z$, where the  degree of elements is $p$.
Let $G=(U,V,E)$ be such a graph, and
let $U=U_0\uplus U_1\uplus Z$,
where $U_0$ is the set of vertices whose degrees are from the fixed entries in $\va$ and 
$U_1$ is the set of vertices whose degrees satisfy the periodic entries in $\va$: in fact, they will initially satisfy these using just the offset.
Note that $|U_1|=\norm{\vM^{\tT}}_{\per(\va)}$ and $|Z|=z$.

We will construct an $\va|\vb$-biregular graph with size $\vM|\vN$.
The idea is to merge the vertices in $Z$ with vertices in $U_1$.
Let $z_0\in Z$.
The number of vertices in $U_1$ reachable from $z_0$ in distance $2$ is at most $\delta(\va,\vb)^2$.
Because $\vM|\vN$ is big enough for $\va|\vb$,
 $|U_1|=\norm{\vM^\tT}_{\per(\va)}\geq\delta(\va,\vb)^2+1$.
Thus, there is a vertex $u\in U_1$ not reachable from $z_0$ in distance $2$; that is, $u$ does not share adjacent vertices
with $z_0$.
We merge $z_0$ and $u$ into one vertex.
See Figure~\ref{fig:1type-s2} for an illustration.
Since the degree of $z_0$ is $p$, 
the merging increases the degree of $u$ by $p$, which does not break our requirement.
We perform this merging for each vertex in $Z$.

\begin{figure}
\begin{center}

\begin{tikzpicture}
\draw (0,0.8) ellipse (0.4cm and .6cm);
\node  at (-.6,1.4) {$U_0$}; 

\draw (0,-1.3) ellipse (.5cm and 1.3cm);
\node  at (-.6,-.2) {$U_1$}; 

\node[circle,fill=blue,inner sep=0pt,minimum size=3pt,label=above:{\small $u$}] (a) at (0,-.6) {};

\draw[gray!40] (0,-2.3) -- (9,-2.9);
\draw[gray!40] (0,-0.9) -- node[color=black,above,sloped] {\scriptsize the neighbors of the vertices in $V'$}(9,-2.1);

\draw[gray!40,fill=gray!40] (0,-1.6) ellipse (.25cm and .7cm);
\node at (0.05,-1.6) {\small $U'$};

\draw (0,-3.7) ellipse (.5cm and .8cm);
\node  at (-.6,-3) {$Z$}; 

\draw[gray!40] (0,-3.7) -- (9,-2.1);
\draw[gray!40] (0,-3.7) -- node[color=black,below,sloped] {\scriptsize the neighbors of $z_0$} (9,-2.9);
\draw[gray!40,fill=gray!40] (9,-2.5) ellipse (.25cm and .4cm);

\node[circle,fill=blue,inner sep=0pt,minimum size=3pt,label=below:{\small $z_0$}] (a) at (0,-3.7) {};

\draw (9,-2) ellipse (.7cm and 2cm);
\node at (9,-2.5) {\small $V'$};
\node  at (10,-.6) {$V$};

\end{tikzpicture}

\end{center}
\label{fig:1type-s2}
\caption{Illustration of the choice of the vertices $z_0\in Z$ and $u\in U_1$.
The set $V'$ is the set of the neighbors of $z_0$.
The set $U'$ is the set of the neighbors of the vertices in $V'$ in set $U_1$,
i.e., the set of vertices reachable from $z_0$ in distance $2$.
Since $|U_1|\geq \delta(\va,\vb)^2+1$ and $|U'|\leq \delta(\va,\vb)^2$,
there is a vertex $u \in U_1 - U'$.
We merge $z_0$ and $u$ into one vertex.}
\end{figure}

Note that the constructed graph $G$ is $\va|\vb$-biregular,
where $\va$ contains periodic entries and $\vb$ contains only fixed entries.
Thus, (S2) holds in $G$.
\end{proof}

\paragraph{The formula and argument for scenario (S3): move a multiple of the period entries to one side}
Recall that (S3) states that ``there are vertices with (finite) periodic degrees on both sides.''
Consider the formula $\psi^{3}_{\va|\vb}(\vx,\vy)$ that is defined as follows:
\begin{align}
\label{eq:1type-s3}
& 
\exists z_1,z_2 \big(\offset(\va)\cdot\vx  +  pz_1  = \offset(\vb)\cdot \vy  +  pz_2 \big)
\wedge 
\normt{\vx}_{\per(\va)}\neq 0  \wedge  \normt{\vy}_{\per(\vb)}\neq 0.
\end{align}

\begin{lemma}
\label{lem:1type-s3}
For each  degree vectors $\va,\vb$ and
for each size vectors $\vM|\vN$ big enough for $\va|\vb$, 
there is an $\va|\vb$-biregular graph with size $\vM|\vN$ where (S3) holds
if and only if $\psi^{3}_{\va|\vb}(\vM,\vN)$ holds in $\cN$.
\end{lemma}
\begin{proof}
As before, the ``only if'' direction is straightforward, 
so we focus on the ``if'' direction.
Suppose $\psi^{3}_{\va|\vb}(\vM,\vN)$ holds.
If there are witnesses $z_1$ and $z_2$ such that $z_1\geq z_2$, we can rewrite the first conjunct as the following:
\[
\exists z_1,z_2 \big(\offset(\va)\cdot\vx  +  p(z_1-z_2)   = \offset(\vb)\cdot \vy  \big).
\]
That is, we ``\emph{move the multiple of period $p$ to one side},'' i.e., to the left side. Let
$\vb'$ denote the vector formed by taking offsets of $\vb$. Thus by definition, $\normt{\vN}_{\per(\vb')}= 0$.
After replacing $z_1-z_2$ with $z$, 
we can apply  Lemma~\ref{lem:1type-s2}, corresponding to Scenario (S2), to 
 $\va$ and $\vb'$.
Applying this tells us that
there is a $\va|\offset(\vb)$-biregular graph with size $\vM|\vN$.
This graph, of course, is also $\va|\vb$-biregular.

Note that in this case, i.e., when $z_1\geq z_2$, we are arguing, using the prior characterization and algebra,
that when the condition holds we can construct a graph where the degrees on the right hand side are exactly $\offset(\vb)$; that is,
we do not need to take advantage of the ability to have a non-trivial period.
The proof for the case $z_1\leq z_2$ is analogous.
\end{proof}

To wrap up Subsection~\ref{subsubsec:1type-incomplete-big-enough},
we define  the formula $\psi_{\va|\vb}(\vx,\vy)$ as follows:
\begin{align}
\label{eq:1type-big-enough}
& 
\psi^1_{\va|\vb}(\vx,\vy) \lor \psi^2_{\va|\vb}(\vx,\vy) \lor \psi^2_{\vb|\va}(\vy,\vx) \lor \psi^3_{\va|\vb}(\vx,\vy),
\end{align}
where each formula $\psi_{\va|\vb}^i(\vx,\vy)$ handles one of the scenarios described above.
Combining  Lemmas~\ref{lem:1type-s1}--\ref{lem:1type-s3},
$\psi_{\va|\vb}(\vx,\vy)$ captures precisely all the  big enough 
sizes $\vM|\vN$ of an  $\va|\vb$-biregular graph.
This is stated formally as Lemma~\ref{lem:1type-big-enough}.

\begin{lemma}
\label{lem:1type-big-enough}
For each degree vectors $\va,\vb$ and
for each size vectors $\vM,\vN$ big enough for $\va|\vb$,
there is an $\va|\vb$-biregular graph with size $\vM|\vN$
if and only if $\psi_{\va|\vb}(\vM,\vN)$ holds in $\cN$.
\end{lemma}

\subsubsection{The formula for the case of not big enough sizes: fixed size encoding}
\label{subsubsec:1type-incomplete-not-big-enough}

Subsection~\ref{subsubsec:1type-incomplete-big-enough} gives a formula that
captures the existence of $1$-color biregular graphs for big enough sizes.
We now turn to sizes that are  not big enough---that is, 
when one of the conditions (a)--(c) is violated.
When condition (a) is violated, we have restricted the total size of the graph, and
thus we can write a formula that simply enumerates all possible valid sizes.

We will first consider the case when (b) is violated,
while (a) and (c) hold. 
If condition (b) is violated, the value of $\normt{\vM}_{\per(\va)}$ is some $r$ between $1$ and 
$\delta(\va,\vb)^2$ and
it suffices to show that, for each fixed $r$ between $1$ and $\delta(\va,\vb)^2$,
we can find a formula that works for this $r$.
The idea is that a fixed number of vertices in a graph can be ``encoded'' as formulas.
We will refer to this technique as {\em fixed size encoding} in the remainder
of the paper.

We will define a formula covering the case where each of the following holds:
\begin{itemize}
\item
$\normt{\vM}_{\nz(\va)} - \normt{\vM}_{\per(\va)} \geq 2\delta(\va,\vb)^2+1$.
\item 
$\normt{\vM}_{\per(\va)}=r$, for some fixed $r$ between $1$ and $\delta(\va,\vb)^2$.
\item 
$\normt{\vN}_{\per(\vb)}=0$ or $\geq \delta(\va,\vb)^2+1$.
\end{itemize}
Note that the first bullet item is a slightly stronger requirement than the one
required in the definition of big enough size.
However, this does not effect the applicability to the case where (b) is violated and both (a) and (c) hold.
If (b) is violated, i.e., $\normt{\vM}_{\per(\va)}=r$ where $1\leq r \leq \delta(\va,\vb)^2$
and if $\normt{\vM}_{\nz(\va)} - \normt{\vM}_{\per(\va)} \leq 2\delta(\va,\vb)^2$,
then $\normt{\vM}_{\nz(\va)} \leq 3\delta(\va,\vb)^2$,
which means that the number of edges is fixed and all possible sizes of $A|B$-biregular graphs
can be simply enumerated.

The formula is defined inductively on $r$, with the base case $r=0$.
Note that when $r=0$, $\normt{\vM}_{\per(\va)}=0$,
which means (b) is no longer violated and it falls under the ``big enough'' case.
We now give the inductive construction.
Let $\va$ and $\vb$ be degree vectors.
For an integer $r\geq 0$, define the formula
$\phi^{r}_{\va|\vb}(\vx,\vy)$ as follows:
\begin{itemize}
\item
when $r=0$, let
\begin{align*}
\phi^{0}_{\va|\vb}(\vx,\vy)
& :=\
\normt{\vx}_{\per(\va)} = 0 \ \wedge \ \psi_{\va|\vb}(\vx,\vy),
\end{align*}
where $\psi_{\va|\vb}(\vx,\vy)$ is defined in \eqref{eq:1type-big-enough};
\item
when $r\geq 1$, let 
\begin{align} \label{eq:1type-not-big-enough}
&
\phi^{r}_{\va|\vb}(\vx,\vy) \ := \
\exists s \exists \vz_0 \exists \vz_1 
\bigvee_{i\in \per(\va)}
\left(
\begin{array}{l}
x_i \neq 0 \ \wedge \
\vz_0+\vz_1 =\vy 
\\
\wedge \ \normt{\vz_1}_{\nz(\vb)} = \offset(a_i) + ps
\\ 
\wedge \ \phi^{r-1}_{\va|(\vb,\vb-\vone)}(\vx-\textbf{e}_i,\vz_0,\vz_1)
\end{array}
\right),
\end{align}
where the lengths of $\vz_0$ and $\vz_1$ are the same as $\vy$, $\textbf{e}_i$ is the unit vector where the $\myth{i}$ component is $1$, and the subtraction $\vb-\vone$ of degree vectors is the usual element-wise subtraction except the cases $\prdp{b}-1 = \prdp{(b-1)}$ for $b > 0$, $\prdp{0}-1 = \prdp{(p-1)}$ and $0-1=0$.
\end{itemize}

\begin{lemma}
\label{lem:1type-not-big-enough}
For every pair of degree vectors $\va,\vb$,
for every pair of size vectors $\vM,\vN$ , and each integer $r\geq 0$
 such that
\begin{itemize}
\item 
$\normt{\vM}_{\nz(\va)} \geq 2\delta(\va,\vb)^2+1+r$, 

\item 
$\normt{\vM}_{\per(\va)}=r$,
\item 
$\normt{\vN}_{\per(\vb)}\geq \delta(\va,\vb)^2+1$,
\end{itemize}
there is an $\va|\vb$-biregular graph
with size $\vM|\vN$ if and only if
$\phi^{r}_{\va|\vb}(\vM,\vN)$ holds in $\cN$.
\end{lemma}
\begin{proof}
Let $\va,\vb$ be degree vectors and
let $\vM,\vN$ be size vectors that satisfy the hypothesis.
The proof is by induction on $r$. The base case, as in the formulas, is
$r=0$, and is straightforward by the characterization of ``big enough''.

Now assume the claim holds inductively for $r-1 \geq 0$, and consider $r$.
We begin with the ``only if'' direction, which provides the intuition for these formulas.
Suppose $G=(U,V,E)$ is an $\va|\vb$-biregular graph with size $\vM|\vN$
that satisfies all the items listed above.
Let $U=U_1\uplus\cdots\uplus U_m$ and $V=V_1\uplus\cdots\uplus V_n$
be witness partitions.

Since $r\neq 0$ and $\normt{\vM}_{\per(\va)}=r$,
there is $i \in \per(\va)$ such that $U_{i}\neq \emptyset$.
Choose $u\in U_i$.
Based on this $u$,  for each $j \in [n]$ we define  $Z_{j}$ to
be the set of vertices in $V_{j}$ adjacent to $u$.
Figure \ref{fig:notbigenough} illustrates the situation. 

If we omit the vertex $u$ and all its adjacent edges,
then, for each  $j\in [n]$, every vertex in $Z_{j}$ has degree $b_j-1$.
Note here that, for each $j$ where $Z_j \neq \emptyset$,  $b_j > 0$ since $u$ is adjacent to the vertices in $Z_j\subseteq V_j$. 
Thus, we are left with an $\va|(\vb,\vb-\vone)$-biregular graph with size 
$(\vM-\textbf{e}_i)|(\vK_{0},\vK_{1})$,
where $\vK_{0}=  (|V_{1}|-|Z_{1}|,\ldots,|V_{n}|-|Z_{n}|)$ and
$\vK_{1}\ = \ (|Z_{1}|,\ldots,|Z_{n}|)$.
Also note that
\begin{itemize}
\item 
$\normt{(\vM-\textbf{e}_i)}_{\nz(\va)} = \normt{\vM}_{\nz(\va)} -1 \geq 2\delta(\va,\vb)^2+1+(r-1)$.

The equality comes from the fact that $i\in \per(\va)$, hence $i\in \nz(\va)$
and $\normt{\textbf{e}_i}_{\nz(\va)}=1$, which implies the equality.
The inequality comes from the assumption that 
$\normt{\vM}_{\nz(\va)} \geq 2\delta(\va,\vb)^2+1+r$.
\item 
$\normt{(\vM-\textbf{e}_i)}_{\per(\va)}=r-1$.

This comes from the assumptions that $\normt{\vM}_{\per(\va)}=r$ and $i \in \per(\va)$.
\item 
$\normt{(\vK_0,\vK_1)}_{\per(\vb,\vb-\vone)} = \normt{\vN}_{\per(\vb)} \geq \delta(\va,\vb)^2+1$.

The equality comes from the fact that $\vN = \vK_{0}+\vK_1$,
while periodic entries in $\vb$ stay periodic in $\vb-\vone$.
The inequality is the assumption that 
$\normt{\vN}_{\per(\vb)} \geq \delta(\va,\vb)^2+1$.
\end{itemize}
The items above tell us that $(\vM-\textbf{e}_i)|(\vK_0,\vK_1)$ satisfies 
the hypothesis of the lemma (w.r.t. to the degree vectors $\va|(\vb,\vb-\vone)$).
Thus, we can apply the induction hypothesis and obtain that 
$\phi^{r-1}_{\va|(\vb,\vb-\vone)}(\vM-\textbf{e}_i,\vK_{0},\vK_{1})$ holds.
Moreover, since $i\in \per(\va)$, the degree of $u$ is $\offset(a_i) + ps$ for some $s$
and hence $\normt{\vK_1} = \offset(a_i) + ps$. 
Since, by construction, every vertex in each $Z_j$ is adjacent to $u$,
 $|Z_j| = 0$ whenever $b_j = 0$. So, $\normt{\vK_1} = \offset(a_i) + ps$ implies that $\normt{\vK_1}_{\nz(\vb)} = \offset(a_i) + ps$, and therefore $\phi^r_{\va|\vb}(\vM,\vN)$ holds, with the witnessing
$\vz_0$ and $\vz_1$ being  $\vK_0$ and $\vK_1$, respectively.

\begin{figure}
\begin{center}

\begin{tikzpicture}
\draw (0,0) ellipse (.4cm and .7cm);
\node  at (-.5,.7) {$U_1$}; 

\draw[red,dashed,dash pattern=on \pgflinewidth off 5pt,line width=0.4mm] (0,-.9) -- (0,-1.6);

\draw (0,-2.5) ellipse (.4cm and .7cm);
\node  at (-.5,-1.8) {$U_i$}; 

\draw[red,dashed,dash pattern=on \pgflinewidth off 5pt,line width=0.4mm] (0,-3.4) -- (0,-4.1);

\draw (0,-5) ellipse (.4cm and .7cm);
\node  at (-.5,-4.3) {$U_m$};

\draw[gray!40] (0,-2.6) -- (8,-0.8);
\draw[gray!40] (0,-2.6) -- node[color=black,above,sloped] {\scriptsize vertices in $V_1$ adjacent to $u$} (8,0);
\draw[gray!40,fill=gray!40] (8,-.4) ellipse (.2cm and .4cm);
\node at (8,-.4) {\scriptsize $Z_1$};

\draw[gray!40] (0,-2.6) -- (8,-3.8);
\draw[gray!40] (0,-2.6) -- node[color=black,below,sloped] {\scriptsize vertices in $V_n$ adjacent to $u$} (8,-4.6);
\draw[gray!40,fill=gray!40] (8,-4.2) ellipse (.2cm and .4cm);
\node at (8,-4.2) {\scriptsize $Z_n$};

\node[circle,fill=blue,inner sep=0pt,minimum size=3pt,label=above:{\small $u$}] (a) at (0,-2.6) {};

\draw (8,-.2) ellipse (.5cm and 1cm);
\node  at (8.5,.7) {$V_1$}; 

\draw[red,dashed,dash pattern=on \pgflinewidth off 5pt,line width=0.4mm] (8,-1.4) -- (8,-2.8);

\draw (8,-4) ellipse (.5cm and 1cm);
\node  at (8.5,-3.1) {$V_n$};

\end{tikzpicture}

\end{center}
\caption{Illustration of why the formula for the ``not big enough'' case is a necessary condition.}  \label{fig:notbigenough}
\end{figure}

For the ``if'' direction,
suppose $\phi^{r}_{\va|\vb}(\vM,\vN)$ holds.
Then we can fix some $s,\vz_{0},\vz_{1}$, and  $i \in \per(\va)$ such that:
\begin{enumerate}[(a)]
\item 
$M_{i}\neq 0$,
\item
$\offset(a_i) + ps = \normt{\vz_1}_{\nz(\vb)}$,
\item
$\vz_{0}+\vz_{1} = \vN$,
\item
$\phi^{r-1}_{\va|(\vb,\vb-\vone)}(\vM-\textbf{e}_i,\vz_0,\vz_1)$ holds.
\end{enumerate}
We prove from this that a biregular graph of the appropriate size exists.

By similar reasoning as in the previous case (i.e., the ``only if'' case),
the following holds:
\begin{itemize}
\item 
$\normt{(\vM-\textbf{e}_i)}_{\nz(\va)} = \normt{\vM}_{\nz(\va)} -1 \geq 2\delta(\va,\vb)^2+1+(r-1)$,
\item 
$\normt{(\vM-\textbf{e}_i)}_{\per(\va)}=r-1$,
\item
$\normt{(\vz_0,\vz_1)}_{\per(\vb,\vb-\vone)} = \normt{\vN}_{\per(\vb)} \geq \delta(\va,\vb)^2+1$.
\end{itemize}
That is, $(\vM-\textbf{e}_i)|(\vz_0,\vz_1)$ satisfies the hypothesis of the lemma 
(w.r.t. to the degree vectors $\va|(\vb,\vb-\vone)$).
Thus, we can apply the induction hypothesis
and obtain an $\va|(\vb,\vb-\vone)$-biregular graph $G=(U,V,E)$ 
with size $(\vM-\textbf{e}_i)|(\vz_0,\vz_1)$.
Let $U=U_1\uplus\cdots\uplus U_m$ and 
$V=V_{0,1}\uplus\cdots\uplus V_{0,n}\uplus V_{1,1}\uplus\cdots\uplus V_{1,n}$ be the witness partitions.
Note that the degrees of the vertices in $V_{1,1}\uplus\cdots\uplus V_{1,n}$ are $\vb-\vone$.

Let $u$ be a fresh vertex.
We construct an $\va|\vb$-biregular graph 
$G'=(U\cup\{u\},V,E')$, by connecting $u$ with every vertex in $\bigcup_{j\in \nz(\vb)} V_{1,j}$.
This makes the degree of the vertices in $V_{1,1}\uplus\cdots\uplus V_{1,n}$ become $\vb$.
The formula states that $\normt{\vz_1}_{\nz(\vb)}=\offset(a_i)+ps$;
thus, the degree of $u$ is $\offset(a_i)+ps$, which satisfies the requirement for a vertex to be in $U_i$.
Moreover, $\vz_0+\vz_1 =\vN$. 
Thus, the graph $G'$ has size $\vM|\vN$. As witness partition for $G'$  we use the $U_j$ on the left, while  on the right  each $V_{0,j} \cup V_{1,j}$ becomes
 a single partition element.
\end{proof}

The case where (a) and (b) hold, but (c) is violated is handled symmetrically.

Next, we consider the case when (a) holds, but both (b) and (c) are violated.
The treatment is similar to the previous case.
We will  define  a formula for the case where all of the following holds.
\begin{itemize}
\item
$\normt{\vM}_{\nz(\va)} - \normt{\vM}_{\per(\va)} \geq 2\delta(\va,\vb)^2+1$.
\item
$\normt{\vN}_{\nz(\vb)} - \normt{\vN}_{\per(\vb)} \geq 2\delta(\va,\vb)^2+1$.
\item 
$\normt{\vM}_{\per(\va)}=r_1$, for some fixed $r_1$ between $0$ and $\delta(\va,\vb)^2$.
\item 
$\normt{\vN}_{\per(\vb)}=r_2$, for some fixed $r_2$ between $0$ and $\delta(\va,\vb)^2$.
\end{itemize}

The formula is defined inductively on $r_2$ with the base case $r_2=0$.
Note that when $r_2=0$, $\normt{\vN}_{\per(\vb)}=0$,
which means (c) is no longer violated and it falls under the previous case.
Define the formula $\phi^{r_1,r_2}_{\va|\vb}(\vx,\vy)$ as follows:
\begin{itemize}
\item
when $r_2=0$, let
\begin{align*}
\phi^{r_1,0}_{\va|\vb}(\vx,\vy)
& :=\
\normt{\vy}_{\per(\vb)} = 0 \ \wedge \ \phi^{r_1}_{\va|\vb}(\vx,\vy),
\end{align*}
where $\phi^{r_1}_{\va|\vb}(\vx,\vy)$ is defined in the previous case;
\item
when $r_2\geq 1$, let 
\begin{align} \label{eq:1type-not-big-enough-b-c-not-hold}
&
\phi^{r_1,r_2}_{\va|\vb}(\vx,\vy) \ := \
\exists s \exists \vz_0 \exists \vz_1 
\bigvee_{i\in \per(\vb)}
\left(
\begin{array}{l}
y_i \neq 0 \ \wedge \
\vz_0+\vz_1 =\vx 
\\
\wedge \ \normt{\vz_1}_{\nz(\va)} = \offset(b_i) + ps
\\ 
\wedge \ \phi^{r_1,r_2-1}_{(\va,\va-\vone)|\vb}(\vz_0,\vz_1,\vy-\textbf{e}_i)
\end{array}
\right),
\end{align}
Here the lengths of $\vz_0$ and $\vz_1$ are the same as $\vx$, 
$\textbf{e}_i$ is the unit vector where the $\myth{i}$ component is $1$, and 
the subtraction $\va-\vone$ of degree vectors is the same as in the earlier case.
\end{itemize}
Note that the formula $\phi^{r_1,r_2}_{\va|\vb}(\vx,\vy)$ is defined  as in the previous case,
but  the roles of $\va,\vx$ and $\vb,\vy$ are reversed and the base case is now the formula $\phi^{r_1,0}(\vx,\vy)$.

\begin{lemma}
\label{lem:1type-not-big-enough-b-c-not-hold}
For every pair of degree vectors $\va,\vb$,
for every pair of size vectors $\vM,\vN$,  and each integer $r_1,r_2\geq 0$ such that
\begin{itemize}
\item 
$\normt{\vM}_{\nz(\va)} \geq 2\delta(\va,\vb)^2+1+r_1$, 
\item 
$\normt{\vN}_{\nz(\vb)} \geq 2\delta(\va,\vb)^2+1+r_2$, 
\item 
$\normt{\vM}_{\per(\va)}=r_1$,
\item 
$\normt{\vN}_{\per(\vb)}=r_2$,
\end{itemize}
there is an $\va|\vb$-biregular graph
with size $\vM|\vN$ if and only if
$\phi^{r_1,r_2}_{\va|\vb}(\vM,\vN)$ holds in $\cN$.
\end{lemma}

The proof of Lemma~\ref{lem:1type-not-big-enough-b-c-not-hold} is similar to Lemma~\ref{lem:1type-not-big-enough},
hence is omitted.

We have completed the case of fixed $r_1,r_2$. As mentioned above,
this suffices to give the entire not big enough case, via enumerating solutions 
for each of the finitely many possible values of $r_1,r_2$.

To wrap up this section, we define the formula
$\bireg_{\va|\vb}(\vx,\vy)$ required in Lemma~\ref{lem:1type-incomplete}
to characterize solutions in the $1$-color case without the completeness requirements 
\begin{align*}
\bireg_{\va|\vb}(\vx,\vy) & := 
\psi_{\va|\vb}(\vx,\vy) \ \vee\ \phi_{\va|\vb}(\vx,\vy)  
\ \vee \
\bigvee^{\delta(\va,\vb)^2}_{r = 1} \left(\phi^r_{\va|\vb}(\vx,\vy) \lor \phi^r_{\vb|\va}(\vy,\vx)\right)
\\
&\qquad \vee \ 
\bigvee_{0\leq r_1,r_2 \leq \delta(\va,\vb)^2} \phi^{r_1,r_2}_{\va|\vb}(\vx,\vy)
\end{align*}
where $\psi_{\va|\vb}(\vx,\vy)$ is defined in \eqref{eq:1type-big-enough}
to deal with the big enough sizes, $\phi_{\va|\vb}(\vx,\vy)$ is the formula enumerating all valid sizes when condition (a) is violated, 
the formulas in the second last disjunction deal with the not big enough cases when 
exactly one of the conditions (b) or (c) is violated as defined in~\eqref{eq:1type-not-big-enough} and
the formulas in the final disjunctions deal with the other not big enough cases as defined in~\eqref{eq:1type-not-big-enough-b-c-not-hold}.
The correctness of the construction follows immediately from 
Lemmas~\ref{lem:1type-big-enough} and \ref{lem:1type-not-big-enough-b-c-not-hold}.

\subsection{The proof in the $1$-color case for biregular digraphs}
\label{subsec:1type-digraphs}

\begin{figure}
\resizebox{.88\textwidth}{!}{%
\noindent
\begin{picture}(420,100)(-50,0)

\put(100,50){\circle*{3}}\put(98,54){$w$}
\put(60,70){\vector(2,-1){37}}\put(50,50){\vector(1,0){47}}\put(60,30){\vector(2,1){37}}
\put(102,51){\vector(3,1){37}}\put(102,49){\vector(3,-1){37}}

\put(210,50){$\Rightarrow$}

\put(300,50){\circle*{3}}\put(298,54){$v$}
\put(260,70){\vector(2,-1){37}}\put(250,50){\vector(1,0){47}}\put(260,30){\vector(2,1){37}}

\put(330,50){\circle*{3}}\put(324,54){$u$}
\put(332,51){\vector(3,1){37}}\put(332,49){\vector(3,-1){37}}

\end{picture} 
}
\caption{Splitting a vertex $w$ in a digraph $G$ into two vertices $u$ and $v$ in $G'$.
One is adjacent to all the outgoing edges and the other to all the incoming edges.}  \label{fig:splitting}
\end{figure}

Recall that we define digraphs as without any self-loop. 
Thus, a digraph can be viewed as a bipartite graph by splitting every vertex $u$
into two vertices,
where one is adjacent to all the incoming edges, and the other to all the outgoing edges; see Figure~\ref{fig:splitting}.
Thus, 
$\va|\vb$-regular digraphs with size $\vM$ can be characterized as $\va|\vb$-biregular graphs with size $\vM|\vM$
(see~\cite[Section~8]{KT15} for a similar construction when the degrees are fixed).

\myparagraph{Some remarks on the general case vs. the~$1$-color case}
To conclude this section, we stress that although the $1$-color case contains many of the key ideas, the multi-color case requires a finer
analysis to deal with the ``big enough'' case, and also may benefit from a reduction that allows one to restrict
to matrices of a very special form (``simple matrices'').  Note that our definition of multi-color
graph requires the edges of different colors to be disjoint, which imposes additional correlations
between sizes on top of those one would get from considering each color in isolation. We will present these details in the following sections.


\section{Proof of Theorem~\ref{thm:main-lemma-bireg} for the  case of ``simple'' matrices and without the completeness requirement}
\label{sec:proofsimple}

This section will provide the construction of the Presburger formula for the case
where the matrices $A$ and $B$ may have multiple colors, but are what we call {\em simple} matrices,
defined formally in Definition~\ref{def:simple-matrix}, and where the requirement of being complete is dropped.
Here it is useful to recall that a fixed entry is of the form $a \in \bbN$
and a periodic entry is of the form $\prdp{a}$.

\begin{definition}
\label{def:simple-matrix}
A degree matrix $A$ is simple  if every row consists of
either only periodic entries or only fixed entries.
\end{definition}

That is, for every fixed edge color, either each partition is constrained
using fixed degree constraints on each vertex, or each partition is only
``loosely constrained'' with a periodic constraint on each each vertex.
We devote this section to the proof of the following lemma,
which only deals with finite graphs.
The extension to general graphs can be found in Appendix~\ref{app:sec:proofsimple}.

\begin{lemma}
\label{lem:simple-bireg}
For each pair of simple matrices $A\in \bbNp^{t\times m}$ and $B\in \bbNp^{t\times n}$,
there exists an (effectively computable) existential 
Presburger formula $\bireg_{A|B}(\vx,\vy)$ such that, 
for every size vectors $\vM \in \bbN^m$ and $\vN \times \bbN^n$,
there is an $A|B$-biregular graph with size $\vM|\vN$
if and only if $\bireg_{A|B}(\vM,\vN)$ holds in $\cN$.
\end{lemma}

This section is organized as follows.
We introduce the proper notation in Subsection~\ref{subsec:proof-notations}.
In the setting with multiple colors we also need to introduce ``big enough'' sizes and ``extra big enough'' sizes.
The ``big enough'' sizes are defined only for the matrices $A|B$ whose entries are all fixed,
whereas the ``extra big enough'' sizes are defined for the matrices $A|B$ whose entries can be fixed and periodic.
As in the $1$-color case, the formula $\bireg_{A|B}(\vx,\vy)$ is divided into three cases:
\begin{enumerate}[(1)]
\item
For big enough sizes and when the degree matrices contain only fixed entries, dealt with in
Section~\ref{subsec:simple-big-enough-fixed-entries}.
\item 
For extra big enough sizes and when the degree matrices may contain fixed and periodic entries, in
Section~\ref{subsec:simple-bireg-extra-big-enough}.
\item 
For not big/extra big enough sizes, in
Section~\ref{subsec:simple-not-big-enough}.
\end{enumerate}
Lemma~\ref{lem:simple-bireg} can then be proven by  combining
these cases, as we show in Subsection~\ref{subsec:proof-simple-bireg}.

Briefly, the formula for case~(1) is the same as the one in~\cite[Theorem~7.4]{KT15}.
However, the proof we give here is more straightforward and simpler.
The formula for case~(2) is a generalization of the formula for case~(1)
and we will use techniques such as ``{\em creating a phantom partition for the period, then merging}'' (Lemma~\ref{lem:1type-s2});
``{\em move a multiple of the period entries to one side}'' (Lemma~\ref{lem:1type-s3}) and ``{\em edge swapping}'' (Lemma~\ref{lem:1type-s1}).
As with the $1$-color case, the purpose of extra big enough sizes is to enable us to perform these techniques
without violating the requirement of $A|B$-biregularity.
Finally, the formula for not extra big enough sizes is a straightforward generalization
of the ``{\em fixed size encoding}'' presented in Subsection~\ref{subsubsec:1type-incomplete-not-big-enough}.

\subsection{Notation and terminology}
\label{subsec:proof-notations}

As before, the term ``vectors'' means row vectors
and we use $\vx,\vy,\vz$ (possibly indexed) to denote vectors of variables,
and $\vM,\vN$ to denote size vectors.

Since we are now transitioning to general multi-color graphs, we will use matrix notation,
where matrices are primarily used to describe the degrees of vertices.
We will often call the matrices  {\em degree matrices}.
We use  $\cdot$ to denote matrix multiplication.
When we perform matrix multiplication,
we always assume that the sizes of the operands are appropriate.
We write $I_t$ to denote the identity matrix with size $t\times t$.

The transpose of a matrix $A$ is denoted by $A^\tT$.
The entry in row $i$ and column $j$ is $A_{i,j}$.
We write $A_{i,*}$ and $A_{*,j}$ to denote the $i^{th}$ row and $j^{th}$ column of $A$, respectively.
The numbering of the rows and columns of a matrix starts from $1$.
So for a matrix $A$ with $t$ rows and $m$ columns,
the rows are numbered from $1$ to $t$
and the columns from $1$ to $m$.

As before, we call an entry $A_{i,j}$ a {\em fixed} entry, if it is some $a \in \bbN$.
Otherwise, it is called a {\em periodic} entry,
i.e., an entry of the form $\prdp{a}$.
The {\em offset} of $A$, denoted by $\offset(A)$, is the matrix obtained by replacing
every entry $A_{i,j}$ with $\offset(A_{i,j})$.
Of course, if $A$ does not contain any periodic entry, then
$\offset(A)$ is $A$ itself.

For a matrix $A$ (with $t$ rows and $m$ columns) that contains only fixed entries,
its norm is defined as $\norm{A}  =  \max_{j\in [m]}\quad \sum_{i=1}^{t} A_{i,j}$.
This is the standard $1$-norm.
Of course, a vector $\va$ (of fixed entries) can be viewed as a $1$ row matrix.
Thus, for $\va=(a_1,\ldots,a_m)$, its norm is $\norm{\va}=\max(a_1,\ldots,a_m)$ and 
the norm of its transpose is $\norm{\va^\tT} = \sum_{i=1}^m a_i$.
For matrices $A$ and $B$ that contain only fixed entries, 
$\delta(A,B)$ denotes $\max(\norm{A},\norm{B})$.
If they contain periodic entries,
$\delta(A,B)$ denotes $\max(\norm{\offset(A)},\norm{\offset(B)},p)$.
Note that $\delta(A,B)$ is actually the generalization of the 
$\delta(\va,\vb)$
introduced in Section~\ref{subsubsec:1type-notation} for the
$1$-color case.

If $A$ and $B$ are matrices with the same number of columns,
$\begin{pmatrix} A \\ B \end{pmatrix}$ denotes the matrix
where the first sequence of rows  are $A$ and the next sequence of rows are $B$.
Likewise, if $A$ and $B$ have the same number of rows,
$(A,B)$ is
the matrix where the first sequence of columns are $A$ and the next sequence of columns are $B$.

For degree matrices $A$ and $B$ (with entries from $\bbNp$ and the same number of rows),
and for every size vectors $\vM$ and $\vN$,
we say that {\em $\vM|\vN$ is appropriate for $A|B$}, if
the length of $\vM$ is the same as the number of columns of $A$ and
the length of $\vN$ is the same as the number of columns of $B$.
Since we will only use degree matrices $A$ and $B$ to describe $A|B$-biregular graphs (or $A|B$-regular digraphs),
in the rest of the paper, whenever we use the notation $A|B$,
we implicitly assume that $A$ and $B$ have the same number of rows.
Moreover, unless indicated otherwise, entries in degree matrices always come from $\bbNp$.

Next, we generalize the notion of ``big enough'' in Section~\ref{sec:1type}.
The distinction between ``big enough''  and ``not big enough'' size vectors used for the $1$-color case
in Section~\ref{sec:1type} will need to be refined.

\begin{definition}
\label{def:slightly-big-enough}
Let $A$ and $B$ be degree matrices with $t$ rows whose entries are all fixed entries, i.e., from $\bbN$.
For size vectors $\vM$ and $\vN$ where $\vM|\vN$ is appropriate for $A|B$, $\vM|\vN$ is {\em big enough} for $A|B$,
if the following holds, for every $i\in [t]$:
\begin{enumerate}[(a)]
\item $\max(\normt{\vM}_{\nz(A_{i,*})},\normt{\vN}_{\nz(B_{i,*})})\ \geq\ 2\delta(A,B)^2+1$.
\end{enumerate}
\end{definition}

\begin{definition}
\label{def:big-enough}
Let $A$ and $B$ be simple degree matrices with $t$ rows.
Let $\vM$ and $\vN$ be size vectors where $\vM|\vN$ is appropriate for $A|B$.
We say that {\em $\vM|\vN$ is extra big enough for $A|B$}, if each of the following holds, for every $i\in [t]$: 
\begin{enumerate}[(a)]
\item 
$\max(\normt{\vM}_{\nz(A_{i,*})},\normt{\vN}_{\nz(B_{i,*})})\geq 8t^2\delta(A,B)^4+1$,
\item 
$\normt{\vM}_{\per(A_{i,*})}=0$ or $\normt{\vM}_{\per(A_{i,*})}\geq \delta(A,B)^2+1$,
\item 
$\normt{\vN}_{\per(B_{i,*})}=0$ or $\normt{\vN}_{\per(B_{i,*})}\geq \delta(A,B)^2+1$.
\end{enumerate}
\end{definition}

Note that since $A$ is a simple matrix,
for each color $i \in [t]$, either $\per(A_{i,*})=\emptyset$ or $\per(A_{i,*})=[m]$,
where $m$ is the number of columns in $A$.
The first case is equivalent to $\normt{\vM}_{\per(A_{i,*})}=\normt{\vM}=0$ in condition (b),
while the second case is equivalent to $\normt{\vM}_{\per(A_{i,*})}=\normt{\vM}\geq \delta(A,B)^2+1$.
The same property also holds for matrix $B$ and  size vector $\vN$.
Thus, conditions (a)--(c) can be equivalently restated as:
\begin{itemize}
\item
$\max(\normt{\vM}_{\nz(A_{i,*})},\normt{\vN}_{\nz(B_{i,*})})\geq 8t^2\delta(A,B)^4+1$, for every $i\in [t]$.
\item 
If $A$ contains periodic entries, then $\normt{\vM}\geq \delta(A,B)^2+1$,
\item 
If $B$ contains periodic entries, then $\normt{\vN}\geq \delta(A,B)^2+1$.
\end{itemize}
This is the version we will use in arguments below. 
The formulation in Definition~\ref{def:big-enough} was presented only to  highlight
the generalization from the $1$-color case in Definition~\ref{def:1type-big-enough}.

When we say $\vM|\vN$ is big/extra big enough for $A|B$,
we implicitly assume that $\vM|\vN$ is appropriate for $A|B$.

\begin{remark}
\label{rem:big-enough}
Some basic observations:
\begin{itemize}
\item
The notion of ``big enough'' is defined just on matrices $A|B$
which contain only fixed entries.
\item
Definition~\ref{def:slightly-big-enough} is a direct generalization of Definition~\ref{def:1type-big-enough} for the case without periodic degrees,
where $\vM|\vN$ is big enough for $A|B$, if $\vM|\vN$ is big enough for every color,
i.e., $\vM|\vN$ is big enough for degree vector $A_{i,*}|B_{i,*}$ 
(in the sense of Definition~\ref{def:1type-big-enough}),
for every row $i$.
\item
In the notion of ``extra big enough,'' in Definition~\ref{def:big-enough}
condition (a) requires that
$\max(\normt{\vM}_{\nz(A_{i,*})},\normt{\vN}_{\nz(B_{i,*})})$ is at least
$8t^2\delta(A,B)^4+1$, which is quartic in $\delta(A,B)$,
a jump from quadratic for the $1$-color case.
The reason is purely technical, because in multiple color graphs,
in some cases periodic entries can be reduced to fixed entries
but with quadratic blow-up on the matrix entries.
\item 
Of course, extra big enough is stronger than big enough.
\end{itemize}
\end{remark}

Informally, ``big enough'' entries are those that will allow
the analogous results to Lemma \ref{lem:1type-s1} from the $1$-color case, which concerned
fixed-degree constraints, to go through. ``Extra big enough'' will have some additional margin
over ``big enough'', which will allow us to handle the case of matrices with
periodic entries by reduction to the fixed-entry case.

\subsection{Proof of Lemma~\ref{lem:simple-bireg} for big enough sizes, when the degree matrices contain only fixed entries}
\label{subsec:simple-big-enough-fixed-entries}

Let $A$ and $B$ be degree matrices with $t$ rows that contain only fixed entries.
Consider the  formula:
\begin{align}
\label{eq:simple-s1}
\Psi^{1}_{A|B}(\vx,\vy) & := \ A\cdot\vx^\tT\ =\  B\cdot\vy^\tT.
\end{align}

This formula is a generalization of \eqref{eq:1type-s1} to the case of multiple color graphs for $\va$ and $\vb$ without periodic entries.

\begin{lemma}
\label{lem:simple-s1}
For every pair of degree matrices $A, B$ that contain only fixed entries
and for every pair of size vectors $\vM,\vN$ such that $\vM|\vN$ is big enough for $A|B$,
there is an $A|B$-biregular graph with size $\vM|\vN$ if and only if $\Psi^{1}_{A|B}(\vM,\vN)$ holds.
\end{lemma}
\begin{proof}
Let $A$ and $B$ be degree matrices with $t$ rows,  containing only fixed entries.
Let $\vM|\vN$ be big enough for $A|B$.

We argue for the ``only if'' direction.
Let $G=(U,V,E_1,\ldots,E_t)$ be an $A|B$-biregular graph with size $\vM|\vN$.
The equality, as in the analogous $1$-color case, comes from the  ``edge counting equality,''
i.e., both $A\cdot\vM^\tT$ and $B\cdot\vN^\tT$
simply ``count'' the number of edges in each color, i.e.,
$A \cdot \vM^\tT \ = \ (|E_1|,\ldots,|E_t|)^\tT\ = \ B\cdot\vN^\tT$.
Thus, $\Psi^{1}_{A|B}(\vM,\vN)$ holds.

We now show the ``if'' direction.
Suppose $\Psi^{1}_{A|B}(\vM,\vN)$ holds, i.e., $A\cdot\vM^\tT = B\cdot\vN^\tT$.
We will show that there is an $A|B$-biregular graph with size $\vM|\vN$.

The proof is by induction on $t$.
The base case $t=1$ has been shown in Lemma~\ref{lem:1type-s1}.
For the induction hypothesis, we assume the lemma holds when the number of colors is less than $t$.

Let $A'$ and $B'$ be the degree matrices obtained by omitting the last row in $A$ and $B$, respectively.
Since  $\vM|\vN$  is big enough for $A|B$, we infer that $\vM|\vN$ is big enough for $A'|B'$.
Applying the induction hypothesis, there is an $A'|B'$-biregular graph 
$G'=(U',V',E_1,\ldots,E_{t-1})$ with size $\vM|\vN$.

Similarly, since $\vM|\vN$ is big enough for $A|B$,
it is big enough for $A_{t,*}|B_{t,*}$ (in the sense of Definition~\ref{def:1type-big-enough}).
Recall that $A_{t,*}$ and $B_{t,*}$ are the last rows of $A$ and $B$.
Applying the induction hypothesis,
there is an $A_{t,*}|B_{t,*}$-biregular graph $G''=(U'',V'',E_t)$ with size $\vM|\vN$.
Since $G'$ and $G''$ have the same size, we can assume that $U''=U'$ and $V''=V'$.

To obtain the desired $A|B$-biregular graph,
we first merge the two graphs, obtaining a single graph $G=(U,V,E_1,\ldots,E_t)$.
Such a graph $G$ is ``almost'' $A|B$-biregular, except that
it is possible we have an edge $(u,v)$ which is in $E_1\cup\cdots\cup E_{t-1}$
as well as in $E_t$.
Here we will make use of the ``edge swapping'' technique adapted from Lemma~\ref{lem:1type-s1}.

Recall that $\delta(A,B)=\max(\norm{A},\norm{B})$.
Thus, there are only at most $\delta(A,B)^2$ edges 
incident to the neighbors (via any of $E_1,\ldots,E_t$-edges) of vertex $u$.
The same holds  for neighbors of $v$.
Since $\vM|\vN$ is big enough for $A|B$,
there are at least $\max(\normt{\vM}_{\nz(A_{t,*})},\normt{\vN}_{\nz(B_{t,*})}) \geq 2\delta(A,B)^2+1$ $E_t$-edges in $G$.
So there is an $E_t$-edge $(w,w')$ such that both $w,w'$ are not adjacent (via any of $E_1,\ldots,E_t$-edges) to either $u$ or $v$.
We can perform edge swapping where we omit the edges $(u,v),(w,w')$ from $E_t$,
but add $(u,w'),(w,v)$ into $E_t$.
See the illustration in Figure \ref{fig:simpleswap}.
This edge swapping does not effect the degree of any of the vertices $u,v,w,w'$.
\end{proof}

\noindent
\begin{figure}
\resizebox{1.0\textwidth}{!}{%
\begin{picture}(420,100)(-20,0)

\put(60,65){\circle*{3}}\put(54,78){$u$}

\put(62,65){\line(1,0){57}}
\put(120,65){\circle*{3}}\put(124,78){$v$}

\put(87,55){$E_t$}

\qbezier(60,65)(90,80)(120,65)

\put(87,80){$E_i$}

\put(60,15){\circle*{3}}\put(54,18){$w$}
\put(62,15){\line(1,0){57}}
\put(120,15){\circle*{3}}\put(124,18){$w'$}

\put(87,20){$E_t$}

\put(210,50){$\Rightarrow$}

\put(280,75){\circle*{3}}\put(274,78){$u$}
\put(282,73){\line(1,-1){56}} 
\put(340,75){\circle*{3}}\put(344,78){$v$}

\put(283,51){$E_t$}

\qbezier(280,75)(310,90)(340,75) 

\put(307,90){$E_i$}

\put(280,15){\circle*{3}}\put(274,18){$w$}
\put(282,17){\line(1,1){56}}
\put(340,15){\circle*{3}}\put(344,18){$w'$}

\put(325,51){$E_t$}

\end{picture}  \label{fig:simpleswap}
}
\caption{Swapping used in the case of fixed entries, where $i < t$ 
(proof of Lemma \ref{lem:simple-s1}).}
\label{fig:simple-s1-swap}
\end{figure}

\subsection{Proof of Lemma~\ref{lem:simple-bireg} for extra big enough sizes}
\label{subsec:simple-bireg-extra-big-enough}

In this section we will present the construction of the formula for Lemma~\ref{lem:simple-bireg}
that captures all the extra big enough sizes.
For illustration purposes, 
we start with Subsection~\ref{subsubsec:simple-example} where we consider a special case when the degree matrices $A$ and $B$ contain only 1 column and 2 rows,
the proof of which already contains all the essential ideas required for the proof this case.
Then, in Subsection~\ref{subsubsec:simple-bireg-extra-big-enough},
we present the general formula for the extra big enough sizes for Lemma~\ref{lem:simple-bireg}.

\subsubsection{A special case to illustrate the main ideas}
\label{subsubsec:simple-example}

We consider  the two-color case, i.e., $t=2$,
and  the degree matrices
$A_0 = \begin{pmatrix} a_1 \\ \prd{a_2}{p} \end{pmatrix}$ and $B_0 = \begin{pmatrix} \prd{b_1}{p} \\ b_2 \end{pmatrix}$,
where $a_1,a_2,b_1,b_2$ are all non zero.
Both $A_0$ and $B_0$ have only one column -- that is, only a single partition, whose
size will be the size of one side of the bipartite graph.
So it is trivial that
every row contains either only fixed entries or only periodic entries.
Hence both are simple matrices.

We will now present the formula $\psi_0(x,y)$ that captures all possible extra big enough sizes $M|N$ of $A_0|B_0$-biregular graphs:
\begin{align*}
\psi_0(x,y) & := \
\exists z_1\exists z_2 \ a_1x=b_1y+pz_1 \ \wedge \ a_2x+pz_2 = b_2y.
\end{align*}

Equivalently, we can write $\psi_0(x,y)$ in matrix form:
\begin{align*}
\psi_0(x,y) & := \
\exists z_1\exists z_2 \quad
C
\begin{pmatrix}
x
\\
z_2
\end{pmatrix}
\ = \
D\begin{pmatrix}
y
\\
z_1
\end{pmatrix},
\end{align*}
where $C=\begin{pmatrix}
a_1 & 0
\\
a_2 & p
\end{pmatrix}$ and $D=\begin{pmatrix}
b_1 & p
\\
b_2 & 0
\end{pmatrix}$.
Note that $C$ and $D$ contain only fixed entries.
The following lemma will be useful.

\begin{lemma}
\label{lem:example-extra-big-to-big}
For every pair of integers $M,N \geq 0$,
if $M|N$ is extra big enough for $A_0|B_0$,
then, for every integers $K,L\geq 0$,
$(M,K)|(N,L)$ is big enough for $C|D$.
\end{lemma}
\begin{proof}
It is straightforward from the definitions of extra big enough, big enough, $\delta(A_0,B_0)$ and $\delta(C,D)$.
\end{proof}

We now show that $\psi_0(x,y)$ captures all possible extra big enough sizes $M|N$ of $A_0|B_0$-biregular graphs,
stated formally in Lemma~\ref{lem:simple-example}.
The proof actually contains all the essential ideas required for the proof of Lemma~\ref{lem:simple-bireg}.

\begin{lemma}
\label{lem:simple-example}
For every pair of integers $M,N \geq 0$,
if $M|N$ is extra big enough for $A_0|B_0$, then 
there is an $A_0|B_0$-biregular graph with size $M|N$ if and only if $\psi_0(M,N)$ holds in $\cN$.
\end{lemma}
\begin{proof}
Let $M|N$ be extra big enough for $A_0|B_0$.
Again, the ``only if'' direction follows immediately from the edge counting equality.
So, we focus on the ``if'' direction.
Suppose $\psi_0(M,N)$ holds, i.e., there are $K,L\geq 0$ such that:
\begin{align}
\label{eq:example-bound-a}
a_1 M & \ = \ b_1N + pK,
\\
\label{eq:example-bound-b}
a_2M + pL & \ = \ b_2N. 
\end{align}
Since $M|N$ is extra big enough for $A_0|B_0$,
by Lemma~\ref{lem:example-extra-big-to-big},
$(M,L)|(N,K)$ is big enough for $C|D$.

By Lemma~\ref{lem:simple-s1}, there is a $C|D$-biregular graph $G=(U,V,E_1,E_2)$
of size $(M,L)|(N,K)$.
Let $U=U_0\uplus U_1$ and $V=V_0\uplus V_1$ be the witness partitions,
where $(|U_0|,|U_1|)=(M,L)$ and $(|V_0|,|V_1|)=(N,K)$
and the degree of every vertex is as follows:
\begin{itemize}
\item
every vertex in $U_0$ has $E_1$-degree $a_1$ and $E_2$-degree $a_2$,
\item
every vertex in $U_1$ has $E_1$-degree $0$ and $E_2$-degree $p$,
\item
every vertex in $V_0$ has $E_1$-degree $b_1$ and $E_2$-degree $b_2$,
\item
every vertex in $V_1$ has $E_1$-degree $p$ and $E_2$-degree $0$.
\end{itemize}
We will show how to merge every vertex in the ``phantom'' partition $U_1$ with some vertex in $U_0$
and likewise, merge every vertex in the ``phantom'' partition $V_1$ with some vertex in $V_0$.

We consider two cases.
Case~(a): at least one of $K$ or $L$ is zero;
Case~(b): both $K$ and $L$ are not zero.

\subparagraph*{Case~(a): When at least one of $K$ or $L$ is zero}
We may assume that $K=0$, i.e., $V_1=\emptyset$.
Hence we may consider $G=(U,V,E_1,E_2)$ as a $C|\offset(B_0)$-biregular graph of size $(M,L)|N$.
We will use the same merging technique as in Scenario (S2) in Subsection~\ref{subsubsec:1type-incomplete-big-enough}.

Let $w\in U_1$.
The number of vertices in $U_0$ reachable from $w$ in distance $2$ is at most $\delta(A_0,B_0)^2$.
Due to the condition that $(M,L)|N$ is big enough for $C|\offset(B_0)$,
we have $|U_0|=M\geq\delta(A_0,B_0)^2+1$.
Thus, there is a vertex $u\in U_0$ not reachable from $w$ in distance $2$: 
that is, $u$ does not share adjacent vertices with $w$.
We merge $z_0$ and $u$ into one vertex.
Since the $E_1$-degree of $w$ is $0$ and its $E_2$-degree is $p$, 
the merging does not break the $A_0|B_0$-biregularity requirement.
We perform this merging for every vertex in $U_1$
and obtain an $A_0|B_0$-biregular graph of size $M|N$.

\subparagraph*{Case~(b): When both $K$ and $L$ are not zero}
For this case, we first establish that $K\leq \delta(A_0,B_0)^2N$ and $L\leq \delta(A_0,B_0)^2M$,
which will be used to bound the number of vertices in the ``phantom partition'' that are merged with the same vertex in the ``real partition.''

By \eqref{eq:example-bound-a} and~\eqref{eq:example-bound-b}, we have:
\begin{align}
\label{eq:example-bound-c}
0 \ < \ pK\  =\  a_1M - b_1N \ \leq \ a_1M -N,
\\
\label{eq:example-bound-d}
0 \ < \ pL \ = \ b_2N - a_2M \ \leq \ b_2N -M.
\end{align}
Note that \eqref{eq:example-bound-c} implies $N < a_1M$.
Thus, plugging it into \eqref{eq:example-bound-d}, we obtain
$$
pL  \ \leq \ b_2N - M \ \leq \ b_2a_1M - M \ \leq \ b_2a_1M\ \leq \ \delta(A,B)^2 M.
$$
Similarly, \eqref{eq:example-bound-d} implies $M < b_2N$.
Plugging it into \eqref{eq:example-bound-c}, we obtain
$$
pK \ \leq \ a_1 M - N \ \leq \ a_1\cdot b_2N - N \ = \ a_1b_2N \ \leq \ \delta(A,B)^2N.
$$
Hence
\begin{align}
\label{eq:example-useful-bound}
K \leq \delta(A_0,B_0)^2 N/p
\qquad\text{and}\qquad
L\leq \delta(A_0,B_0)^2 M/p.
\end{align}

Now, when we merge every vertex in the ``phantom partition'' with a vertex in the ``real partition'',
the bound $L \leq \delta(A_0,B_0)^2 M/p$ tells us that we can do it in such a way that
every vertex in $U_0$ is merged with at most $\delta(A_0,B_0)^2/p$ vertices in $U_1$.
Likewise, the bound $K \leq \delta(A_0,B_0)^2 N/p$ tells us that 
we can do the merging in such a way that each vertex in $V_0$ is merged with at most $\delta(A_0,B_0)^2/p$ vertices in $V_1$.
After this merging we obtain an ``almost'' $A_0|B_0$-biregular graph $G=(U_0,V_0,E_1,E_2)$ with size $M|N$.
Again the ``almost'' is because it is possible that there are parallel edges between two vertices in $G$. 
The bounds above have controlled the number of parallel edges that we need to worry about.
We again perform the ``edge swapping'' to get rid of the parallel edges without effecting the degree of each vertex.
Note that after the merging the total degree of each vertex increases by $\delta(A_0,B_0)^2$,
since the degree of every vertex in $U_1\cup V_1$ is $p$.
The requirement that $M|N$ is extra big enough ensures that we have enough edges after the merging that we can perform
the needed edge swapping to get rid of $\delta(A_0, B_0)^2$ parallel edges.
\end{proof}

\subsubsection{Proof of Lemma~\ref{lem:simple-bireg} for extra big enough sizes}
\label{subsubsec:simple-bireg-extra-big-enough}

We now give the general construction for extra big enough sizes, extrapolating from
the idea in the prior example.
For simple degree matrices $A$ and $B$ with $t$ rows,
consider the formula $\Psi^{2}_{A|B}(\vx,\vy)$ given by
\begin{align}
\nonumber 
&
\exists z_{1,1}\cdots \exists z_{1,t}\
\exists z_{2,1}\cdots \exists z_{2,t}
\\
\label{eq:simple-s2}
& 
\qquad\quad
\offset(A)\cdot \vx^{\tT} + \begin{pmatrix}\alpha_1 p z_{1,1}\\ \vdots \\ \alpha_t p z_{1,t}\end{pmatrix}
\ = \
\offset(B)\cdot \vy^{\tT} + \begin{pmatrix}\beta_1 p z_{2,1}\\ \vdots \\ \beta_t p z_{2,t}\end{pmatrix},
\end{align}
where $\alpha_i = 1$ if row $i$ in $A$ consists of periodic entries and is $0$ otherwise, 
and similarly $\beta_i = 1$ if row $i$ in $B$ consists of periodic entries and is $0$ otherwise.

This is again an edge counting equality, with the $p$ multiples of $z_{1,i}$ and of $z_{2,i}$ representing additional edges due to the periodic
factors. We can see that
 $\Psi^1_{A|B}(\vx,\vy)$ is a special case of it 
where
all the constants $\alpha_1,\ldots,\alpha_t$,$\beta_1,\ldots,\beta_t$ are zero.
We will show that $\Psi^2_{A|B}(\vx,\vy)$ captures 
all possible extra big enough sizes $\vM|\vN$ of $A|B$-biregular graphs,
as formally stated in Lemma~\ref{lem:simple-s2}.

\begin{lemma}
\label{lem:simple-s2}
For each pair of simple degree matrices $A$, $B$ and
for each pair of size vectors $\vM$, $\vN$ such that $\vM|\vN$ is extra big enough for $A|B$,,
there is a $A|B$-biregular graph with size $\vM|\vN$ 
if and only if
$\Psi_{A|B}^2(\vM,\vN)$ holds in $\cN$.
\end{lemma}
\begin{proof}
Let $A$ and $B$ be simple degree matrices with $t$ rows.
Let $\vM|\vN$ be extra big enough for $A|B$.

The ``only if'' direction is just an edge counting equation for each color.
Suppose there is an $A|B$-biregular graph $G=(U,V,E_1,\ldots,E_t)$ with size $\vM|\vN$.
For each $i \in [t]$,
the number of $E_i$-edges is the sum of $E_i$-degrees of vertices in $U$
which is $\offset(A_{1,*})\cdot \vM + \alpha_i p z_{1,i}$, for some integer $z_{1,i}\geq 0$.
This, of course, must equal the sum of $E_i$-degrees of vertices in $V$, and this
 is $\offset(B_{1,*})\cdot \vN + \beta_i p z_{2,i}$, for some integer $z_{2,i}\geq 0$.
Thus, $\Psi^2_{A|B}(\vM,\vN)$ holds.

We now prove the ``if'' direction.
Suppose $\Psi^2_{A|B}(\vM,\vN)$ holds.
Abusing notation as before, we denote the values assigned to the variables $z_{i,j}$'s
by the variables $z_{i,j}$'s themselves.

We are going to construct an $A|B$-biregular graph with size $\vM|\vN$.
There are two cases -- analogous to Cases~(a) and (b) in Subsection \ref{subsubsec:simple-example}.

(Case 1) $\alpha_iz_{1,i} \geq \beta_i z_{2,i}$,
for every $i\in [t]$.

This case is analogous to scenarios (S2) and (S3) in the $1$-color case,
where we first ``move a multiple of the period entries to one side'' (S3)
and  ``create a phantom partition for the period, then merge'' (S2).
It is also analogous to case~(a) in Lemma~\ref{lem:simple-example}.
First, as in (S3), we move all the multiple of the period entries to one side --
that is, rewrite \eqref{eq:simple-s2} as
\begin{align*}
\offset(A)\cdot \vM^{\tT} + \begin{pmatrix}(\alpha_1 z_{1,1} - \beta_1z_{2,1})p \\ \vdots \\ (\alpha_t z_{1,t}-\beta_tz_{2,t})p\end{pmatrix}
& \ = \
\offset(B)\cdot \vN^{\tT}.
\end{align*}
We further rewrite the left hand side as
\begin{align*}
(\offset(A),pI_t) \cdot \begin{pmatrix} \vM^\tT \\ \alpha_1 z_{1,1} - \beta_1z_{2,1} \\ \vdots \\ \alpha_t z_{1,t}-\beta_tz_{2,t}\end{pmatrix}
& \ =\ 
\offset(B) \cdot\vN^{\tT}.
\end{align*}
Recall that $I_t$ is the identity matrix with size $t\times t$
and that $(\offset(A),pI_t)$ denotes
the matrix where the first sequence of columns are $\offset(A)$ and the next sequence of columns are $pI_t$.
Intuitively, the submatrix $pI_t$ represents $p$ ``phantom'' partitions,
each containing vertices whose $E_i$-degree is $p$ on exactly one color $i$, with the other degrees being $0$.
The vector $(\alpha_1 z_{1,1} - \beta_1z_{2,1},\ldots,\alpha_t z_{1,t}-\beta_tz_{2,t})$ represents the sizes of these phantom partitions.
Note that this is similar to (S2) in the $1$-color case in Lemma~\ref{lem:1type-s2}, except that now we have one phantom partition for each color.

Let $C=(\offset(A), pI_t)$ and $\vK = (\alpha_1z_{1,1}-\beta_1 z_{2,1}, \ldots, \alpha_t z_{1,t}-\beta_t z_{2,t})$.
Since $\vM|\vN$ is extra big enough for $A|B$, $(\vM,\vK)|\vN$ is big enough for $C|\offset(B)$.

Note that $C$ and $\offset(B)$ contain only fixed entries.
By Lemma~\ref{lem:simple-s1},
there is an $(\offset(A), pI_t)| \offset(B)$-biregular graph $G=(U,V,E_1,\ldots,E_t)$
with size $(\vM,\vK)|\vN$.

Let $U=U_1\uplus \cdots \uplus U_m \uplus W_1\uplus \cdots \uplus W_t$ be its witness partition,
-- that is, for every $i\in [t]$:
\begin{itemize}
\item
for every $j\in [m]$, the $E_i$-degree of every vertex in $U_j$ is $\offset(A_{i,j})$,
and $|U_j|=M_j$;
\item
the $E_i$-degree of every vertex in $W_i$ is $p$
and $|W_i|=\alpha_i z_{1,i}-\beta_i z_{2,i}$,
and for every $i'\neq i$, the $E_{i'}$-degree of every vertex in $W_i$ is $0$.
\end{itemize}
Here we actually ``create phantom partitions $W_1,\ldots,W_t$ for the periods.''

Observe that if $W_i\neq \emptyset$,
i.e., $\alpha_i z_{1,i}-\beta_i z_{2,i}\neq 0$, then $\alpha_i \neq 0$.
Since $A$ is a simple matrix, its row $i$ consists of only periodic entries,
hence $\normt{\vM}_{\per(A_{i,*})}=\normt{\vM}$.
Since the sizes are extra big enough, we have $|U_1\uplus\cdots\uplus U_m|=\normt{\vM}\geq \delta(A,B)^2+1$, 
where $\delta(A,B)=\max(\norm{\offset(A)},\norm{\offset(B)},p)$.
For such an $i$, 
we are going to ``merge'' vertices in $W_i$ with vertices in $U_1\uplus\cdots\uplus U_m$ --- analogous to Lemma~\ref{lem:1type-s2}.

Let $w$ be a vertex in $W_i$, where $W_j\neq \emptyset$.
The number of vertices in $G$ reachable by $w$ in distance $2$ (with any edges) is at most $\delta(A,B)^2$.
Since $|U_1\uplus\cdots\uplus U_m| \geq \delta^2+1$,
there is a vertex $u \in U_1\uplus\cdots\uplus U_m$ which is not reachable from $w$ in distance $2$.
We can merge $w$ with $u$.
We perform such merging for every vertex in $W_i$.
Since the $E_i$-degree of every vertex in $W_i$ is $p$,
and the $E_{i'}$-degree of vertices in $W_i$ is $0$, for every $i'\neq i$,
such merging only increases the $E_i$-degree of a vertex in $U$ by $p$.
We continue in this way for every $i$ where $W_i\neq \emptyset$,
resulting in an $A|B$-biregular graph with size $\vM|\vN$.

The case where $\beta_iz_{2,i} \geq \alpha_i z_{1,i}$,
for every $i \in [t]$, 
can be handled symmetrically.

(Case 2)
There are $i,i'\in [t]$ such that $\alpha_iz_{1,i} > \beta_i z_{2,i}$
and $\alpha_{i'}z_{1,i'} < \beta_{i'} z_{2,i'}$.

This case is analogous to case~(b) in Lemma~\ref{lem:simple-example}.
Let $\Gamma_1$ be the set of indexes $i$ such that $\alpha_iz_{1,i} > \beta_i z_{2,i}$
and $\Gamma_2$ be the set of indexes $i$ such that $\alpha_iz_{1,i} < \beta_i z_{2,i}$.
Since $A$ and $B$ are simple matrices, this means:
\begin{itemize}
\item
For every $i\in \Gamma_1$, $\alpha_i \neq 0$, 
i.e., row $i$ in $A$ consists of only periodic entries.
\item
Likewise, for every $i\in \Gamma_2$, $\beta_i \neq 0$,  
i.e., row $i$ in $B$ consists of only periodic entries.
\end{itemize}

First, we can rewrite \eqref{eq:simple-s2}
as
\begin{align*}
\offset(A)\cdot \vM^{\tT} + \begin{pmatrix}p K_1 \\ \vdots \\ pK_t\end{pmatrix}
& \ = \
\offset(B)\cdot \vN^{\tT} + \begin{pmatrix}pL_1 \\ \vdots \\ pL_t\end{pmatrix},
\end{align*}
where each $K_i$ and $L_i$ is defined as:
\begin{align*}
K_i & := 
\left\{
\begin{array}{ll}
\alpha_iz_{1,i} - \beta_i z_{2,i} \quad& \text{if}\ i \in \Gamma_1
\\
0 & \text{if} \ i \notin \Gamma_1,
\end{array}
\right.
\\
L_i & := 
\left\{
\begin{array}{ll}
\beta_iz_{2,i} - \alpha_i z_{1,i} \quad& \text{if}\ i \in \Gamma_2
\\
0 & \text{if} \ i \notin \Gamma_2.
\end{array}
\right.
\end{align*}
We can further rewrite the formula:
\begin{align}
\label{eq:simple-s2-case2}
(\offset(A),pI_t) \cdot \begin{pmatrix} \vM^\tT \\ K_1 \\ \vdots \\ K_t\end{pmatrix}
& \ =\ 
(\offset(B),pI_t) \cdot \begin{pmatrix} \vN^\tT \\ L_1 \\ \vdots \\ L_t\end{pmatrix}.
\end{align}
In the following we let $C = (\offset(A),pI_t)$ and $D= (\offset(B),pI_t)$.
We also let $\vK =(K_1,\ldots,K_t)$ and $\vL =(L_1,\ldots,L_t)$.
Note that since $\vM|\vN$ is extra big enough for $A|B$,
$(\vM,\vK)|(\vN,\vL)$ is big enough for $C|D$.
By Lemma~\ref{lem:simple-s1}, there is $C|D$-biregular graph $G=(U,V,E_1,\ldots,E_t)$
with size $(\vM,\vK)|(\vN,\vL)$.
We let $U = U_{0,1}\uplus\cdots \uplus U_{0,m}\uplus U_{1,1}\uplus \cdots\uplus U_{1,t}$
and $V= V_{0,1}\uplus\cdots \uplus V_{0,n}\uplus V_{1,1}\uplus \cdots\uplus V_{1,t}$
be the witness partition, where
\begin{itemize}
\item
$\vM =(|U_{0,1}|,\ldots,|U_{0,m}|)$ and $\vK=(|U_{1,1}|,\ldots,|U_{1,t}|)$, and
\item 
$\vN =(|V_{0,1}|,\ldots,|V_{0,n}|)$ and $\vL = (|V_{1,1}|,\ldots,|V_{1,t}|)$.
\end{itemize}
The partitions $U_{1,1},\ldots,U_{1,t},V_{1,1}, \ldots,V_{1,t}$ are the ``phantom partitions''
whose vertices are to be merged with the vertices in the ``real partitions'' 
$U_{0,1},\ldots,U_{0,m},V_{0,1},\ldots,V_{0,n}$.

Similar to case~(b) in Lemma~\ref{lem:simple-example},
we can bound the value of each $K_i$ and $L_i$.
For simplicity, we may first assume the following assumptions (a1) and (a2) hold.
\begin{enumerate}[(a1)]
\item[(a1)]
For every $i \in \Gamma_1$, $A_{i,*}$ does not contain a $\prdp{0}$ entry,
or equivalently, $\offset(A_{i,*})$ does not contain a zero entry.
\item[(a2)]
Likewise, for every $i \in \Gamma_2$, $B_{i,*}$ does not contain a $\prdp{0}$ entry.
\end{enumerate}

Note that for every $i\in \Gamma_1$ we have:
\begin{align}
\label{eq:simple-s2-1}
0 < pK_i = p (\alpha_iz_{1,i} - \beta_i z_{2,i}) & = \offset(B_{i,*})\cdot \vN^\tT - \offset(A_{i,*})\cdot \vM^\tT
\\
\nonumber
& \leq \delta(A,B)\norm{\vN^\tT} - \norm{\vM^\tT}.
\end{align}
In the last inequality we use the assumption that $\offset(A_{i,*})$ does not contain a $0$ entry.
Similarly, for every $i\in \Gamma_2$ we have:
\begin{align}
\label{eq:simple-s2-2}
0 < pL_i = p(\beta_{i} z_{2,i} - \alpha_{i}z_{1,i}) & \leq \delta(A,B)\norm{\vM^\tT} - \norm{\vN^\tT}.
\end{align}
From \eqref{eq:simple-s2-2}, we obtain $\norm{\vN^\tT} \leq \delta(A,B)\norm{\vM^\tT}$.
If we plug this into \eqref{eq:simple-s2-1}, we obtain that, for every $i\in \Gamma_1$,
\begin{align}
\label{eq:simple-s2-3}
pK_i & 
\leq \delta(A,B)^2\norm{\vM^\tT} - \norm{\vM^\tT} \leq \delta(A,B)^2\norm{\vM^\tT}.
\end{align}
Symmetrically, for every $i\in \Gamma_2$, we have
\begin{align}
\label{eq:simple-s2-4}
pL_i & 
\leq  \delta(A,B)^2\norm{\vN^\tT}.
\end{align}
Inequalities \eqref{eq:simple-s2-3} state that, for every $i\in \Gamma_1$,
when performing the merging between vertices in the phantom partition $U_{1,i}$ and 
the real partitions $U_{0,1}\uplus\cdots\uplus U_{0,m}$,
we can do in such a way that every vertex in the real partition
is merged with at most $\delta(A,B)^2/p$ vertices in phantom partition.
Likewise, \eqref{eq:simple-s2-3} states similarly for $i\in \Gamma_2$ for the merging between
vertices in the phantom partitions $V_{1,i}$ and 
the real partitions $V_{0,1}\uplus\cdots\uplus V_{0,m}$.

Now we reason as in the illustrative case.
After the merging, we obtain an ``almost'' $A|B$-biregular graph with size $\vM|\vN$.
As in the example, almost is because it is possible that there are parallel edges between two vertices in $G$
and the established bounds above have controlled the number of parallel edges that we need to worry about.
After the merging the total degree of each vertex increases by $t\delta(A_0,B_0)^2$.
We perform the ``edge swapping'' to get rid of the parallel edges without effecting the degree of each vertex.
The requirement that $\vM|\vN$ is extra big enough ensures that we have enough edges to perform the edge swapping.
This completes the proof for case 2 when the assumptions (a1) and (a2) hold.

Now we consider the case when at least one of the assumptions (a1) or (a2) does not hold.
The main idea is to rewrite the $\prdp{0}$ entries in $A$ and $B$ as $\prdp{p}$
in such a way that the bounds in \eqref{eq:simple-s2-3} and~\eqref{eq:simple-s2-4} still hold.

First, we rewrite \eqref{eq:simple-s2-case2}:
\begin{align*}
(\offset(A'),pI_t) \cdot \begin{pmatrix} \vM^\tT \\ K_1' \\ \vdots \\ K_t'\end{pmatrix}
& \ =\ 
(\offset(B'),pI_t) \cdot \begin{pmatrix} \vN^\tT \\ L_1' \\ \vdots \\ L_t'\end{pmatrix},
\end{align*}
where the matrix $A'$ and the integers $K_1',\ldots,K_t'$ are:
\begin{enumerate}[(1)]
\item
For every $i\notin \Gamma_1$,
we let $A'_{i,*}=A_{i,*}$ and $K_i' = K_i$.
\item
For every $i\in \Gamma_1$ such that $A_{i,*}$ does not contain $\prdp{0}$ entries,
the row $A'_{i,*}$ is $A_{i,*}$ and $K_i' = K_i$.
\item 
For every $i\in \Gamma_1$ such that $A_{i,*}$ contains $\prdp{0}$ entries, we let $X=\{j : A_{i,j}=\prdp{0}\}$; moreover,
\begin{enumerate}[(3.a)]
\item 
if $K_i <  \normt{\vM}_X $,
then $A_{i,*}'=A_{i,*}$ and $K_i' = K_i$, and

\item 
if $K_i \geq  \normt{\vM}_X$,
then $A_{i,*}'$ is obtained from $A_{i,*}$ by changing every $\prdp{0}$ entry with $\prdp{p}$
and $K_i' = K_i - \normt{\vM}_X$.
\end{enumerate} 
\end{enumerate}
The matrix $B'$ and the integers $L_1',\ldots,L_t'$ are defined in a similar manner.
\begin{enumerate}[(1)]\setcounter{enumi}{3}
\item
For every $i\notin \Gamma_2$,
we let $B'_{i,*}=B_{i,*}$ and $L_i' = L_i$.
\item
For every $i\in \Gamma_2$ such that $B_{i,*}$ does not contain $\prdp{0}$ entries,
the row $B'_{i,*}$ is $B_{i,*}$ and $L_i' = L_i$.
\item 
For every $i\in \Gamma_2$ such that $B_{i,*}$ contains $\prdp{0}$ entries, we
let $X=\{j : B_{i,j}=\prdp{0}\}$; moreover, 
\begin{enumerate}[(6.a)]
\item 
if $L_i <  \normt{\vN}_X $,
then $B_{i,*}'=B_{i,*}$ and $L_i' = L_i$, and

\item 
if $L_i \geq  \normt{\vN}_X$,
then $B_{i,*}'$ is obtained from $B_{i,*}$ by changing every $\prdp{0}$ entry with $\prdp{p}$
and $L_i' = L_i - \normt{\vN}_X$.
\end{enumerate} 
\end{enumerate}
Note that the only difference between $A$ and $A'$, and between $B$ and $B'$, 
is in (3.b) and (6.b), respectively,
where some $\prdp{0}$ entries are changed into $\prdp{p}$.
Thus, an $A'|B'$-biregular graph is also an $A|B$-biregular graph.

Performing a similar calculation as in \eqref{eq:simple-s2-1}--\eqref{eq:simple-s2-4},
we can show that:
\begin{itemize}
\item
for every $i\in \Gamma_1$, $K_i' \leq \delta(A,B)^2\norm{\vM^\tT}/p$,
\item 
for every $i\in \Gamma_2$, $L_i' \leq \delta(A,B)^2\norm{\vN^\tT}/p$.
\end{itemize}
The construction of an $A'|B'$-biregular graph with size $\vM|\vN$
can be done almost verbatim as above.
\end{proof}

\begin{remark} 
It  is only in Case~2 in the proof of Lemma~\ref{lem:simple-s2}
that we require the quantity
\[
\max(\normt{\vM}_{\nz(A_{i,*})},\normt{\vN}_{\nz(B_{i,*})}),
\]
which is precisely the number of vertices with non-zero $E_i$-degree in an $A|B$-biregular graph of size $\vM|\vN$,
to be at least quartic in $\delta(A,B)$, and not quadratic as in Section~\ref{sec:1type}.
This is because the total degree of each vertex increases by at most $t\delta(A,B)^2$
after the merging between the vertices in the phantom partitions and real partitions.
Thus, we require the number of edges is at least quartic in $\delta(A,B)$ to ensure
there is enough edges to perform edge swapping to get rid of the parallel edges.
Note also that the restriction of $A$ and $B$ to simple matrices
allows us to merge every vertex in the phantom partition with any vertex in the real partition.
Thus we can perform the merging in such a way that the total degree of each vertex in the real partition increases
by at most $t\delta(A,B)^2$. 
\end{remark}

\subsection{Encoding of not ``big/extra big enough'' components for simple matrices}
\label{subsec:simple-not-big-enough}

Lemma~\ref{lem:simple-s2} gives a formula that captures the existence of 
biregular graphs for extra big enough sizes for simple degree matrices.
We now turn to sizes that are not big/extra big enough. 
Here we will use the same idea of {\em fixed size encoding} as in the $1$-color case.

Note that a ``not extra big enough size'' means that one of the conditions (a)--(c) is violated
and thus some of the entries in the size vectors $\vM,\vN$ are already fixed.
For example, if condition (a) is violated, 
then $\max (\normt{\vM}_{\nz(A_{i,*})},\normt{\vN}_{\nz(B_{i,*})})$ 
is between $1$ and $8t^2\delta(A,B)^4$, for some $i\in [t]$.
So in this case we can fix the values of $\normt{\vM}_{\nz(A_{i,*})}$ and $\normt{\vN}_{\nz(B_{i,*})}$ as some $r_1$ and $r_2$,
where $r_1,r_2$ are in between $1$ and $8t^2\delta(A,B)^4$.
As in the $1$-color case (Lemma~\ref{lem:1type-not-big-enough}),  the idea will be
a fixed number of non-zero degree vertices in a graph can be ``encoded'' as formulas, along the lines of
Subsection~\ref{subsubsec:1type-incomplete-not-big-enough}.

We detail the formula construction for the case where for some color, condition (a) is violated, 
but conditions (b) and (c) hold.
All the other cases can be handled in a similar manner.
We fix degree matrices $A$ and $B$ with $t$ rows,
and let $m$ and $n$ be the number of columns in $A$ and $B$.
For simplicity, we focus on the case where the color where (a) is violated is the $t^{th}$ row.
For integers $r_1,r_2\geq 0$,
we define a formula $\Phi^{r_1,r_2}_{A|B}(\vx,\vy)$ that captures precisely
the sizes $\vM|\vN$ of $A|B$-biregular graph
where $\normt{\vM}_{\nz(A_{t,*})}=r_1$ and $\normt{\vN}_{\nz(B_{t,*})} =r_2$.

The construction is by induction on $r_1+r_2$ and the number of rows in the degree matrices $A$ and $B$.
\begin{itemize}
\item 
When the number of rows in $A,B$ is $1$,
the formula $\Phi^{i,r_1,r_2}_{A|B}(\vx,\vy)$ 
simply enumerates all possible sizes of $A|B$-biregular graphs.

Such an enumeration is possible since the number of vertices with non-zero degree on the left hand side is fixed to $r_1$,
and the number of vertices with non-zero degree on the right hand side is fixed to $r_2$.

\item 
If the $i^{\text{th}}$ row in both $A$ and $B$ contains periodic entries,
the formula\\
 $\Phi^{i,r_1,r_2}_{A|B}(\vx,\vy)$ simply enumerates all possible sizes of $A|B$-biregular graphs
where $r_1$ is the number of vertices on the left hand side
and $r_2$ is the number of vertices on the right hand side.

Here it is useful to recall that $A$ (resp. $B$) is a simple matrix,
hence either the entries in $A$ (resp. $B$) are all fixed entries
or are all periodic entries.

\item 
When $r_1+r_2=0$, we let the formula 
\begin{align*}
 \Phi^{r_1,r_2}_{A|B}(\vx,\vy)
 \ := \ &
\Phi_{\tilde{A}|\tilde{B}}(\vx_0,\vy_0) \ \wedge \ \normt{\vx}_{\nz(A_{t,*})} = 0
\ \wedge \ \normt{\vy}_{\nz(B_{t,*})} = 0
\end{align*}
where $\tilde{A}$ is the matrix $A$ without the $t^{\text{th}}$ row and without the columns in $\nz(A_{t,*})$,
$\tilde{B}$ is the matrix $B$ without the $t^{\text{th}}$ row and without the columns in $\nz(B_{t,*})$,
and $\vx_0$ and $\vy_0$ are the vectors $\vx$ and $\vy$ without the components in $\nz(A_{t,*})$ and $\nz(B_{t,*})$, respectively.

The purpose of the the formula $\Phi_{\tilde{A}|\tilde{B}}(\vx_0,\vy_0)$ is to capture all possible sizes of $\tilde{A}|\tilde{B}$-biregular graphs.
Formally, it is defined as:
$$
\Psi^2_{\tilde{A}|\tilde{B}}(\vx_0,\vy_0)
\ \vee \
\Theta_{\tilde{A}|\tilde{B}}(\vx_0,\vy_0)
$$
where $\Psi^2_{\tilde{A}|\tilde{B}}(\vx_0,\vy_0)$ captures all the extra big enough sizes of $\tilde{A}|\tilde{B}$-biregular graphs
as defined in Subsection~\ref{subsec:simple-bireg-extra-big-enough} and 
$\Theta_{\tilde{A}|\tilde{B}}(\vx_0,\vy_0)$
captures all the not extra big enough sizes of $\tilde{A}|\tilde{B}$-biregular graphs.
Note that the number of rows in $\tilde{A}|\tilde{B}$ is now $t-1$, hence
the formula $\Theta_{\tilde{A}|\tilde{B}}(\vx_0,\vy_0)$ can be defined inductively.
The intuition behind the matrices $\tilde{A}$ and $\tilde{B}$ is that, since $\normt{\vx}_{\nz(A_{t,*})} = r_1=0$
and $\normt{\vy}_{\nz(B_{t,*})} = r_2 = 0$, we can ignore the color $t$, i.e., by removing the $t^{\text{th}}$ row in $A$ and $B$
and all the corresponding columns in $\nz(A_{t,*})$ and $\nz(B_{t,*})$.

\item 
When $r_1+r_2\geq 1$, at least one of $r_1$ or $r_2$ is bigger than or equal to $1$.

When $r_1\geq 1$, we let
\begin{align*}
& \Phi^{r_1,r_2}_{A|B}(\vx,\vy) \ := 
\\ 
& \hspace{1cm} \exists s_1\cdots \exists s_t \ \exists \vz_0 \exists \vz_1 \cdots \exists \vz_t
\\
& \hspace{1.5cm}
\bigvee_{j\in \nz(A_{t,*})}
\left(
\begin{array}{l}
\quad(x_{1,j}\neq 0)\ \wedge \ \vy= \sum_{\ell=0}^t \vz_\ell
\\
\wedge \ \bigwedge_{\ell=1}^t  \normt{\vz_\ell} = \offset(A_{\ell,j})+ \alpha_\ell \cdot p \cdot s_\ell
\\
\wedge \
\Phi^{r_1-1,r_2}_{A|(B,B-J_1,\ldots,B-J_t)}
(\vx-\textbf{e}_j,\vz_0,\vz_1,\ldots,\vz_t)
\end{array}
\right).
\end{align*}
where  each $\alpha_\ell$ is in $\{0,1\}$ with $\alpha_\ell=1$ if and only if $A_{\ell,j}$ is a periodic entry;
each $J_\ell$ is a matrix with size $(t\times m)$ where row $\ell$ consists of all $1$ entries
and all the other rows have only $0$ entries.

When $r_2\geq 1$, the formula can be defined symmetrically with the roles of $A,\vx$ and $B,\vy$ being swapped.
\end{itemize}

The following lemma states the correctness of the formula constructed above.
\begin{lemma}
\label{lem:simple-not-big-enough}
For every pair of simple degree matrices $A,B$ with $t$ rows, 
for every integers $r_1,r_2\geq 0$, for every size vectors $\vM,\vN$,
the formula $\Phi^{r_1,r_2}_{A|B}(\vM,\vN)$ holds in $\cN$
if and only if there is a $A|B$-biregular graph
with size $\vM|\vN$ where $\normt{\vM}_{\nz(A_{t,*})}=r_1$ and $\normt{\vN}_{\nz(B_{t,*})}=r_2$.
\end{lemma}

The proof of Lemma~\ref{lem:simple-not-big-enough} is a straightforward generalization of Lemma~\ref{lem:1type-not-big-enough-b-c-not-hold},
hence we omit it.

The case where (b) or (c) is violated for some color $i\in [t]$ can be treated in a similar manner.
Note that in the case when {\em both} (b) and (c) are violated, i.e., $1\leq \normt{\vM}_{\per(A_{i_1,*})}\leq \delta(A,B)^2$ 
and $1\leq \normt{\vN}_{\per(B_{i_2,*})}\leq \delta(A,B)^2$ for some $i_1,i_2\in [t]$,
the number of vertices is fixed to some $r$ in between $1$ and $2\delta(A,B)^2$, since $\per_{A_{i_1,*}}=[m]$ and $\per(B_{i_2,*})=[n]$
due to $A$ and $B$ being simple matrices.
Thus, in this case all possible sizes of $A|B$-biregular graphs can simply be enumerated.

\begin{remark}
\label{rem:simple-not-big-enough}
The following observations about the formula will be useful in our complexity analysis later on.
By pulling out the disjunction, 
we can rewrite the formula $\Phi^{r_1,r_2}_{A|B}(\vx,\vy)$
as a disjunction $\bigvee_i \varphi_i$ conjoined with $\Phi_{\tilde{A}|\tilde{B}}(\vx_0,\vy_0)$,
where each $\varphi_i$ is a conjunction of $O(t(r_1+r_2))$ (in)equations.
Since $r_1,r_2$ ranges between $1$ and $\max(8t^2\delta(A,B)^4,t\delta(A,B))= 8t^2\delta(A,B)^4$,
each $\varphi_i$ is a conjunction of $O(t^3\delta(A,B)^4)$ (in)equations conjoined with $\Phi_{\tilde{A}|\tilde{B}}(\vx_0,\vy)$.
It is useful to recall that $\tilde{A}|\tilde{B}$ now have one less rows than $A|B$.

By straightforward induction on the number of rows $t$,
we observe that the formula $\Phi^{r_1,r_2}_{A|B}(\vx,\vy)$ can be written as a disjunction 
$\bigvee_i\varphi_i$,
where each $\varphi_i$ is a conjunction of $O(t^4\delta(A,B)^4)$ (in)equations. 
 \end{remark}

\subsection{Proof of Lemma~\ref{lem:simple-bireg}}
\label{subsec:proof-simple-bireg}
To wrap up this section, for simple matrices $A$ and $B$,
we define the formula $\bireg_{A|B}(\vx,\vy)$ required in Lemma~\ref{lem:simple-bireg}
to characterize all the possible sizes of $A|B$-biregular graph, without the completeness requirement:
\begin{align*}
\bireg_{A|B}(\vx,\vy) & := 
\Psi^2_{A|B}(\vx,\vy) \ \vee\
\bigvee_{i\in [\ell]} \Phi_i(\vx,\vy),
\end{align*}
where $\Psi^2_{A|B}(\vx,\vy)$ is defined in \eqref{eq:simple-s2}
to deal with the big enough sizes,
while the disjunction $\bigvee_{i\in [\ell]} \Phi_i(\vx,\vy)$ deals with the not extra big enough sizes
as defined in Subsection~\ref{subsec:simple-not-big-enough}.
Here we assume an enumeration of all the formulas $\Phi_1(\vx,\vy),\ldots,\Phi_{\ell}(\vx,\vy)$ 
that deal with the not extra big enough sizes.
The correctness of the construction follows immediately from 
Lemma~\ref{lem:simple-s2} and~\ref{lem:simple-not-big-enough}.

\begin{remark}
\label{rem:simple-bireg}
Let $t$ be the number of rows in matrices $A$ and $B$
and let $m$ and $n$ be the number of columns in $A$ and $B$, respectively.
By Remark~\ref{rem:simple-not-big-enough}, each $\Phi_i(\vx,\vy)$
is a disjunction of conjunctions of $O(t^4\delta(A,B)^4)$ (in)equations.
Since $\Psi^2_{A|B}(\vx,\vy)$ is a conjunction of $t$ equations,
the formula $\bireg_{A|B}(\vx,\vy)$ can be written as a disjunction $\bigvee_{i} \varphi_i$
where each $\varphi_i$ is a conjunction of $O(t^4\delta(A,B)^4)$ (in)equations.
\end{remark}


\section{Proof of Theorem~\ref{thm:main-lemma-bireg} for the  case of ``simple'' matrices with the completeness requirement being enforced}
\label{sec:simple-complete}

We will now consider the formula defining possible partition sizes, still restricting
to  simple biregular graphs, but now enforcing
 the completeness restriction.
This will be done
via reduction to the case where the completeness restriction has not been enforced.

We  introduce a further restriction on the matrices that will be useful.
\begin{definition}
\label{def:good-pair}
For a pair of simple matrices $A|B$ (with the same number of rows),
we say that $A|B$ is {\em a good pair} if
there is $i$ such that row $i$ is periodic in both $A$ and $B$.
\end{definition}

\begin{remark}
\label{rem:not-good-pair}
Note that if $A|B$ is not a good pair,
then complete $A|B$-biregular graphs can only have  up to $2\delta(A,B)$ vertices.
Indeed, suppose $G=(U,V,E_1,\ldots,E_t)$ is a complete $A|B$-biregular graph.
Since $A|B$ is not a good pair,
for every $i\in [t]$, the number of edges in $E_i$ is at most $\delta(A,B)|U|$ or $\delta(A,B)|V|$.
Thus, $\sum_{i\in [t]} |E_i|$ is at most $\delta(A,B)(|U|+|V|)$.
On the other hand, the fact that $G$ is complete implies that $\sum_{i\in [t]} |E_i|=|U||V|$
which is strictly bigger than $\delta(A,B)(|U|+|V|)$, 
when $|U|+|V|> 2\delta(A,B)$.
So, when $A|B$ is not a good pair,  to capture all possible sizes of complete $A|B$-biregular graphs,
we simply write a formula that enumerates all possible $\vM|\vN$
where $\normt{\vM}+\normt{\vN}\leq 2\delta(A,B)$.
\end{remark}

So it suffices to define the formula that captures all possible sizes of complete 
(finite) $A|B$-biregular graphs
where $A$ and $B$ are both simple matrices and $A|B$ is a good pair.
Let $\vx=(x_1,\ldots,x_m)$ and $\vy=(y_1,\ldots,y_n)$.
Let $A\in \bbNp^{t\times m}$ and $B\in\bbNp^{t\times n}$ be simple matrices
and $A|B$ is a good pair.
Let $\xi_{A|B}(\vx,\vy)$ be the formula
\begin{align}
\label{eq:simple-complete-xi1}
 \bireg_{A|B}(\vx,\vy) & \quad \wedge {} &
\\ 
\label{eq:simple-complete-xi1-a}  
\bigwedge_{j\in [m]}
x_j\neq 0 \ \to\ \exists z \ \normt{\vy} = \norm{\offset(A_{*,j})}+p z &  \quad \wedge {} &
\\
\label{eq:simple-complete-xi1-b}
 \bigwedge_{j\in [n]}
y_j\neq 0 \ \to \ \exists z \ \normt{\vx} = \norm{\offset(B_{*,j})}+p z. && 
\end{align}
Here $\bireg_{A|B}(\vx, \vy)$ is the formula characterizing the situation without
the completeness requirement.

Intuitively, \eqref{eq:simple-complete-xi1-a} states that
the number of vertices on the right hand side
must equal the total degree of the vertices on the left hand side.
Likewise, \eqref{eq:simple-complete-xi1-b} states that
the number of vertices on the left hand side
must equal the total degree of the vertices on the right hand side.

\begin{lemma} 
\label{lem:completesimplegoodonefinite}
For every good pair of simple matrices $A$ and $B$ such that $A|B$,
for every size vectors $\vM$ and $\vN$,
$\xi_{A|B}(\vM,\vN)$ holds in $\cN$
exactly when  there is a complete $A|B$-biregular graph of size $\vM|\vN$.
\end{lemma}
\begin{proof}
That $\xi_{A|B}(\vM,\vN)$ is a necessary condition
for the existence of complete $A|B$-biregular graph is pretty straightforward.
This follows from the fact that if $G=(U,V,E_1,\ldots,E_t)$ is a complete $A|B$-biregular graph
then the sum of all $E_i$-degrees of every vertex in $U$ must equal $|V|$,
and likewise, the sum of all $E_i$-degrees of every vertex in $V$ must equal  $|U|$.

Now we show that it is also a sufficient condition.
Suppose $\xi_{A|B}(\vM,\vN)$ holds.
Thus, $\bireg_{A|B}(\vM,\vN)$ holds,
and by Lemma~\ref{lem:simple-bireg},
there is a (not necessarily complete) $A|B$-biregular graph 
$G=(U,V,E_1,\ldots,E_t)$ with size $\vM|\vN$.
We will show how to make $G$ complete.

Let $U = U_1\uplus\cdots\uplus U_m$ and $V=V_1\uplus \cdots \uplus V_n$ be the witness partition.
Since $A|B$ is a good pair, there is $i_0$ such that row $i_0$ is periodic in both $A$ and $B$.
Now, for every $(u,v)\notin E_1\cup \cdots \cup E_t$,
we define $(u,v)$ to be in $E_{i_0}$.
Obviously, after adding such $E_{i_0}$-edges, the graph $G$ becomes complete.
We argue that $G$ is still $A|B$-biregular by showing
\begin{enumerate}[(a)]
\item 
for every $j\in[m]$, for every vertex $w\in U_j$, the $E_{i_0}$-degree of $w$ increases by a multiple of $p$;
\item 
for every $j\in[n]$, for every vertex $w\in V_j$, 
the $E_{i_0}$-degree of $w$ increases by a multiple of $p$.
\end{enumerate}
We prove (a), fixing
 $w\in U_j$.
The $E_{i_0}$-degree of $w$ increases by
\begin{align*}
\label{eq:complete-degree-increase}
|V|- \sum_{i\in [t]} \deg_{E_i}(w).
\end{align*}
Note that \eqref{eq:simple-complete-xi1-a} forces $|V|$ to be:
$$
|V| = \prdp{\norm{\offset(A_{*,j})}} = \norm{\offset(A_{*,j})} + (\text{some multiple of $p$}). 
$$
On the other hand, we also have
$$
\sum_{i\in [t]} \deg_{E_i}(w) = \sum_{i\in [t]} A_{i,j} = \norm{\offset(A_{*,j})} + (\text{some multiple of $p$}).
$$
Here it is useful to recall that row $i_0$ in $A$ contains periodic entries, hence the additional term ``some multiple of $p$''.
Thus, the quantity $|V|- \sum_{i\in [t]} \deg_{E_i}(w)$ is a multiple of $p$,
and therefore the $E_{i_0}$-degree of $w$ only increases by a multiple of $p$.
This does not violate the $A|B$-biregularity condition.

Part (b)  can be proven in a similar manner to  \eqref{eq:simple-complete-xi1-b}.
This completes our proof of Lemma~\ref{lem:completesimplegoodonefinite}.
\end{proof}

\begin{remark}
\label{rem:simple-complete}
We will again make some further observations that will be important only for the complexity
analysis, which will be detailed in Section \ref{sec:complexity}.
For each $j\in [m]$, let $a_j=\normt{\offset(*,j)}$.
We first rewrite \eqref{eq:simple-complete-xi1-a} as follows:
$$
\bigvee_{j_1\in [m]} \Bigg(\exists z \ \normt{\vy} = a_{j_1} + pz \ \wedge \ x_{j_1}\neq 0
\ \wedge\ \bigwedge_{j_2\in [m]\ \text{s.t.}\ a_{j_2} \not\equiv a_{j_1} \bmod p} x_{j_2}=0\Bigg),
$$
Indeed, if $x_{j_1},x_{j_2}\neq 0$,
then $\normt{\vy}=\prdp{a_{j_1}}$ and $\normt{\vy}=\prdp{a_{j_2}}$,
which implies $a_{j_1}\equiv a_{j_2} \bmod p$.
Therefore, if $x_{j_1}\neq 0$, then $x_{j_2}=0$ whenever $a_{j_2}\not\equiv a_{j_1}\bmod p$.
We also rewrite \eqref{eq:simple-complete-xi1-b} in a similar manner.

Note that \eqref{eq:simple-complete-xi1-a} yields $O(m)$ equalities,
while the rewriting above transforms it into a disjunction of $O(1)$ (in)equations.\footnote{Here we do not count equations of the form $x=0$
since such variable $x$ can be ignored during the computation,
thus, becomes negligible in the complexity analysis.}
By Remark~\ref{rem:simple-bireg}, the formula $\xi_{A|B}(\vx,\vy)$ is a disjunction of conjunctions of $O(t^4\delta(A,B)^4)$ (in)equations,
where $t$ is the number of rows in matrices $A$ and $B$.
\end{remark}

To wrap up Section~\ref{sec:simple-complete},
we define the formula $\biregc_{A|B}(\vx,\vy)$ for simple matrices $A$ and $B$
as follows:
\begin{align}
\label{eq:finite-simple-complete}
\biregc_{A|B}(\vx,\vy) & :=
\left\{
\begin{array}{ll}
\xi_{A|B}(\vx,\vy) \qquad\quad& \text{if}\ A|B\ \text{is a good pair}
\\
\bigvee_{i} \phi_i(\vx,\vy) & \text{if}\ A|B\ \text{is not a good pair},
\end{array}
\right.
& 
\end{align}
where $\xi_{A|B}(\vx,\vy)$ is defined in \eqref{eq:simple-complete-xi1}--\eqref{eq:simple-complete-xi1-b}
when $A|B$ is a good pair
and the disjunction $\bigvee_{i} \phi_i(\vx,\vy)$ enumerates all possible sizes $\vM|\vN$ when $A|B$ is not a good pair.
Recall that by Remark~\ref{rem:not-good-pair}, when $A|B$ is not a good pair,
complete $A|B$-biregular graphs can only have sizes $\vM|\vN$
where $\normt{\vM}+\normt{\vN}\leq 2\delta(A,B)$.
Since there are only finitely many such sizes, they can be enumerated. 
The correctness of the formula $\biregc_{A|B}(\vx,\vy)$ follows immediately from Lemma~\ref{lem:completesimplegoodonefinite}
and Remark~\ref{rem:not-good-pair}, as stated formally in Lemma~\ref{lem:finite-simple-complete}.

\begin{lemma}
\label{lem:finite-simple-complete}
For every pair of simple matrices $A$ and $B$ and
for every pair of size vectors $\vM$ and $\vN$,
$\biregc_{A|B}(\vM,\vN)$ holds in $\cN$
exactly when  there is a complete $A|B$-biregular graph of size $\vM|\vN$.
\end{lemma}


\section{Proof of Theorem~\ref{thm:main-lemma-bireg} and~\ref{thm:main-lemma-direg}}
\label{sec:proofnonsimple}

In this section we will present the proof of Theorems~\ref{thm:main-lemma-bireg} and~\ref{thm:main-lemma-direg}.
Recall that Theorem~\ref{thm:main-lemma-bireg} states that
for every arbitrary degree matrices $A$ and $B$,
we can effectively construct a Presburger formula $\biregc_{A|B}(\vx,\vy)$
that captures all possible sizes of complete $A|B$-biregular graphs.
Theorem~\ref{thm:main-lemma-direg} is the analog for the directed graphs.

In Section~\ref{sec:simple-complete} we showed how to construct
Presburger formulas that capture all possible sizes of complete simple $A|B$-biregular graphs,
i.e., where the degree matrices $A$ and $B$ are simple matrices.
In this section we will show how to reduce the non-simple matrices to simple matrices for biregular graphs.
We divide this section into three subsections.
We begin with an example that shows the main idea in Section \ref{subsec:example-non-simple-simple}.
In Section \ref{subsec:general-non-simple-simple} we present the general reduction
from non-simple biregular graphs to simple biregular graphs.
Finally, in Section \ref{subsec:proof-main-lemma-direg}
we deal with the regular digraphs.

\subsection{A special case illustrating the reduction}
\label{subsec:example-non-simple-simple}

Consider the degree matrices $A_0 =(a_1,\prdp{a_2})$ and $B_0=(b_1,\prdp{b_2})$,
where $a_1,a_2,b_1,b_2$ are all non zero integers.
Obviously, they are not simple matrices, since each row contains both fixed and periodic entries.
We will show that every $A_0|B_0$-biregular graph can be
viewed as a collection of four simple biregular graphs,
as stated formally in Theorem~\ref{theo:example-non-simple-simple}.

The main idea is as follows.
Suppose we have $A_0|B_0$-biregular graph $G=(U,V,E)$ with witness partition $U=U_1\uplus U_2$
and $V=V_1\uplus V_2$.
We will decompose the graph into  $4$ induced biparttte subgraphs, each representing
the restriction to one partition on the left and one on the right \footnote{As usual, for a graph $G=(V,E)$ and for a subset $S\subseteq V$,
the notation $G[S]$ denotes the subgraph induced in $G$ by the set $S$.}
We will show below that each such subgraph satisfies a biregularity condition:
\begin{itemize}
\item
The induced subgraph $G[U_1\cup V_1]$
is a $(0,1,\ldots,a_1)|(0,1,\ldots,b_1)$-biregular graph.
\item 
The induced subgraph $G[U_1\cup V_2]$
is a $(a_1,a_1-1,\ldots,0)|(\prdp{0},\prdp{1},\ldots,\prdp{b_2})$-biregular graph.
\item 
The induced subgraph $G[U_2\cup V_1]$
is a $(\prdp{0},\prdp{1},\ldots,\prdp{a_2})|(b_1,b_1-1,\ldots,0)$-biregular graph.
\item 
The induced subgraph $G[U_2\cup V_2]$
is a $(\prdp{a_2},\prdp{(a_2-1)},\ldots,\prdp{0})|(\prdp{b_2},\prdp{(b_2-1)},\ldots,\prdp{0})$-biregular graph.
\end{itemize}
Note that the degree matrices involved are all simple matrices.
For example, the degree matrix $(0,1,\ldots,a_1)$, which has only one row, is simple,
since every row contains only fixed entries.
As another example, the degree matrix $(\prdp{0},\prdp{1},\ldots,\prdp{a_2})$ is also simple,
since every row contains only periodic entries.

We call the decomposition of $G$ into the subgraphs $G[U_1\cup V_1]$, $G[U_1\cup V_2]$, $G[U_2\cup V_1]$ and $G[U_2\cup V_2]$
{\em the degree-based decomposition of $G$}.
We reduce a characterization of sizes of $A_0|B_0$-biregular graphs to characterization of the sizes
of the components of the decomposition.

\begin{theorem}
\label{theo:example-non-simple-simple}
For every pair  $M_1,M_2 \in \bbN^2$ and every pair $N_1,N_2 \in\bbN^2$,
the following are equivalent.
\begin{enumerate}[(a)]
\item
There is an $A_0|B_0$-biregular graph with size $(M_1,M_2)|(N_1,N_2)$.
\item 
There exist size vectors $\vK_1\in \bbN^{a_1+1}$, $\vK_2\in\bbN^{a_2+1}$, $\vL_1\in \bbN^{b_1+1}$, $\vL_2\in \bbN^{b_2+1}$
such that $\normt{\vK_1}=M_1$, $\normt{\vK_2}=M_2$,
$\normt{\vL_1}=N_1$ and $\normt{\vL_2}=N_2$ and 
\begin{itemize}
\item 
a $(0,1,\ldots,a_1)|(0,1,\ldots,b_1)$-biregular graph with size $\vK_1|\vL_1$;
\item 
a $(a_1,a_1-1,\ldots,0)|(\prdp{0},\prdp{1},\ldots,\prdp{b_2})$-biregular graph with size $\vK_1|\vL_2$;
\item 
a $(\prdp{0},\prdp{1},\ldots,\prdp{a_2})|(b_1,b_1-1,\ldots,0)$-biregular graph with size  $\vK_2|\vL_1$;
\item 
a $(\prdp{a_2},\prdp{(a_2-1)},\ldots,\prdp{0})|(\prdp{b_2},\prdp{(b_2-1)},\ldots,\prdp{0})$-biregular graph with size $\vK_2|\vL_2$,
\end{itemize}
\end{enumerate}
\end{theorem}

Note that there can be several vectors $\vK_1 \ldots$ satisfying the conditions on norms in the theorem. But the condition
on sizes can clearly be described in Presburger arithmetic.
So the theorem suffices to give a reduction to generating Presburger invariants for a vector of biregular graph problems
involving simple matrices.

The proof of Theorem~\ref{theo:example-non-simple-simple} is conceptually simple,
but rather technical.
We divide it into two lemmas:
Lemma~\ref{lem:example-non-simple-simple-only-if} which implies the ``only if'' direction
and Lemma~\ref{lem:example-non-simple-simple-if} 
which deals with the ``if'' direction.
Below we let $[0,k]$ denote the set $\{0,1,\ldots,k\}$
for an integer $k\geq 0$.

\begin{figure}
\begin{center}

\begin{tikzpicture}
\draw (-0,0) ellipse (1.4cm and 2cm);
\node  at (-1.3,1.8) {\large $U_1$}; 

\draw[red] (-1.1,1.2) -- (1.1,1.2);
\node at (0,1.5) {\scriptsize $U_{1,0}$};
\draw[red] (-1.35,0.4) -- (1.35,0.4);
\node[red] at (0,.85) {$\vdots$};
\node at (-.5,-.1) {\scriptsize $U_{1,j}$};
\draw[red] (-1.35,-0.4) -- (1.35,-0.4);
\node[red] at (0,-.7) {$\vdots$};
\draw[red] (-1.1,-1.2) -- (1.1,-1.2);
\node at (0,-1.6) {\scriptsize $U_{1,a_1}$};

\draw[gray!40] (.5,0) --  node[color=black,above,sloped] {\scriptsize neighbors of $u$ in $V_1$} (6.5,.7);
\draw[gray!40] (.5,0) -- (6.5,-.7);
\draw[gray!40,fill=gray!40] (6.5,0) ellipse (.3cm and .7cm);

\node at (7,0) {\scriptsize $j$ vertices};

\draw[gray!40] (.5,0) -- (6.5,-3.3);
\draw[gray!40] (.5,0) -- node[color=black,below,sloped] {\scriptsize neighbors of $u$ in $V_2$} (6.45,-4.7);
\draw[gray!40,fill=gray!40] (6.5,-4) ellipse (.3cm and .7cm);
\node at (7,-4) {\scriptsize $a_1-j$ vertices};

\node[circle,fill=blue,inner sep=0pt,minimum size=3pt,label=below:{\scriptsize $u$}] (a) at (0.5,0) {};

\draw (0,-4) ellipse (1.4cm and 1.8cm);
\node  at (-1.5,-2.8) {\large $U_2$}; 
\draw (7,0) ellipse (1.4cm and 2cm);
\node  at (8.2,1.8) {\large $V_1$};
\draw (7,-4) ellipse (1.4cm and 1.8cm);
\node  at (8.4,-2.8) {\large $V_2$};
\end{tikzpicture}

\end{center}
\label{fig:example-non-simple-simple}
\caption{An illustration for the proof of Lemma~\ref{lem:example-non-simple-simple-only-if}.
$G=(U,V,E)$ is an $A_0|B_0$-biregular graph with $U=U_1\uplus U_2$ and $V=V_1\uplus V_2$  the
witness partition.
We partition $U_1 = U_{1,0}\uplus \cdots \uplus U_{1,a_1}$
where for each $j\in [0,a_1]$,
each vertex $u \in U_{1,j}$ has $j$ neighbors in $V_1$ and $(a_1-j)$ neighbors in $V_2$.
Similarly we partition $U_2 = U_{2,0}\uplus \cdots \uplus U_{2,a_2}$,
$V_1 = V_{1,0}\uplus \cdots \uplus V_{1,b_1}$
and 
$V_2 = V_{2,0}\uplus \cdots \uplus V_{2,b_2}$.}
\end{figure}

\begin{lemma}
\label{lem:example-non-simple-simple-only-if}
For every $A_0|B_0$-biregular graph $G=(U,V,E)$ with witness partition
$U=U_1\uplus U_2$ and $V=V_1\uplus V_2$,
there exist size vectors $\vK_1\in \bbN^{a_1+1}$, $\vK_2\in \bbN^{a_2+1}$,
$\vL_1\in \bbN^{b_1+1}$ and $\vL_2\in \bbN^{b_2+1}$
such that:
\begin{enumerate}
\item
The induced subgraph $G[U_1\cup V_1]$
is a $(0,1,\ldots,a_1)|(0,1,\ldots,b_1)$-biregular graph with size $\vK_1|\vL_1$.
\item 
the induced subgraph $G[U_1\cup V_2]$
is a $(a_1,a_1-1,\ldots,0)|(\prdp{0},\prdp{1},\ldots,\prdp{b_2})$-biregular graph
with size $\vK_1|\vL_2$.
\item 
the induced subgraph $G[U_2\cup V_1]$
is a $(\prdp{0},\prdp{1},\ldots,\prdp{a_2})|(b_1,b_1-1,\ldots,0)$-biregular graph
with size $\vK_2|\vL_1$.
\item 
the induced subgraph $G[U_2\cup V_2]$
is a $(\prdp{a_2},\prdp{(a_2-1)},\ldots,\prdp{0})|(\prdp{b_2},\prdp{(b_2-1)},\ldots,\prdp{0})$-biregular graph
with size $\vK_2|\vL_2$.
\end{enumerate}
\end{lemma}
\begin{proof}
Let $G=(U,V,E)$ be $A_0|B_0$-biregular graph with size $(M_1,M_2)|(N_1,N_2)$.
Let $U=U_1\uplus U_2$ and $V=V_1\uplus V_2$ be the witness partition where
\begin{itemize}
\item
every vertex in $U_1$ has degree $a_1$ and 
every vertex in $U_2$ has degree $\prdp{a_2}$;
\item 
every vertex in $V_1$ has degree $b_1$ and 
every vertex in $U_2$ has degree $\prdp{b_2}$.
\end{itemize}
We partition the set $U_1$ as follows:
$$
U_1 \ =\  U_{1,0}\uplus U_{1,1}\uplus\cdots \uplus U_{1,a_1},
$$
where for each $j \in [0,a_1]$, 
the set $U_{1,j}$ is the set of vertices in $U_1$ with $j$ neighbors in $V_1$
and $(a_1-j)$ neighbors in $V_2$.
Formally, we set
$$
U_{1,j} := 
\left\{
\begin{array}{l|l}
u \in U_1 &
 u \ \text{has $j$ neighbors in $V_1$ and $(a_1-j)$ neighbors in $V_2$}
\end{array}
\right\};
$$
see Figure~\ref{fig:example-non-simple-simple} for an illustration.
We repartition the set $U_2$, $V_1$, $V_2$ in a similar manner.
\begin{itemize}
\item
Let $U_2 = U_{2,0}\uplus U_{2,0}\uplus\cdots \uplus U_{2,a_2}$,
where for each $j\in [0,a_2]$, 
the set $U_{2,j}$ is the set of vertices in $U_2$ that has $\prdp{j}$ neighbors in $V_1$
and $\prdp{(a_2-j)}$ neighbors in $V_2$.
Formally, we set
$$
U_{2,j} \ :=\ 
\left\{
\begin{array}{l|l}
u \in U_2 &
\begin{array}{l}
 u \ \text{has $\prdp{j}$ neighbors in $V_1$}
\\
\text{and $\prdp{(a_2-j)}$ neighbors in $V_2$}
\end{array}
\end{array}
\right\}.
$$

\item 
We let $V_1= V_{1,0}\uplus V_{1,1}\uplus\cdots \uplus V_{1,b_1}$, where
for each $j \in [0,b_1]$, 
the set $V_{1,j}$ is the set of vertices in $V_1$ that has $j$ neighbors in $U_1$
and $(b_1-j)$ neighbors in $U_2$.
Formally, we set
$$
V_{1,j} \ := \
\left\{
\begin{array}{l|l}
u \in V_1 &
\begin{array}{l}
 u \ \text{has $j$ neighbors in $U_1$}
\\
\text{and $(b_1-j)$ neighbors in $U_2$}
\end{array}
\end{array}
\right\}.
$$
\item 
We let $V_2= V_{2,0}\uplus V_{2,1}\uplus\cdots \uplus V_{2,a_2}$, where
for each $j \in [0,b_2]$, 
the set $V_{2,j}$ is the set of vertices in $V_2$ that has $\prdp{j}$ neighbors in $U_1$
and $\prdp{(b_2-j)}$ neighbors in $U_2$.
Formally, we set
$$
V_{2,j} \ := \
\left\{
\begin{array}{l|l}
u \in V_2 &
\begin{array}{l}
 u \ \text{has $\prdp{j}$ neighbors in $U_1$}
\\
\text{and $\prdp{(b_2-j)}$ neighbors in $U_2$}
\end{array}
\end{array}
\right\}.
$$
\end{itemize}
Now, we let $\vK_1,\vK_2,\vL_1,\vL_2$ as follows:
\begin{align*}
\vK_1   &:= \ (|U_{1,0}|,|U_{1,1}|,\ldots,|U_{1,a_1}|),
\quad
& \vK_2   & := \ (|U_{2,0}|,|U_{2,1}|,\ldots,|U_{2,a_2}|),
\\
\vL_1  & := \ (|V_{1,0}|,|V_{1,1}|,\ldots,|V_{1,b_1}|),
\quad
& \vL_2  & := \ (|V_{2,0}|,|V_{2,1}|,\ldots,|V_{2,b_2}|).
\end{align*}
To complete the proof of Lemma~\ref{lem:example-non-simple-simple-only-if},
we show:
\begin{enumerate}[(1)]
\item
$G[U_1\cup V_1]$ is $(0,1,\ldots,a_1)|(0,1,\ldots,b_1)$-biregular graph with size $\vK_1|\vL_1$.
\item 
$G[U_1\cup V_2]$
is $(a_1,a_1-1,\ldots,0)|(\prdp{0},\prdp{1},\ldots,\prdp{b_2})$-biregular graph
with size $\vK_1|\vL_2$.
\item 
$G[U_2\cup V_1]$
is $(\prdp{0},\prdp{1},\ldots,\prdp{a_2})|(b_1,b_1-1,\ldots,0)$-biregular graph
with size $\vK_2|\vL_1$.
\item 
$G[U_2\cup V_2]$
is $(\prdp{a_1},\prdp{(a_1-1)},\ldots,\prdp{0})|(\prdp{b_2},\prdp{(b_2-1)},\ldots,\prdp{0})$-biregular graph
with size $\vK_2|\vL_2$.
\end{enumerate}
To prove (1), note that
\begin{itemize}
\item
for each $j_1\in [0,a_1]$, each vertex in $U_{1,j_1}$ has degree $j_1$ in $G[U_1\cup V_1]$;
\item 
for each $j_2\in [0,b_1]$, each vertex in $V_{1,j_2}$ has degree $j_2$ in $G[U_1\cup V_1]$. 
\end{itemize}
Thus, $U_1=U_{1,0}\uplus U_{1,1}\uplus\cdots \uplus U_{1,a_1}$
and $V_1=V_{1,0}\uplus V_{1,1}\uplus\cdots \uplus V_{1,b_1}$
is the witness partition of $(0,1,\ldots,a_1)|(0,1,\ldots,b_1)$-biregularity of $G[U_1\cup V_1]$.
Since $\vK_1= (|U_{1,0}|,|U_{1,1}|,\ldots,|U_{1,a_1}|)$
$\vL_1 = \ (|V_{1,0}|,|V_{1,1}|,\ldots,|V_{1,b_1}|)$,
the subgraph $G[U_1\cup V_1]$ has size $\vK_1|\vL_1$.
The proof of (2)--(4) is similar. 
This completes the proof of Lemma~\ref{lem:example-non-simple-simple-only-if}.
\end{proof}

Next, we will show Lemma~\ref{lem:example-non-simple-simple-if} 
which deals with the ``if'' direction of Theorem~\ref{theo:example-non-simple-simple}.

\begin{figure}
\begin{center}

\begin{tikzpicture}
\draw (0,0) ellipse (1cm and 1.6cm);
\node  at (-1,1.6) {\large $U_1$}; 

\draw (0,-4) ellipse (1cm and 1.6cm);
\node  at (-1,-2.6) {\large $U_2$};

\draw[gray!90] (0,0.5) --  node[xshift=-1.5cm,color=black,align=center,fill=white] {$H_1$} (7,.5);
\draw[gray!90] (0,0) --  node[xshift=-1.5cm,color=black,align=center,fill=white,sloped] {$H_2$} (7,-3.75); 

\draw[gray!90] (0,-3.75) --  node[xshift=-1.5cm,color=black,align=center,fill=white,sloped] {$H_3$} (7,0);
\draw[gray!90] (0,-4.25) --  node[xshift=-1.5cm,color=black,align=center,fill=white,sloped] {$H_4$} (7,-4.25); 

\draw (7,0) ellipse (1cm and 1.6cm);
\node  at (8,1.6) {\large $V_1$};

\draw (7,-4) ellipse (1cm and 1.6cm);
\node  at (8,-2.6) {\large $V_2$};

\end{tikzpicture}

\end{center}
\label{fig:example-non-simple-simple-if}
\caption{An illustration for the proof of Lemma~\ref{lem:example-non-simple-simple-if}.
The graph $H_1$ contains only edges between the vertices in $U_1$ and $V_1$.
the graph $H_2$ contains only edges between the vertices in $U_1$ and $V_2$.
The graph $H_3$ contains only edges between the vertices in $U_2$ and $V_1$.
The graph $H_4$ contains only edges between the vertices in $U_2$ and $V_2$.
Thus, the sets of edges in $H_1,H_2,H_3,H_4$ are pairwise disjoint.
The graph $G$ obtained by combining all four of them 
is $A_0|B_0$-biregular graph.}
\end{figure}

\begin{lemma}
\label{lem:example-non-simple-simple-if}
For every size vectors $\vK_1\in\bbN^{a_1+1}$,
$\vK_2\in \bbN^{a_2+1}$,
$\vL_1\in \bbN^{b_1+1}$ and $\vL_2\in \bbN^{b_2+1}$,
if there are
\begin{enumerate}[(1)]
\item 
a $(0,1,\ldots,a_1)|(0,1,\ldots,b_1)$-biregular graph with size $\vK_1|\vL_1$;
\item 
a $(a_1,a_1-1,\ldots,0)|(\prdp{0},\prdp{1},\ldots,\prdp{b_2})$-biregular graph with size 
$\vK_1|\vL_2$;
\item 
a $(\prdp{0},\prdp{1},\ldots,\prdp{a_2})|(b_1,b_1-1,\ldots,0)$-biregular graph with size 
$\vK_2|\vL_1$;
\item 
a $(\prdp{a_2},\prdp{(a_2-1)},\ldots,\prdp{0})|(\prdp{b_2},\prdp{(b_2-1)},\ldots,\prdp{0})$-biregular graph with size 
$\vK_2|\vL_2$,
\end{enumerate}
then there is an $A_0|B_0$-biregular graph with size $(M_1,M_2)|(N_1,N_2)$,
where $M_1=\normt{\vK_1}$, $M_2=\normt{\vK_2}$,
$N_1=\normt{\vL_1}$ and $N_2= \normt{\vL_2}$.
\end{lemma}
\begin{proof}
Let $\vK_1=(K_{1,0},\ldots,K_{1,a_1})\in \bbN^{a_1+1}$,
$\vK_2=(K_{2,0},\ldots,K_{2,a_2})\in \bbN^{a_2+1}$,
$\vL_1=(L_{1,0},\ldots,L_{1,b_1})\in \bbN^{b_1+1}$,
$\vL_2=(L_{2,0},\ldots,L_{2,b_2})\in \bbN^{b_2+1}$.
Let $U_1,U_2,V_1,V_2$ be pairwise disjoint sets of elements such that
$$
|U_1| = \normt{\vK_1},\qquad
|U_2| = \normt{\vK_2},\qquad
|V_1| = \normt{\vL_1},\qquad
|V_2| = \normt{\vL_2}.
$$
We partition $U_1,U_2,V_1,V_2$ as follows:
\begin{align*}
U_1 & := \ U_{1,0}\uplus U_{1,1}\uplus\cdots \uplus U_{1,a_1},\quad& &\text{where}\ (|U_{1,0}|,|U_{1,1}|,\ldots,|U_{1,a_1}|) = \vK_1,
\\
U_2 & := \ U_{2,0}\uplus U_{2,1}\uplus \cdots \uplus U_{2,a_2},\quad& &\text{where}\ (|U_{2,0}|,|U_{2,1}|,\ldots,|U_{2,a_2}|) = \vK_2,
\\
V_1 & := \ V_{1,0}\uplus V_{1,1}\uplus\cdots \uplus V_{1,b_1},\quad& &\text{where}\ (|V_{1,0}|,|V_{1,1}|,\ldots,|V_{1,b_1}|) = \vL_1,
\\
V_2 & := \ V_{2,0}\uplus U_{2,1}\uplus\cdots \uplus V_{2,b_2},\quad& &\text{where}\ (|V_{2,0}|,|U_{2,1}|,\ldots,|V_{2,b_2}|) = \vL_2.
\end{align*}

Suppose we have biregular graphs 
$H_{1}, H_{2}, H_{3}, H_{4}$, as stated in the hypotheses (1)--(4):
\begin{itemize}
\item 
$H_1$ is a $(0,1,\ldots,a_1)|(0,1,\ldots,b_1)$-biregular graph  with size $\vK_1|\vL_1$;
\item 
$H_{2}$ is a $(a_1,a_1-1,\ldots,0)|(\prdp{0},\prdp{1},\ldots,\prdp{b_2})$-biregular graph  with size $\vK_1|\vL_2$;
\item 
$H_{3}$ is a $(\prdp{0},\prdp{1},\ldots,\prdp{a_2})|(b_1,b_1-1\ldots,0)$-biregular graph  with size 
$\vK_2|\vL_1$;
\item 
$H_4$ is a $(\prdp{a_2},\prdp{(a_2-1)}\ldots,\prdp{0})|(\prdp{b_2},\prdp{(b_2-1)},\ldots,\prdp{0})$-biregular graph  with size 
$\vK_2|\vL_2$.
\end{itemize}
We will combine all these graphs $H_1,H_2,H_3,H_4$ into one $A_0|B_0$-biregular graph $G$ with size $(M_1,M_2)|(N_1,N_2)$.
See Figure~\ref{fig:example-non-simple-simple-if} for an illustration.
First, we make some observations.

\begin{itemize}
\item
Note that $H_1$ is a $(0,1,\ldots,a_1)|(0,1,\ldots,b_1)$-biregular graph with size $\vK_1|\vL_1$, 
matching the sizes of $U_1$ and $V_1$. \
So we may assume that $U_1$ is the set of vertices on the left hand side,
$V_1$ is the set of vertices on the right hand side. We can also assume
 that $U_1=U_{1,0}\uplus U_{1,1}\uplus\cdots \uplus U_{1,a_1}$ and 
$V_1  =  V_{1,0}\uplus V_{1,1}\uplus\cdots \uplus V_{1,b_1}$ is the witness partition
for $(0,1,\ldots,a_1)|(0,1,\ldots,b_1)$-biregularity of $H_{1}$.

Thus $H_1=(U_1,V_1,R_1)$ where $R_1$ is the set of edges.
\item
In a similar manner, since $H_2$ is $(a_1,a_1-1,\ldots,0)|(\prdp{0},\prdp{1},\ldots,\prdp{b_2})$-biregular graph with size $\vK_1|\vL_2$,
we may assume that $U_1$ is the set of vertices on the left hand side,
$V_2$ is the set of vertices on the right hand side,
and that $U_1=U_{1,0}\uplus U_{1,1}\uplus\cdots \uplus U_{1,a_1}$ and 
$V_2  =  V_{2,0}\uplus V_{2,1}\uplus\cdots \uplus V_{2,b_2}$ is the witness partition
of $(a_1,a_1-1,\ldots,0)|(\prdp{0},\prdp{1},\ldots,\prdp{b_2})$-biregularity of $H_{2}$.

We can thus write  $H_2=(U_1,V_2,R_2)$ where $R_2$ is the set of edges.
Note that $R_1$ and $R_2$ are disjoint since $R_1$ contains only edges between vertices in $U_1$ and vertices in $V_1$,
whereas $R_2$ contains only edges between vertices in $U_1$ and vertices in $V_2$.

\item
Analogously to what we observed about $H_2$,
since $H_3$ is a\\
 $(\prdp{0},\prdp{1},\ldots,\prdp{a_2})|(b_1,b_1-1,\ldots,0)$-biregular graph with size $\vK_2|\vL_1$,
we may assume that $U_2$ is the set of vertices on the left side,
$V_1$ is the set of vertices on the right side,
and that $U_2=U_{2,0}\uplus U_{2,1}\uplus\cdots \uplus U_{2,a_2}$ and 
$V_1  =  V_{1,0}\uplus V_{1,1}\uplus \cdots \uplus V_{1,b_1}$ is the witness partition
of $(\prdp{0},\prdp{1},\ldots,\prdp{a_2})|(b_1,b_1-1,\ldots,0)$-biregularity of $H_{3}$.
We write  $H_3=(U_2,V_1,R_3)$ where $R_3$ is the set of edges
and again note that $R_1,R_2,R_3$ are pairwise disjoint.

\item
Finally, since $H_4$ is a $(\prdp{a_2},\prdp{(a_2-1)}\ldots,\prdp{0})|(\prdp{b_2},\prdp{(b_2-1)},\ldots,\prdp{0})$-biregular graph with size $\vK_2|\vL_2$,
we may assume $U_2$ is the set of vertices on the left  side,
$V_2$ is the set of vertices on the right,
and that $U_2= U_{2,0}\uplus U_{2,1}\uplus\cdots \uplus U_{2,a_2}$ and 
$V_2  =  V_{2,0}\uplus V_{2,1}\uplus\cdots \uplus V_{2,b_2}$ is the witness partition
of $(\prdp{a_2},\prdp{(a_2-1)}\ldots,\prdp{0})|(\prdp{b_2},\prdp{(b_2-1)},\ldots,\prdp{0})$-biregularity of $H_{4}$.

We can thus write $H_4=(U_2,V_2,R_4)$ where $R_4$ is the set of edges
and again note that $R_1,R_2,R_3,R_4$ are pairwise disjoint.
\end{itemize}
Let $G=(U_1\cup U_2,V_1\cup V_2,E)$, where $E=R_1\cup R_2\cup R_3\cup R_4$.
That is, $G$ is the graph union of all $H_1,\ldots,H_4$.
In fact, $G[U_1\cup V_1]$ is $H_1$,
$G[U_1\cup V_2]$ is $H_2$,
$G[U_2\cup V_1]$ is $H_3$
and $G[U_2\cup V_2]$ is $H_4$.

We will prove that $G$ is $A_0|B_0$-biregular graph with size $(M_1,M_2)|(N_1,N_2)$,
where $M_1=|U_1|$, $M_2=|U_2|$, $N_1=|V_1|$ and $N_2=|V_2|$
by showing that
\begin{enumerate}[(1)]
\item
every vertex in $U_1$ has degree $a_1$
and every vertex in $U_2$ has degree $\prdp{a_2}$; and
\item 
every vertex in $V_1$ has degree $b_1$
and every vertex in $V_2$ has degree $\prdp{b_2}$.
\end{enumerate}
To prove (1), note that:
\begin{itemize}
\item
Since $H_1$ is a $(0,1,\ldots,a_1)|(0,1,\ldots,b_1)$-biregular graph,
for every $j \in [0,a_1]$,
every vertex $u\in U_{1,j}$ has degree $j$ in $H_1$.

Since $H_2$ is a $(a_1,a_1-1,\ldots,0)|(\prdp{0},\prdp{1},\ldots,\prdp{b_2})$-biregular graph,
for every $j \in [0,a_1]$,
every vertex $u\in U_{1,j}$ has degree $(a_1-j)$ in $H_2$.

Therefore, for each $j \in [0,a_1]$,
every vertex $u\in U_{1,j}$ has degree $j+ (a_1-j)=a_1$ in the graph $G$.

\item
Similarly, since $H_3$ is a $(\prdp{0},\prdp{1},\ldots,\prdp{a_2})|(b_1,b_1-1,\ldots,0)$-biregular graph,
for every $j \in [0,a_2]$,
every vertex $u\in U_{2,j}$ has degree $\prdp{j}$ in $H_3$.

Since $H_4$ is a $(\prdp{a_2},\prdp{(a_2-1)}\ldots,\prdp{0})|(\prdp{b_2},\prdp{(b_2-1)},\ldots,\prdp{0})$-biregular graph,
for every $j \in [0,a_2]$,
every vertex $u\in U_{2,j}$ has degree $\prdp{(a_2-j)}$ in $H_4$.

Therefore, for every $j \in [0,a_2]$,
each vertex $u\in U_{2,j}$ has degree $\prdp{j}+ \prdp{(a_2-j)}=\prdp{a_2}$ in the graph $G$.
\end{itemize}
The proof of (2) is similar.
\end{proof}

\subsection{The general reduction from non-simple to simple}
\label{subsec:general-non-simple-simple}

We now give the general process which makes use of the idea above.
In this section we will deal directly with complete biregular graphs.
Let $A\in \bbNp^{t\times m}$ and $B\in \bbNp^{t\times n}$ be arbitrary degree matrices.
We will show that every complete $A|B$-biregular graph 
can be decomposed into a collection of complete simple biregular graphs.

\begin{figure}
\begin{center}

\begin{tikzpicture}


\draw (0,0) ellipse (1cm and 1cm);
\node  at (-1.1,0.8) {\large $U_1$}; 

\node at (0,-1.25) {$\vdots$};

\draw (0,-3.5) ellipse (1cm and 1.8cm);
\node  at (-1.1,-2.3) {\large $U_{j}$}; 

\node at (0,-5.6) {$\vdots$};

\draw (0,-7) ellipse (1cm and 1cm);
\node  at (-1.1,-6.1) {\large $U_n$}; 


\draw[red] (-0.75,-2.3) -- (0.75,-2.3);
\node at (0,-2.1) {\scriptsize $U_{j,g_1}$};

\node[red] at (0,-2.6) {$\vdots$};

\draw[red] (-.95,-3.1) -- (.95,-3.1);
\node at (-.5,-3.6) {\scriptsize $U_{j,g_k}$};
\node[circle,fill=blue,inner sep=0pt,minimum size=3pt,label=below:{\scriptsize $u$}] (a) at (0.5,-3.5) {};
\draw[red] (-.95,-3.9) -- (.95,-3.9);

\node[red] at (0,-4.2) {$\vdots$};

\draw[red] (-0.75,-4.7) -- (0.75,-4.7);
\node at (0,-5) {\scriptsize $U_{j,g_{\ell}}$};


\draw[gray!40] (.5,-3.5) --  node[color=black,above,sloped] {\scriptsize $E_i$-neighbors of $u$ in $V_1$} (6.5,-0.3);
\draw[gray!40] (.5,-3.5) -- (6.5,-1.7);
\draw[gray!40,fill=gray!40] (6.5,-1) ellipse (.5cm and .7cm);
\node at (7,-1) {\scriptsize $g_k(i,1)$ vertices};

\draw[gray!40] (.5,-3.5) -- (6.5,-5.3);
\draw[gray!40] (.5,-3.5) -- node[color=black,below,sloped] {\scriptsize $E_i$-neighbors of $u$ in $V_n$} (6.45,-6.7);
\draw[gray!40,fill=gray!40] (6.5,-6) ellipse (.5cm and .7cm);
\node at (7,-6) {\scriptsize $g_k(i,n)$ vertices};


\draw (7,-1) ellipse (1.2cm and 1.6cm);
\node  at (8.1,0.3) {\large $V_1$};


\node at (7,-3.25) {$\vdots$};
\node at (7,-3.7) {$\vdots$};

\draw (7,-6) ellipse (1.2cm and 1.6cm);
\node  at (8.2,-5) {\large $V_n$};
\end{tikzpicture}

\end{center}
\label{fig:general-non-simple-simple}
\caption{Suppose $G$ is $A|B$-biregular graph with $U=U_1\uplus \cdots \uplus U_m$
and $V=V_1\uplus \cdots\uplus V_m$ being the witness partition.
We partition $U_{j}$ according to the functions $g_1,\ldots,g_{k}:[t]\times[n]\to \{0,1\ldots,q,\prdp{0},\prdp{1},\ldots,\prdp{q}\}$
where for each $\ell \in [k]$,
each vertex in $U_{j,g_\ell}$ has $g_\ell(i,1)$ $E_i$-neighbors in $V_1$,
$g_\ell(i,2)$ $E_i$-neighbors in $V_2$ and so on to
$g_\ell(i,n)$ $E_i$-neighbors in $V_n$.}
\end{figure}

The idea is similar to the one in Subsection \ref{subsec:example-non-simple-simple}.
Let $G=(U,V,E_1,\ldots,E_t)$ be a complete $A|B$-biregular graph.
We let $q$ to be the maximal (finite) offset found in $A$ and $B$.
For each color $i\in [t]$, we call a vertex $v$ an {\em $E_i$-neighbor} of a vertex $u$,
if $v$ is adjacent to $u$ via $E_i$-edges.
 
Suppose $U=U_1\uplus \cdots \uplus U_n$ and $V=V_1\uplus \cdots \uplus V_n$
is the witness partition of $A|B$-biregularity of $G$.
For each $j\in [m]$,
we further partition each $U_{j}$:
$$U_j \ = \
 U_{j,g_1}\uplus \cdots \uplus U_{j,g_k},
$$
where $g_1,\ldots,g_{k}:[t]\times[n]\to \{0,1,\ldots,q,\prdp{0},\prdp{1},\ldots,\prdp{q}\}$
are functions and 
for each color $i\in [t]$,
for each $\ell \in [k]$, each vertex $u\in U_{j,g_\ell}$ has $g_\ell(i,1)$ $E_i$-neighbors in the set $V_1$,
$g_\ell(i,2)$ $E_i$-neighbors in the set $V_2$ and so on to $g_\ell(i,n)$ $E_i$-neighbors in the set $V_n$.
See Figure~\ref{fig:general-non-simple-simple}.\footnote{The partitioning of $U_j$ into $U_{j,g_1}\uplus \cdots \uplus U_{j,g_k}$
is similar to how we partition the set $U_1=U_{1,0}\uplus\cdots\uplus U_{1,a_1}$ 
in Lemma~\ref{lem:example-non-simple-simple-only-if}
where for each $j\in [0,a_1]$, each vertex in $U_{1,j}$ has $j$ neighbors in the set $V_1$
and $(a_1-j)$ neighbors in the set $V_2$.}
To ensure that each vertex in $U_{j}$ has $E_i$-degree $A_{i,j}$ for every color $i\in [t]$,
we require that $g_\ell(i,1)+\cdots g_\ell(i,n)=A_{i,j}$.
Note that if $A_{i,j}$ is a fixed entry, then all $g_\ell(i,1),\ldots,g_\ell(i,n)$ are fixed entries.
If $A_{i,j}$ is a periodic entry, then all $g_\ell(i,1),\ldots,g_\ell(i,n)$ are periodic entries.

In the same way, for each $j'\in [n]$,
we further partition each set $V_{j'}$:
$$
V_{j'}\ =\  V_{j',h_1}\uplus \cdots \uplus V_{j',h_k},
$$
where $h_1,\ldots,h_{k}:[t]\times[m]\to \{0,1,\ldots,q,\prdp{0},\prdp{1},\ldots,\prdp{q}\}$
are functions
and for each color $i\in [t]$, for each $\ell\in [k]$,
every vertex $u\in V_{j',h_\ell}$ has $h_\ell(i,1)$ $E_i$-neighbors in the set $U_1$,
$h_\ell(i,2)$ $E_i$-neighbors in the set $U_2$ and so on to
$h_\ell(i,m)$ $E_i$-neighbors in the set $U_m$.

We will show that every complete $A|B$-biregular graph $G$ with witness partition $U=U_1\uplus \cdots\uplus U_m$
and $V=V_1\uplus \cdots \uplus V_n$ can be decomposed into
complete simple biregular graphs in the sense that
for each $j\in [m]$ and each $j'\in [n]$,
the induced subgraph $G[U_j\cup V_{j'}]$
is a complete simple biregular graph with witness partition
$U_{j}= U_{j,g_1}\uplus \cdots \uplus U_{j,g_k}$
and $V_{j'}= V_{j',h_1}\uplus \cdots \uplus V_{j',h_k}$.
Such decomposition is also sufficient to capture all possible complete $A|B$-biregular graphs.
We will formalize this idea in the next paragraphs.

We first need some terminology.

\begin{definition}
\label{def:behavior-function-non-simple}
For each $j \in [m]$, we define {\em a behavior function of column $j$ in $A$}
to be a function $g:[t]\times [n]\to \{0,1,\ldots,q,\prdp{0},\prdp{1},\ldots,\prdp{q}\}$ such that:
\begin{itemize}
\item
$
A_{*,j} \ = 
\begin{pmatrix}
g(1,1)+\cdots + g(1,n)
\\
g(2,1)+\cdots + g(2,n)
\\
\vdots
\\
g(t,1)+\cdots + g(t,n)
\end{pmatrix};
$
\item
for each color $i\in [t]$,
if $A_{i,j}$ is a fixed entry, then $g(i,1),\ldots,g(i,n)$ are all fixed entries;
\item
for each color $i\in [t]$,
if $A_{i,j}$ is a periodic entry, then $g(i,1),\ldots,g(i,n)$ are all periodic entries.
\end{itemize}
In a similar manner 
for each $j'\in [n]$, we define {\em a behavior function of column $j'$ in $B$}
to be a function $h:[t]\times [m]\to \{0,1,\ldots,q,\prdp{0},\prdp{1},\ldots,\prdp{q}\}$ such that:
\begin{itemize}
\item
$
B_{*,j'} \ = 
\begin{pmatrix}
h(1,1)+\cdots + h(1,m)
\\
h(2,1)+\cdots + h(2,m)
\\
\vdots
\\
h(t,1)+\cdots + h(t,m)
\end{pmatrix};
$
\item
for each color $i\in [t]$,
if $B_{i,j'}$ is a fixed entry, then $h(i,1),\ldots,h(i,n)$ are all fixed entries;
\item
for each color $i\in [t]$,
if $B_{i,j'}$ is a periodic entry, then $h(i,1),\ldots,h(i,n)$ are all periodic entries.
\end{itemize}
\end{definition}

For each $j\in [m]$, let $g_{j,1},\ldots,g_{j,k}$ enumerate all behavior functions of column $j$ in $A$.
Similarly,
for each $j'\in [n]$, let $h_{j',1},\ldots,h_{j',k}$ enumerate all behavior functions of column $j'$ in $B$.
Note that we assume that the number of behavior functions of column $j$ in $A$ 
is the same as the number of behavior functions of column $j'$ in $B$ for every $j\in [m]$ and every $j'\in [n]$.
This is because we may ``repeat'' the same behavior function a few times in the enumeration
$g_{j,1},\ldots,g_{j,k}$ and $h_{j',1},\ldots,h_{j',k}$.

For each $j\in [m]$, for each $j'\in [n]$, define the matrix $C_{j,j'}$ and $D_{j,j'}$:
\begin{align*}
C_{j,j'} & := 
\begin{pmatrix}
g_{j,1}(1,j') & g_{j,2}(1,j') & \cdots & g_{j,k}(1,j')
\\
g_{j,1}(2,j') & g_{j,2}(2,j') & \cdots & g_{j,k}(2,j')
\\
\vdots & \vdots  & \ddots & \vdots
\\
g_{j,1}(t,j') & g_{j,2}(t,j') & \cdots & g_{j,k}(t,j')
\end{pmatrix}
\end{align*}
and
\begin{align*}
D_{j,j'} & := 
\begin{pmatrix}
h_{j',1}(1,j) & h_{j',2}(1,j) & \cdots & h_{j',k}(1,j)
\\
h_{j',1}(2,j) & h_{j',2}(2,j) & \cdots & h_{j',k}(2,j)
\\
\vdots & \vdots  & \ddots & \vdots
\\
h_{j',1}(t,j) & h_{j',2}(t,j) & \cdots & h_{j',k}(t,j)
\end{pmatrix}.
\end{align*}
Note that for each color $i\in [t]$, if $A_{i,j}$ is a fixed entry,
the values $g_{j,\ell}(i,1),\ldots,g_{j,\ell}(i,n)$ are all fixed for each $\ell \in [k]$.
Hence all the values $g_{j,1}(i,j'),\ldots,g_{j,k}(i,j')$ are fixed,
i.e., row $i$ in $C_{j,j'}$ contains only fixed entries.
Similarly, if $A_{i,j}$ is a periodic entry,
the values $g_{j,\ell}(i,1),\ldots,g_{j,\ell}(i,n)$ are all periodic for every $\ell \in [k]$.
Hence all the values $g_{j,1}(i,j'),\ldots,g_{j,k}(i,j')$ are periodic,
i.e., row $i$ in $C_{j,j'}$ contains only periodic entries.
Therefore for each $j\in [m]$ and every $j'\in [n]$ $C_{j,j'}$ is a simple matrix.
In a similar manner, we can argue that each $D_{j,j'}$ is a simple matrix.

We will show that every complete $A|B$-biregular graph 
can be decomposed into complete $C_{j,j'}|D_{j,j'}$-biregular graphs
for every $j\in [m]$ and every $j'\in [n]$,
as stated formally in Lemma~\ref{lem:general-non-simple-simple}.

\begin{lemma}
\label{lem:general-non-simple-simple}
For every pair of size vectors $\vM\in \bbN^m$ and $\vN\in\bbN^n$,
the statements (a) and (b) are equivalent.
\begin{enumerate}[(a)]
\item
There is a complete $A|B$-biregular graph with size $\vM|\vN$.
\item 
There are size vectors $\vK_1,\ldots,\vK_m,\vL_1,\ldots,\vL_n \in \bbN^{k}$
such that:
$$
\vM = (\normt{\vK_1},\ldots,\normt{\vK_m})\qquad\text{and}\qquad
\vN = (\normt{\vL_1},\ldots,\normt{\vL_n})
$$
and for every $j\in [m]$ and for every $j'\in [n]$,
there is a complete $C_{j,j'}|D_{j,j'}$-biregular graph with size $\vK_{j}|\vL_{j'}$.
\end{enumerate}
\end{lemma}

The proof is a routine adaptation of Lemma~\ref{theo:example-non-simple-simple},
hence we omit the details.
We describe here the main intuition.
For (a) implies (b), suppose $G=(U,V,E_1,\ldots,E_t)$ is a complete $A|B$-biregular graph with size $\vM|\vN$.
Let $U=U_1\uplus \cdots \uplus U_m$ and $V=V_1\uplus\cdots\uplus V_n$ be the witness partition.
For every $j\in [m]$, for every $j'\in[n]$,
we can  show that each induced subgraph $G[U_j\cup V_{j'}]$ is a complete $C_{j,j'}|D_{j,j'}$-biregular graph
with witness partition $U_{j}=U_{j,g_1}\uplus \cdots \uplus U_{j,g_{k}}$
and $V_{j'}=V_{j',h_1}\uplus \cdots \uplus V_{j',h_{k}}$,
where $\vK_{j}= (|U_{j,g_1}|,\ldots,|U_{j,g_{k}}|)$
and  $\vL_{j'}= (|V_{j',h_1}|,\ldots,|V_{j',h_{k}}|)$.

Conversely, for (b) implies (a), let $\vK_1,\ldots,\vK_m,\vL_1,\ldots,\vL_n \in \bbN^{k}$
be such that
$$
\vM = (\normt{\vK_1},\ldots,\normt{\vK_m})\qquad\text{and}\qquad
\vN = (\normt{\vL_1},\ldots,\normt{\vL_n}).
$$
Suppose for every $j\in [m]$ and for every $j'\in [n]$,
there is a complete $C_{j,j'}|D_{j,j'}$-biregular graph $G_{j,j'}$ with size $\vK_{j}|\vL_{j'}$.
Due to the matching sizes,
we can assume that the set of vertices on the left hand side of $G_{j,j'}$ is $U_{j}$
and the set of vertices on the right hand side of $G_{j,j'}$ is $V_{j'}$,
where $|U_{j}|=\normt{\vK_{j}}$ and $|V_{j'}|=\normt{\vL_{j'}}$.
Taking the disjoint union of all the graphs $G_{1,1}\cup \cdots \cup G_{m,n}$,
we obtain a complete $A|B$-biregular graph $G$ with size $\vM|\vN$.

Using Lemma~\ref{lem:general-non-simple-simple},
we can now define the formula $\biregc_{A|B}(\vx,\vy)$
as required in Theorem~\ref{thm:main-lemma-bireg}.
We first explain the variables of the formula.
\begin{itemize}
\item 
For every $j\in [m]$,
for every behavior function $g$ of column $j$ in $A$, we have a variable $X_{j,g}$.
Let $\vX_{j}=(X_{j,g_1},\ldots,X_{j,g_k})$, where $g_1,\ldots,g_k$ are all the behavior functions of column $j$ in $A$.
\item 
Similarly, 
for every $j'\in [n]$,
for every behavior function $h$ of column $j'$ in $B$, we have a variable $Y_{j',h}$.
Let $\vY_{j'}=(Y_{j',h_1},\ldots,Y_{j',h_{k}})$, where $h_1,\ldots,h_{k}$ are all the behavior functions of column $j'$ in $B$.
\end{itemize}

Consider the formula $\biregc_{A|B}(\vx,\vy)$:
\begin{align}
\label{eq:non-simple-main}
\exists \vX_1\cdots \exists \vX_m\ \exists \vY_1\cdots\exists\vY_n
\qquad &
\vx=(\normt{\vX_1},\ldots,\normt{\vX_m}) \ \wedge\
\vy=(\normt{\vY_1},\ldots,\normt{\vY_n})
\\
\wedge \ &
\bigwedge_{j\in [m]}\; \bigwedge_{j'\in [n]}
\biregc_{C_{j,j'}|D_{j,j'}}(\vX_{j},\vY_{j'}).
\end{align}
Note that $C_{j,j'}$ and $D_{j,j'}$ are simple matrices
and the formula $\biregc_{C_{j,j'}|D_{j,j'}}(\vX_{j},\vY_{j'})$
is as defined in \eqref{eq:finite-simple-complete}.

We show that the formula $\biregc_{A|B}(\vx,\vy)$ is correct,
i.e., it captures all possible sizes of complete $A|B$-biregular graphs,
as stated formally in Theorem~\ref{theo:main-lemma-bireg}.

\begin{theorem}
\label{theo:main-lemma-bireg}
For all degree matrices $A$ and $B$,
for all size vectors $\vM$ and $\vN$,
there is a complete $A|B$-biregular graph with size $\vM|\vN$
if and only if $\biregc_{A|B}(\vM,\vN)$ holds in $\cN$.
\end{theorem}
The proof follows directly from Lemma~\ref{lem:general-non-simple-simple}
and Lemma~\ref{lem:completesimplegoodonefinite}.

\subsection{Proof of Theorem~\ref{thm:main-lemma-direg}: Construction of the Presburger formula for complete regular digraphs}
\label{subsec:proof-main-lemma-direg}

In Section~\ref{subsec:general-non-simple-simple} we have shown that given arbitrary degree matrices $A$ and $B$,
we can construct a Presburger formula that captures precisely the sizes of complete $A|B$-biregular graphs.
The construction is by reducing $A$ and $B$ into a collection of simple matrices.
The proof for the digraph case is very similar to the biregular case.
As in the $1$-color case from Subsection~\ref{subsec:1type-digraphs},
the existence of $A|B$-regular digraphs with size $\vM$
can be reduced to the existence of $A|B$-biregular graphs with size $\vM|\vM$.
Indeed, an $A|B$-regular digraph $G$ with size $\vM$ can be encoded as 
an $A|B$-biregular graph $G'$ with size $\vM|\vM$
by splitting each vertex $w$ in $G$ into two vertices $u$ and $v$ in $G'$
where $u$ is adjacent to all the outgoing edges and $v$ to all the incoming edges.
Thus, $G'$ is a bipartite graph where 
the vertices on the left hand side in $G'$ are all the vertices with the outgoing edges
and the vertices on the right hand side are all the vertices with the incoming edges;
see Figure~\ref{fig:splitting} for an illustration.

The construction of the desired formula $\diregc_{A|B}(\vx)$ that captures all possible sizes of complete $A|B$-regular digraph
can be done similarly to the one for complete biregular graphs.
First, we construct a formula $\diregc_{A|B}(\vx)$ when $A$ and $B$ are simple matrices,
which is similar to Section \ref{sec:simple-complete}.
The reduction from non-simple matrices to simple matrices is
similar to the one in Section \ref{subsec:general-non-simple-simple}.
We omit the details, since they are just a routine adaptation of the ones in 
Section \ref{sec:simple-complete} and Section \ref{subsec:general-non-simple-simple}.

%


\section{Complexity of the decision procedures}
\label{sec:complexity}

\newcommand{\cM}{\mathcal{M}}
\newcommand{\poly}{\textsf{poly}}
\newcommand{\len}{\textrm{len}}

We now analyze the complexity for each of the problems studied earlier.
We begin with the biregular graph problems.
 We will then turn to the combined complexity
of the decision procedure for the logic. Finally, we consider the complexity
of the decision procedure for the logic when we fix a formula and vary its conjunction
with a collection of ground facts --  data complexity.

\subsection{Complexity of the graph analysis}
\label{subsec:complexity-bireg}

In this section we state the refined versions of the main results concerning biregular
and biregular graph problems, now with complexity upper bounds.  We do not 
have non-trivial lower bounds for these problems.
As before, we only deal with the finite satisfiability.
The analysis of  general satisfiability can be found in the appendix.

\begin{lemma}
\label{lem:algo-for-bireg}
There is a non-deterministic Turing machine $\cM$ that does the following:
on input degree matrices $A\in \bbN_{+p}^{t\times m}$ and $B\in \bbN_{+p}^{t\times n}$,
on every run $r$ of $\cM$, it outputs an existential Presburger formula $\varphi_r(\vx,\vy)$ 
such that
\begin{itemize}
\item 
each $\varphi_r(\vx,\vy)$ is of the form
$\exists\vz\ \widetilde{\varphi}_r(\vx,\vy,\vz)$,
where each $\widetilde{\varphi}_r(\vx,\vy,\vz)$ is a conjunction of $O(mnt^4\delta(A,B)^4)$ linear (in)equations; and
\item
for every $(\vM,\vN)\in \bbN^m\times \bbN^n$, 
there is complete $A|B$-biregular graph with size $\vM|\vN$
if and only if 
there is a run $r$ of $\cM$ such that $\varphi_r(\vM,\vN)$ holds in $\cN$.
\end{itemize}
Moreover, $\cM$ runs in time exponential in the size of $A$ and $B$,
where the coefficients of the input degree matrices and the output formula $\varphi_r$ are in binary.
\end{lemma}
\begin{proof}
For arbitrary degree matrices $A\in \bbN_{+p}^{t\times m}$ and $B\in \bbN_{+p}^{t\times n}$,
recall the formula $\biregc_{A|B}(\vx,\vy)$ defined in \eqref{eq:non-simple-main}:
\begin{align*}
\exists \vX_1\cdots \exists \vX_m\ \exists \vY_1\cdots\exists\vY_n
\qquad &
\vx=(\normt{\vX_1},\ldots,\normt{\vX_m}) \ \wedge\
\vy=(\normt{\vY_1},\ldots,\normt{\vY_n})
\\
\wedge \ &
\bigwedge_{j\in [m]}\; \bigwedge_{j'\in [n]}
\biregc_{C_{j,j'}|D_{j,j'}}(\vX_{j},\vY_{j'}),
\end{align*}
where each $C_{j,j'}$ and $D_{j,j'}$ are simple matrices with $t$ rows.
Note that each variable in each $\vX_j$ is of the form $X_{j,g}$
where $j\in [m]$ and $g:[t]\times[n]\to \{\infty,0,\ldots,q,\prdp{0},\ldots,\prdp{q}\}$
is a function and $q$ is the maximal finite offset in $A$ and $B$.
Hence the number of bits to encode each $X_{j,g}$ is polynomial in 
the length of $A$ and $B$.
Similarly for each variable in each $\vY_{j'}$.

By Remark~\ref{rem:simple-complete}, each $\biregc_{C_{j,j'}|D_{j,j'}}(\vX_{j},\vY_{j'})$
is a disjunction of conjunctions of $O(t^4\delta(A,B)^4)$ (in)equations.
Thus, the formula $\biregc_{A|B}(\vx,\vy)$ is a disjunction of conjunctions of $O(mnt^4\delta(A,B)^4)$ (in)equations.

The desired NTM $\cM$ works as follows. 
On input $A$ and $B$, it constructs the formula $\biregc_{A|B}(\vx,\vy)$,
where on each disjunction, it guesses which disjunct should hold.
It outputs the constructed formula, which is a conjunction of $O(mnt^4\delta(A,B)^4)$ (in)equations
and all the variables that are not in $\vx$ and $\vy$ are existentially quantified.

This by itself, of course, does not guarantee that the running time is only exponential, since
the number of variables in the system may be more than exponential.
Here we invoke results in~\cite{intprogborosh,intprogeisenbrand}, which state that
if a system of linear equations has a solution, it has a solution in which
the number of variables taking non-zero is bounded by a polynomial in the number of equations and in the length of the binary representation 
of the coefficients in the system.\footnote{For example, Corollary~5 in~\cite{intprogeisenbrand}
states that if a system $A\vx =\vb$ has a solution in $\cN$,
then it has a solution $\vx$ such that the number of variables taking non-zero values is at most $2(d+1)(\log (d+1)+s+2)$,
where $d$ is the number of rows of $A$ and $s$ is the largest size of a coefficient in $A$ and $b$ (in binary representation).}
Thus, when our algorithm constructs the formula $\biregc_{A|B}(\vx,\vy)$,
it also guesses the variables that take non-zero values, and ignores the remaining variables.
Finally, applying Theorem~\ref{thm:pres}, our decision procedure runs in (non-deterministic) exponential time.
\end{proof}

Lemma~\ref{lem:algo-for-direg} is the directed graph analogue of  Lemma~\ref{lem:algo-for-bireg},
and the proof is similar.

\begin{lemma}
\label{lem:algo-for-direg}
There is a non-deterministic Turing machine $\cM$ that does the following:
on input degree matrices $A\in \bbN_{+p}^{t\times m}$ and $B\in \bbN_{+p}^{t\times m}$,
on every run $r$ of $\cM$, it outputs an existential Presburger formula $\varphi_r(\vx)$ 
such that 
\begin{itemize}
\item 
each $\varphi_r(\vx)$ is of the form $\exists\vz\ \widetilde{\varphi}_r(\vx,\vz)$,
where each $\widetilde{\varphi}_r(\vx,\vz)$ is a conjunction of $O(m^2t^4\delta(A,B)^4)$ linear (in)equations; and
\item
for every $\vM\in \bbNo^m$, 
there is complete $A|B$-regular digraph with size $\vM$
if and only if 
there is a run $r$ of $\cM$ such that $\varphi_r(\vM)$ holds in $\cN$.
\end{itemize}
Moreover, $\cM$ runs in time exponential in the size of $A$ and $B$,
where the coefficients of the input degree matrices and the output formula $\varphi_r$ are in binary.
\end{lemma}

\subsection{2-$\nexp$ algorithm for the finite satisfiability of $\fotwopres$}
\label{subsec:complexity-dec-proc}

We now give an analysis of the complexity of the decision procedure
for our logic, based on the analysis of the complexity of the corresponding graph problems.

Recall that $\OneTypes$ and $\TwoTypes$ denote the set of $1$- and $2$-types, respectively,
and that $\cE^\darr = \cE\cup \rev{\cE}$ where the elements in $\rev{\cE}$ represent the reversal of each $2$-types in $\cE$.
For finite satisfiability, a behavior function is a function $g:\TwoTypes^\darr\times\OneTypes \to \bbNp$,
where the co-domain is $\{0,\ldots,q,\prdp{0},\ldots,\prdp{q}\}$ and $q$
is the maximal non-$\infty$ offset in the u.p.s. $S_i$'s.
So, the total number of behavior functions is
\begin{align*}
m &\ =\  (2q+2)^{2tn} \ = \ 2^{2tn\log(2q+2)}
\end{align*}
where $t=|\TwoTypes|$ and $n=|\OneTypes|$.

We enumerate all behavior functions $g_1,\ldots,g_{m}$
and all $1$-types $\pi_1,\ldots,\pi_n$.
The Presburger sentence $\PREB_\phi$ is of the form
\begin{align*}
\PREB_\phi & \ := \  \exists \vX \ \const_1(\vX) \ \wedge \ \const_2 (\vX) \wedge (\bigvee_{i\in [n],\ j\in [m]}  X_{(\pi_i, g_j)} \neq 0),
\end{align*}
where $\vX$ is a vector of variables $(X_{(\pi_1, g_1)}, X_{(\pi_1, g_2)}, \ldots , X_{(\pi_n, g_m)}$). 

The formula $\const_1(\vX)$ is
\begin{align*}
\const_1(\vX) & : = \ 
\bigwedge_{\pi\ \text{is incompatible}}\:\sum_{g\in \cG} X_{\pi, g} = 0
\quad\wedge\quad
\bigwedge_{(\pi,g) \in H} X_{\pi, g} = 0,		
\end{align*}
where $H$ is the set of all incompatible $(\pi,g)$.
Checking whether $\pi$ and $(\pi,g)$ is compatible/incompatible can be done in deterministic exponential time.
So, this formula is negligible in our analysis.

Recall that for a $1$-type $\pi$,
 $\vX_{\pi}$ denotes the tuple of variables $(X_{\pi,g_1},\ldots,X_{\pi,g_m})$
The formula $\const_2$ is defined as
\begin{align*}
\const_2(\vX) & : = \ \bigwedge_{1 \leq i \leq n} 
\diregc_{M_{\pi_i}|\rev{M}_{\pi_i}}(\vX_{\pi_i})
\ \wedge \
\bigwedge_{1 \leq i < j \leq n}
\biregc_{L_{\pi_j}|\rev{L}_{\pi_i}}(\vX_{\pi_i},\vX_{\pi_j}),
\end{align*}
where
\begin{itemize}
\item 
$M_{\pi_i}$ and $\rev{M}_{\pi_i}$ are matrices with size $t\times m$, and
\item
$L_{\pi_j}$ and $\rev{L}_{\pi_i}$ are matrices with size $2t\times m$.
\end{itemize}  
Recall that $t$ and $m$ are the number of 2-types and behavior functions, respectively.

Using the Turing machine in Lemmas~\ref{lem:algo-for-bireg} and~\ref{lem:algo-for-direg},
the decision procedure can guess a formula $\const_2(\vx)$ where the total number of (in)equations is
\begin{align}
\label{eq:number-equations}
O(n^2m^2t^4\delta(A,B)^4) & \ = \ 
O(2^{4tn\log(2q+2)} n^2t^4 \delta(A,B)^4),
\end{align}
where $t$ and $n$ are the numbers of 2-types and 1-types, respectively.
That is, the number of (in)equations is doubly-exponential in the size of the input formula.

The Turing machines in Lemmas~\ref{lem:algo-for-bireg} and~\ref{lem:algo-for-direg} runs in time exponential in the size of each $M_{\pi_i}|\rev{M}_{\pi_i}$
and $L_{\pi_j}|\rev{L}_{\pi_i}$, which in turn, is exponential in the size of the input formula.
So, altogether our decision procedure takes doubly-exponential time to construct $\const_2(\vX)$.
Applying Theorem~\ref{thm:pres}, it runs in (non-deterministic) doubly-exponential time.

Note that here we also invoke results in~\cite{intprogborosh,intprogeisenbrand}.
Since the number of (in)equations in $\const_2(\vX)$ is only doubly-exponential,
if it has a solution, it has a solution in which the number of variables taking non-zero is at most doubly-exponential.
Thus, the decision procedure also guesses the variables that take non-zero values, and ignores the remaining variables.

Thus, we have the 2-$\nexp$ upper bound for the finite satisfiability of $\fotwopres$,
as stated formally as Theorem~\ref{theo:fin-sat-fotwopres}.

\begin{theorem}
\label{theo:fin-sat-fotwopres}
The finite satisfiability of $\fotwopres$ is in 2-$\nexp$.
\end{theorem}

\subsection{2-$\nexp$ algorithm for the general satisfiability of $\fotwopres$}
\label{subsec:complexity-dec-proc-general}

In this subsection we will briefly explain that the same upper bound also holds
for the general satisfiability of $\fotwopres$.
First, we have the following lemma which is the analogue of Lemma~\ref{lem:algo-for-bireg}
for the general case.

\begin{lemma}
\label{lem:algo-for-bireg-general}
There is a non-deterministic Turing machine $\cM$ that does the following:
on input degree matrices $A\in \bbNop^{t\times m}$ and $B\in \bbNop^{t\times n}$,
on every run $r$ of $\cM$, it outputs an existential Presburger formula $\varphi_r(\vx,\vy)$ 
such that
\begin{itemize}
\item 
each $\varphi_r(\vx,\vy)$ is of the form
$\exists\vz\ \widetilde{\varphi}_r(\vx,\vy,\vz)$,
where each $\widetilde{\varphi}_r(\vx,\vy,\vz)$ is a conjunction of $O(mn2^tt^4\delta(A,B)^4)$ linear (in)equations; and
\item
for every $(\vM,\vN)\in \bbNo^m\times \bbNo^n$, 
there is complete $A|B$-biregular graph with size $\vM|\vN$
if and only if 
there is a run $r$ of $\cM$ such that $\varphi_r(\vM,\vN)$ holds in $\cN_{\infty}$.
\end{itemize}
Moreover, $\cM$ runs in time exponential in the size of $A$ and $B$,
where the coefficients of the input degree matrices and the output formula $\varphi_r$ are in binary.
\end{lemma}

Note the additional factor $2^t$ in the number of linear (in)equations which is incurred in the construction of the formula $\biregc_{A|B}(\vx,\vy)$
when $A$ and $B$ are simple matrices and may contain $\infty$ entries.
The detailed analysis can be found in the appendix.
The directed graph analogue is stated as Lemma~\ref{lem:algo-for-direg-general}.

\begin{lemma}
\label{lem:algo-for-direg-general}
There is a non-deterministic Turing machine $\cM$ that does the following:
on input degree matrices $A\in \bbNop^{t\times m}$ and $B\in \bbNop^{t\times m}$,
on every run $r$ of $\cM$, it outputs an existential Presburger formula $\varphi_r(\vx)$ 
such that 
\begin{itemize}
\item 
each $\varphi_r(\vx)$ is of the form $\exists\vz\ \widetilde{\varphi}_r(\vx,\vz)$,
where each $\widetilde{\varphi}_r(\vx,\vz)$ is a conjunction of $O(m^22^tt^4\delta(A,B)^4)$ linear (in)equations; and
\item
for every $\vM\in \bbNo^m$, 
there is complete $A|B$-regular digraph with size $\vM$
if and only if 
there is a run $r$ of $\cM$ such that $\varphi_r(\vM)$ holds in $\cN_{\infty}$.
\end{itemize}
Moreover, $\cM$ runs in time exponential in the size of $A$ and $B$,
where the coefficients of the input degree matrices and the output formula $\varphi_r$ are in binary.
\end{lemma}

Another difference between the procedures for the finite and general satisfiability of $\fotwopres$ 
is that the co-domain of a behavior function for the general case is $\{\infty, 0,\ldots,q,\prdp{0},\ldots,\prdp{q}\}$,
where $q$ is the maximal (non $\infty$) offset in the u.p.s. $S_i$'s. 
Then, total number of behavior functions becomes
\begin{align*}
m &\ =\  (2q+3)^{2tn} \ = \ 2^{2tn\log(2q+3)}
\end{align*}
where $t$ is the number of all 2-types and $n$ is the number of all 1-types.

Similar to the finite case, using the Turing machine in Lemmas~\ref{lem:algo-for-bireg-general} and~\ref{lem:algo-for-direg-general},
the decision procedure can guess a formula $\const_2(\vx)$ where the total number of (in)equations is
\begin{align}
\label{eq:number-equations-general}
O(n^2m^22^tt^4\delta(A,B)^4) & \ = \ 
O(2^{t+4tn\log(2q+3)} n^2t^4 \delta(A,B)^4),
\end{align}
where $t$ and $n$ are the numbers of 2-types and 1-types, respectively.
That is, the number of (in)equations is still doubly-exponential in the size of the input formula.
Using the algorithm in Theorem~\ref{thm:pres},
the $2$-$\nexp$ upper bound also holds for the general satisfiability of $\fotwopres$,
as stated formally as Theorem~\ref{theo:gen-sat-fotwopres}.

\begin{theorem}
\label{theo:gen-sat-fotwopres}
The general satisfiability of $\fotwopres$ is in 2-$\nexp$.
\end{theorem}

\subsection{Data complexity of $\fotwopres$ formulas}

We now turn to families of formulas of the form $\phi \wedge \bigwedge_i A_i$ , where
$\phi$ is in the logic and the sets $\{ A_1 \ldots A_n\}$ range  over set finite collection of facts.
We say that  $\phi$ has \emph{$\np$ data complexity of (finite) satisfiability} if there is a
non-deterministic algorithm that takes
as input a set of ground atoms $A$ and determines whether $\phi \wedge \bigwedge A$ is satisfiable, running
in time polynomial in the size of $A$.
Pratt-Hartmann \cite{iandatacomplexity} showed that $\ctwo$ formulas have  $\np$ data complexity
of both satisfiability and finite satisfiability.
Following the general approach to data complexity from \cite{iandatacomplexity}, while plugging
in our Presburger characterization of $\fotwopres$, we  can show that the same data complexity
bound holds for $\fotwopres$.

\begin{theorem} \label{thm:datacomplexity} $\fotwopres$ formulas have $\np$ data complexity of
satisfiability and finite satisfiability.
\end{theorem}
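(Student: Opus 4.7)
The plan is to transport Pratt-Hartmann's proof that $\ctwo$ has $\np$ data complexity~\cite{iandatacomplexity} to our setting, replacing his integer-linear certificate with the existential Presburger sentence $\PREB_\phi$ produced by Theorem~\ref{theo:main}. Fix $\phi$ in Scott normal form~(\ref{eq:snf}) and let $A$ be an input set of ground atoms over a set $C=\{a_1,\ldots,a_N\}$ of constants, with $N$ bounded polynomially in $|A|$.

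First, I would nondeterministically guess, for every $a_i\in C$, a $1$-type $\pi_i\in\Pi$ and a good function $g_i\in\cG$, and for every ordered pair $i\neq j$, a $2$-type $E_{ij}\in\cE$. Because $\Pi$, $\cE$, and $\cG$ depend only on $\phi$, each individual guess is of constant size and the whole guess fits in $O(N^2)$ bits. In polynomial time, verify that (a) these guesses extend $A$; (b) for all $i\neq j$, $\pi_i(x)\wedge E_{ij}(x,y)\wedge \pi_j(y)\models \alpha(x,y)$; and (c) for every $a_i$, every $E\in\cE$, and every $\pi\in\Pi$, the number of $a_j\neq a_i$ with $\pi_j=\pi$ and $E_{ij}=E$ is compatible with $g_i(E,\pi)$ (not exceeding it when fixed, and respecting the period $p$ when periodic).

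Second, I would form the counts $c_{\pi,g}=|\{i:\pi_i=\pi,\,g_i=g\}|$ and construct a ``residual'' variant $\PREB_\phi^A$ of $\PREB_\phi$ whose free variables $X_{\pi,g}$ now count only the \emph{non-constant} elements in each $(\pi,g)$-class. The modification replaces each application of $\diregc_{M_\pi|\rev{M}_\pi}$ and $\biregc_{L_{\pi'}|\rev{L}_\pi}$ by a relative version that presupposes the already-fixed partial graph on $C$ and asks whether it extends to a complete regular or biregular graph by adding the required number of fresh vertices with the prescribed degrees. Crucially, the Boolean and quantifier skeleton of $\PREB_\phi^A$ is fixed by $\phi$; the only $N$-dependence sits in numerical coefficients of bitsize $O(\log N)$. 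Theorem~\ref{thm:pres} then delivers an $\np$ check for satisfiability over $\cN$ (for finite models) or $\cN_\infty$ (for general models), concluding the argument.

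The principal obstacle is establishing the ``relative'' analogues of Lemmas~\ref{lem:main-lemma-bireg} and~\ref{lem:main-lemma-direg}: given a fixed partial regular/biregular configuration on $C$, produce an existential Presburger formula (uniform in $\phi$) over residual sizes that characterises extensibility to a complete configuration. A careful revisit of the constructive proofs in Section~\ref{sec:1type} shows that each step---the merging argument of Lemma~\ref{lem:1type-non-fixed-fixed-entry}, the induction in Lemma~\ref{lem:case-2-fixed-row-biregular}, and the completeness adjustment of Subsection~\ref{subsec:L2proof}---can be initialised with a prescribed partial graph, at the cost of inflating the ``big enough'' thresholds by an additive polynomial in $N$; beneath those thresholds, a polynomially bounded case enumeration (again absorbed by the nondeterministic guessing) still fits in $\np$.
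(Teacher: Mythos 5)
Your proposal hinges on ``relative'' analogues of Lemmas~\ref{lem:main-lemma-bireg} and~\ref{lem:main-lemma-direg} --- Presburger characterisations of when a fixed partial (bi)regular configuration on the constants extends to a complete one --- and this is exactly the step you do not establish. Asserting that ``a careful revisit'' of Section~\ref{sec:1type} goes through is not enough: once a prescribed graph on $C$ must be respected, the merging and edge-swapping arguments lose the freedom they rely on, and the completeness requirement forces every fresh vertex to be adjacent to every constant with a colour consistent with that constant's already-consumed degree budget. Moreover, your own fix --- inflating the ``big enough'' thresholds by a polynomial in $N$ --- undermines the claim that the skeleton of $\PREB_\phi^A$ depends only on $\phi$: the sub-threshold regime is handled in the paper by brute-force enumeration and by the inductive formula $\Phi^r$ of Lemma~\ref{lem:case-2-fixed-row-biregular}, whose size grows with the threshold $r$, so with thresholds polynomial in $N$ the resulting formula is no longer of fixed shape with only $O(\log N)$-bit coefficients. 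So the argument as written has a genuine gap precisely at the point you flag as the principal obstacle.

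The paper's proof avoids this obstacle entirely. It introduces a fresh unary predicate \emph{Observable}, recomputes $1$-types, $2$-types, good functions and the formulas $\const_1,\const_2$ over the extended signature, and adds only \emph{cardinality} constraints: for each observable $1$-type $\pi$ and each observable behaviour $g_0$, the number of model elements whose behaviour extends $g_0$ must equal $|D_{\pi,g_0}|$ in the guessed completion $D^+$. A satisfying assignment yields a model $M$ of $\phi$ that need not contain $D$ at all; one then fixes an isomorphism $\lambda$ from each class $M_{\pi,g_0}$ onto the corresponding class of $D^+$ and surgically replaces the induced substructure on the observable elements by a copy of $D^+$. Since only pairs of observable elements are rewired and every element of each class $M_{\pi,g_0}$ has the same observable behaviour before and after the surgery, all behaviours (and $1$-types) are preserved, so the modified model satisfies both $\phi$ and $D$. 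No relative graph-extension lemma is needed, and the Presburger sentence retains a shape fixed by $\phi$ with only the constants varying, which even gives membership in $\ptime$ for the final arithmetic check. If you want to salvage your route, you would have to actually prove the relative extension lemmas; otherwise I recommend adopting the Observable-predicate surgery.
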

\begin{proof}
We give only the proof for finite satisfiability.
We will follow closely the approach used for $\ctwo$ in Section 4 of \cite{iandatacomplexity},
and the terminology we use below comes from that work.

Given a set of facts $D$,
our algorithm guesses a set of facts (including equalities) on elements of $D$, giving
us a finite set of facts $D^+$ extending $D$, but with the same
domain as $D$. We check that our guess is consistent with the universal part $\alpha$ and such that equality satisfies the
usual transitivity and congruence rules.

Now consider  $1$-types and $2$-types with an additional predicate Observable.
Based on this extended language, we consider good functions as before,
and define the formulas $\const_1$ and $\const_2$ based on them. $1$-types
with that contain the predicate Observable will be referred to as observable $1$-types.
The restriction of a behavior function to observable $1$-types will be called an \emph{observable behavior}.
Given a structure $M$, an observable one-type $\pi$, and an observable behavior function $g_0$,
we let $M_{\pi, g_0}$ be the elements of $M$ having $1$-type $\pi$ and observable behavior $g_0$, and
we analogously let $D_{\pi, g_0}$ be the elements of $D$ whose $1$-type and behavior in $D^+$ match
$\pi$ and $g_0$.

We declare that all elements in $A$ are in the predicate Observable.
We add additional conjuncts to the formulas $\const_1$ and $\const_2$ stating that for each observable $1$-type $\pi$ and
for each observable behavior function $g_0$, the total sum of the
number of elements with $1$-type $\pi$ and  a behavior function $g$ extending $g_0$ (i.e. the
cardinality of $M_{\pi, g_0}$) is the same as $|D_{\pi, g_0}|$. Here
the cardinality is being counted modulo equalities
of $D^+$.

At this point, our algorithm returns true exactly when the sentence
obtained by existentially-quantifying this extended set of conjuncts is satisfiable in
the integers.
The solving procedure is certainly in $\np$. In fact, since
   the number of variables is fixed, with only the constants varying, it is in $\ptime$ \cite{christos}.

We argue for correctness, focusing on the proof that when the algorithm returns true we
have the desired model.
Assuming the constraints above are satisfied,
we get a graph, and from the graph we get a model $M$. $M$ will clearly satisfy
$\phi$, but its domain does not contain the domain of $D$.
Letting $O$ be the  elements of $M$ satisfying Observable, we know, from the additional constraints
imposed, that
the cardinality of $O$  matches the cardinality of the domain of $D$ modulo the equalities in $D^+$, and
for each observable $1$-type $\pi_o$ and observable behavior $g_0$,  $|M_{\pi,g_0}|=|D_{\pi, g_0}$.

Fix an isomorphism $\lambda$ taking each $M_{\pi, g_0}$ to (equality classes of) $D_{\pi,g_0}$
Create $M'$ by redefining $M$ on $O$ by  connecting pairs $(o_1, o_2)$ via $E$ exactly when $\lambda(o_1), \lambda(o_2)$
ise connected via $E$ in $D^+$. We can thus identify $O$ with $D^+$ modulo equalities in $M'$.

Clearly $M'$ now satisfies $D$.
To see that $M'$ satisfies $\phi$, we simply note that since all of the observable behaviors
are unchanged in moving from an element $e$ in $M$ to the corresponding
element $\lambda(e)$ in $M'$, and every such $e$ modified has an observable type,
it follows that the behavior of every element in  $M$ is unchanged in moving from $M$ to $M'$.
Since the $1$-types are also unchanged, $M'$ satisfies $\phi$.
\end{proof}

Note that the data complexity  result here is best possible, 
since even for $\fotwo$ the data complexity can be $\np$-hard 
\cite{iandatacomplexity}.


\section{The spectrum problem} \label{sec:spectrum}

As mentioned in the introduction, our Presburger definability result gives additional information
about models of $\fotwopres$ sentences, allowing us to characterize the sets that can occur
as cardinalities of models. Recall from the introduction 
that the spectrum of a sentence $\phi$ in any logic is the set of cardinalities of finite models of $\phi$.
We now use the prior tools to characterize the spectra for  $\fotwopres$ sentences.

\begin{theorem} \label{thm:spectrum} From an $\fotwopres$ sentence $\phi$, we can effectively
construct a Presburger formula $\psi(n)$ such that ${\cal N}_\infty \models \psi(n)$  exactly when $n$ is the size of a finite
structure that satisfies $\phi$, and similarly a formula $\psi_\infty(n)$  such that
 ${\cal N}_\infty \models \psi_\infty(n)$ exactly when $n$ is the size of a finite or countably infinite model of
$\phi$.
\end{theorem}

\begin{proof}
A \emph{type/behavior profile} for a model $\cA$ is the vector of  cardinalities of the sets $A_{\pi,g}$ computed in
$M$, where $\pi$ ranges of $1$-types and $g$ over behavior functions (for a fixed $\phi$).
Recall that in the proof of Theorem \ref{theo:main} we actually showed, in Lemma \ref{lem:correct},  that we can obtain
existential Presburger formulas  which define exactly the vectors
of integers that arise as the type/behavior profiles of models of $\phi$.
The domain of the model can be broken up as a disjoint union of  sets $A_{\pi,g}$, 
and thus its cardinality is a sum of numbers in this vector.  We can thus add  one additional integer variable $x_{\totalcard}$ in $\PREB_{\phi}$, which
will be free, with an additional equation stating that $x_{\totalcard}$ is the sum of all $X_{\pi,g}$'s.
This allows us to conclude definability of the spectrum.
\end{proof}

\section{Related work} \label{sec:related}
The  biregular graph method was introduced and applied to  $\ctwo$ in \cite{KT15,ctworevisited}. 
The case of $1$-color is characterized by a Presburger formula that just expresses the equality of the number of edges calculated from
either side of the bipartite graph. The non-trivial direction of correctness is shown via distributing edges and then merging.
The case of fixed degree and multiple colors is done via an induction, using merging and then swapping to eliminate parallel edges.
The case of unfixed degree is handled using a case analysis depending on whether sizes are ``big enough'', but the approach is different from
the one we apply here based on simple matrices followed by a reduction to non-simple ones.

Note that a more restricted version of the method is used
to prove the decidability of $\fotwo$ extended with two equivalence relations~\cite{KieronskiMPT14}.

This work can be seen as a demonstration of the power of the biregular graph method to get new
decidability results. We make heavy use of both techniques and results in \cite{KT15}, adapting
them to the richer logic. The additional expressiveness of the logic requires  the introduction of  additional inductive arguments to handle the interaction
of ordinary counting quantifiers and modulo counting quantification.

An alternative to the biregular graph method is the machinery developed by Pratt-Hartmann 
for analyzing the decidability
and complexity of $\ctwo$ \cite{ctwobinary,ctworevisited}, its fragments \cite{iancomplexitytwovarguarded},  and its extensions
\cite{ctwocountequiv,ctwoandforests}.  It is clear that the approaches are closely related, despite
the differing terminology.
In \cite{ctworevisited}
binary relationships that are tied to fixed numerical bounds are associated with ``feature functions'', 
while relationships that are not constrained realize ``silent $2$-types''.  
 At this point we cannot provide a more precise mapping, nor can we say whether it would be possible
to extend the approach of  \cite{ctwobinary} to our logic.
An advantage of the biregular graph method  is that it is transparent how to extract more information about the shape of witness
structures. 
While we imagine that results on  spectra of formula can be shown via either method,  with an understanding of biregular graph problems related to a logic in hand,
it  is completely straightforward to draw conclusions about the spectrum.
From an expository point of view, the biregular graph approach has the advantage that
one deals with the combinatorics of the underlying problems with the logic abstracted away early on.
But admittedly,  the current arguments are complex in both approaches.

Characterizing  the spectrum for general first order
formulas is quite a difficult problem, with ties to major open questions in complexity theory \cite{spectrumfifty}.
There are other logics, incomparable
in expressiveness with $\fotwopres$, where periodicity of the spectrum has been proven \cite{gurevichshelah}. The
arguments have a different feel, since in these logics
one can reduce to reasoning about forests.

The paper \cite{bartosztonyfsstcs} shows decidability for a  logic with incomparable expressiveness: the quantification allows a more powerful
quantitative comparison, but must be \emph{guarded} -- restricting the counts  only of  sets of elements that are adjacent to a given element.
Counting extensions of $1$-variable logics are studied in \cite{bartosztcs}.


\section{Conclusion} \label{sec:conc}

We have shown the Presburger definability of the solution set to certain graph problems.
Using this,  we show that can extend the powerful language two-variable logic with counting to 
include ultimately periodic counting quantifiers without sacrificing decidability, and without losing the effective
definability of the spectrum of formulas within Presburger arithmetic.

A number of complexity questions are left open by our work.
We have obtained a $\twonexp$ bound on complexity of deciding satisfiability of the logic.
However the only lower bound we know of is $\nexp$, inherited
from $\fotwo$.

A natural question left open by our work is the connection with other extensions of two-variable logic with counting.
It has been shown that two-variable logic with counting remains decidable in the presence of a linear
order \cite{twovariablecountorder}. It has also been shown that decidability is maintained when one of the relations
is restricted to 
be an equivalence relation \cite{ctwocountequiv}.
One would like to know if there is a common decidable extension of our logic and one of (or ideally, both of)
these logics.

We also leave open a number of other complexity questions for  biregular graph 
analysis problems. In particular, the line
between $\ptime$ and $\np$ for the membership problem of Subsection \ref{subsec:graphs} (with cardinalities
in unary) is open.

\bibliography{fo2siam}
\newpage
\appendix


\section{Scott normal form} 
\label{sec:scott-normal-form}

In this appendix we prove that every $\fotwopres$ formula
can be converted into the normal form used in the body of the paper:
\begin{align*}
\forall x \forall y\  \alpha(x,y)
\ \wedge\ 
\bigwedge_{i=1}^{k} \forall x \exists^{S_i} y\ \beta_i(x,y) \wedge x \neq y,
\end{align*}
where $\alpha(x,y)$ is a quantifier free formula, each $\beta_i(x,y)$ is an atomic formula and 
each $S_i$ is an u.p.s.
Moreover, the conversion preserves the satisfiability and the spectra of $\fotwopres$ sentences.

We will first give a couple of lemmas.

\begin{lemma}
\label{lem:scott-no-zero}
Let $S\subseteq \bbN_{\infty}$ where $0\notin S$
and $q$ be a unary predicate.
Let $\phi(x,y)$ be a formula with free variables $x$ and $y$.
The sentence $\Psi_1$ that is defined as
\begin{align*}
\Psi_1 &:= \
\forall x \ \big(q(x)\ \to \ \exists^S y\ \phi(x,y)\big)
\end{align*}
is equivalent to the sentence $\Psi_2$ that is defined as
\begin{align*}
\Psi_2 &:= \
\forall x \ \exists^{S\cup\{0\}}y \ \big(q(x)\wedge \phi(x,y)\big)
\ \wedge \
\forall x \ \exists^{\bbN_{\infty}-\{0\}}y \ \big(q(x)\to \phi(x,y)\big).
\end{align*}
\end{lemma}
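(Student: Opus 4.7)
The plan is a straightforward pointwise argument in $x$. Since both $\Psi_1$ and $\Psi_2$ are of the form $\forall x\ \chi(x)$ (folding the two $\forall x$ in $\Psi_2$ into a single conjunction), it suffices to fix a structure $\cA$ and an element $a$ in its domain and show that $\cA \models (q(a) \to \exists^S y\ \phi(a,y))$ if and only if both conjuncts of $\Psi_2$ hold at $a$. I would then split on whether $q^{\cA}(a)$ holds.

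In the case that $q^{\cA}(a)$ holds, the subformulas $q(a) \wedge \phi(a,y)$ and $q(a) \to \phi(a,y)$ both reduce, as predicates on $y$, to $\phi(a,y)$. Writing $N \in \bbN_{\omega}$ for the number of $y \in A$ with $\cA \models \phi(a,y)$, the condition coming from $\Psi_1$ at $a$ becomes $N \in S$, while the condition from $\Psi_2$ at $a$ becomes the conjunction $N \in S \cup \{0\}$ and $N \in \bbN_{\omega} \setminus \{0\}$. The hypothesis $0 \notin S$ is precisely what forces $(S \cup \{0\}) \cap (\bbN_{\omega} \setminus \{0\}) = S$, yielding the equivalence in this case. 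In the remaining case $\neg q^{\cA}(a)$, the sentence $\Psi_1$ holds vacuously at $a$. The first conjunct of $\Psi_2$ counts the $y$ with $q(a) \wedge \phi(a,y)$, which is $0$, and $0 \in S \cup \{0\}$, so the conjunct holds. The second conjunct counts the $y$ with $q(a) \to \phi(a,y)$; since the antecedent is uniformly false, every $y$ satisfies the implication, so this cardinality equals $|A|$, which is at least one (as $a \in A$) and hence lies in $\bbN_{\omega} \setminus \{0\}$. Thus both conjuncts of $\Psi_2$ also hold at $a$.

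The argument is essentially bookkeeping; the only subtle point is the interaction between the enlarged target set $S \cup \{0\}$ in the first conjunct of $\Psi_2$ and the ``nonzero'' constraint supplied by its second conjunct, together with the standing assumption $0 \notin S$. I do not expect any real obstacle beyond making the case split and the identity $(S \cup \{0\}) \setminus \{0\} = S$ explicit, and observing that $q(a) \to \phi(a,y)$ collapses to $\phi(a,y)$ when $q(a)$ holds and to $\myT$ otherwise.
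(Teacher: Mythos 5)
Your proof is correct and follows essentially the same route as the paper's: both arguments fix an element $a$, split on whether $q(a)$ holds, and reduce everything to counting the witnesses of $\phi(a,y)$, with the key point being that $0\notin S$ gives $(S\cup\{0\})\cap(\bbN_{\omega}\setminus\{0\})=S$. Your pointwise biconditional is a slightly tidier packaging of the paper's two separate implications, but the content is identical.
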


\begin{proof}
It is worth noting that $q(x)\wedge \phi(x,y)$ is equivalent to
$(q(x)\to \phi(x,y))\wedge (\neg q(x)\to \myF)$.

Let $\cA$ be a structure.
For an element $a\in A$, 
define $W_{a,\phi(x,y)}$ as follows:
\begin{align*}
W_{a,\phi(x,y)} & := \ \{b\in A \mid (\cA,x/a, y/b) \models \phi(x,y)\};
\end{align*}
that is, $W_{a,\phi(x,y)}$ is the set of elements that can be assigned to $y$ 
so that $\phi(x,y)$ holds, when $x$ is assigned with element $a$.

Suppose $\cA \models \Psi_1$.
So, for every $a\in q^{\cA}$, $|W_{a,\phi}|\in S$. Thus we have
\begin{align}
\label{eq:a-in-q}
\cA,x/a  \models  \exists^{S}y\ q(x)\to \phi(x,y)
\quad \text{and} \quad 
\cA,x/a  \models \exists^{S}y\ q(x)\wedge \phi(x,y).
\end{align}
For every $a\notin q^{\cA}$, the following holds:
\begin{align}
\label{eq:a-not-in-q}
\cA,x/a \models \exists^{|A|} y \ q(x)\to \phi(x,y)
\quad \text{and} \quad
\cA,x/a  \models  \exists^{0} y \ q(x)\wedge \phi(x,y).
\end{align} 
Combining (\ref{eq:a-in-q}) and (\ref{eq:a-not-in-q}), we have $\cA\models \Psi_2$.

For the other direction, suppose $\cA\models \Psi_2$.
Since $\cA\models \forall x \ \exists^{S\cup\{0\}}y \ \big(q(x)\wedge \phi(x,y)\big)$,
for every $a\in A$, either $|W_{a,\phi(x,y)}|=0$ or $|W_{a,\phi(x,y)}|\in S$.
Since $\cA\models \forall x \ \exists^{\bbN_{\infty}-\{0\}}y \ (q(x)\to \phi(x,y))$,
the following holds, for every $a\in q^{\cA}$:
$$
|W_{a,\phi(x,y)}|\neq 0.
$$
Thus, for every $a\in q^{\cA}$, $|W_{a,\phi(x,y)}|\in S$.
Therefore, $\cA \models \Psi_1$.
\end{proof}

The next  lemma is proven in a similar manner.
\begin{lemma}
\label{lem:scott-with-zero}
Let $S\subseteq \bbN_{\infty}$ where $0\in S$
and $q$ be a unary predicate.
Let $\phi(x,y)$ is a formula with free variables $x$ and $y$.
The sentence $\Psi_1$ defined as
\begin{align*}
\Psi_1 &:= \
\forall x \ \big(q(x)\ \to \ \exists^S y\ \phi(x,y)\big)
\end{align*}
is equivalent to the sentence $\Psi_2$ defined as
\begin{align*}
\Psi_2 &:= \
\forall x \ \exists^{S}y \ \big(q(x)\wedge \phi(x,y)\big).
\end{align*}
\end{lemma}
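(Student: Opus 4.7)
The plan is to mirror the case analysis used in Lemma~\ref{lem:scott-no-zero}, but exploit the additional assumption $0 \in S$ so that a single conjunct on the $\Psi_2$ side suffices. As there, I will work with the set
\[
W_{a,\phi(x,y)} \;=\; \{b \in A \mid (\cA, x \mapsto a, y \mapsto b) \models \phi(x,y)\},
\]
and the key pointwise observation that for each $a \in A$, the number of $b \in A$ such that $q(x) \wedge \phi(x,y)$ holds under the assignment $x \mapsto a$, $y \mapsto b$ equals $|W_{a,\phi(x,y)}|$ when $a \in q^{\cA}$ and equals $0$ when $a \notin q^{\cA}$.

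For the direction $\Psi_1 \Rightarrow \Psi_2$, fix an arbitrary $a \in A$ and split cases. If $a \in q^{\cA}$, then $\Psi_1$ forces $|W_{a,\phi(x,y)}| \in S$, and by the observation above this is exactly the count witnessing $\exists^S y\,(q(x) \wedge \phi(x,y))$ at $a$. If $a \notin q^{\cA}$, the corresponding count is $0$, which lies in $S$ by hypothesis. Hence $\Psi_2$ holds at every $a$.

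For the direction $\Psi_2 \Rightarrow \Psi_1$, again fix $a \in A$. If $a \notin q^{\cA}$, the outer implication in $\Psi_1$ is vacuous. Otherwise, since $q(a)$ holds, the count of $b$ satisfying $q(a) \wedge \phi(a,b)$ is exactly $|W_{a,\phi(x,y)}|$, and $\Psi_2$ guarantees this number lies in $S$, which is precisely what $\exists^S y\,\phi(a,y)$ requires.

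There is really no obstacle here: the only subtlety is making sure the hypothesis $0 \in S$ is used in the right place, namely in the $\Psi_1 \Rightarrow \Psi_2$ direction to absorb elements outside $q^{\cA}$, which is exactly what made the auxiliary ``non-zero'' conjunct of Lemma~\ref{lem:scott-no-zero} unnecessary here.
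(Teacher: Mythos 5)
Your proof is correct and matches what the paper intends: the paper only remarks that this lemma is ``proven in a similar manner'' to Lemma~\ref{lem:scott-no-zero}, and your argument is exactly that proof's cardinality case-split on $a \in q^{\cA}$ versus $a \notin q^{\cA}$, with $0 \in S$ absorbing the latter case.
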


Obviously, Lemma~\ref{lem:scott-no-zero} and~\ref{lem:scott-with-zero}
can be modified easily when $q(x)$ is any quantifier-free formula
with free variable $x$.

\subparagraph*{Conversion into ``almost'' Scott normal form}
We will first show how to convert an $\fotwopres$ sentence into an
equisatisfiable sentence in ``almost'' Scott normal form:
\begin{align}
\label{eq:snf-without-inequality}
& \forall x \forall y \ \alpha(x,y)
\ \wedge \ \bigwedge_{i=1}^k
\forall x \exists^{S_i} y \ \beta_i(x,y).
\end{align}
That is, the requirement $x\neq y$ is dropped for $\beta_i(x,y)$ to hold.
In fact,  we get more than equisatisfiability: each model of our sentence can
be expanded to a model of the normal form. This will be important for our result
about the spectrum. In the remainder of this section we omit similar statements for brevity. 

The conversion is a rather standard renaming technique from two-variable logic.
Let $\Psi$ be an $\fotwopres$ sentence.
We first assume that $\Psi$ does not contain any subformula of the form $\forall x\, \phi$,
by rewriting them into the form $\exists^0 x\, \neg \phi$.

Whenever there is a subformula $\psi(x)$ in $\Psi$ of the form
$\exists^S y \ \phi(x,y)$,
where $\phi(x,y)$ is quantifier free and $S$ is a u.p.s., 
we perform a transformation.
Let $q$ be a fresh unary predicate, and replace the subformula $\psi(x)$ in $\Psi$ with atomic $q(x)$,
and add a sentence which states that $q(x)$ is equivalent to $\psi(x)$:
\begin{align*}
& \forall x \ \big(q(x) \ \leftrightarrow \ \psi(x)\big)
\end{align*}
which is equivalent to
\begin{align*}
& \forall x \ \big(q(x) \ \to \ \exists^S y \ \phi(x,y)\big)
\quad \wedge \quad
\forall x \ \big(\neg q(x) \ \to \ \exists^{\bbN_{\infty}-S} y \ \phi(x,y)\big)
\end{align*}
which, in turn, by Lemmas~\ref{lem:scott-no-zero} and~\ref{lem:scott-with-zero},
can be converted into sentences of the form~(\ref{eq:snf-without-inequality}).
We iterate this  procedure until $\Psi$ is in the ``almost'' Scott normal form described
above.

\subparagraph*{Conversion into Scott normal form in~(\ref{eq:snf})}
Now we provide the conversion from ``almost'' Scott normal form into Scott normal form.
Note that
\begin{align*}
\forall x \exists^S y \ \beta(x,y)
\end{align*}
is equivalent to
\begin{align*}
& \forall x \big(\neg \beta(x,x) \ \to\ \exists^{S}y \ \beta(x,y) \wedge  x\neq y\big)
\quad\wedge \quad
\forall x \big(\beta(x,x) \ \to\ \exists^{S-1}y \ \beta(x,y) \wedge  x\neq y\big),
\end{align*}
where $S-1$ denotes the set $\{i-1 \mid i \in S\}$.

Applying Lemmas~\ref{lem:scott-no-zero} and~\ref{lem:scott-with-zero},
a sentence of form~(\ref{eq:snf-without-inequality}) can be converted into
an equisatisfiable sentence of the form:
\begin{align*}
& \forall x \forall y \ \alpha(x,y)
\quad \wedge \quad \bigwedge_{i=1}^k
\forall x \exists^{S_i} y \ \beta_i(x,y) \wedge x\neq y,
\end{align*}
where each $\beta_i(x,y)$ is quantifier free.
To make it into Scott normal form,
we introduce a new predicate $\gamma_i(x,y)$, for each $1\leq i \leq k$,
and rewrite the sentence as follows:
\begin{align*}
& \forall x \forall y \ \Big(\alpha(x,y)\wedge \bigwedge_{i=1}^k\big(\gamma_i(x,y)\leftrightarrow \beta_i(x,y)\big)\Big)
\ \wedge \ \bigwedge_{i=1}^k
\forall x \exists^{S_i} y \ \gamma_i(x,y) \wedge x\neq y.
\end{align*}

The conversion described above takes $O(Cn)$ time where
$n$ is the length of the original $\fotwopres$ sentence 
and the factor $C$ is the complexity of computing the complement $\bbN_{\infty}-S$
of a u.p.s. $S$, which of course, depends on the representation of a u.p.s.
However, we should note that the number of new atomic predicates introduced
is linear in $n$.


\section{The extension of Section~\ref{sec:1type}, the $1$-color case, to handle infinite graphs}
\label{app:sec:1type}

In this appendix we will extend the formulas in Section ~\ref{sec:1type} to
handle all possible (finite and infinite) sizes of $1$-color $A|B$-biregular graphs.

\begin{lemma}
\label{app:lem:1type-incomplete}
For every $A\in \bbN_{\infty,+p}^{1\times m}$ and $B\in \bbN_{\infty,+p}^{1\times n}$,
there exists (effectively computable) existential 
Presburger formula $\bireg_{A|B}(\vx,\vy)$ such that 
for every $(\vM,\vN)\in \bbNo^m\times \bbNo^n$:
there is an $A|B$-biregular graph with size $\vM|\vN$
if and only if $\bireg_{A|B}(\vM,\vN)$ holds in $\cN_{\infty}$.
\end{lemma}

\subsection{Notation and terminology}
\label{app:subsec:1type-notation}

We regard $\infty$ as a periodic entry, since $\infty$ is considered the same as $\prdp{\infty}$.
Intuitively, the reason is that when a vertex has degree $\infty$,
adding $p$ (or any arbitrary number) of additional new edges adjacent to it 
still make its degree $\infty$.
A periodic entry which is not $\infty$ is called a {\em finite} periodic entry.
We define $\offset(\infty)$ to be $\infty$.

For degree vectors $\va$ and $\vb$ that contain $\infty$ entries,
we write $\delta(\va,\vb)$ to denote the maximal {\em finite} entry in $(\offset(\va),\offset(\vb),p)$.
For example, if $\va=(3,\infty)$ and $\vb=(2^{+5},4)$,
then $\delta(\va,\vb)$ is the maximal finite entry in $(3,\infty,2,4,5)$, which is $5$.
We let $\per(\va)$ to denote the set of indexes $j$ where $a_j$ is a finite periodic entry
and $\inf(\va)$ to denote the sets of indexes $j$ where $a_j=\infty$.
As before, $\nz(\va)$ denotes the set of indexes $j$ where $a_j\neq 0$.

We  redefine the notion of big enough when the degree vectors contain $\infty$ entries.

\begin{definition}
\label{app:def:1type-big-enough} 
Let $\va$ and $\vb$ be degree vectors and 
let $\vM$ and $\vN$ be size vectors with the same length as $\va$ and $\vb$, respectively.
We say that {\em $\vM|\vN$ is big enough for $\va|\vb$},
if each of the following holds:
\begin{enumerate}[(a)]
\item 
$\max(\normt{\vM}_{\nz(\va)},\normt{\vN}_{\nz(\vb)})\geq 2\delta(\va,\vb)^2+1$,
\item 
$\normt{\vM}_{\per(\va)}=0$ or $\geq \delta(\va,\vb)^2+1$,
\item 
$\normt{\vM}_{\inf(\va)}=0$ or $\geq \delta(\va,\vb)$,
\item 
$\normt{\vN}_{\per(\vb)}=0$ or $\geq \delta(\va,\vb)^2+1$,
\item 
$\normt{\vN}_{\inf(\vb)}=0$ or $\geq \delta(\va,\vb)$.
\end{enumerate}
\end{definition}

\subsection{The formula for the case of big enough sizes}
\label{app:subsec:1type-incomplete-big-enough}

We consider four scenarios for the sizes $\vM|\vN$ of $\va|\vb$-biregular graphs:
\begin{enumerate}[(GS1)]
\item $\normt{\vM}_{\per(\va)}=\normt{\vM}_{\inf(\va)}=\normt{\vN}_{\per(\vb)}=\normt{\vN}_{\inf(\vb)}=0$
 (i.e., there are only vertices with fixed degree),
\item $\normt{\vM}_{\per(\va)}\neq 0$ and $\normt{\vM}_{\inf(\va)}=\normt{\vN}_{\per(\vb)}=\normt{\vN}_{\inf(\vb)}=0$
(i.e., there are vertices with finite periodic degrees on exactly one side, but no vertex with $\infty$ degree),
\item $\normt{\vM}_{\per(\va)},\normt{\vN}_{\per(\vb)}\neq 0$ and $\normt{\vM}_{\inf(\va)}=\normt{\vN}_{\inf(\vb)}=0$ 
(i.e., there are vertices with finite periodic degrees on both sides, but no vertex with $\infty$ degree),
\item $\normt{\vM}_{\inf(\va)}\neq 0$ or $\normt{\vN}_{\inf(\vb)}\neq 0$
(i.e., there are vertices with infinite degree).
\end{enumerate}

The rest of this section is devoted to the formulas for each of the cases above.
Scenarios (GS1)--(GS3) are similar to (S1)--(S3) in Section~\ref{sec:1type}.
For completeness, we present the formulas for them, but without the correctness proofs.
Scenario (GS4) is a new scenario that is not present in the finite biregular graph case.

\paragraph{The formula and argument for scenario (GS1)}
Consider the formula \\
$\text{Gen-}\psi^{1}_{\va|\vb}(\vx,\vy)$:
\begin{align*}
 \offset(\va)\cdot\vx = \offset(\vb)\cdot\vy\ 
\wedge \ 
\normt{\vx}_{\per(\va)}=\normt{\vx}_{\inf(\va)}=\normt{\vy}_{\per(\vb)}=\normt{\vy}_{\inf(\vb)}=0.
\end{align*}
The last conjunct simply states that (GS1) holds.

\begin{lemma}
\label{app:lem:1type-s1}
For every pair of degree vectors $\va,\vb$ and
for every pair of size vectors $\vM,\vN$ such that $\vM|\vN$ is big enough for $\va|\vb$,
there is an $\va|\vb$-biregular graph with size $\vM|\vN$ where (GS1) holds
if and only if $\text{Gen-}\psi^{1}_{\va|\vb}(\vM,\vN)$ holds.
\end{lemma}
\begin{proof}
The proof is similar to Lemma~\ref{lem:1type-s1}.
\end{proof}

\paragraph{The formula and argument for scenario (GS2)}
Recall that (GS2) states that ``there are vertices with finite periodic degrees on exactly one side, but no vertex with $\infty$ degree''.
By symmetry, we may assume that the vertices with finite periodic degrees are on the left.
Consider the formula $\text{Gen-}\psi^{2}_{\va|\vb}(\vx,\vy)$:
\begin{align*}
& \exists z \big(  z\neq \infty \ \wedge \
\offset(\va)\cdot\vx \ + \ pz \ =\ \offset(\vb)\cdot \vy\big)
\\
\wedge \ & \normt{\vx}_{\per(\va)}\neq 0 \ \wedge \ 
\normt{\vx}_{\inf(\va)}=\normt{\vy}_{\per(\vb)}=\normt{\vy}_{\inf(\vb)}=0.
\end{align*}
The last two conjuncts state that (GS2) holds.
\begin{lemma}
\label{app:lem:1type-s2}
For every pair of degree vectors $\va,\vb$ and
for every pair of size vectors $\vM,\vN$ such that $\vM|\vN$ is big enough for $\va|\vb$, 
there is an $\va|\vb$-biregular graph with size $\vM|\vN$ where (GS2) holds
if and only if $\text{Gen-}\psi^{2}_{\va|\vb}(\vM,\vN)$ holds.
\end{lemma}
\begin{proof}
The proof is similar to Lemma~\ref{lem:1type-s2}.
\end{proof}

\paragraph{The formula and argument for scenario (GS3)}
Recall that (GS3) states that ``there are vertices with finite periodic degrees on both sides, but no vertex with $\infty$ degree''.
Consider the formula $\text{Gen-}\psi^{3}_{\va|\vb}(\vx,\vy)$:
\begin{align*}
& 
\exists z_1,z_2 \big( z_1\neq \infty \ \wedge \ z_2\neq \infty\ \wedge \
\offset(\va)\cdot\vx  +  pz_1  = \offset(\vb)\cdot \vy  +  pz_2 \big)
\\
\wedge \ &  
\normt{\vx}_{\per(\va)}\neq 0 \ \wedge \ \normt{\vy}_{\per(\vb)}\neq 0 \ \wedge \ \normt{\vx}_{\inf(\va)}=\normt{\vy}_{\inf(\vb)}=0.
\end{align*}

\begin{lemma}
\label{app:lem:1type-s3}
For every pair of degree vectors $\va,\vb$ and
for every pair of size vectors $\vM,\vN$ such that $\vM|\vN$ is big enough for $\va|\vb$, 
there is an $\va|\vb$-biregular graph with size $\vM|\vN$ where (GS3) holds
if and only if $\text{Gen-}\psi^{3}_{\va|\vb}(\vM,\vN)$ holds.
\end{lemma}

\begin{proof}
The proof is similar to Lemma~\ref{lem:1type-s3}.
\end{proof}

\paragraph{The formula and argument for scenario (GS4)}
Recall that (GS4) states that ``there are vertices with infinite degree''.
Consider the formula $\text{Gen-}\psi^{4}_{\va|\vb}(\vx,\vy)$:
\begin{align}
\label{app:eq:1type-s4}
& \big(\normt{\vx}_{\inf(\va)}\neq 0 \ \vee \ \normt{\vy}_{\inf(\vb)}\neq 0 \big)
\\
\label{app:eq:1type-s4-a}
\wedge \ & 
\big(\normt{\vx}_{\inf(\va)}\neq 0 \ \to \ \normt{\vy}_{\nz(\vb)}=\infty\big)
\ \wedge \
\big(\normt{\vy}_{\inf(\vb)}\neq 0 \ \to \ \normt{\vx}_{\nz(\va)}=\infty\big).
\end{align}

Notice that, unlike the previous scenarios, this formula does not involve any edge-counting on the finite entries. Instead, we will make use
of the fact that infinite-degree vertices give us a lot of flexibility in forming graphs that meet our specification.
\begin{lemma}
\label{app:lem:1type-s4}
For every pair of degree vectors $\va,\vb$ and
for every pair of size vectors $\vM,\vN$ such that $\vM|\vN$ is big enough for $\va|\vb$, 
there is an $\va|\vb$-biregular graph with size $\vM|\vN$ where (GS4) holds
if and only if $\text{Gen-}\psi^{4}_{\va|\vb}(\vM,\vN)$ holds in $\cN_{\infty}$.
\end{lemma}
\begin{proof}
Let $\va,\vb$  be degree vectors and
$\vM|\vN$ be big enough for $\va|\vb$.
For the ``only if'' direction, suppose there is  an
$\va|\vb$-biregular graph with size $\vM|\vN$ where (GS4) holds.
Thus, $\normt{\vM}_{\inf(\va)}\neq 0$ or $\normt{\vN}_{\inf(\va)}\neq 0$.
If there is a vertex on the left with $\infty$ degree,
there are infinitely many vertices with non-zero degree on the right.
Symmetrically, if there is a vertex on the right with $\infty$ degree,
there are infinitely many vertices with non-zero degree on the left.
Therefore, $\text{Gen-}\psi^4_{\va|\vb}(\vM,\vN)$ holds.

We now prove the ``if'' direction, 
assuming $\text{Gen-}\psi^{4}_{\va|\vb}(\vM,\vN)$ 
and  constructing an $\va|\vb$-biregular graph $G=(U,V,E)$ with size $\vM|\vN$.
Let $m$ be the length of $\va$ and $n$ be the length of $\vb$. 
First, we pick pairwise disjoint sets $U_1,\ldots,U_m$, where each $|U_j|=M_j$
and pairwise disjoint sets $V_1,\ldots,V_n$, where each $|V_j|=N_j$.
We define the set of vertices of our graph as  $U=U_1\cup\cdots\cup U_m$ and $V=V_1\cup\cdots\cup V_n$.

We know  $\normt{\vM}_{\inf(\va)}\neq 0$ or $\normt{\vN}_{\inf(\va)}\neq 0$.
Hence we have at least one of 
$\normt{\vM}_{\inf(\va)}\neq 0$ and $\normt{\vN}_{\nz(\vb)}=\infty$ or
$\normt{\vN}_{\inf(\vb)}\neq 0$ and $\normt{\vM}_{\nz(\va)}=\infty$..

We can break this down further into three cases:
\begin{enumerate}[(a)]
\item 
$U$ is infinite and $V$ is finite.
\item 
$U$ is finite and $V$ is infinite.
\item 
$U$ is infinite and $V$ is infinite.
\end{enumerate}

{\bf (Case~(a))}
We perform the following two steps.
\begin{itemize}
\item
Step~1: Making the degrees of vertices in $V$ correct.

Let $k$ be any  index such that $U_k$ is infinite.
For every $j\in [n]$,
for every vertex $v\in V_j$, we ensure that its degree is $\offset(b_j)$
by connecting $v$ with some ``non-adjacent'' vertices from the set $U_k$ --- that is,
vertices in $U_k$ that are not yet adjacent to any vertices in $V$.
Since $U_k$ has an infinite supply of vertices, 
there are always such ``non-adjacent'' vertices for each vertex $v$.
The purpose of  picking ``non-adjacent'' vertices is that,
after this step, every vertex in $U$ has degree either $0$ or $1$.

\item 
Step~2: Making the degrees of vertices in $U$ correct.

Let $V^{\infty}= \bigcup_{j\in\inf(\vb)} V_j$, i.e.,
the set of vertices in $V$ that are supposed to have $\infty$ degree.
Since $\normt{\vN}_{\inf(\vb)}\neq 0$, the set $V^{\infty}$ is not empty.
Moreover, since $\vM|\vN$ is big enough, the cardinality $|V^{\infty}|\geq \delta(\va,\vb)$.

Note that the degree of every vertex in $U$ is at most $1$.
For every $j\in [m]$, for every vertex $u\in U_j$,
we ensure its degree is $\offset(b_j)$
by connecting $u$ with some vertices in $V^{\infty}$.
This is possible since $\offset(b_j)\leq \delta(\va,\vb)$ for every $j\in [m]$.
\end{itemize}

{\bf (Case~(b))} is clearly symmetric to case~(a).

{\bf (Case~(c))} 
We enumerate the elements $u_1,u_2,\ldots$ and $v_1,v_2,\ldots$ in $U$ and $V$, respectively.
We  construct an $\va|\vb$-biregular graph $G=(U,V,E)$ by iterating through all $\ell=1,2,\ldots$,
where on each iteration $\ell$, we do the following.
\begin{itemize}
\item
We make the degree of $u_{\ell}$ ``correct'' in the sense that
if $j$ is the index where $u_\ell \in U_j$,
we make its degree become $\offset(a_j)$.
\item 
We make the degree of $v_{\ell}$ ``correct'' in the sense that
if $j$ is the index where $v_\ell \in U_j$,
we make its degree become $\offset(b_j)$.
\end{itemize}
At the same time, while making the degrees of $u_{\ell}$ and $v_{\ell}$,
we ensure that:
\begin{enumerate}
\item
The degrees of the vertices $u_1,\ldots,u_{\ell-1}$ do not change
and are already ``correct'' in the sense that
for every $u \in \{u_1,\ldots,u_{\ell-1}\}$,
if $j$ is the index where $u \in U_j$,
its degree is already $\offset(a_j)$.
\item 
The degrees of the vertices $v_1,\ldots,v_{\ell-1}$ do not change
and are already ``correct'' in the sense that
for every $v \in \{v_1,\ldots,v_{\ell-1}\}$,
if $j$ is the index where $v \in V_j$,
its degree is already $\offset(b_j)$.
\item 
The degree of each vertex in $\{u_{\ell+1},u_{\ell+2},\ldots\}\cup \{v_{\ell+1},v_{\ell+2},\ldots\}$
is $0$ or $1$.
\item 
There are infinitely many vertices in $\{u_{\ell+1},u_{\ell+2},\ldots\}$ with degree $0$.
\item 
There are infinitely many vertices in $\{v_{\ell+1},v_{\ell+2},\ldots\}$ with degree $0$.
\end{enumerate}
Since $U$ (resp. $V$) is countable, every vertex $u\in U$ (resp. $v\in V$)
has a finite index $\ell$ such that $u_\ell=u$ (resp. $v_\ell=v$).
After the $\ell^{\text{th}}$ iteration
the degree of $u_\ell$ (resp. $v_{\ell}$) does not change any more.
Thus, as the iteration index $\ell$ goes to $\infty$, 
the degree of every vertex is ``correct'' and we obtain $\va|\vb$-biregular graph $G$.

We now describe how to make the degree of $u_{\ell}$ correct.
At the beginning of the $\myth{\ell}$ iteration,
the degree of $u_{\ell}$ is either $0$ or $1$.
We make it correct by picking some zero degree vertices in $\{v_{\ell+1},v_{\ell+2}\}$ 
and connecting them to $u_\ell$.
Such zero degree vertices exist and there are infinitely many of them.
Of course, if the degree of $u_{\ell}$ is supposed to $1$, 
we do not need to pick any vertices.
If the degree of $u_{\ell}$ is supposed to be $\infty$,
we also make sure that there are still infinitely many zero degree vertices left in $U$.
Observe also that the degrees of the vertices $u_1,\ldots,u_{\ell-1},v_1,\ldots,v_{\ell-1}$ do not change.
Making the degree of $v_{\ell}$ correct can be done symmetrically.
\end{proof}

As in Subsection~\ref{subsubsec:1type-incomplete-big-enough}, 
to capture all possible sizes of $\va|\vb$-biregular graphs
there are only some fixed $k$ cases to consider, where each case is either equal to or symmetric to one of the scenarios (GS1)--(GS4).
We can enumerate all the formulas $\varphi_1(\vx,\vy),\ldots,\varphi_k(\vx,\vy)$ that deal with each of the cases
and define the formula $\text{Gen-}\psi_{\va|\vb}(\vx,\vy)$:
\begin{align}
\label{app:eq:1type-big-enough}
& \bigvee_{i=1}^k \varphi_i(\vx,\vy).
\end{align}
Combining Lemmas~\ref{app:lem:1type-s1}--\ref{app:lem:1type-s4},
$\text{Gen-}\psi_{\va|\vb}(\vx,\vy)$ captures precisely all the  big enough 
sizes $\vM|\vN$ of an  $\va|\vb$-biregular graph.

\subsection{The formula for the case of not big enough sizes: fixed size encoding}
\label{app:subsec:1type-incomplete-not-big-enough}

To capture the not big enough sizes,
we use the same ``fixed size encoding'' technique as in Subsection~\ref{subsubsec:1type-incomplete-not-big-enough}.
Note that ``not big enough'' sizes mean that one of the conditions (a) -- (e) is violated.
So, either we have restricted the total size of the graphs (when condition (a) is violated)
or at least one of the norms $\normt{\vM}_{\per(\va)}$, $\normt{\vM}_{\inf(\va)}$,
$\normt{\vN}_{\per(\vb)}$, $\normt{\vN}_{\inf(\vb)}$ are fixed to some number.
Since we can deal with the first option by enumeration, we focus on the second.
 The idea is that we can use fixed size enumeration as in Subsection~\ref{subsubsec:1type-incomplete-not-big-enough},
with additional minor extensions to handle vertices with $\infty$ degree. 
To illustrate, we will show the construction when the first two of the four norms above are fixed to some number,  while the second two still satisfy the 
corresponding condition in ``big enough''.
This corresponds 
to (b) and (c) being violated, while (a), (d) and (e) hold, in the definition of big enough.
In this case  we will have vertices with periodic and infinite degrees on the left hand side,
but not too many.

Let $\va,\vb$ be degree vectors.
We will give the formula $\text{Gen-}\phi^{r_1,r_2}_{\va|\vb}(\vx,\vy)$
to capture the sizes $\vM|\vN$ of all possible $\va|\vb$-biregular graphs
where each of the following holds.
\begin{itemize}
\item
$\normt{\vM}_{\nz(\va)}-r_1-r_2 \geq 2\delta(\va,\vb)^2+1$.
\item 
$\normt{\vM}_{\per(\va)}=r_1 \leq \delta(\va,\vb)^2$.
\item 
$\normt{\vM}_{\inf(\va)}=r_2 \leq \delta(\va,\vb)-1$.
\item 
$\normt{\vN}_{\per(\vb)}=0$ or $\geq \delta(\va,\vb)^2+1$.
\item 
$\normt{\vN}_{\inf(\vb)}=0$ or $\geq \delta(\va,\vb)$.
\end{itemize}
If the first bullet item does not hold, the number of edges is at most $3\delta(\va,\vb)^2+\delta(\va,\vb)$,
and the sizes of all these graphs can simply enumerated.
The formula is defined inductively on $r_1+r_2$ with the base case $r_1+r_2=0$.
Note that when $r_1+r_2=0$, $\normt{\vM}_{\per(\va)} =  \normt{\vM}_{\inf(\va)}=0$,
which means (b) and (c) are no longer violated.

For an integer $r_1,r_2\geq 0$, we define the formula
$\text{Gen-}\phi^{r_1,r_2}_{\va|\vb}(\vx,\vy)$ as follows.
\begin{itemize}
\item
When $r_1=r_2=0$, 
$\text{Gen-}\phi^{r_1,r_2}_{\va|\vb}(\vx,\vy)$
is defined as in Lemma~\ref{lem:1type-big-enough}.
\item
When $r_1\geq 1$, let 
\begin{align*}
&
\phi^{r_1-1,r_2}_{\va|\vb}(\vx,\vy) \ := \
\exists s \exists \vz_0 \exists \vz_1 
\bigvee_{i\in \per(\va)}
\left(
\begin{array}{l}
x_i \neq 0 \ \wedge \
\vz_0+\vz_1 =\vy 
\\
\wedge \ s\neq \infty 
\\ \wedge \ \normt{\vz_1}_{\nz(\vb)} = \offset(a_i) + ps
\\ 
\wedge \ \phi^{r_1-1,r_2}_{\va|(\vb,\vb-\vone)}(\vx-\textbf{e}_i,\vz_0,\vz_1)
\end{array}
\right),
\end{align*}
where $\textbf{e}_i$ is the unit vector where the $\myth{i}$ component is $1$.
and the lengths of $\vz_0$ and $\vz_1$ are the same as $\vy$
The vector subtraction $\vb-\vone$ is defined as in Subsection~\ref{subsubsec:1type-incomplete-not-big-enough}
extended with $\infty-1=\infty$.

\item
When $r_2\geq 1$, let 
\begin{align*}
&
\phi^{r_1,r_2-1}_{\va|\vb}(\vx,\vy) \ := \
\exists s \exists \vz_0 \exists \vz_1 
\bigvee_{i\in \inf(\va)}
\left(
\begin{array}{l}
x_i \neq 0 \ \wedge \
\vz_0+\vz_1 =\vy 
\\ \wedge \  \normt{\vz_1}_{\nz(\vb)} = \infty
\\ 
\wedge \ \phi^{r_1,r_2-1}_{\va|(\vb,\vb-\vone)}(\vx-\textbf{e}_i,\vz_0,\vz_1)
\end{array}
\right),
\end{align*}
where $\textbf{e}_i$ is as in the previous case and
the lengths of $\vz_0$ and $\vz_1$ are the same as $\vy$.
The vector subtraction $\vb-\vone$ is defined as in the previous case.
\end{itemize}

\begin{lemma}
\label{app:lem:1type-not-big-enough}
For every integer $r_1,r_2\geq 0$,
for every pair of degree vectors $\va,\vb$,
for every pair of size vectors $\vM,\vN$ such that:
\begin{itemize}
\item
$\normt{\vM}_{\nz(\va)}-r_1-r_2 \geq 2\delta(\va,\vb)^2+1$,
\item 
$\normt{\vM}_{\per(\va)}=r_1$,
\item 
$\normt{\vM}_{\inf(\va)}=r_2$,
\item 
$\normt{\vN}_{\per(\vb)}=0$ or $\geq \delta(\va,\vb)^2+1$,
\item 
$\normt{\vN}_{\inf(\vb)}=0$ or $\geq \delta(\va,\vb)$.
\end{itemize}
there is an $\va|\vb$-biregular graph
with size $\vM|\vN$ if and only if
$\phi^{r_1,r_2}_{\va|\vb}(\vM,\vN)$ holds.
\end{lemma}
\begin{proof}
The proof is by induction on $r_1+r_2$
and is a routine adaptation of Lemma~\ref{lem:1type-not-big-enough}.
\end{proof}

As mentioned  in Subsection~\ref{subsubsec:1type-incomplete-not-big-enough},
the remaining  not big enough cases can be captured by formulas similar to the  one given above.

\subsection{The proof in the $1$-color case for regular digraphs}
\label{app:subsec:1type-digraphs}

Recall that we define digraphs so that they have no self-loops. 
Similar to what was done in Subsection ~\ref{subsec:1type-digraphs},
a digraph can be viewed as a bipartite graph by splitting every vertex $u$ into two vertices,
where one is adjacent to all the incoming edges, and the other to all the outgoing edges.
Thus, $\va|\vb$-biregular digraphs with size $\vM$ can be characterized as $\va|\vb$-biregular graphs with size $\vM|\vM$.
The construction of the formula for all the sizes of $\va|\vb$-regular digraphs
can be done by a routine adaptation of the one in
Section ~\ref{app:subsec:1type-incomplete-big-enough} and~\ref{app:subsec:1type-incomplete-not-big-enough}.


\newcommand{\fnzr}{{F_{\text{nz-right}}}}
\newcommand{\fnzl}{{F_{\text{nz-left}}}}

\section{The extension of Section~\ref{sec:proofsimple} (simple multi-color graphs) to infinite graphs}
\label{app:sec:proofsimple}

We will extend the formulas in Section ~\ref{sec:proofsimple}
to accommodate  all possible (finite and infinite) sizes of $A|B$-biregular graphs,
where $A$ and $B$ are simple degree matrices, as stated formally in Lemma~\ref{app:lem:simple-bireg}.

\begin{lemma}
\label{app:lem:simple-bireg}
For every pair of simple matrices $A\in \bbN_{\infty,+p}^{t\times m}$ and $B\in \bbN_{\infty,+p}^{t\times m}$,
there exists an (effectively computable) existential 
Presburger formula $\bireg_{A|B}(\vx,\vy)$ such that 
for every pair of size vectors $\vM\in \bbNo^m$ and $\vN \in \bbNo^n$,
there is an $A|B$-biregular graph with size $\vM|\vN$
if and only if $\bireg_{A|B}(\vM,\vN)$ holds in $\cN_{\infty}$.
\end{lemma}

This section is organized as follows.
To deal with an  $\infty$ entry, we need some new notation, introduced in Section~\ref{app:subsec:proof-notations}.
Section \ref{app:subsec:simple-bireg-big-enough} contains the construction of the formula for extra big enough sizes ---
a generalization of the ones in Section~\ref{subsec:simple-big-enough-fixed-entries} and Section ~\ref{subsec:simple-bireg-extra-big-enough}. Here there
is a new case which is specific to an $\infty$ entry.
We discuss the formula for the sizes that are not extra big enough --- where no new ideas are needed --- in Section \ref{app:subsec:simple-not-big-enough}.

\subsection{Notation and terminology}
\label{app:subsec:proof-notations}

Let $A$ be a degree matrix with $t$ rows and $m$ columns.
For non-empty subsets $R\subseteq [t]$,
we write $A_{R,*}$ to denote the matrix obtained by keeping only the rows
with indices in $R$, with no column being omitted.
Likewise, for $J\subseteq [m]$, $A_{*,J}$ denotes 
the matrix obtained by keeping only the columns with indices in $J$, with no rows being omitted.

Recall that we regard an $\infty$ entry as a periodic entry.
The {\em finite offset} of $A$, denoted by $\finoffset(A)$
is the matrix obtained by replacing every $\infty$ entry in $\offset(A)$ with $0$.
That is, in $\finoffset(A)$ we are concerned only with the non-$\infty$ entries.
For example, if $A = \begin{pmatrix}\prdp{2} & \infty \\ 0 &\prdp{3}\end{pmatrix}$,
then $\offset(A)=\begin{pmatrix}2 & \infty \\ 0 & 3\end{pmatrix}$
and $\finoffset(A)= \begin{pmatrix}2 & 0 \\ 0 & 3\end{pmatrix}$.
Obviously, if $A$ does not contain any $\infty$ entry, $\offset(A)=\finoffset(A)$.

If $A$ and $B$ contain periodic or infinite entries,
$\delta(A,B)$ denotes\\
 $\max(\norm{\finoffset(A)},\norm{\finoffset(B)},p)$.
We also have to modify the notion of ``extra big enough'' in Section ~\ref{subsec:proof-notations}
in order to take the $\infty$ entries into account.

\begin{definition}
\label{app:def:big-enough}
Let $A$ and $B$ be simple degree matrices with $t$ rows.
Let $\vM$ and $\vN$ be size vectors where $\vM|\vN$ is appropriate for $A|B$.
We say that {\em $\vM|\vN$ is extra big enough for $A|B$}, if for every $i\in [t]$:
\begin{enumerate}[(a)]
\item 
$\max(\normt{\vM}_{\nz(A_{i,*})},\normt{\vN}_{\nz(B_{i,*})})\geq 8t^2\delta(A,B)^4+1$,
\item 
$\normt{\vM}_{\per(A_{i,*})}=0$ or $\geq \delta(A,B)^2+1$,
\item 
$\normt{\vM}_{\inf(A_{i,*})}=0$ or $\geq t\delta(A,B)$,
\item 
$\normt{\vN}_{\per(B_{i,*})}=0$ or $\geq \delta(A,B)^2+1$,
\item 
$\normt{\vN}_{\inf(B_{i,*})}=0$ or $\geq t\delta(A,B)$.
\end{enumerate}
\end{definition}

\subsection{Proof of Lemma~\ref{app:lem:simple-bireg} for big enough sizes}
\label{app:subsec:simple-bireg-big-enough}

We divide the proof into three  scenarios.
\begin{enumerate}[(GM1)]
\item 
$\normt{\vM}_{\nz(A_{i,*})},\normt{\vN}_{\nz(B_{i,*})}\neq \infty$, for every $i\in [t]$ 
(i.e., the number of vertices with non-zero degree is finite, which means that the degree of every vertex must be finite).
\item 
$\normt{\vM}_{\inf(A_{i,*})}=\normt{\vN}_{\inf(B_{i,*})}=0$, for every $i\in [t]$ 
(i.e., the degree of every vertex is finite, but there maybe infinitely many vertices).
\item 
(the general case).
\end{enumerate}
Note that (GM1) is strictly subsumed by (GM2), since in (GM2) 
the number of vertices with non-zero (finite) degree may be infinite.
(GM2) is clearly strictly subsumed by (GM3).
The rest of this section is devoted to the formulas for each of the scenarios above.
The formula for Scenario (GM${i}$) will be used by the formula for Scenario (GM${i+1}$).
Only (GM3), which deals with the possibility of vertices with infinite degree, will require substantial new work.

\paragraph{The formula and argument for scenario (GM1)}
For simple degree matrices $A$ and $B$ with $t$ rows,
consider the formula $\text{Gen-}\Psi^{1}_{A|B}(\vx,\vy)$ given by:
\begin{align}
\label{app:eq:simple-s2}
&
\exists z_{1,1}\cdots \exists z_{1,t}\
\exists z_{2,1}\cdots \exists z_{2,t}
\\
\nonumber
& 
\quad
\offset(A)\cdot \vx^{\tT} + \begin{pmatrix}\alpha_1 p z_{1,1}\\ \vdots \\ \alpha_t p z_{1,t}\end{pmatrix}
\ = \
\offset(B)\cdot \vy^{\tT} + \begin{pmatrix}\beta_1 p z_{2,1}\\ \vdots \\ \beta_t p z_{2,t}\end{pmatrix}
\\
\nonumber
& 
\quad
\wedge \
\bigwedge_{i\in [t]}\ z_{1,i} \neq \infty\  \wedge\  z_{2,i} \neq \infty \ \wedge \ \normt{\vx}_{\nz(A_{i,*})}\neq \infty
\ \wedge \ \normt{\vy}_{\nz(B_{i,*})}\neq \infty
\\
\nonumber
& 
\quad
\wedge \
\bigwedge_{i\in [t]} \normt{\vx}_{\inf(A_{i,*})}= \normt{\vy}_{\inf(B_{i,*})}=0
\end{align}
where $\alpha_i = 1$ if row $i$ in $A$ consists of periodic entries and is $0$ otherwise,.
Similarly $\beta_i = 1$ if row $i$ in $B$ consists of periodic entries and is $0$ otherwise.

We claim that $\text{Gen-}\Psi^1_{A|B}(\vx,\vy)$ captures all possible big enough sizes $\vM|\vN$ of $A|B$-biregular graphs
where (GM1) holds, as stated in Lemma~\ref{app:lem:simple-s2}.

\begin{lemma}
\label{app:lem:simple-s2}
For each pair of simple degree matrices $A$ and $B$, and
each pair of size vectors $\vM,\vN$ such that $\vM|\vN$ is big enough for $A|B$, 
there is a $A|B$-biregular graph with size $\vM|\vN$ where (GM1) holds if and only if
$\Psi_{A|B}^1(\vM,\vN)$ holds.
\end{lemma}
\begin{proof}
This is similar to Lemma~\ref{lem:simple-s2}.
\end{proof}

\paragraph{The formula and argument for scenario (GM2)}
Recall that (GM2) states that there is no vertex with $\infty$ degree,
but the number of edges may be infinite.
The main idea for this scenario is to partition the edge colors into two kinds,
depending on whether the number of edges is finite or infinite.

For simple matrices $A$ and $B$ with $t$ rows, for a subset $R\subseteq [t]$, 
consider the formula $\text{Gen-}\Psi^{2,R}_{A|B}(\vx,\vy)$ given by:
\begin{align}
\label{app:eq:simple-s3-aux-a}
\text{Gen-}\Psi^1_{A_{R,*}|B_{R,*}}(\vx,\vy) 
\ 
\wedge \ &
\bigwedge_{i\in[t]} \normt{\vx}_{\inf(A_{i,*})}= \normt{\vy}_{\inf(B_{i,*})}=0
\\
\label{app:eq:simple-s3-aux-b}
\wedge \ &
\bigwedge_{i\notin R} \normt{\vx}_{\nz(A_{i,*})}= \normt{\vy}_{\nz(B_{i,*})}=\infty,
\end{align}
where
$\text{Gen-}\Psi^1_{A_{R,*}|B_{R,*}}(\vx,\vy)$ is as defined in \eqref{app:eq:simple-s2}.
Recall that $A_{R,*}$ is the matrix obtained from $A$ by omitting all the rows not in $R$.
When $R=[t]$, the formula $\text{Gen-}\Psi^{2,R}_{A|B}(\vx,\vy)$ is the same as 
$\text{Gen-}\Psi^1_{A|B}(\vx,\vy)$ defined for Scenario (GM1).

Intuitively, 
$\text{Gen-}\Psi^{2,R}_{A|B}(\vx,\vy)$ captures all the big enough sizes of $A|B$-biregular graphs
where (GM2) holds and $i\in R$ if and only if the number of $E_i$-edges is finite.
This is stated formally as Lemma~\ref{app:lem:simple-s3-aux}.

\begin{lemma}
\label{app:lem:simple-s3-aux}
For every pair of simple matrices $A$ and $B$ with $t$ rows,
for every $R\subseteq [t]$ and
for every $\vM|\vN$ big enough for $A|B$, 
the following are equivalent.
\begin{enumerate}
\item 
There is an $A|B$-biregular graph $G=(U,V,E_1,\ldots,E_t)$ with size $\vM|\vN$
where (GM2) holds and $R=\{i : |E_i| \neq \infty\}$. 
\item 
$\text{Gen-}\Psi^{2,R}_{A|B}(\vM,\vN)$ holds in $\cN_{\infty}$.
\end{enumerate}
\end{lemma}
\begin{proof}
Let $A$ and $B$ be simple matrices with $t$ rows
and $R\subseteq [t]$.
Let $\vM|\vN$ be big enough for $A|B$. 

We first prove ``1 implies 2''.
Suppose $G=(U,V,E_1,\ldots,E_t)$ is a $A|B$-biregular graph with size $\vM|\vN$,
where (GM2) holds and $R =\{i : |E_i|\neq \infty\}$.
For every $i\in R$, since $|E_i|\neq \infty$,
both $\normt{\vM}_{\nz(A_{i,*})}$ and $\normt{\vN}_{\nz(B_{i,*})}$ are not $\infty$.
This means that $G$ is an $A_{R,*}|B_{R,*}$-biregular graph where (GM1) holds.
By Lemma \ref{lem:simple-s2}, 
$\text{Gen-}\Psi^2_{A_{R,*}|B_{R,*}}(\vM,\vN)$ holds.

Since (GM2) holds in $G$, there is no vertex with $\infty$ degree.
Thus,  we have:
\begin{align*}
\bigwedge_{i\in[t]} \normt{\vM}_{\inf(A_{i,*})}= \normt{\vN}_{\inf(B_{i,*})}=0
\end{align*}
Since every vertex has finite $E_i$-degree and $|E_i|=\infty$, for every $i\notin R$,
the following conjunction holds:
\begin{align*}
\bigwedge_{i\notin R} \normt{\vM}_{\nz(A_{i,*})}=\normt{\vN}_{\nz(B_{i,*})}=\infty
\end{align*}
Combining all the assertions above, 
we see that
$\text{Gen-}\Psi^{2,R}_{A|B}(\vM,\vN)$ holds.

Now we prove ``2 implies 1''.
Suppose $\text{Gen-}\Psi^{2,R}_{A|B}(\vM,\vN)$ holds.
For simplicity, let $R=[\ell]$.
Similar to Lemma~\ref{app:lem:simple-s2},
since $\normt{\vx}_{\inf(A_{i,*})}= \normt{\vy}_{\inf(B_{i,*})}=0$
for every $i\in [t]$,
we may assume that $A$ and $B$ do not contain $\infty$ entry.

Since $\text{Gen-}\Psi^1_{A_{R,*}|B_{R,*}}(\vM,\vN)$ holds,
by Lemma~\ref{app:lem:simple-s2}
there is an $A_{R,*}|B_{R,*}$-biregular graph $G_0=(U,V,E_1,\ldots,E_{\ell})$
with size $\vM|\vN$ where (GM2) holds.
Moreover, by \eqref{app:eq:simple-s3-aux-b}, we have:
$$
\bigwedge_{i\notin R} \normt{\vM}_{\nz(A_{i,*})}= \normt{\vN}_{\nz(B_{i,*})}=\infty.
$$
Hence for every $i\notin R$ we have:
\begin{align*}
\offset(A_{i,*})\cdot \vM \ = \ \offset(B_{i,*})\cdot \vN.
\end{align*}
By Lemma~\ref{app:lem:1type-s1},
there is an $\offset(A_{i,*})|\offset(B_{i,*})$-biregular graph $G_i=(U,V,E_i)$
with size $\vM|\vN$, for every color $i\notin R$.

Consider the graph $G=(U,V,E_1,\ldots,E_t)$ with size $\vM|\vN$.
This graph $G$ is almost $A|B$-biregular except that
there may be an edge $(u,v)\in E_{i_1}\cap E_{i_2}$, for some $i_1\neq i_2$.
We use the ``edge swapping'' as in Lemma~\ref{lem:simple-s1}, 
to remove all such parallel edges.

Note that since $G_0$ is already $A_{R,*}|B_{R,*}$-biregular,
at least one of $i_1,i_2$ is not in $R$.
Suppose $i_1\notin R$.
Since $|E_{i_1}|=\infty$ and the degree of every vertex in $G$ is finite,
there is an $E_{i_1}$-edge $(w,w')$ that is not incident to any of the neighbors of $u$ and $v$.
We can perform edge swapping (see also Fig.~\ref{fig:simple-s1-swap})
so that $(u,v)$ is no longer an $E_{i_1}$-edge, without effecting the degree of each vertex.
We perform edge swapping until there is no more parallel edge.
This completes the proof of Lemma~\ref{app:lem:simple-s3-aux}. 
\end{proof}

To wrap up scenario (GM2),
we define formula $\text{Gen-}\Psi^2_{A|B}(\vx,\vy)$ for simple matrices $A$ and $B$ as
\begin{align}
\label{app:eq:simple-s3}
& \bigvee_{R\subseteq [t]}\quad  \text{Gen-}\Psi^{2,R}_{A|B}(\vx,\vy),
\end{align}
where each $\text{Gen-}\Psi^{2,R}_{A|B}(\vx,\vy)$ is defined in \eqref{app:eq:simple-s3-aux-a}--\eqref{app:eq:simple-s3-aux-b}.
This formula $\text{Gen-}\Psi^2_{A|B}(\vx,\vy)$ captures precisely all the extra big enough sizes of $A|B$-biregular graphs
where (GM2) holds, as stated formally as Lemma~\ref{app:lem:simple-s3}.

\begin{lemma}
\label{app:lem:simple-s3}
For each pair of simple matrices $A$ and $B$, and
for each pair of size vectors $\vM,\vN$ such that $\vM|\vN$ is big enough for $A|B$, 
there is an $A|B$-biregular graph with size $\vM|\vN$
where (GM2) holds iff
$\text{Gen-}\Psi^2_{A|B}(\vM,\vN)$ holds in $\cN_{\infty}$.
\end{lemma}
\begin{proof}
Let $A$ and $B$ be simple matrices $A$ and $B$ with $t$ rows
and $\vM|\vN$ be big enough for $A|B$.

Suppose there is an $A|B$-biregular graph $G=(U,V,E_1,\ldots,E_t)$ with size $\vM|\vN$
where (GM2) holds.
Let $R =\{i : |E_i|\neq \infty\}$.
By Lemma~\ref{app:lem:simple-s3-aux}, $\text{Gen-}\Psi^{2,R}_{A|B}(\vM,\vN)$ holds.
Thus, $\text{Gen-}\Psi^2_{A|B}(\vM,\vN)$ holds.

For the converse direction, suppose $\text{Gen-}\Psi^2_{A|B}(\vM,\vN)$ holds.
Let $R$ be such that $\text{Gen-}\Psi^{2,R}_{A|B}(\vM,\vN)$ holds.
By Lemma~\ref{app:lem:simple-s3-aux}, there is an $A|B$-biregular graph
$G=(U,V,E_1,\ldots,E_t)$ with size $\vM|\vN$
where (GM2) holds and $R =\{i : |E_i|\neq \infty\}$.
\end{proof}

\paragraph{The formula and argument for scenario (GM3)}
Recall that (GM3) is the general case where there may be vertices with infinite degree.
The main idea here is to partition the edge colors $E_i$ into two kinds, but this
time depending on whether there are vertices with infinite $E_i$-degree.
Let $A$ and $B$ be simple matrices with $t$ rows
and $R\subseteq [t]$.
Consider the formula $\text{Gen-}\Psi^{3,R}_{A|B}(\vx,\vy)$ given by:
\begin{align}
\label{app:eq:simple-s4-aux}
 \text{Gen-}\Psi^2_{A_{R,*}|B_{R,*}}(\vx,\vy)\  \wedge \  & \bigwedge_{i\notin R} 
\big(\normt{\vx}_{\inf(A_{i,*})}\neq 0 \ \vee \ \normt{\vy}_{\inf(B_{i,*})} \neq 0\big)
\\
\label{app:eq:simple-s4-aux1}
\wedge \ &
\bigwedge_{i\notin R} 
\normt{\vx}_{\inf(A_{i,*})}\neq 0 \ \to \ \normt{\vy}_{\nz(B_{i,*})} =\infty 
\\
\label{app:eq:simple-s4-aux2}
\wedge \ &
\bigwedge_{i\notin R} 
\normt{\vy}_{\inf(B_{i,*})}\neq 0 \ \to \ \normt{\vx}_{\nz(A_{i,*})} =\infty,
\end{align}
where $\text{Gen-}\Psi^2_{A_{R,*}|B_{R,*}}(\vx,\vy)$ is as defined in \eqref{app:eq:simple-s3}.
When $R=[t]$, $\text{Gen-}\Psi^{3,R}_{A|B}(\vx,\vy)$ 
is the same as $\text{Gen-}\Psi^2_{A|B}(\vx,\vy)$ defined for Scenario (GM2).

Intuitively, $\text{Gen-}\Psi^{3,R}_{A|B}(\vx,\vy)$ captures all the big enough sizes of $A|B$-biregular graphs
where $R$ is the set of colors $i$ such that every vertex has finite $E_i$-degree.
We state this formally in Lemma~\ref{app:lem:simple-s4-aux}.

\begin{lemma}
\label{app:lem:simple-s4-aux}
For every pair of simple matrices $A$ and $B$ with $t$ rows,
for every $R\subseteq [t]$, and
for every pair of size vectors $\vM,\vN$ such that $\vM|\vN$ is extra big enough for $A|B$, 
the following are equivalent.
\begin{enumerate}
\item 
There is an $A|B$-biregular graph $G=(U,V,E_1,\ldots,E_t)$ with size $\vM|\vN$
where $R=\{i : \text{every vertex in $G$ has finite $E_i$-degree}\}$. 
\item 
$\text{Gen-}\Psi^{3,R}_{A|B}(\vM,\vN)$ holds in $\cN_{\infty}$.
\end{enumerate}
\end{lemma}
\begin{proof}
Let $A$ and $B$ be simple matrices with $t$ rows
and let $R\subseteq [t]$.
Let $\vM|\vN$ be extra big enough for $A|B$. 

We first prove ``1 implies 2''.
Let $G$ be an $A|B$-biregular graph with size $\vM|\vN$
and $R=\{i : \text{every vertex in $G$ has finite $E_i$-degree}\}$. 
Thus, $G$ is also an $A_{R,*}|B_{R,*}$-biregular graph where (GM2) holds.
By Lemma~\ref{app:lem:simple-s3}, $\text{Gen-}\Psi^{2}_{A_{R,*}|B_{R,*}}(\vM,\vN)$ holds.

By definition of $R$, for every $i\notin R$, we have:
$$
\normt{\vM}_{\inf(A_{i,*})}\neq 0\quad\text{or}\quad\normt{\vN}_{\inf(B_{i,*})}\neq 0.
$$
If there is a vertex on one side with $E_i$-degree $\infty$,
then there must be infinitely many vertices on the other side with non-zero $E_i$-degree.
In other words, for every $i\notin R$,
\begin{align*}
\normt{\vM}_{\inf(A_{i,*})}\neq 0 \ \to \ \normt{\vN}_{\nz(B_{i,*})} =\infty, \text{ and } 
\\
\normt{\vN}_{\inf(B_{i,*})}\neq 0 \ \to \ \normt{\vM}_{\nz(A_{i,*})} =\infty.
\end{align*}
Therefore, the formula $\text{Gen-}\Psi^{3,R}_{A|B}(\vM,\vN)$ holds.

We now prove ``2 implies 1''.
Suppose $\text{Gen-}\Psi^{3,R}_{A|B}(\vM,\vN)$ holds.
For simplicity, we may assume $R=[\ell]$.
Since $\text{Gen-}\Psi^2_{A_{R,*}|B_{R,*}}(\vM,\vN)$ holds, by Lemma~\ref{app:lem:simple-s3},
there is an $A_{R,*}|B_{R,*}$-biregular graph $G_0=(U,V,E_1,\ldots,E_{\ell})$
with size $\vM|\vN$ where (GM2) holds.
We will show how to extend $G_0$ to an $A|B$-biregular graph $G$ with size $\vM|\vN$.
Without loss of generality, we may assume that $R\neq [t]$.
Otherwise, $G_0$ is already $A|B$-biregular and we are done.

In the following, we fix $U=U_1\uplus\cdots\uplus U_m$ and $V=V_1\uplus \cdots \uplus V_n$
as the witness partition of $A_{R,*}|B_{R,*}$-biregularity of the graph $G_0$.
We will add new edges to make $G_0$ into an  $A|B$-biregular graph.
In the following, when we say {\em ``we make the $E_i$-degree of a vertex $u\in U$ correct''},
we mean that we will add $E_{i}$-edges adjacent to $u$ so that its $E_i$-degree becomes $\offset(A_{i,j})$
where $j$ is the index such that $u \in U_j$.
Similarly for vertex $v\in V$.

We can break this down further into three cases -- analogous to Scenario~(GS4):
\begin{enumerate}[(a)]
\item 
$U$ is infinite and $V$ is finite.
\item 
$U$ is finite and $V$ is infinite.
\item 
$U$ is infinite and $V$ is infinite.
\end{enumerate}
Case~(b) is symmetric to case~(a), so we will only consider case~(a) and~(c).

{\bf (Case~(a))}
This case is a straightforward generalization of case~(a) in (GS4).
We perform the following two steps.
\begin{itemize}
\item
Step~1: Making the $E_i$-degrees of vertices in $V$ correct for every $i\notin R$.

For each $i\notin R$,
let $k_i$ be any index such that $U_{k_i}$ is infinite.
For every $j\in [n]$,
for every vertex $v\in V_j$, we ensure that its degree is $\offset(B_{i,j})$
by connecting $v$ with some ``non-adjacent'' vertices from the set $U_{k_i}$ --- that is,
vertices in $U_{k_i}$ that are not yet adjacent to any vertices in $V$.
Since $U_{k_i}$ has an infinite supply of vertices, 
there are always such ``non-adjacent'' vertices for each vertex $v$.
The purpose of  picking ``non-adjacent'' vertices is that,
after this step, for every vertex $u \in U$ the sum $\sum_{i\notin R}$ (the $E_i$-degree of $u$)
is either $0$ or $1$.

\item 
Step~2: Making the degrees of vertices in $U$ correct.

For each $i\notin R$, let $V^{i,\infty}= \bigcup_{j\in\inf(B_{i,*})} V_j$, i.e.,
the set of vertices in $V$ that are supposed to have $\infty$ $E_i$-degree.
Since $\normt{\vN}_{\inf(B_{i,*})}\neq 0$, the set $V^{i,\infty}$ is not empty.
Moreover, since $\vM|\vN$ is extra big enough, the cardinality $|V^{i,\infty}|\geq t\delta(A,B)$.

Note that for each $i\notin R$, 
the sum $\sum_{i\notin R}$ (the $E_i$-degree of $u$) of every vertex $u \in U$
is at most $1$.
For every $j\in [m]$, for every vertex $u\in U_j$,
we ensure its $E_i$-degree is $\offset(B_{i,j})$,
by connecting $u$ with some vertices in $V^{i,\infty}$, for every $i\notin R$.
This is possible since $\offset(B_{i,j})\leq \delta(A,B)$ for every $j\in [m]$.
\end{itemize}

{\bf (Case~(c))} 
For each $i\in [t]$, define the sets.
\begin{align*}
U^{i,\nz}:= & \bigcup_{j\in \nz(A_{i,*})} U_j
\qquad\qquad\text{and}\qquad\qquad
V^{i,\nz}:= & \bigcup_{j\in \nz(B_{i,*})} V_j,
\\
U^{i,\infty}:=& \bigcup_{j\in \inf(A_{i,*})} U_j
\qquad\qquad\text{and}\qquad\qquad
V^{i,\infty}:=& \bigcup_{j\in \inf(B_{i,*})} V_j;
\end{align*}
Informally, $U^{i,\nz}$ and $V^{i,\nz}$ are the sets of vertices in $U$ and $V$ whose $E_i$-degree is supposed to be non-zero,
while $U^{i,\infty}$ and $V^{i,\infty}$ are the vertices in $U$ and $V$ whose $E_i$-degree is supposed to be $\infty$.

Note that for each $i\notin R$,
there are supposed to be vertices with infinite $E_i$-degree, which gives us a lot of flexibility in constructing the $E_i$-edges.
We can use a technique similar to the one in Scenario~(GS4) from the previous appendix, which handled the case where some vertex has
infinite degree in the single-color case.
Note that for each $i\notin R$, we have one of:
\begin{enumerate}[(a)]
\item 
$U^{i,\nz}$ is infinite and $V^{i,\nz}$ is finite.
\item 
$U^{i,\nz}$ is finite and $V^{i,\nz}$ is infinite.
\item 
$U^{i,\nz}$ is infinite and $V^{i,\nz}$ is infinite.
\end{enumerate}
That is, (a) holds for a subset of the  colors,
(b) holds for another subset,
and (c) holds for the remaining colors.
Constructing the $E_i$-edges by itself for each $i\notin R$ is comparatively easy, as shown in Scenario (GS4).
The main technical difficulty occurs when we try to make sure that the sets of constructed edges are still pairwise disjoint.
Note also that here we do not have any guarantees about how big the partitions and degrees are in $G_0$. This limits us in
using techniques such as ``edge swapping'', which rely on having sufficiently many available edges.

In the following paragraphs, we will illustrate the new obstacle that arises.
Suppose there are $i_1,i_2\notin R$, where $i_1\neq i_2$, such that:
\begin{itemize}
\item
$U^{i_1,\nz}$ is finite and $U^{i_2,\nz}$ is infinite.
\item
$V^{i_1,\nz}$ is infinite and $V^{i_2,\nz}$ is finite.
\end{itemize}
Since $i_1\notin R$,
the set $U^{i_1,\nz}$ contains vertices that are supposed to have infinite $E_{i_1}$-degree.
Similarly, since $i_2\notin R$,
the set $V^{i_2,\nz}$ contain vertices that are supposed to have infinite $E_{i_2}$-degree.
Assume, for convenience, that $U^{i_1,\nz}\subseteq U^{i_2,\nz}$
and $V^{i_2,\nz}\subseteq V^{i_1,\nz}$.
See Figure~\ref{fig:infinity-GM4} for an illustration.

\begin{figure}
\begin{center}

\begin{tikzpicture}
\draw (0,0) ellipse (1cm and 2cm);
\node  at (-1,1.8) {$U^{i_2,\nz}$}; 

\draw[gray!40,fill=gray!40] (0,-0.5) ellipse (.6cm and 1cm);
\node at (0,-0.5) {$U^{i_1,\nz}$};

\draw[->,red] (0,-0.8) to [bend right] (0,-2.5);
\node at (0,-2.7) {\footnotesize $U^{i_1,\nz}$ is finite and};
\node at (0,-3.1) {\footnotesize some vertices have $\infty$ $E_{i_1}$-degree};

\draw (7,0) ellipse (1cm and 2cm);
\node  at (8,1.8) {$V^{i_1,\nz}$}; 

\draw[gray!40,fill=gray!40] (7,-0.5) ellipse (.6cm and 1cm);
\node at (7,-0.5) {$V^{i_2,\nz}$};

\draw[->,red] (7,-0.8) to [bend left] (7,-2.5);
\node at (7,-2.7) {\footnotesize $V^{i_2,\nz}$ is finite and};
\node at (7,-3.1) {\footnotesize some vertices have $\infty$ $E_{i_2}$-degree};

\end{tikzpicture}

\end{center}
\label{fig:infinity-GM4}
\caption{An illustration of the sets $U^{i_1,\nz}$, $U^{i_2,\nz}$, $V^{i_1,\nz}$ and $V^{i_2,\nz}$.
When constructing the $E_{i_1}$-edges, we exploit the infinite $E_{i_1}$-degree vertices
in $U^{i_1,\nz}$.
Similarly, when constructing the $E_{i_2}$-edges, we exploit the infinite $E_{i_2}$-degree vertices
in $V^{i_2,\nz}$.
However, we have to make sure to avoid the possibility that 
every vertex in $U^{i_1,\nz}$ is already adjacent to every vertex in $V^{i_2,\nz}$ via $E_{i_1}$-edges,
thus, leaving ``no room'' to connect them via $E_{i_2}$-edges.}
\end{figure}

If we construct the $E_{i_1}$-edges as in Scenario (GS4),
by connecting the vertices in $V^{i_2,\nz}$ with the vertices in $U^{i_1,\nz}$ with $E_{i_1}$-edges,
there is a possibility that every vertex in $U^{i_1,\nz}$ is adjacent to every vertex in $V^{i_2,\nz}$ via $E_{i_1}$-edges.
Thus, when we want to construct the $E_{i_2}$-edges,
we can no longer connect the vertices in $U^{i_1,\nz}$ with the vertices in $V^{i_2,\nz}$ with $E_{i_2}$-edges, 
but the vertices in $V^{i_2,\nz}$ are the only vertices in $G$ that are supposed to have non-zero $E_{i_2}$-degree.
In other words, there is no more ``room'' to construct the $E_{i_2}$-edges.
This issue will be circumvented by partitioning $U^{i_1,\infty}= X_0\uplus X_1$
and $V^{i_2,\infty} = Y_0\cup Y_1 $ and
 constructing the $E_{i_1}$-edges so that:
\begin{itemize}
\item
Vertices in $X_0$ are connected by $E_{i_1}$-edges only to  vertices in $Y_0$.
\item 
Vertices in $X_1$ are connected by $E_{i_1}$-edges only to  vertices in $Y_1$.
\end{itemize}
Then when we construct the $E_{i_2}$-edges,
we will connect the vertices in $X_0$ with the vertices in $Y_1$
and the vertices in $X_1$ with the vertices in $Y_0$.
The rest of the proof is devoted to the details of the construction.

Due to the technical difficulty described above, 
the following two sets of colors $\fnzl,\fnzr\subseteq [t]$ will need some special care.
\begin{align*}
\fnzl := &\ \{i\notin R : \normt{\vM}_{\nz(A_{i,*})}\ \text{is finite}\},
\\
\fnzr := &\ \{i\notin R : \normt{\vN}_{\nz(B_{i,*})}\ \text{is finite}\}.
\end{align*}
Intuitively, the set $\fnzl$ is the set of color $i\notin R$
where there will only be {\bf F}initely many vertices with {\bf n}on-{\bf z}ero $E_i$-degree on the {\bf left} hand side.
The set $\fnzr$ has the same intuitive meaning w.r.t to the vertices on the right hand side.

We argue that $\fnzl$ and $\fnzr$ are disjoint.
For every $i\notin R$, at least one of $\normt{\vM}_{\nz(A_{i,*})}$ or
$\normt{\vN}_{\nz(B_{i,*})}$ is infinite.
Moreover, since $\text{Gen-}\Psi^{3,R}_{A|B}(\vM,\vN)$ holds, we have:
\begin{itemize}
\item
$\normt{\vN}_{\inf(B_{i,*})}=0$, $\normt{\vM}_{\inf(A_{i,*})}\neq 0$
and $\normt{\vN}_{\nz(B_{i,*})}=\infty$, for every $i\in \fnzl$.

This is because for every $i\in \fnzl$, $\normt{\vM}_{\nz(A_{i,*})}$ is finite.
Thus, the $E_i$-degree of every vertex in $V$ must be finite,
i.e., $\normt{\vN}_{\inf(B_{i,*})}=0$.
Since $i\notin R$, this means there are vertices on the left with $\infty$ $E_i$-degree,
i.e., $\normt{\vM}_{\inf(A_{i,*})}\neq 0$.
Therefore, the number of vertices on the right with non-zero $E_i$-degree must be infinite,
i.e., $\normt{\vN}_{\nz(B_{i,*})}=\infty$.
\item 
Similarly, $\normt{\vM}_{\inf(A_{i,*})}=0$, 
$\normt{\vN}_{\inf(B_{i,*})}\neq 0$ and 
$\normt{\vM}_{\nz(A_{i,*})}=\infty$, for every $i\in \fnzr$.
\end{itemize}
Therefore, $\fnzl$ and $\fnzr$ are disjoint.

Define  the sets:
\begin{align*}
U^{\fnzl,\infty} :=  \bigcup_{i\in \fnzl} U^{i,\infty}
\qquad \text{and}\qquad
V^{\fnzr,\infty} :=  \bigcup_{i\in \fnzr} V^{i,\infty}.
\end{align*}
Note that for every $i\in \fnzl$,  the set $U^{i,\nz}$ is finite.
Since $U^{i,\infty}\subseteq U^{i,\nz}$, the set $U^{i,\infty}$ is finite
and hence so is the set $U^{\fnzl,\infty}$.
By analogous reasoning, $V^{\fnzr,\infty}$ is also finite.
Because $\vM|\vN$ is extra big enough for $A|B$,
for every $i\in \fnzl$, $|U^{i,\infty}|\geq t\delta(A,B)$ holds.
Similarly, for every $i\in \fnzr$,  $|V^{i,\infty}|\geq t\delta(A,B)$.

The claim below is the formalization of the partition $X_0\uplus X_1$
and $Y_0\uplus Y_1$ described above.
\begin{claim}
\label{app:cl:simple-s4}
Suppose $\fnzl,\fnzr\neq \emptyset$. Then:
\begin{itemize}
\item
There is a partition $X_0 \uplus X_1$ of $U^{\fnzl,\infty}$ such that
for every $i\in \fnzl$, both the sets
$X_0\cap U^{i,\infty}$ and
$X_1\cap U^{i,\infty}$ contain at least $\delta(A,B)$ vertices.
\item 
There is a partition $Y_0 \uplus Y_1$ of $V^{\fnzr,\infty}$ such that
for every $i\in \fnzr$, both 
$Y_0\cap V^{i,\infty}$ and
$Y_1\cap V^{i,\infty}$ contain at least $\delta(A,B)$ vertices.
 \end{itemize}
\end{claim}

As explained above,
the main difficulty in constructing the $E_i$-edges for color $i\in \fnzl$
is that $|U^{i,\infty}|$ is finite but $|V^{i,\nz}|$ is infinite,
and for color $i\in \fnzr$,  $|V^{i,\infty}|$ is finite but $|U^{i,\nz}|$ is infinite.
The claim implies that there are sets $X_0,X_1,Y_0,Y_1$ -- each with enough vertices -- allowing us
to construct the $E_i$-edges for color $i\in \fnzl$ as follows.
\begin{itemize}
\item
To make every vertex in $Y_0$ have the correct $E_i$-degree,
we connect it by $E_i$-edges only to the vertices in $X_0\cap U^{i,\infty}$.
\item
To make every vertex in $Y_1$ have the correct $E_i$-degree,
we connect it by $E_i$-edges only to the vertices in $X_1\cap U^{i,\infty}$.
\end{itemize}
Note that for any color $i\in \fnzl$, the set $U^{i,\nz}$ is finite, 
hence the degree of each vertex in $V^{i,\nz}$ must be finite, and bounded by $\delta(A,B)$.
Since for every color $i\in \fnzl$,
the cardinalities of $X_0\cap U^{i,\infty}$ and
$X_1\cap U^{i,\infty}$ are at least $\delta(A,B)$,
there are ``enough'' vertices to connect vertices in $Y_0$ only with the vertices in $X_0\cap U^{i,\infty}$
and vertices in $Y_1$ only with vertices in $X_1\cap U^{i,\infty}$ via $E_{i}$-edges.
After this construction, vertices in $X_0$  are not connected via $E_i$ to any vertex in $Y_1$, for any color $i\in \fnzl$.
Likewise, vertices in $X_1$ are not connected via $E_i$ to any vertex in $Y_0$, for any color $i\in \fnzl$.
This leaves some ``room'' for the construction of $E_i$-edges for each color $i'\in \fnzr$
where we connect vertices in $X_0$ only to vertices in $Y_1\cap V^{i',\infty}$
and vertices in $X_1$ only to vertices in $Y_0\cap V^{i',\infty}$.
See Figure~\ref{fig:X0-X1-Y0-Y1} for an illustration.

\begin{proof} (of Claim~\ref{app:cl:simple-s4})
We prove the first item. The second one is similar.
Initially, $X_0=X_1=\emptyset$.
To achieve the desired property,
we will add vertices to $X_0$ and $X_1$ 
by iterating on every $i\in \fnzl$.
On each iteration, we add at most $\delta(A,B)$ vertices to  $X_0$ and $X_1$.

Suppose we are now iterating on some $i\in \fnzl$.
There are $4$ cases:
\begin{itemize}
\item
Case 1: 
$|X_0\cap U^{i,\infty}|\geq \delta(A,B)$ and
$|X_1\cap U^{i,\infty}|\geq \delta(A,B)$.
In this case, we do nothing and move on to the next $i\in \fnzl$.
\item
Case 2: 
$|X_0\cap U^{i,\infty}|< \delta(A,B)$ and
$|X_1\cap U^{i,\infty}|< \delta(A,B)$.
Observe that $U^{i,\infty}$ contains $t\delta(A,B)\geq 2\delta(A,B)$ vertices.
Thus, we can add some vertices from $U^{i,\infty}$ to $X_0$ and $X_1$
so that $X_0$ and $X_1$ are still disjoint and 
\begin{align*}
|X_0\cap U^{i,\infty}| \ = \
|X_1\cap U^{i,\infty}|\ = \ \delta(A,B).
\end{align*}

\item
Case 3: 
$|X_0\cap U^{i,\infty}|< \delta(A,B)$ and
$|X_1\cap U^{i,\infty}|\geq \delta(A,B)$.
Here, we see that
\begin{align*}
|U^{i,\infty}|\ \geq \ t\delta(A,B)
\ > \ |\fnzl|\delta(A,B) \ > \ (|\fnzl|-1)\delta(A,B) \ \geq \ |X_1|.
\end{align*}
Thus, $U^{i,\infty}$ contains at least $\delta(A,B)$ vertices 
that are not yet in $X_0\cup X_1$.
We add some of these vertices into $X_0$ so that
$|X_0\cap U^{i,\infty}|= \delta(A,B)$.

\item
Case 4: 
$|X_0\cap U^{i,\infty}|\geq  \delta(A,B)$ and
$|X_1\cap U^{i,\infty}|< \delta(A,B)$.
This case is symmetric to Case 3.
\end{itemize}
\end{proof}

Now we are ready to extend  $G_0$ to an $A|B$-biregular graph $G$.
Recall that $G_0$ is an $A_{R,*}|B_{R,*}$-biregular graph where $R\neq [t]$.
By \eqref{app:eq:simple-s4-aux1} and \eqref{app:eq:simple-s4-aux2},
at least one of $U$ and $V$ is infinite.
We will show how to construct the $E_i$-edges in $G$ for each $i\in [t]- R$.
The construction will yield an $A|B$-biregular graph $G$ with witness partition 
$U=U_1\uplus\cdots\uplus U_m$ and $V=V_1\uplus \cdots\uplus V_n$ that has the properties:
\begin{enumerate}[(P1)]
\item 
For every $j\in [n]$ where $|V_j|=\infty$, for every vertex $u\in U$, 
there are infinitely many vertices in $V_j$ that are {\em not} adjacent to $u$
via any $E_i$-edge.
\item 
Similarly, for every $j\in [m]$ where $|U_j|=\infty$, 
for every vertex $v\in V$, 
there are infinitely many vertices in $U_j$ that are {\em not} adjacent to $v$
via any $E_i$-edge.
\end{enumerate}
An infinite set $V_j$/$U_j$ that satisfies (P1)/(P2) is called a {\em strongly infinite} set in $G$.
An infinite $A|B$-biregular graph $G$ is called \emph{strongly partitioned},
if it has a witness partition whose infinite sets are all strongly infinite.

Note that $G_0$ is an infinite $A_{R,*}|B_{R,*}$-biregular graph
and every vertex has a finite degree.
Thus $G_0$ is already strongly partitioned.

There are two cases to consider, depending on
whether both $\fnzl$ and $\fnzr$ are not empty, or
at least one of $\fnzl$ and $\fnzr$ is empty.
We first consider the case when both $\fnzl$ and $\fnzr$ are not empty.

Let $X_0\uplus X_1$ and $Y_0\uplus Y_1$ be the partition of $U^{\fnzl,\infty}$ and $V^{\fnzr,\infty}$
in Claim~\ref{app:cl:simple-s4}.
There are three steps.
\begin{description}
\item[Step~1:]
Construct the $E_i$-edges for each color $i\in \fnzl$,
similarly to Lemma~\ref{app:lem:1type-s4}.
\item[Step~2:]
Construct the $E_i$-edges for each color $i\in \fnzr$ in a manner symmetric to Step~1.
\item[Step~3:]
Construct the $E_i$-edges for each color $i\notin R\cup \fnzl\cup \fnzr$.
\end{description}
We detail each of these steps in the next paragraphs.

\bigskip

\noindent{\bf Step~1:} Make the $E_i$-degree of every vertex correct for every $i\in \fnzl$.
This step is divided into three substeps.
The first two are similar to case~(a) in Lemma~\ref{app:lem:1type-s4}
and the third is needed to leave enough ``room'' for the construction of the edges of colors in $\fnzr$.
\begin{enumerate}[(a)]
\item
Make the $E_i$-degree of every vertex in $U^{i,\nz}$ correct for every $i\in \fnzl$.

For every $u\in U^{i,\nz}$, we ensure that its degree is correct
by connecting it via $E_i$-edges with some ``non-adjacent'' vertices from the set $V^{i,\nz} - V^{\fnzr,\infty}$---that is,
vertices that are not yet adjacent to $u$ via any $E_i$-edges where $i\in \fnzl\cup R$.

Note that $U^{i,\nz}$ is finite.
Since $V^{i,\nz}-V^{\fnzr,\infty}$ is infinite and $G_0$ is strongly partitioned,
there is an infinite supply of vertices.
So such ``non-adjacent'' vertices always exist for every vertex $u\in U^{i,\nz}$.
We also make sure that when we add the new edges,
there are still infinitely many vertices in each $V_j$ that are not yet adjacent to $u$,
for every $j\in [m]$ where $V_j$ is infinite (which is possible since $V_j$ is infinite).
Thus, the graph stays strongly partitioned.

After this step, the degree of every vertex in $V$ increases by at most $1$.
That is, $\sum_{i\in \fnzl} \deg_{E_i}(v)$ is either $0$ or $1$,
for every $v\in V-V^{\fnzr,\infty}$.
Note also that the degrees of vertices in $V^{\fnzr,\infty}$ do not increase.

\item
Make the $E_i$-degree of every vertex in $V^{i,\nz}- V^{\fnzr,\infty}$ correct for every $i\in \fnzl$.

Since $U^{i,\nz}$ is finite,
the $E_i$-degree of every vertex in $V^{i,\nz}$ is supposed to be finite.
Due to the size being extra big enough, $U^{i,\infty}$ contains at least $\delta(A,B)$
vertices.
So, for every vertex $v\in V^{i,\nz}- V^{\fnzr,\infty}$,
we can add ``new'' $E_i$-edges to make its $E_i$-degree correct by connecting it via $E_i$-edges with vertices in $U^{i,\infty}$,
for every $i\in \fnzl$.
Note that by definition, for every $i\in \fnzl$, vertices in $U^{i,\infty}$ have $\infty$ $E_i$-degrees.
So the new $E_i$-edges in this step will violate their degree requirement.

\item
Make the $E_i$-degree of every vertex in $V^{\fnzr,\infty}$ correct for every $i\in \fnzl$.

Here it is useful to recall that for every $i\in \fnzl$, every vertex in $V^{\fnzr,\infty}$
is supposed to have finite $E_i$-degree since $U^{i,\nz}$ is finite.
This step is similar to step~(b),
except that we connect via $E_i$-edges the vertices in $Y_0$ to some vertices in $X_0$,
and the vertices in $Y_1$ to some vertices in $X_1$, for every $i\in \fnzl$.
Since $X_0\cap U^{i,\infty}$ and $X_1\cap U^{i,\infty}$ contains at least $\delta(A,B)$ vertices,
there are enough vertices in $X_0\cap U^{i,\infty}$ and $X_1\cap U^{i,\infty}$
that we may connect each $v\in V^{\fnzr,\infty}$ with to make the $E_i$-degree of $v$ correct,
for every $i\in \fnzl$.

After this step,
vertices in $X_0$ are not adjacent via $E_i$-edges to vertices in $Y_1$, for every $i\in \fnzl$.
Similarly,
vertices in $X_1$ are not adjacent via $E_i$-edges to vertices in $Y_0$, for every $i\in \fnzl$.
This observation will be important in the next step.
See Figure~\ref{fig:X0-X1-Y0-Y1} for an illustration.
\end{enumerate}

\bigskip

\noindent {\bf Step~2:} Make the $E_i$-degree of every vertex correct for every $i\in \fnzr$.
This step consists of three steps (2.a)--(2.c)
which are symmetric to Steps~(1.a)--(1.c), where the role of $U^{i,\nz}$ is replaced with $V^{i,\nz}$,
$U^{i,\infty}$ with $V^{i,\infty}$ and $V^{\fnzr,\infty}$ with $U^{\fnzl,\infty}$.
The difference is in Step (1.c).
To make the $E_i$-degree of vertices in $U^{\fnzl,\infty}$ correct, for every $i\in \fnzr$,
we connect the vertices in $X_0$ to some vertices in $Y_1$,
and the vertices in $X_1$ to some vertices in $Y_0$.
Here it is important that 
vertices in $X_0$ are not adjacent to vertices in $Y_1$ via $E_i$-edges for any $i\in \fnzl$.
Since $Y_1\cap V^{i,\infty}$ contains at least $\delta(A,B)$ vertices,
there are still enough vertices in $Y_1$ that can be connected to each $u\in X_0$ to make 
the $E_i$-degree of $u$ correct, for every $i\in \fnzr$.

\bigskip
\noindent {\bf Step~3:} Make the $E_i$-degree of every vertex correct for every $i\notin R\cup \fnzl\cup \fnzr$. 
This step is similar to case~(c) in  Lemma~\ref{app:lem:1type-s4}.
Let $u_1,u_2,\ldots$ and $v_1,v_2,\ldots$ be an enumeration of the vertices in $U$ and $V$.
After Step~2, the graph $G$ is still strongly partitioned.
In the following we fix a color $i \notin R\cup \fnzl\cup \fnzr$.
We make the $E_i$-degree of each vertex $u_\ell$ and $v_\ell$ correct, where $\ell$ ranges from $1$ to $\infty$. 
We work by induction on $\ell$, where the inductive invariant is that
after the $\ell^{th}$ iteration, the $E_i$-degrees of $u_l,v_1,\ldots,u_\ell,v_\ell$ are already correct.
The process is as follows:
\begin{itemize}
\item
We pick some vertices in $V^{i,\nz}$ that are not yet adjacent to $u_\ell$ via any $E$-edges.
We call these vertices the ``non-adjacent'' vertices
and we pick some of them and connect them to $u_\ell$ via $E_i$-edges to make the $E_i$-degree of $u_\ell$ correct. 
For this purpose, we can choose any vertices are not $v_1,\ldots,v_\ell$ and are not adjacent to any of $u_1,\ldots,u_\ell$.
Such vertices always exist, since the graph $G$ is strongly partitioned.

Note that if the $E_i$-degree of $u_\ell$ is supposed to be infinite,
we have to pick infinitely many ``non-adjacent'' vertices.
So when we pick these vertices,
we also make sure that there are still infinitely many vertices in each $V_j$ that are still not adjacent to all 
$u_1,\ldots,u_\ell$, for every $j\in [m]$ where $V_j$ is infinite. 
Thus, the graph is strongly partitioned.

\item 
Similarly, we pick ``non-adjacent'' vertices in $U^{i,\nz}$
and connect them to $v_\ell$ to make the $E_i$-degree of $v_\ell$ correct,
where ``non-adjacent'' vertices are those in  $u_1,\ldots,u_\ell$  are not adjacent to any of $v_1,\ldots,v_\ell$.

Again, such ``non-adjacent'' vertices always exist since the graph is strongly partitioned,
and we can always pick the new vertices so that the graph stays strongly partitioned
after this iteration. 
\end{itemize}
We perform the iteration above for each color $i \notin R\cup \fnzl\cup \fnzr$.
This completes the construction of an $A|B$-biregular graph $G$ with size $\vM|\vN$
and witness partition $U=U_1\uplus\cdots \uplus U_m$
and $V=V_1\uplus \cdots \uplus V_n$ for the case when both $\fnzl$ and $\fnzr$ are not empty.
For the case when $\fnzl=\emptyset$, we can do as above, but skip Step~1.
Reasoning along the same lines, for the case when $\fnzr=\emptyset$, we can skip Step~2.
This completes the proof of Lemma~\ref{app:lem:simple-s4-aux}.
\end{proof}

\begin{figure}
\begin{center}

\begin{tikzpicture}
\draw (0,0) ellipse (1cm and 1.6cm);
\node  at (0,1.8) {\small $X_0$};

\draw[blue] (0,-.3) ellipse (.5cm and .9cm); 
\node[blue]  at (-0,.8) {\footnotesize $X_0\cap U^{i,\infty}$};

\draw[gray!40,fill=gray!40] (0,-.5) ellipse (.3cm and .5cm); 
\draw[gray!40] (7,0) -- node[color=black,above,sloped] {\scriptsize adjacent via $E_i$-edges} (0,0); 
\draw[gray!40] (7,0) -- (0,-1);

\draw (-0,-3.4) ellipse (0.8cm and 1.6cm);
\node  at (-0.7,-2.1) {\footnotesize $X_1$}; 

\draw[blue] (0,-3.9) ellipse (.5cm and .9cm); 
\node[blue]  at (0,-2.8) {\footnotesize $X_1\cap U^{i,\infty}$};

\draw[gray!40,fill=gray!40] (0,-3.9) ellipse (.3cm and .5cm); 
\draw[gray!40] (7,-3.2) -- node[color=black,below,sloped] {\scriptsize adjacent via $E_i$-edges} (0,-4.4); 
\draw[gray!40] (7,-3.2) -- (0,-3.4);

\node[circle,fill=blue,inner sep=0pt,minimum size=3pt,label=above:{\footnotesize $v$}] (a) at (7,0) {};

\node[circle,fill=blue,inner sep=0pt,minimum size=3pt,label=above:{\footnotesize $v'$}] (a) at (7,-3.2) {};

\draw (7,0) ellipse (0.8cm and 1.4cm);
\node  at (7.7,1.2) {\footnotesize $Y_0$};

\draw (7,-3.2) ellipse (0.8cm and 1.4cm);
\node  at (7.7,-2) {\footnotesize $Y_1$};

\end{tikzpicture}

\end{center}
\label{fig:X0-X1-Y0-Y1}
\caption{An illustration for the construction of edges between $X_0\cup X_1$ and $Y_0\cup Y_1$.
For each color $i\in F_{\text{nz-left}}$, 
each $v\in Y_0$ is connected by $E_i$-edges only to vertices in $X_0\cap U^{i,\infty}$
and each $v'\in Y_1$ is connected by $E_i$-edge only to vertices in $X_1\cap U^{i,\infty}$.
This makes vertices in $X_0$  not connected via $E_i$-edges to any vertex in $Y_1$, for any color $i\in F_{\text{nz-left}}$.
Likewise, vertices in $X_1$ are not connected via $E_i$ to any vertex in $Y_0$, for any color $i\in F_{\text{nz-left}}$.
This leaves some ``space'' for the construction of $E_i$-edges for color $i\in F_{\text{nz-right}}$
where the vertices in $X_0$ will be connected to some vertices in $Y_1$
and the vertices in $X_1$ to some vertices in $Y_0$.}
\end{figure}

\begin{remark}
\label{app:rem:strong-infinite}
Note that in the construction of the $A|B$-biregular graph $G$ in Lemma~\ref{app:lem:simple-s4-aux},
we construct the $E_i$-edges for every $i\notin R$ by
iterating on every vertex in $G$.
On each iteration, we preserve the ``strongly partitioned'' property of the graph $G$.
This implies that for every finite subset $W$ of vertices in $G$, we have:
\begin{itemize}
\item
For every $V_j$ such that $V_j$ is infinite,
there are infinitely many vertices in $V_j$ that are not adjacent to any vertex in $W$.
\item
Similarly, for every $U_j$ such that $U_j$ is infinite,
there are infinitely many vertices in $U_j$ that are not adjacent to any vertex in $W$.
\end{itemize}
This property will be useful in Section~\ref{app:sec:simple-complete}
when the completeness requirement is enforced.
\end{remark}

To wrap up Section~\ref{app:subsec:simple-bireg-big-enough},
we define formula $\text{Gen-}\Psi_{A|B}(\vx,\vy)$ for simple matrices $A$ and $B$
as follows:
\begin{align}
\label{app:eq:simple-s4}
& \bigvee_{R\subseteq [t]}\quad  \text{Gen-}\Psi^{4,R}_{A|B}(\vx,\vy),
\end{align}
where each $\text{Gen-}\Psi^{4,R}_{A|B}(\vx,\vy)$ is defined in \eqref{app:eq:simple-s4-aux}.
This formula $\text{Gen-}\Psi_{A|B}(\vx,\vy)$ captures all the big enough sizes of $A|B$-biregular graphs, 
as stated formally as Lemma~\ref{app:lem:simple-s4}.

\begin{lemma}
\label{app:lem:simple-s4}
For every pair of simple matrices $A$ and  $B$, and
for each pair of size vectors $\vM,\vN$ such that $\vM|\vN$ is extra big enough for $A|B$, the following holds:
there is an $A|B$-biregular graph with size $\vM|\vN$ if and only if
$\text{Gen-}\Psi_{A|B}(\vM,\vN)$ holds in $\cN_{\infty}$. 
\end{lemma}

\begin{proof}
Let $A$ and $B$ be simple matrices $A$ and $B$ with $t$ rows
and $\vM|\vN$ be big enough for $A|B$.
Suppose there is an $A|B$-biregular graph $G=(U,V,E_1,\ldots,E_t)$ with size $\vM|\vN$.
Let $R=\{i : \text{every vertex in $V$ has finite $E_i$-degree}\}$. 
By Lemma~\ref{app:lem:simple-s4-aux}, $\Psi^{4,R}_{A|B}(\vM,\vN)$ holds.
Thus, $\text{Gen-}\Psi^4_{A|B}(\vM,\vN)$ holds.

Conversely, suppose $\text{Gen-}\Psi_{A|B}(\vM,\vN)$ holds.
Let $R$ be such that $\text{Gen-}\Psi^{4,R}(\vM,\vN)$ holds.
By Lemma~\ref{app:lem:simple-s4-aux}, 
there is an $A|B$-biregular graph with size $\vM|\vN$
where $R=\{i : \text{every vertex in $V$ has finite $E_i$-degree}\}$.
\end{proof}

The remark below will be useful for the complexity analysis later on.

\begin{remark}
\label{app:rem:simple-big-enough}
Every formula $\text{Gen-}\Psi_{A|B}^{4,R}(\vx,\vy)$ is a disjunction of 
the formulas $\text{Gen-}\Psi^{3,R'}_{A_{R,*}|B_{R,*}}(\vx,\vy)$
for every subset $R'$ of the rows in $A_{R,*}|B_{R,*}$.
In turn, each formula $\text{Gen-}\Psi^{3,R'}_{A_{R,*}|B_{R,*}}(\vx,\vy)$ is a disjunction of 
the formulas $\text{Gen-}\Psi^{2,R''}_{A_{R',*}|B_{R',*}}(\vx,\vy)$
for every subset $R''$ of the rows in $A_{R',*}|B_{R',*}$.
Pulling out all the disjunctions, the formula $\text{Gen-}\Psi_{A|B}(\vx,\vy)$ can be written as a disjunction $\bigvee_i \psi_i(\vx,\vy)$,
where each $\psi_i(\vx,\vy)$ is a conjunction of $O(t)$ equations and inequations
(for short, ``conjunction of (in)equations'').
\end{remark}

\subsection{Encoding of not ``extra big enough'' components for simple matrices}
\label{app:subsec:simple-not-big-enough}

The encoding of not ``extra big enough'' for general case is
a routine adaptation of the one in Subsection ~\ref{subsec:simple-not-big-enough}.
We omit the details.
Disjoining this to the ``extra big enough'' formula completes the description for simple matrices without the completeness
requirement.

The remark below will also be used in  the complexity analysis later on.

\begin{remark}
\label{app:rem:simple-bireg}
Let $t$ be the number of rows in matrices $A$ and $B$.
By Remark~\ref{app:rem:simple-big-enough}, the formula $\text{Gen-}\Psi_{A|B}(\vx,\vy)$
is a disjunction of conjunctions of $O(t)$ (in)equations.
By an adaptation of Remark~\ref{rem:simple-not-big-enough}, each formula $\Phi_i(\vx,\vy)$ for the ``not extra big enough'' cases 
is a disjunction of conjunctions of $O(t^4\delta(A,B)^4)$ (in)equations.
Thus, the formula $\bireg_{A|B}(\vx,\vy)$ can be written as a disjunction of conjunctions of $O(t^4\delta(A,B)^4)$ (in)equations.
\end{remark}


\section{The extension of Section~\ref{sec:simple-complete} to the general case}
\label{app:sec:simple-complete}

In Section~\ref{sec:simple-complete} we constructed a  formula that
captures all possible finite sizes of \emph{complete} $A|B$-biregular graphs,
where $A$ and $B$ are simple degree matrices.
There we argued that it was sufficient to consider only the case when $A|B$ is a good pair, (as defined in Definition~\ref{def:good-pair}), since
otherwise  there are only some fixed number of possible sizes of $A|B$-biregular graphs,
which can be enumerated. 

In this appendix we will extend the formulas in Section~\ref{sec:simple-complete} to the case where graphs may be infinite.
As in Section~\ref{sec:simple-complete}, we only need to consider the cases where $A|B$ is a good pair.
There are two new cases to consider.
The first  (Subsection \ref{app:subsec:simple-complete-b}) concerns
graphs where  both sides have infinitely many vertices, while
in the  second  (Subsection \ref{app:subsec:simple-complete-a}) 
there are infinitely many vertices on exactly one side.

\subsection{The case when both sides have infinitely many vertices}
\label{app:subsec:simple-complete-b}

Let $A$ be a matrix with $t$ rows.
We write $\col_{\infty}(A_{*,j})$ to denote the set $\{i : A_{i,j}=\infty\}$.
For $R\subseteq [t]$, we let $J(R,A) =\{j : \col_{\infty}(A_{*,j})=R\}$.

\begin{definition}
\label{app:def:good-color-size}
Let $A$ and $B$ be simple matrices with $t$ rows.
Let $m$ and $n$ be the number of columns of $A$ and $B$.
For size vectors $\vM=(M_1,\ldots,M_m)$ and $\vN=(N_1,\ldots,N_n)$,
we say that {\em $\vM|\vN$ is a good color size for $A|B$}, if
for every $R\subseteq [t]$, we have:
\begin{itemize}
\item
$\normt{\vM}_{J(R,A)}=0$ or $\geq \delta(A,B)+1$,
\item 
$\normt{\vN}_{J(R,B)}=0$ or $\geq \delta(A,B)+1$.
\end{itemize}
\end{definition}

We will show that a ``good color size'' of a complete $A|B$-biregular graph 
implies a certain property of the matrices $A$ and $B$ which will be useful later on:

\begin{lemma}
\label{app:lem:completesimplegood-infty-infty-sufficient}
Let $A|B$ be a pair of simple matrices with $t$ rows.
Suppose there is a complete $A|B$-biregular graph $G=(U,V,E_1,\ldots,E_t)$ with size $\vM|\vN$
where $\normt{\vM}=\normt{\vN}=\infty$
and $\vM|\vN$ is a good color size for $A|B$.
Then:
\begin{itemize}
\item
for every vertex $u \in U$,
there is $i\in [t]$ such that $\deg_{E_i}(u)=\infty$ 
and row $i$ in $B$ contains a periodic entry;
\item
 for every vertex $v \in V$,
there is $i' \in [t]$ such that $\deg_{E_{i'}}(v)=\infty$ 
and row $i'$ in $A$ contains a periodic entry.
\end{itemize}
\end{lemma}
Note that since $A$ and $B$ are simple matrices and $\infty$ is regarded as a periodic entry, 
the conclusion implies that row $i$ and $i'$ in $A$ and $B$ can contain \emph{only} periodic entries.
So the lemma implies that $A|B$ is a good pair of simple matrices.
\begin{proof}
We first prove an easy combinatorial claim.
\begin{claim}
Let $\cU$ be an infinite set and let $\cZ$ be a (not necessarily finite) family of subsets of $\cU$
such that every $Z\in \cZ$ is cofinite in $\cU$ (i.e., $\cU-Z$ is finite).
Then every finite subset of $\cZ$ has non-empty intersection.
\end{claim}
\begin{proof}
Let $Z_1,\ldots,Z_n \in \cZ$.
By de Morgan's law, $\bigcap_{j=1}^n Z_{j} = \cU - (\bigcup_{j=1}^n \cU-Z_{j})$.
Since each $Z_{j}$ is cofinite in $\cU$, the claim follows immediately.
\end{proof}

We prove the first bullet item of the lemma, with 
the second proven analogously.
Let $U=U_1\uplus\cdots\uplus U_m$ and $V=V_1\uplus\cdots\uplus V_n$
be the witness partition.
For a vertex $u\in U$ in $G$,
let $\Gamma(u)$ denote the set of vertices adjacent to $u$ by some $E_i$-edge where the $E_i$-degree of $u$ is infinite.
For each $u \in U$,  every element of $V$ is connected to $u$ by some $E_i$ edge (since $G$ is a complete bipartite graph), 
and thus the number of elements  of $V$ not in $\Gamma(u)$ is finite.

Suppose $u\in U_j$ for some $j\in [m]$ and let $R=\col_{\infty}(A_{*,j})$.
Since $\vM|\vN$ is a good color size for $A|B$,
we have $\normt{\vM}_{J(R,A)}\geq \delta(A,B)+1$.
We pick $k$ vertices $u_1,\ldots,u_k\in \bigcup_{j\in J(R,A)}U_j$ where $u_1=u$ and $k\geq \delta(A,B)+1$.
By the claim above, there is a vertex $v$ in the  intersection $\bigcap_{j\in [k]} \Gamma(u_j)$.
This means $v$ is adjacent to all vertices $u_1,\ldots,u_k$ via some $E_i$-edges where $i\in R$.
Since $k> \norm{\finoffset(B)}$, there is $E_i \in R$ where $\deg_{E_i}(v)$ is a periodic entry of $B$.
Since $B$ is a simple matrix, this implies that  row $i$ in $B$ contains only periodic entries.
Note that $E_i\in R$, so  the $E_i$-degree of $u$ is $\infty$.
This completes the proof of the first bullet item.
\end{proof}

Let $A$ and $B$ be simple matrices with $m$ and $n$ columns, respectively.
We denote by $(\cC1)$ and $(\cC2)$ the  constraints:
\begin{enumerate}
\item[$(\cC1)$]
For every $j\in [m]$ where $M_j \neq 0$, there is a color $i\in [t]$ such that $A_{i,j}=\infty$
and $B_{i,*}$ contains only periodic entries.
\item[$(\cC2)$]
For every $j\in [n]$ where $N_j \neq 0$, there is a color $i\in [t]$ such that $B_{i,j}=\infty$
and $A_{i,*}$ contains only periodic entries.
\end{enumerate}
Note that both $(\cC1)$ and $(\cC2)$ are Presburger definable
and formalize the properties o
from items (1) and (2) in Lemma~\ref{app:lem:completesimplegood-infty-infty-sufficient}.

We define $\xi^{(1)}_{A|B}(\vx,\vy)$, where $\vx=(x_1,\ldots,x_m)$ and $\vy=(y_1,\ldots,y_n)$, as follows:
\begin{align}
\label{app:eq:simple-complete-xi2}
 \bireg_{A|B}(\vx,\vy)  \ \wedge \ & \normt{\vx}=\normt{\vy}=\infty \ \wedge \ (\cC1) \ \wedge \ (\cC2)
\\
\label{app:eq:simple-complete-xi2-a}
\wedge \ &
\bigwedge_{R\subseteq [t]} 
\Big(
\normt{\vx}_{J(R,A)} = 0 
\ \vee\
\normt{\vx}_{J(R,A)} \geq \delta(A,B)+1
\Big) 
\\
\label{app:eq:simple-complete-xi2-b}
\wedge \ &
\bigwedge_{R\subseteq [t]}
\Big(\normt{\vy}_{J(R,B)} = 0 \ \vee\
\normt{\vy}_{J(R,B)} \geq \delta(A,B)+1 
\Big).
\end{align}
Above $ \bireg_{A|B}(\vx,\vy)$ is the characterizing formula for not-necessarily-complete biregular graphs.
Intuitively, \eqref{app:eq:simple-complete-xi2-a} and \eqref{app:eq:simple-complete-xi2-b}
state that $\vx|\vy$ is a good color size.

\begin{lemma}
\label{app:lem:completesimplegood-infty-infty}
For every pair of simple matrices $A$, $B$ and 
for every pair of size vectors $\vM$, $\vN$, 
the formula $\xi^{(1)}_{A|B}(\vM,\vN)$ holds 
if and only if there is a complete $A|B$-biregular graph of size $\vM|\vN$
where $\normt{\vM}=\normt{\vN}=\infty$ and $\vM|\vN$ is a good color size for $A|B$.
\end{lemma}
\begin{proof}
That  $\xi^{(1)}_{A|B}(\vM,\vN)$ holding is necessary follows from Lemma~\ref{app:lem:completesimplegood-infty-infty-sufficient}.
Now we show that it is also a sufficient condition.
Suppose $\xi^{(1)}_{A|B}(\vM,\vN)$ holds,
which implies there is a (not necessarily complete) $A|B$-biregular graph 
$G=(U,V,E_1,\ldots,E_t)$ with size $\vM|\vN$.
Let $U=U_1\uplus\cdots\uplus U_m$ and $V=V_1\uplus\cdots\uplus V_n$ be the witness partition.
By Remark~\ref{app:rem:strong-infinite}, the graph $G$ has the property:
\begin{enumerate}[($\cP$)]
\item
{\em For every finite subset $W\subseteq U$,
there are infinitely many vertices in $V$ that are not adjacent to any vertex in $W$.

While for every finite subset $W\subseteq V$,
there are infinitely many vertices in $U$ that are not adjacent to any vertex in $W$.}
\end{enumerate}

We enumerate the elements $u_1,u_2,\ldots$ and $v_1,v_2,\ldots$ in $U$ and $V$, respectively.
We will make $G$ a complete $A|B$-biregular graph by iterating through all $\ell=1,2,\ldots$,
where on each iteration $\ell$, we first add ``new'' edges so that
$u_\ell$ is adjacent to all the vertices $v_\ell,v_{\ell+1},\ldots$ and
then some more ``new'' edges so that $v_\ell$ is adjacent to all the vertices $u_{\ell+1},u_{\ell+2},\ldots$.
These new edges will preserve the $A|B$-biregularity of the graph $G$
and as the iteration index $\ell$ goes to $\infty$, the graph $G$ becomes complete.

Before we proceed to the construction, we first explain the main idea behind
 making $u_\ell$ adjacent to all the vertices $v_\ell,v_{\ell+1},\ldots$.
Choose $i_0 \in [t]$ such that the $E_{i_0}$-degree of $u_\ell$ is $\infty$
and the row $B_{i_0,*}$ contains only periodic entries -- such $i_0$ exists due to $(\mathcal{C}1)$.
We add new $E_{i_0}$-edges so that:
\begin{enumerate}[(a)]
\item
$u_\ell$ is adjacent to all the vertices $v_\ell,v_{\ell+1},\ldots$ (that are not yet adjacent to $u_{\ell}$)
via $E_{i_0}$-edges.
\item
For every vertex $v_h \in \{v_{\ell},v_{\ell+1},\ldots\}$:
\begin{itemize}
\item
If the $E_{i_0}$-degree of $v_h$ is not $\infty$,
the new $E_{i_0}$-edges increase it by $p$.
\item 
If the $E_{i_0}$-degree of $v_h$ is $\infty$,
there are either $1$ or $p$ new $E_{i_0}$-edges adjacent to $v_h$; in particular the degree is still infinite.
\end{itemize}
\item
For every vertex $u_h \in \{u_{\ell+1},v_{\ell+2},\ldots\}$,
either there are no new $E_{i_0}$-edges added,
or the $E_{i_0}$-degree increases by a multiple of $p$.
\end{enumerate}
Adding new edges to make $v_\ell$ adjacent to all the vertices $u_{\ell+1},u_{\ell+2},\ldots$
can be done in the same manner.
The purpose of (a) is to make $G$ complete
while the purpose of (b) and (c) is to preserve the $A|B$-biregularity of $G$.

Since $U$ (resp. $V$) is countable, every vertex $u\in U$ (resp. $v\in V$)
has a finite index $\ell$ such that $u_\ell=u$ (resp. $v_\ell=v$).
After the $\ell^{\text{th}}$ iteration
we do not add any more edges adjacent to $u_\ell$ and $v_\ell$.
Therefore, for every vertex $w \in U\cup V$,
for every color $i\in [t]$, if $\deg_{E_i}(w)$ is finite in the original graph $G$,
it stays finite as the iteration index $\ell$ goes to $\infty$.
If $\deg_{E_i}(u)$ is $\infty$ in the original graph $G$,
it stays $\infty$, since we are only adding edges.
Thus, if the original graph $G$ is $A|B$-biregular,
as the iteration index $\ell$ goes to $\infty$,
the resulting graph is still $A|B$-biregular.

Note also that due to (b) and (c),
after the $\ell^{\text{th}}$ iteration, 
the degree of every vertex in $\{u_{\ell+1},u_{\ell+2},\ldots\}\cup\{v_{\ell+1},v_{\ell+2},\ldots\}$
increases only by some finite number, i.e., by $0$, $1$ or a multiple of $p$.
Thus, property ($\cP$) still holds 
for every finite subset $W\subseteq \{u_{\ell+1},u_{\ell+2},\ldots\}\cup \{v_{\ell+1},v_{\ell+2},\ldots\}$
in the sense that:
\begin{quote}
For every finite subset $W\subseteq \{u_{\ell+1},u_{\ell+2},\ldots\}$,
there are infinitely many vertices in $V$ that are not adjacent to any vertex in $W$.
While for every finite subset $W\subseteq \{v_{\ell+1},v_{\ell+2},\ldots\}$,
there are infinitely many vertices in $U$ that are not adjacent to any vertex in $W$.
\end{quote}
We will call this {\em the non-adjacency invariant}.

We devote the rest of the proof to the details on how to add edges adjacent to vertex $u_\ell$. 
The argument for vertex $v_\ell$ is handled  symmetrically.
Let $u_\ell$ be a vertex in $U_j$.
By constraint $(\cC)$, there is some color $i_0\in [t]$ where $A_{i_0,j}=\infty$ and 
$B_{i_0,*}$ contains only periodic entries.
Since $A$ is a simple matrix, row $i_0$ in $A$ also contains periodic (possibly infinite) entries.
We will add $E_{i_0}$-edges so that $u_\ell$ is adjacent to every vertex in $V$.
Note, however, that some care is needed, since the $E_{i_0}$-degree of some vertices  --- those
with finite $E_{i_0}$-degree bound ---
can only increase by a multiple of $p$.

Let $Z$ denote the set of vertices in $V$ that are not adjacent to vertex $u_\ell$.
By the non-adjacency invariant, the set $Z$ is infinite.
Let $Z=Z_{\fin}\cup Z_{\infty}$ be a partition of $Z$ where
every vertex in $Z_{\fin}$ has finite $E_{i_0}$-degree
and every vertex in $Z_{\infty}$ has infinite $E_{i_0}$-degree.

First, we add $E_{i_0}$-edges between $u_\ell$ and every vertex in $Z$.
At this point, 
vertex $u_\ell$ is already adjacent to every vertex in $V$.
Note that the $E_{i_0}$-degree of each vertex in $Z_{\infty}$ stays infinite.
However, the $E_{i_0}$-degree of vertices in $Z_{\fin}$ increases by $1$.
So we need to add additional edges to make it increase further by $(p-1)$.
There are two cases.
\begin{itemize}
\item
Case 1: $|Z_{\fin}|$ is finite.
Since the set $Z$ is infinite, we infer that $|Z_{\infty}|$ is infinite.
Let $Y$ be a finite subset of $Z_{\infty}$ so that 
the sum $|Z_{\fin}|+|Y|$ is some multiple of $p$.
By the non-adjacency invariant, 
there are infinitely many vertices in $U$ that are not adjacent to any vertex in $Z_{\fin}\cup Y$.
We pick $(p-1)$ such vertices $w_1,\ldots,w_{p-1}$ and 
add $E_{i_0}$-edges for every pair in $\{w_1,\ldots,w_{p-1}\} \times (Z_{\fin}\cup Y)$.
That is, $\{w_1,\ldots,w_{p-1}\} \times (Z_{\fin}\cup Y)$ becomes a complete bipartite graph of $E_{i_0}$-edges.
Note that the $E_{i_0}$-degrees of vertices $w_1,\ldots,w_{p-1}$ increase by a multiple of $p$, 
since $|Z_{\fin}\cup Y|$ is a multiple of $p$.
Moreover, the $E_{i_0}$-degrees of vertices in $Z_{\fin}$ increase further by $(p-1)$.
The $E_{i_0}$-degrees of vertices in $Y$ remain infinite. 
Thus, after this construction $G$ is still $A|B$-biregular.
See Figure~\ref{fig:finite-Z-fin} for an illustration.

\begin{figure}
\begin{center}

\begin{tikzpicture}

\draw (0,0) ellipse (1cm and 3cm);
\node  at (-.9,2.8) {\large $U$}; 

\node[circle,fill=blue,inner sep=0pt,minimum size=3pt,label=left:{\footnotesize $u_1$}] (a) at (0,2.5) {};

\draw[red,dashed,dash pattern=on \pgflinewidth off 5pt,line width=0.4mm] (0,2.2) -- (0,1);

\node[circle,fill=blue,inner sep=0pt,minimum size=3pt,label=left:{\footnotesize $u_\ell$}] (a) at (0,0.7) {};

\node[circle,fill=blue,inner sep=0pt,minimum size=3pt,label=left:{\footnotesize $w_1$}] (a) at (0,-.5) {};

\draw[red,dashed,dash pattern=on \pgflinewidth off 5pt,line width=0.4mm] (0,-.8) -- (0,-1.5);

\node[circle,fill=blue,inner sep=0pt,minimum size=3pt,label=below:{\footnotesize $w_{p-1}$}] (a) at (0,-1.8) {};

\draw (7,0) ellipse (1cm and 3cm);
\node  at (7.8,2.8) {\large $V$}; 

\node[circle,fill=blue,inner sep=0pt,minimum size=3pt,label=left:{\footnotesize $v_1$}] (a) at (7,2.5) {};

\draw[red,dashed,dash pattern=on \pgflinewidth off 5pt,line width=0.4mm] (7,2.2) -- (7,1);

\node[circle,fill=blue,inner sep=0pt,minimum size=3pt,label=left:{\footnotesize $v_\ell$}] (a) at (7,0.7) {};

\draw[gray!40,fill=gray!40] (7,-0.2) ellipse (.45cm and .65cm);
\node at (7,-0.3) {\small $Z_{\fin}$};

\draw[gray!40,fill=gray!40] (7,-1.5) ellipse (.15cm and .3cm);
\node at (7,-1.5) {\small $Y$};

\draw (7,-1.7) ellipse (.45cm and .75cm);
\node at (7,-2.2) {\small $Z_{\infty}$};

\end{tikzpicture}

\end{center}
\label{fig:finite-Z-fin}
\caption{An illustration for the choices of $w_1,\ldots,w_{p-1}$ and $Y\subseteq Z_{\infty}$ to 
construct the complete $A|B$-biregular graph when $|Z_{\fin}|$ is finite.
First, we connect $u_\ell$ with all the vertices in $Z_{\fin}\cup Z_{\infty}$ via an $E_{i_0}$-edge.
Then, to ensure the degrees in $Z_{\fin}$ increases by a multiple of $p$,
we pick $w_1,\ldots,w_{p-1}$ and connect them via $E_{i_0}$-edges with all the vertices in $Z_{\fin}\cup Y$, where $Y\subseteq Z_{\infty}$ such that
$|Z_{\fin}|+|Y|$ is a multiple of $p$.}
\end{figure}

\item
Case 2: $|Z_{\fin}|$ is infinite.
We partition $Z_{\fin}$ into infinitely many pairwise disjoint sets $Z_1\uplus Z_2 \uplus \cdots $,
where $|Z_h|=p$ for each $h=1,2,\ldots$.
We increase the $E_{i_0}$-degrees of vertices in each $Z_h$
by iterating the following process for each $h=1,2,\ldots$:
We pick a finite set $X\subseteq U$ such that $|X|=p-1$
and every vertex in $X$ is not adjacent to any vertex in $Z_1\cup\cdots\cup Z_h$.
Such a set $X$ exists, by the non-adjacency invariant.
Then we add $E_{i_0}$-edges between every pair in $Z_h \times X$.
That is, $Z_h \times X$ becomes a complete bipartite graph of $E_{i_0}$-edges.
See Figure~\ref{fig:infinite-Z-fin} for an illustration.
\begin{figure}
\begin{center}

\begin{tikzpicture}

\draw (0,0) ellipse (1cm and 3cm);
\node  at (-.9,2.8) {\large $U$}; 

\node[circle,fill=blue,inner sep=0pt,minimum size=3pt,label=left:{\footnotesize $u_1$}] (a) at (0,2.5) {};

\draw[red,dashed,dash pattern=on \pgflinewidth off 5pt,line width=0.4mm] (0,2.2) -- (0,1);

\node[circle,fill=blue,inner sep=0pt,minimum size=3pt,label=left:{\footnotesize $u_\ell$}] (a) at (0,0.7) {};

\draw[gray!40,fill=gray!40] (0,-0.8) ellipse (.2cm and .4cm);
\node at (0,-0.8) {\small $X$};

\draw (7,0) ellipse (1cm and 3cm);
\node  at (7.8,2.8) {\large $V$}; 

\node[circle,fill=blue,inner sep=0pt,minimum size=3pt,label=left:{\footnotesize $v_1$}] (a) at (7,2.5) {};

\draw[red,dashed,dash pattern=on \pgflinewidth off 5pt,line width=0.4mm] (7,2.2) -- (7,1);

\node[circle,fill=blue,inner sep=0pt,minimum size=3pt,label=left:{\footnotesize $v_\ell$}] (a) at (7,0.7) {};

\draw[blue] (7,-.6) ellipse (.45cm and .9cm);
\node[blue] at (7.6,0.2) {\small $Z_{\fin}$};

\draw[red] (6.62,-0.1) -- (7.38,-0.1);
\node at (7,0.1) {\scriptsize $Z_1$};

\draw[red,dashed,dash pattern=on \pgflinewidth off 2pt,line width=0.2mm] (7,-.2) -- (7,-.5);

\draw[red] (6.55,-0.6) -- (7.45,-0.6);
\node at (7,-0.8) {\scriptsize $Z_h$};
\draw[red] (6.6,-1) -- (7.4,-1);

\draw[red,dashed,dash pattern=on \pgflinewidth off 2pt,line width=0.2mm] (7,-1.1) -- (7,-1.4);

\draw (7,-2.2) ellipse (.4cm and .6cm);
\node at (7,-2.5) {\small $Z_{\infty}$};

\end{tikzpicture}

\end{center}
\label{fig:infinite-Z-fin}
\caption{An illustration for the choices of  $w_1,\ldots,w_{p-1}$ and $Y\subseteq Z_{\infty}$ to 
construct the complete $A|B$-biregular graph 
when $|Z_{\fin}|$ is infinite.
First, we connect $u_l$ with all the vertices in $Z_{\fin}\cup Z_{\infty}$ via an $E_{i_0}$-edge,
thus, increasing the $E_i$-degree of vertices in $Z_{\fin}\cup Z_{\infty}$ by $1$.
Then, we partition $Z_{\fin}$ into $Z_1\uplus Z_2 \uplus \cdots $ where each $Z_h$ has cardinality $p$.
To make sure that the $E_{i_0}$-degrees in $Z_{\fin}$ increase by a multiple of $p$,
For each $h=1,2,\ldots$,
we pick a set $X\subseteq U$ s.t. $|X|=p-1$ and every vertex in $X$ is not adjacent to any vertex in $Z_1\cup \cdots \cup Z_h$.
Then, we connect every vertex in $X$ with every vertex in $Z_h$ via $E_{i_0}$-edges.}
\end{figure} 
After this construction, the $E_{i_0}$-degree of each vertex in each $Z_{h}$ increases further by $(p-1)$.
Since each $|Z_h|=p$,
we also increase the $E_{i_0}$-degrees of some vertices in $U$ by $p$.
Thus, $G$ is still $A|B$-biregular.
\end{itemize}
\end{proof}

Lemmas~\ref{app:lem:completesimplegood-infty-infty} and~\ref{app:lem:completesimplegoodonefinite} deal with all the  $\vM|\vN$ 
that are good color sizes for $A|B$.
To capture the sizes that are not good color sizes,
we can use ``fixed size encoding'', as in Subsection~\ref{subsec:simple-not-big-enough}.
Note that if $\vM|\vN$ is not a good color size for $A|B$,
there is $R\subseteq [t]$ such that
\begin{align*}
1 \leq \normt{\vx}_{J(R,A)} \leq \delta(A,B)
\qquad\text{or}\qquad
1 \leq \normt{\vy}_{J(R,B)} \leq \delta(A,B).
\end{align*}
Thus, we can fix $\normt{\vx}_{J(R,A)}$ or $\normt{\vy}_{J(R,B)}$
to some $r$ where $1\leq r\leq \delta(A,B)$.
Recall that $J(R,A)$ is  a subset of columns of $A$,
while  $J(R,B)$ is a subset of the columns of $B$.
Thus in fixing one of these norms,  we are focusing on complete $A|B$-biregular graphs $G=(U,V,E_1,\ldots,E_t)$
with sizes $\vM|\vN$ where the sum of some components in $\vM$ (or $\vN$) is fixed to $r\leq \delta(A,B)$.
For example, we can define the formula $\Phi^r_{A|B}(\vx,\vy)$ such that for every $\vM|\vN$,
$\Phi^r_{A|B}(\vM,\vN)$ holds if and only if
there is a complete $A|B$-biregular graph with size $\vM|\vN$ where $\normt{\vx}_{J(R,A)}=r$.
The construction of $\Phi^r_{A|B}(\vx,\vy)$ is very similar to the one in Section~\ref{subsec:simple-not-big-enough}, so we omit it.

To wrap up this subsection,
we define the formula $\biregc^{\infty,\infty}_{A|B}(\vx,\vy)$ for simple matrices $A$ and $B$
as follows:
\begin{align}
\label{eq:simple-complete}
&  \xi^{(1)}_{A|B}(\vx,\vy) \ \vee \
\varphi(\vx,\vy) \ \vee \
\bigvee_{j} \phi_j(\vx,\vy),
\end{align}
where $\xi^{(1)}_{A|B}(\vx,\vy)$ is defined in \eqref{app:eq:simple-complete-xi2}--\eqref{app:eq:simple-complete-xi2-b},
$\varphi(\vx,\vy)$ captures all the sizes $\vM|\vN$ that are not good color sizes for $A|B$, and
the disjunction $\bigvee_{j} \phi_j(\vx,\vy)$ enumerates all possible sizes $\vM|\vN$ when $A|B$ is not a good pair.
By Remark~\ref{rem:not-good-pair}, when $A|B$ is not a good pair,
complete $A|B$-biregular graphs can only have sizes $\vM|\vN$
where $\normt{\vM}+\normt{\vN}\leq 2\delta(A,B)$. It is clear that this remark holds regardless of whether
an $\infty$ entry is allowed. 
Since there are only finitely many sizes satisfying this upper bound, they can be enumerated. 
The formula $\biregc^{\infty,\infty}_{A|B}(\vx,\vy)$ captures the sizes
of all possible $A|B$-biregular graphs where both sides have infinitely many vertices.

\begin{remark}
\label{app:rem:simple-complete}
We will again make some further observations that will be important only for the complexity
analysis.
Suppose $t$ is the number of rows in matrices $A$ and $B$.
By Remark~\ref{app:rem:simple-bireg},
$\xi^{(1)}_{A|B}(\vx,\vy)$ is a disjunction of conjunctions of $O(t^4\delta(A,B)^4)$ (in)equations.

As in Remark~\ref{rem:simple-not-big-enough}, 
the encoding of components of a fixed size $r$ yields $O(rt)$ (in)equations.
Since $r\leq \delta(A,B)$ and there are $2^t$ subsets $R\subseteq [t]$,
the formula for the ``{\em fixed size encoding}'' can be written as a disjunction of conjunctions of $O(2^tt\delta(A,B))$ (in)equations.
So, the whole formula $\biregc_{A|B}^{\infty,\infty}(\vx,\vy)$ can be rewritten as a disjunction of conjunctions 
of $O(2^tt^4\delta(A,B)^4)$ (in)equations.
\end{remark}

\subsection{The case when exactly one side has only finitely many vertices}
\label{app:subsec:simple-complete-a}

In this subsection we will give the formula that captures the sizes of all possible $A|B$-biregular graphs
where on the left hand side there are infinitely many vertices
and on the right hand side there are only finitely many vertices.
Here the degree matrices $A$ and $B$ can be arbitrary degree matrices, i.e., we drop the assumption that
they must be simple matrices.

In a first step (Subsection~\ref{app:subsubsec:simple-complete-a-special})
we consider the case where the degree matrix $B$ is restricted to a very special form
and the size vectors on the left contain only $\infty$.
In a  second step (Subsection~\ref{app:subsubsec:complete-infinite-finite})
we show that capturing the sizes of $A|B$-biregular graphs where exactly one side has infinitely many vertices
can be reduced to the finite case and the case in Subsection~\ref{app:subsubsec:simple-complete-a-special}.

\subsubsection{A special case}
\label{app:subsubsec:simple-complete-a-special}

We fix matrices $A$ and $B$ (with $t$ rows) with the following properties:
\begin{itemize}
\item 
$A$ contains only finite entries.
\item 
Each entry in $B$ is either $0$ or $\infty$.
\item
Every row and every column in $B$ has $\infty$ entry.
\end{itemize}
We note that for such $A$ and $B$,
in a complete $A|B$-biregular graph it is necessary that
the left side has infinitely many vertices
and the right side has only finitely many.
We will define a formula that captures all possible size vectors $\vN$
where $\normt{\vN}\neq\infty$ and there is a complete $A|B$-biregular graph with size $(\infty,\ldots,\infty)|\vN$.

Let $m$ and $n$ be the number of columns in $A$ and $B$.
We start with a simple observation:

\begin{remark}
\label{app:rem:complete-special}
Let $G=(U,V,E_1,\ldots,E_t)$ be a complete $A|B$-biregular graph
with witness partition $U=U_1\uplus \cdots \uplus U_m$
and $V=V_1\uplus \cdots \uplus V_n$.
Let $u\in U$ and let $j$ be the index such that $u\in U_j$.

For each $i\in [t]$, let $Z_i$ be the set of vertices adjacent to $u$ via $E_i$-edges.
Since $G$ is complete,  $Z_1\uplus \cdots \uplus Z_t$ partitions the set $V$.
Moreover, since $G$ is $A|B$-biregular, for every $i\in [t]$:
\begin{itemize}
\item
$|Z_i|= E_i\text{-degree of $u$}=A_{i,j}$.
\item 
Every vertex in $Z_i$ has $\infty$ $E_i$-degree.

Recall that every entry in $B$ can only be either $0$ or $\infty$,
hence, the $E_i$-degree of every vertex in $V$ can only be $0$ or $\infty$.
\end{itemize}
We will call the partition $Z_1\uplus Z_2\uplus \cdots\uplus Z_t$,
{\em the partition of $V$ according to $u$}.
As we will see later,
we can construct a Presburger formula
that defines the sizes of the partitions of $V$ according to vertices in $U_j$,
for every vertex in $U_j$.
\end{remark}

For $j\in [m]$, for each $k\in [t]$, 
define the formula $\varphi_{k,j}(z_1,\ldots,z_k,s_1,\ldots,s_n)$ inductively on $k$ as follows.
\begin{itemize}
\item
When $k=1$, $\varphi_{1,j}(z_1,s_1,\ldots,s_n)$ is given by
$$
\quad z_1 = s_1+\cdots +s_n = A_{1,j} 
\  \wedge \
\bigwedge_{h\in [n]} s_h\neq 0 \to B_{1,h}=\infty
$$
\item 
When $k\geq 2$, $\varphi_{k,j}(z_1,\ldots,z_k,s_1,\ldots,s_n)$ is given by \\
\begin{align*}
\exists c_1\cdots \exists c_n \qquad  c_1+\cdots+c_n=z_k 
 \wedge 
\bigwedge_{h\in [n]} c_h\neq 0\ \to\ B_{k,h}=\infty \\
\wedge
\varphi_{k-1,j}(z_1,\ldots,z_{k-1},s_1-c_1,\ldots,y_n-s_n) 
\end{align*}
\end{itemize}

Finally, define the formula $\xi_{A|B}(\vy)$, where $\vy=(y_1,\ldots,y_n)$:
\begin{align}
\label{app:eq:complete-special}
\xi_{A|B}(\vy) \ := \ & \bigwedge_{j\in[m]} \exists z_1\cdots \exists z_t \ \varphi_{t,j}(z_1,\ldots,z_t,\vy)
\end{align}

We will show that $\xi_{A|B}(\vy)$ captures all size vectors $\vN$
such that there are complete $A|B$-biregular graph with size $(\infty,\ldots,\infty)|\vN$.
The variables $z_1,\ldots,z_t$ in the formula $\varphi_{t,j}(z_1,\ldots,z_t,y_1,\ldots,y_n)$
represent the cardinalities $|Z_1|,\ldots,|Z_t|$,
for the partition $Z_1\uplus \cdots \uplus Z_t$ according to a vertex in $U_j$.
We start with an easy lemma, proven by induction on $k$:

\begin{lemma}
\label{app:lem:equal-sum}
For every $k\in [t]$,
for every $z_1,\ldots,z_k,s_1,\ldots,s_n\in \bbN$,
\\
if $\varphi_{k,j}(z_1,\ldots,z_k,s_1,\ldots,s_n)$ holds,
then $z_1+\cdots+z_k=s_1+\cdots+s_n$.
\end{lemma}

\begin{lemma}
\label{app:lem:complete-special}
For every size vector $\vN$ where $\normt{\vN}\neq\infty$,
the formula $\xi_{A|B}(\vN)$ holds precisely when
there is a complete $A|B$-biregular graph with size $(\infty,\ldots,\infty)|\vN$.
\end{lemma}
\begin{proof}
Let $\vN=(N_1,\ldots,N_n)$ be a size vector where none of $N_1,\ldots,N_n$ are $\infty$.
We first show that $\xi_{A|B}(\vN)$ holding is a necessary condition.
Suppose there is a complete $A|B$-biregular graph $G=(U,V,E_1,\ldots,E_t)$ with size $(\infty,\ldots,\infty)|\vN$.
Let $U=U_1\uplus \cdots \uplus U_m$ and $V=V_1\uplus \cdots \uplus V_n$ be the witness partition.

We will show that for every $j\in [m]$,
$\varphi_{t,j}(z_1,\ldots,z_t,\vN)$ holds, for some $z_1,\ldots,z_t$.
To this end, let $j\in [m]$.
We pick a vertex $u\in U_j$ and let $Z_1\uplus \cdots \uplus Z_t$ be the partition of $V$
according to $u$.
Let $z_i=|Z_i|$, for every $i\in [t]$.

The next claim can be proven by straightforward induction on $k$.

\begin{claim}
For every $k\in [t]$, the formula $\varphi_{k,j}(z_1,\ldots,z_k,s_1,\ldots,s_n)$ holds 
where $s_h=|V_h\cap (Z_1\cup \cdots \cup Z_k)|$ for each $h\in [n]$.
\end{claim}

In particular, when $k=t$,
$s_h = |V_h\cap (Z_1\cup \cdots \cup Z_t)|= |V_h|=N_h$, for each $h\in [n]$,
since $Z_1\cup \cdots \cup Z_t = V$.
Therefore, $\varphi_{t,j}(z_1,\ldots,z_t,\vN)$ holds.
Thus, $\xi_{A|B}{\vN}$ holds.

We now prove that $\xi_{A|B}(\vN)$ holding is a sufficient condition. 
Suppose $\xi_{A|B}(\vN)$ holds, where $\vN = (N_1,\ldots,N_n)$.
Let $U_1,\ldots,U_m$ be pairwise disjoint infinite sets
and let $V_1,\ldots,V_n$ be pairwise disjoint sets where $|V_h|=N_h$ for each $h\in [n]$.

We will construct a complete $A|B$-biregular graph $G=(U,V,E_1,\ldots,E_t)$
with size $(\infty,\ldots,\infty)|\vN$
and witness partition $U = U_1\uplus \cdots\uplus U_m$
and $V=V_1\uplus \cdots\uplus V_n$.

Let $j\in [m]$.
Since $\xi_{A|B}(\vN)$ holds,
there is $z_1,\ldots,z_t$ such that $\varphi_{t,j}(z_1,\ldots,z_t,\vN)$ holds.

The following claim is proven by straightforward induction on $k$.
\begin{claim}
For every $k\in [t]$,
there are pairwise disjoint sets $Z_1,\ldots,Z_k\subseteq V$ such that
for every $i\in [k]$ 
$z_i = |Z_i| = A_{i,j}$ and $Z_i \subseteq \bigcup_{h \in \inf(B_{i,*})} V_{h}$.
\end{claim}

In particular, when $k=t$,
we have pairwise disjoint sets $Z_1,\ldots,Z_t\subseteq V$ such that
for every $i\in [t]$
$z_i = |Z_i| = A_{i,j}$ and
$Z_i \subseteq \bigcup_{h \in \inf(B_{i,*})} V_{h}$.
By Lemma~\ref{app:lem:equal-sum}, the sum $z_1+\cdots +z_t = |Z_1|+\cdots + |Z_t|=N_1+\cdots+N_n$.
Hence, $Z_1\uplus \cdots \uplus Z_t$ is a partition of $V$.
For every $i\in [t]$,
we connect every vertex $u\in U_j$ with every vertex in $Z_i$ via an  $E_i$-edge.
Thus after this step, every vertex in $U_j$ is adjacent to every vertex in $V$.

Note that the $E_i$-degree of every vertex in $U_j$ is $|Z_i|=A_{i,j}$.
Moreover, we connect $u$ with a vertex $v\in V$ only when the $E_i$-degree of $v$ is supposed to be $\infty$
-- since $Z_i \subseteq \bigcup_{h \in \inf(B_{i,*})} V_{h}$.
Thus, the resulting graph is $A|B$-biregular.
By repeating the above process for every $j\in [m]$,
we obtain a complete $A|B$-biregular graph.
\end{proof}

\subsubsection{The formula for the case with infinitely many vertices on the left and finitely many vertices on the right}
\label{app:subsubsec:complete-infinite-finite}

In this subsection we will define the formula
that captures precisely the sizes of all possible $A|B$-biregular graphs
where the left hand side has infinitely many vertices
and the right hand side has only finitely many vertices.
Here we do not require the degree matrices to be {\em simple} matrices -- as defined in Definition~\ref{def:simple-matrix}.

In the following lemma, we fix degree matrices $A\in \bbNop^{t\times m}$ and $B\in\bbNop^{t\times n}$.

\begin{lemma}
\label{app:lem:complete-infinite-finite}
Suppose $G=(U,V,E_1,\ldots,E_t)$ is a complete $A|B$-biregular graph
with witness partition $U_1\uplus \cdots \uplus U_m$ and $V_1\uplus \cdots \uplus V_n$.
Suppose $U$ is infinite and $V$ is finite.
Let $R = \{i \in [t] \mid |E_i| = \infty\}$
and let $J = \{j \in [m] \mid |U_j|=\infty \}$.
Then, 
\begin{enumerate}[(1)]
\item
For every color $i \notin R$, for every $j \in J$,
 $A_{i,j}$ is $0$ or $\prdp{0}$.
\item 
For every $j \in [n]$, there is $i\in R$  with $B_{i,j}=\infty$.
\item 
For every $i \in R$, the row $B_{i,*}$ contains an $\infty$ entry.
\item 
For every $j\in J$, for every $i \notin R$, all but finitely many vertices in $U_j$
have zero $E_i$-degree. 
\item 
There are only finitely many vertices in $U$ for which there is $v \in V$ adjacent to the vertex by
an $E_i$-edge and the $E_i$-degree of $v$ is finite.
\end{enumerate}
\end{lemma}
\begin{proof}
To prove (1), let $j\in J$, i.e., the set $U_j$ is infinite.
If there were $i\notin R$ such that $A_{i,j}\neq 0$ or $\neq \prdp{0}$,
the number of edges in $E_i$ is infinite,
which contradicts the assumption that $i\notin R$.

For (2), let $j\in [n]$.
Since $G$ is  a complete $A|B$-biregular graph, 
the total degree of each vertex $v \in V_j$ must equal $|U|$, i.e:
$$
\sum_{i\in [t]} (E_i\text{-degree of}\ v)\ = \
\sum_{i\in R} (E_i\text{-degree of}\ v)\ + \
\sum_{i\notin R} (E_i\text{-degree of}\ v)\ = \
|U|
$$
By definition of $R$, the sum $\sum_{i\notin R} (E_i\text{-degree of}\ v)$ is finite.
Since $U$ is infinite, the sum $\sum_{i\in R} (E_i\text{-degree of}\ v)$ must be infinite.
Therefore, there is $i\in R$ such that $B_{i,j}=\infty$.

For (3),  let $i\in R$.
The cardinality $E_i$ is $|E_i| \ = \ \sum_{v\in V} (E_i\text{-degree of}\ v)$.
Since $i\in R$, the cardinality $|E_i|=\infty$.
Thus, there is $v\in V$ with $E_i$-degree $\infty$.
Therefore, row $B_{i,*}$ must contain $\infty$.

For (4), let $j\in J$ and $i\notin R$.
There can only be finitely many many vertices in $U_j$ with non-zero $E_i$-degree.
Otherwise, $|E_i|=\infty$, which contradicts the assumption that $i\notin R$.

For (5), let $v\in V$.
Obviously there are only finitely many vertices in $U$ that are adjacent to $v$
via some $E_i$-edge, where the $E_i$-degree of $v$ is finite.
Since $V$ is finite, (5) follows immediately.
\end{proof}

Intuitively, (1)--(3) state the properties matrices $A$ and $B$ should have
when considering $A|B$-biregular graphs for the case considered in this subsection,
which also allows us to identify a subgraph whose biregularity can be characterized using Subsection~\ref{app:subsubsec:simple-complete-a-special}.
See Figure~\ref{fig:matrix-infinite-one-side} for an illustration of the decomposition of matrices $A$ and $B$.
We will use (4) and (5) to identify a corresponding subgraph whose biregularity is 
characterized using  the {\em finite} case covered in Theorem~\ref{theo:main-lemma-bireg}.
For an arbitrary graph, we let $R$ and $J$ be as defined in Lemma~\ref{app:lem:complete-infinite-finite}.

\begin{figure}
\begin{center}

\begin{tikzpicture}

\node at (1.5,-1.5) {$\underbrace{\hspace{1.5cm}}_{\text{columns in $J$}}$};

\node at (3.5,0.6) {\footnotesize$\left.\begin{array}{cc} & \\ & \\ &  \end{array}\right\}$ rows not in $R$};

\node at (3.2,-0.7) {\footnotesize$\left.\begin{array}{cc} & \\ & \\ &  \end{array}\right\}$ rows in $R$};

\node  at (0,0) {$A\ :=\ \left(\begin{array}{c|c}
\begin{array}{ccc}
& & 
\\
& A_1 &
\\
& &
\end{array}
&
\text{\em $0$ or $\prdp{0}$}
\\
\hline
\begin{array}{ccc}
& & 
\\
& A_2 &
\\
& &
\end{array}
&
\begin{array}{ccc}
& & 
\\
& A_3 &
\\
& &
\end{array}
\end{array}
\right)$};


\node at (0.2,-5.6) {$\underbrace{\hspace{2.8cm}}_{\text{every column contains $\infty$ in some row in $R$}}$};

\node at (2.8,-3.4) {\footnotesize$\left.\begin{array}{cc} & \\ & \\ &  \end{array}\right\}$ rows not in $R$};

\node at (2.5,-4.7) {\footnotesize$\left.\begin{array}{cc} & \\ & \\ & \end{array}\right\}$ rows in $R$};

\node  at (-.4,-4) {$B\ :=\ \left(\begin{array}{ccccc}
~~ & ~~ & ~~ & ~~ & ~~
\\
& & B_1 & &
\\
& & & &
\\
\hline
 &  &  &  &
\\
 &  &  B_2 &  &
\\
 &  &  &  &
\end{array}
\right)$}; 

\end{tikzpicture}

\end{center}
\label{fig:matrix-infinite-one-side}
\caption{An illustration of the matrices $A$ and $B$ for the case
when there is a complete $A|B$-biregular graph $G=(U,V,E_1,\ldots,E_t)$ with infinitely many vertices
on the left hand side and only finitely many vertices on the right hand side.
Suppose $U=U_1\uplus \cdots \uplus U_n$ and $V=V_1\uplus \cdots \uplus V_n$
is the witness partition.
$R$ is the set of color $i$ where $|E_i|=\infty$
and $J$ is the set of column $j$ where $U_j$ is infinite.}
\end{figure}

Let $C$ be the matrix obtained by replacing every $\infty$ entry in $B$ with $0^{+1}$.
Intuitively, we replace $\infty$ with some finite value.\footnote{Technically 
we cannot simply replace $\infty$ with $0^{+1}$
since we insist that every periodic entry in a degree matrix has period $p$.
Instead we can replace it with $\prdp{0},\prdp{1},\ldots,\prdp{(p-1)}$
by repeating the columns.
For example, a column $\begin{pmatrix}
\infty \\ a
\end{pmatrix}$
becomes $\begin{pmatrix}
\prdp{0}\\ a
\end{pmatrix},
\begin{pmatrix}
\prdp{1}\\ a
\end{pmatrix},\ldots, \begin{pmatrix}\prdp{(p-1)}\\ a\end{pmatrix}$.
We allow the matrix to have $0^{+1}$ entries.} 
Let $A_3$ be the matrix obtained from $A$ by keeping only the rows in $R$
and the columns in $J$.
Let $B_2$ be the matrix obtained from $B$ by keeping only the rows in $R$
and  $D$ be the matrix obtained from $B_2$ by replacing every non-$\infty$ entry with $0$.

Let $U^*$ be the set of vertices in $\bigcup_{j\in J} U_j$ adjacent to some $v\in V$ via some $E_i$-edges
where the $E_i$-degree of $v$ is finite.
For each $j\in J$, let $U_{j,\fin}$ be the set of vertices in $U_j$
with non-zero $E_i$-degree for some $i\notin R$.
Define the sets:
\begin{align*}
U^{\fin} \ := \ & \ U^*\ \cup \ \bigcup_{j\notin J} U_j \ \cup \ \bigcup_{j\in J} U_{j,\fin}
\\
U^{\infty} \ := \ & \ U - U^{\fin}
\end{align*}
By~(4) in Lemma~\ref{app:lem:complete-infinite-finite},
the set $U_{j,\fin}$ is finite for every $j \in J$.
By (5), the set $U^*$ is finite.
Thus, the set $U^{\fin}$ is finite.
By definition of $U^{\infty}$, a vertex in $U^{\infty}$
has non-zero $E_i$-degree only when $i\in R$.
See Figure \ref{fig:complete-infinite-finite-charac} for an illustration.

The following lemma will provide our reduction.

\begin{lemma}
\label{app:lem:complete-infinite-finite-charac}
Suppose $G=((U,V,E_1,\ldots,E_t)$ is a complete  $A|B$-biregular graph  with size $\vM|\vN$. Let
$U^{\fin}$ and $U^{\infty}$  be as defined above,
$G_{\fin}$ denote the induced subgraph $G[U^{\fin}\cup V]$,
and $G_{\infty}$ denote the induced subgraph $G[U^{\infty}\cup V]$.
Then:
\begin{itemize}
\item
$G_{\fin}$ is a (finite) complete $A|C$-biregular graph
with size $\vK|\vN$ for some $\vK=(K_1,\ldots,K_m)$
where $K_j=M_j$, if $j\notin J$ and $K_j$ is some finite value, if $j\in J$.

\item 
$G_{\infty}$ is a complete $A_3|D$-biregular graph
with size $(\infty,\ldots,\infty)|\vN$.
\end{itemize}
\end{lemma}
\begin{proof}
For each vertex $w\in U\cup V$,
for each color $i\in [t]$, we say that {\em the $E_i$-degree of $w$ is affected in $G_{\fin}$ (resp. $G_{\infty}$)},
if its $E_i$-degree in $G_{\fin}$ (resp. $G_{\infty}$) is different from its $E_i$-degree in $G$.
Otherwise, we say that the $E_i$-degree of $w$ is \emph{unaffected} in $G_{\fin}$ (resp. $G_{\infty}$).

Towards proving the first bullet item,
note that for each $i\in [t]$,
the $E_i$-degree of every vertex $u$ is unaffected in $G_{\fin}$,
since $V$ is still the set of vertices on the left hand side of $G_{\fin}$.
On the other hand, for each vertex $v\in V$,
and  color $i\in [t]$, if the $E_i$-degree of $v$ is finite in $G$,
then its $E_i$-degree is unaffected in $G_{\fin}$.
This is because if $(u,v)\in E_i$ and the $E_i$-degree of $v$ is finite,
then by definition, $u\in U^*$, and hence, $u\in U^{\fin}$.
So, the $E_i$-degree of $v$ is affected in $G_{\fin}$
only when the $E_i$-degree of $v$ is $\infty$ in $G$,
which has now become finite in $G_{\fin}$.
Since every $\infty$ entry in $B$ has now becomes $0^{+1}$ in $C$,
it follows immediately that $G_{\fin}$ is a complete  $A|C$-biregular graph.

Turning to the second bullet item,
the $E_i$-degree of every vertex in $U^{\infty}$ is obviously unaffected in $G_{\infty}$.
Moreover, every vertex in $U^{\infty}$ has non-zero $E_i$-degree in $G$ only when $i\in R$.
Thus, the colors of the edges in $G_{\infty}$ are only those in $R$.
For every vertex $v\in V$:
\begin{itemize}
\item
If its $E_i$-degree is $\infty$ in $G$,
its $E_i$-degree is unaffected in $G_{\infty}$.
\item 
If its $E_i$-degree is finite in $G$,
its $E_i$-degree becomes $0$ in $G_{\infty}$.

This is because if $u$ and $v$ are adjacent via an $E_i$-edge
and the $E_i$-degree of $v$ is finite,
then $u\in U^{*}$, hence, $u\in U^{\fin}$.
\end{itemize}
Since every finite entry in $B_2$ becomes $0$ in $D$,
it follows immediately that $G_{\infty}$ is $A_3|D$-biregular.
\end{proof}

Lemma~\ref{app:lem:complete-infinite-finite-charac} reduces characterization of the sizes
of $A|B$-biregular graphs to characterizations of finite complete biregular graphs (which we have provided in the body)
and characterization of infinite $A_3|D$-biregular graphs, whose sizes are of the form $(\infty,\ldots,\infty)|\vN$, 
i.e., the components in the size vectors on the left are all $\infty$
and every entry in $D$ is either $0$ or $\infty$.

\begin{figure}
\begin{center}

\begin{tikzpicture}


\draw (0,0) ellipse (.7cm and 1cm);
\node  at (-1.2,0.8) {$\bigcup_{j\notin J} U_j$}; 
\node  at (0,0) {finite};

\node  at (-1.8,-1.8) {$U^* \cup \bigcup_{j\in J} U_{j,\fin}$}; 
\draw (0,-2.8) ellipse (.7cm and 1.3cm);
\draw[red] (-0.65,-2.5) -- (0.7,-2.5);
\node  at (-1.2,-4) {$U^{\infty}$}; 

\node at (0,-2.1) {finite};
\node at (0,-3.2) {$\infty$};


\draw (6,-1) ellipse (.7cm and 1cm);
\node  at (6.8,0) {$V$};
\node at (6,-1) {finite};

\end{tikzpicture}

\end{center}
\label{fig:complete-infinite-finite-charac}
\caption{Illustration of the sett $\bigcup_{j\notin J} U_j$, 
$U^*\cup\bigcup_{j\in J} U_{j,\fin}$
and $U^{\infty}$, with their sizes.
The set $U^{\fin}$ is the union $U^*\cup \bigcup_{j\in J} U_{j,\fin} \cup \bigcup_{j\notin J} U_j$,
which is finite.
In the induced graph $G_{\fin}=G[U^{\fin}\cup V]$, the vertices in $V$ have finite total degrees.
In the induced graph $G_{\infty}=G[U^{\infty}\cup V]$, there are no $E_i$-edges for $i\notin R$.}
\end{figure}

We will next define  formulas that capture  the sizes $\vM|\vN$ of complete $A|B$-biregular graphs, assuming that 
the left hand side has infinitely many vertices
and the right hand side has finitely many vertices.
Let $R$ be the set of colors $i$ where the number of $E_i$-edges is infinite.
and $J\subseteq [m]$ be the set of indexes $j$ where $M_j=\infty$.

Let $\vx=(x_1,\ldots,x_m)$, $\vy=(y_1,\ldots,y_n)$ and $\vz=(z_1,\ldots,z_m)$.
Let $\xi^{J,R}_{A|B}(\vx,\vy)$ be the formula:
\begin{align}
\label{app:eq:simple-complete-xi1}
& \normt{\vx}=\infty \ \wedge \ \normt{\vy}\neq \infty
\\
\label{app:eq:simple-complete-xi1-a}
 \wedge \quad &
\bigwedge_{i\in R} \normt{\vx}_{\nz(A_{i,*})} = \infty \ \wedge \
\bigwedge_{i\notin R} \normt{\vx}_{\nz(A_{i,*})} \neq \infty 
\\
\label{app:eq:simple-complete-xi1-b}
 \wedge \quad &
\bigwedge_{j\in J} x_j = \infty \ \wedge \
\bigwedge_{j\notin J} x_j \neq \infty \ \wedge \
\\
\wedge \quad & \exists \vz \ \ \biregc_{A|C}(\vz,\vy) \ \wedge \ \normt{\vz}\neq \infty \ \wedge \ \bigwedge_{j\notin J} z_i=x_i
\\
\wedge \quad & \xi_{A_3|D}(\vy)
\end{align}
where formula $\biregc_{A|C}(\vz,\vy)$ captures 
the sizes of finite complete $A|C$-biregular graph as defined in Theorem~\ref{theo:main-lemma-bireg}
and $\xi_{A_3|D}(\vy)$ is as defined in \eqref{app:eq:complete-special}.

Intuitively, \eqref{app:eq:simple-complete-xi1}--\eqref{app:eq:simple-complete-xi1-b} 
state that there are infinitely many vertices on the left and only finitely many on the right, and that $R$ and $J$ are as defined above.
The next  lemma follows immediately from Lemma~\ref{app:lem:complete-special},
Theorem~\ref{theo:main-lemma-bireg}, and Lemma~\ref{app:lem:complete-infinite-finite-charac}.

\begin{lemma} \label{app:lem:completesimplegoodonefinite}
For every pair of matrices $A$, $B$ and 
every pair of size vectors $\vM$, $\vN$  with infinitely many vertices on the left, finitely many on the right, $R$ and $J$ defined as above, then
$\xi^{J,R}_{A|B}(\vM,\vN)$ holds in $\cN_{\infty}$
if and only if there is a complete $A|B$-biregular graph of size $\vM|\vN$.
\end{lemma}

To wrap up this subsection, we define the formula:
$$
\biregc_{A|B}^{\infty,\fin}(\vx,\vy) \ := \
\bigvee_{J\subseteq[m], R\subseteq[t]} \xi^{J,R}_{A|B}(\vx,\vy)
$$
that captures the sizes of all possible $A|B$-biregular graph
where the left hand side has infinitely many vertices
and the right hand side has finitely many vertices.

\begin{remark}
By Lemma~\ref{lem:algo-for-bireg} for finite graphs,
$\biregc_{A|C}(\vz,\vy)$ is a disjunction of conjunctions of $O(mnt^4\delta(A,B)^4)$ (in)equations.
By definition, $\xi_{A_3|D}(\vy)$ is a disjunction of conjunctions of $O(tmn)$ (in)equations.
Thus, $\biregc_{A|B}^{\infty,\fin}(\vx,\vy)$ is a disjunction of conjunctions of $O(mnt^4\delta(A,B)^4)$ (in)equations.

For arbitrary degree matrices $A$ and $B$,
we can define a formula $\biregc_{A|B}(\vx,\vy)$ capturing the sizes of all possible $A|B$-biregular graphs
as a disjunction of the formulas for each of the following four cases:
\begin{itemize}
\item 
Both sides have finitely many vertices, which by Lemma~\ref{lem:algo-for-bireg}
is a disjunction of conjunctions of $O(mnt^4\delta(A,B)^4)$ (in)equations.
\item
The left hand side has infinitely many vertices and
the right hand side has finitely many vertices,
which as explained above is a disjunction of conjunctions of $O(mnt^4\delta(A,B)^4)$ (in)equations.
\item
The left hand side has finitely many vertices and
the right hand side has infinitely many vertices, which is symmetric to the previous case.
\item 
Both sides have infinitely many vertices.

By Remark~\ref{app:rem:simple-complete},
the formula when the degree matrices are simple matrices is a disjunction of conjunctions of $O(2^tt^4\delta(A,B))$ (in)equations.
Since the transformation from non-simple to simple requires a blow-up of $O(mn)$ factor,
this case is a disjunction of conjunctions of $O(mnt^4\delta(A,B)^4)$ (in)equations.
\end{itemize}
So $\biregc_{A|B}(\vx,\vy)$ is a disjunction of conjunctions of $O(mnt^4\delta(A,B)^4)$ (in)equations.
\end{remark}


\section{The extension of Section~\ref{sec:proofnonsimple} to the general case}
\label{app:sec:proofnonsimple}

In this appendix we explain briefly how to extend the reduction from non-simple degree matrices 
to simple degree matrices in Section \ref{sec:proofnonsimple}, 
now allowing the degree matrices to contain $\infty$ entries and the sizes of the partitions to be infinite.
This reduction is only applied to the finite case and Case $1$ from the prior appendix, where  there are infinite-degree vertices on both sides. In the case where
exactly one side had an infinite degree vertex, we  did not make use of the simple restriction.
The reduction we give below can actually apply to all cases: but making use of it in the last case above would not give the desired complexity.

We need to modify  the definition of behavior functions in Definition \ref{def:behavior-function-non-simple} a little bit,
to take into account that the entry in a degree matrix can be $\infty$.

\begin{definition}
\label{app:def:behavior-function-non-simple}
For each $j \in [m]$, we define {\em a behavior function of column $j$ in $A$}
to be a function $g:[t]\times [n]\to \{0,1,\ldots,q,\prdp{0},\prdp{1},\ldots,\prdp{q},\infty\}$ such that the following holds:
\begin{itemize}
\item
$
A_{*,j} \ = 
\begin{pmatrix}
g(1,1)+\cdots + g(1,n)
\\
g(2,1)+\cdots + g(2,n)
\\
\vdots
\\
g(t,1)+\cdots + g(t,n)
\end{pmatrix};
$
\item
for each color $i\in [t]$,
if $A_{i,j}$ is a fixed entry, then $g(i,1),\ldots,g(i,n)$ are all fixed entries;
\item
for each color $i\in [t]$,
if $A_{i,j}$ is a periodic entry, then $g(i,1),\ldots,g(i,n)$ are all periodic entries;
\item 
for each color $i\in [t]$,
if $A_{i,j}$ is an $\infty$ entry, then $g(i,1),\ldots,g(i,n)$ are all periodic entries
and at least one of them is $\infty$.
\end{itemize}
\end{definition}

Note that the difference between Definition \ref{def:behavior-function-non-simple} and Definition \ref{app:def:behavior-function-non-simple}
is the addition of the third item, where the entry $A_{i,j}$ can be $\infty$.
The definition of a behavior function of column $j'$ in $B$ is also modified in a similar manner.
The reduction from  non-simple matrices to simple matrices can now be obtained in exactly the same manner
as in Subsection \ref{subsec:general-non-simple-simple}.

\end{document}